\theoremstyle{plain}
\newtheorem{theorem}{Theorem}
\newtheorem{proposition}{Proposition}
\newtheorem{lemma}{Lemma}
\theoremstyle{definition}
\newtheorem{definition}{Definition}
\newtheorem{assumption}{Assumption}
\theoremstyle{remark}
\newtheorem{remark}{Remark}
\def\definerm#1{\expandafter\def\csname rm#1\endcsname{\mathrm{#1}}}
\def\definebb#1{\expandafter\def\csname bb#1\endcsname{\mathbb{#1}}}
\def\definecal#1{\expandafter\def\csname cal#1\endcsname{\mathcal{#1}}}
\def\definesf#1{\expandafter\def\csname sf#1\endcsname{\mathsf{#1}}}
\def\@letterlist{A,B,C,D,E,F,G,H,I,J,K,L,M,N,O,P,Q,R,S,T,U,V,W,X,Y,Z}
\@letterlist\do{\expandafter\definerm\expandafter{\@letter}}
\@letterlist\do{\expandafter\definebb\expandafter{\@letter}}
\@letterlist\do{\expandafter\definecal\expandafter{\@letter}}
\@letterlist\do{\expandafter\definesf\expandafter{\@letter}}
\newcommand{\R}{\bbR}
\newcommand{\Z}{\bbZ}
\newcommand{\E}{\bbE}
\newcommand{\pr}{\bbP}
\newcommand{\T}{\sfT}
\newcommand{\ud}{\mathrm{d}}
\newcommand{\normal}{\calN}
\newcommand{\var}{\mathrm{Var}}
\newcommand{\cov}{\mathrm{Cov}}
\newcommand{\corr}{\mathrm{Corr}}
\newcommand{\unif}{\mathrm{Unif}}
\newcommand{\eps}{\varepsilon}
\newcommand{\convp}{\xrightarrow{p}}
\newcommand{\convd}{\xrightarrow{d}}
\newcommand{\convst}{\xrightarrow{\rm st}}
\newcommand{\convucp}{\xrightarrow{\rm u.c.p.}}
\newcommand{\true}{\langle X, \sigma^2 \rangle}
\newcommand{\estx}{\widetilde{\true}}
\newcommand{\esty}{\widehat{\true}}
\title{\textbf{Holistic Multi-Scale Inference of the Leverage Effect:
Efficiency under Dependent Microstructure Noise}}
\author[1]{Ziyang Xiong}
\author[1]{Zhao Chen\thanks{Corresponding author: zchen\_fdu@fudan.edu.cn}}
\author[2]{Christina Dan Wang\thanks{Corresponding author:
christina.wang@nyu.edu}}
\affil[1]{School of Data Science, Fudan University}
\affil[2]{Business Division, New York University Shanghai}
\begin{document}

\maketitle

\begin{abstract}
  This paper addresses the long-standing challenge of estimating the
  leverage effect from high-frequency data contaminated by dependent,
  non-Gaussian microstructure noise. We depart from the conventional
  reliance on pre-averaging or volatility ``plug-in'' methods by
  introducing a holistic multi-scale framework that operates directly
  on the leverage effect. We propose two novel estimators: the
  Subsampling-and-Averaging Leverage Effect (SALE) and the
  Multi-Scale Leverage Effect (MSLE). Central to our approach is a
  shifted window technique that constructs a noise-unbiased base
  estimator, significantly simplifying the multi-scale architecture.
  We provide a rigorous theoretical foundation for these estimators,
  establishing central limit theorems and stable convergence results
  that remain valid under both noise-free and dependent-noise
  settings. The primary contribution to estimation efficiency is a
  specifically designed weighting strategy for the MSLE estimator. By
  optimizing the weights based on the asymptotic covariance structure
  across scales and incorporating finite-sample variance corrections,
  we achieve substantial efficiency gains over existing benchmarks.
  Extensive simulation studies and an empirical analysis of 30 U.S.
  assets demonstrate that our framework consistently yields smaller
  estimation errors and superior performance in realistic, noisy
  market environments.
\end{abstract}

\noindent
\textbf{Keywords:}
High-frequency data; Realized volatility; Subsampling; Variance
reduction; Robust estimation

\section{Introduction}
The leverage effect, or the observed negative correlation between
asset returns and their volatility changes, is a prominent stylized
fact in financial econometrics \citep{black1976StudiesStockPrice,
christie1982StochasticBehaviorCommon}. It captures the asymmetry in
volatility responses to positive and negative shocks in asset prices
and is widely attributed to mechanisms such as financial leverage and
asymmetric information in markets. Accurate estimation of the
leverage effect is not only central to our understanding of asset
price dynamics but also has critical implications for the pricing and
hedging of derivative securities, especially in the presence of
volatility skews.

However, obtaining reliable estimates of the leverage effect in
high-frequency data is complicated by market microstructure (MMS)
noise. The well-known ``volatility signature plot'' demonstrates a
key challenge: with high-frequency observations, simple realized
volatility estimators are largely biased and thus inconsistent due to
noise \citep{zhou1996HighFrequencyDataVolatility,
  andersen2000GreatRealizations, aitsahalia2005HowOftenSample,
  patton2011DatabasedRankingRealised,
aitsahalia2019HausmanTestPresence}; simple leverage effect estimators
suffer from a similar issue. Existing methods for leverage effect
estimation have primarily focused on mitigating the impact of such
noise by either adopting the pre-averaging technique
\citep{jacod2009MicrostructureNoiseContinuous,
  podolskij2009BipowertypeEstimationNoisy,
mykland2016DataCleaningInference} under the assumption of independent
and identically distributed (i.i.d.) Gaussian noise or using extra
information. For instance, \cite{wang2014EstimationLeverageEffect}
and \cite{aitsahalia2017EstimationContinuousDiscontinuous} utilize
the pre-averaging methods to estimate leverage effect under i.i.d.
Gaussian noise, \cite{yuan2020LeverageEffectHighfrequency} adopts the
parametric setting for microstructure noise proposed by
\cite{li2016EfficientEstimationIntegrated} that incorporates trading
information to address noise,
\cite{chong2024VolatilityVolatilityLeverage} utilizes high-frequency
short-dated option to recover spot volatility process, thereby
estimating leverage effect, while some other works
\citep[\emph{e.g.}][]{bandi2012TimevaryingLeverageEffects,
  kalnina2017NonparametricEstimationLeverage,
curato2022StochasticLeverageEffect,yang2023EstimationLeverageEffect}
estimate leverage effect without explicitly addressing microstructure
noise. While these methods represent significant advances, their
reliance on i.i.d. Gaussian noise-related assumptions or auxiliary
data remains restrictive in practice. Specifically, empirical
evidence has consistently shown that microstructure noise exhibits
serial dependence and higher-order moments
\citep{jacod2017StatisticalPropertiesMicrostructure,
  aitsahalia2019HausmanTestPresence,
  li2020DependentMicrostructureNoise, da2021WhenMovingAverageModels,
li2022ReMeDIMicrostructureNoise}. These features not only violate
standard modeling assumptions, but also exacerbate bias and variance
in leverage effect estimation, underscoring the need for more
flexible methodologies that can accommodate complex noise structures.

In this paper, we propose a novel multi-scale framework for
estimating the leverage effect that explicitly accounts for
microstructure noise exhibiting more flexible noise structures,
allowing for stationary, dependent noise with nontrivial higher-order
moments, which are commonly observed in empirical financial data.
Specifically, we introduce two new estimators: the
Subsampling-and-Averaging Leverage Effect (SALE) estimator and the
Multi-Scale Leverage Effect (MSLE) estimator. In particular, the MSLE
estimator aggregates a series of weighted SALE estimators computed
across multiple time scales, exploiting their complementary
properties to achieve improved convergence rates. Our methodology
draws inspiration from the principles underlying the Two-Scale
Realized Volatility (TSRV) and Multi-Scale Realized Volatility (MSRV)
estimators \citep{zhang2005TaleTwoTime,
  zhang2006EfficientEstimationStochastic,
aitsahalia2011UltraHighFrequency}, but adapts and extends them for
the specific task of estimating the leverage effect. Crucially, we do
not merely use their methods as plug-in estimators for spot
volatility; we develop a holistic multi-scale approach for the
leverage effect itself. A key innovation in our construction is the
use of a shifted window for estimating spot volatility, as
illustrated schematically in Figure~\ref{fig:base-estimators-1} and
\ref{fig:base-estimators-2}. This shift is important as it not only
helps decouple noise components to achieve unbiasedness and variance
reduction with respect to noise, but also fundamentally simplifies
certain aspects of the multi-scale estimation procedure compared with
the classical constructions in TSRV and MSRV.

The second primary contribution of this work is the explicit
demonstration of multi-scale benefits in efficiency beyond mere noise
mitigation. Specifically, for the noise-free setting, we show that
our MSLE estimator, through a proper weighting scheme, achieves a
notable reduction in asymptotic variance compared with the base
estimator by exploiting the covariance structure of SALE estimators
of different subsampling scales. Furthermore, in the noisy setting,
this efficiency advantage becomes even more pronounced when
benchmarked against the pre-averaging estimator, particularly at
lower noise levels that are more representative of practical scenarios.

Recognizing a potential gap between standard theoretical assumptions
and empirical reality regarding MMS noise, another major contribution
of this work lies in the in-depth study of optimal weight assignment
under diverse and realistic noise magnitudes. Classical asymptotic
analyses often assume that noise variance is of constant order, thus
dominating the shrinking latent increments as the sampling frequency
increases. However, empirical evidence suggests a more complex
picture. For instance, \cite{aitsahalia2019HausmanTestPresence} finds
that improvements in market liquidity allow employing simple
volatility estimators at higher frequencies, while
\cite{kalnina2008EstimatingQuadraticVariation} and
\cite{da2021WhenMovingAverageModels} also explore the case of
shrinkage MMS noise. A recent work by
\cite{chong2025WhenFrictionsAre} investigates the rough noise model
that captures a more subtle interplay between the latent price
process and noise. This implies that weighting schemes based solely
on the strict noise-dominance assumption might suffer from modeling
error when applied to real data. Motivated by this, we conduct a
fine-grained analysis of the MSLE's asymptotic variance components
across different subsampling scales and noise conditions. Moreover,
implementing truly optimal weights would require precise values of
asymptotic covariances between SALE estimators at sampling scales,
which are infeasible in practice. To overcome this, we develop a
computationally efficient approximate weighting strategy that works
for a wide spectrum of noise conditions, yielding robust
finite-sample performance as demonstrated in Monte Carlo simulations.

Our theoretical analysis establishes central limit theorems and
stable convergence results for both the noise-free and noisy
settings, demonstrating that the proposed estimators achieve
consistency and asymptotic normality. Under the noise-free setting,
both estimators attain the optimal convergence rate of order
$n^{-1/4}$. In the presence of MMS noise, the MSLE estimator achieves
a convergence rate of order $n^{-1/9}$. While this theoretical rate
is slightly slower than the $n^{-1/8}$ achieved by the pre-averaging
approach, our simulations consistently demonstrate that the MSLE
estimator outperforms the pre-averaging method in practical settings
across a wide range of noise levels and sample sizes. To support
feasible inference, we construct consistent estimators for the
asymptotic variances in both noise-free and noisy regimes, enabling
the implementation of feasible central limit theorems. Monte Carlo
simulations, employing both independent and dependent noise with
various distributions, validate both the feasible and infeasible
central limit theorems.

To assess the finite-sample performance of the proposed estimators,
we conduct another extensive simulation study that encompasses a
variety of settings with realistic time horizons and noise
conditions. This study verifies the outstanding efficiency of the
proposed MSLE estimator with the approximate weighting strategy. The
result shows that: (i) under the noise-free setting, our method
outperforms the base estimator, whereas (ii) under the noisy setting,
our method consistently outperforms the pre-averaging approach, and
this advantage is particularly pronounced in empirically relevant
scenarios. This superior finite-sample behavior is attributable to a
combination of factors, detailed in
Section~\ref{sec:practical-weight}: (i) the more advantageous
noise-free asymptotic variance of the SALE estimator, (ii) the
relatively smaller impact of noise under realistic settings, and
(iii) the enhanced performance of MSLE over the individual SALE estimators.

To examine the performance of our estimators in practical
applications, we also conduct an empirical analysis using
high-frequency financial data. The high-frequency trading data of 30
assets including ETFs and individual stocks from the U.S. stock
market are utilized. The leverage effects are estimated adaptively
based on the microstructure noise characteristics in each period, and
general negative correlation between the returns and volatility
changes is verified. This study demonstrates the practical
flexibility and effectiveness of the MSLE estimator in capturing the
leverage effect under realistic market conditions, further validating
the advantages observed in the simulation experiments.

The remaining paper is arranged as follows. Section~\ref{sec:method}
introduces the model settings, notations, and estimators.
Section~\ref{sec:results} states the main theoretical results,
including limit theorems for the SALE and MSLE estimators under both
noise-free and noisy conditions. Section~\ref{sec:practical}
discusses the issue of variance reduction, including variance
approximations under both noise-free and noisy settings, and proposes
practical strategies for optimizing the performance of MSLE.
Section~\ref{sec:simulation} provides a detailed simulation study to
validate the theoretical properties, examining the asymptotic
behavior and finite-sample performance under various settings of
microstructure noise. Section~\ref{sec:empirical} reports the results
of an empirical study using real-world high-frequency data to
demonstrate the practical utility of the proposed methods.
Proofs, feasible central limit theorems, and further elaborations on
Section~\ref{sec:practical} to \ref{sec:empirical} are provided in
Supplementary Material.

\section{Methodology}
\label{sec:method}

\subsection{Model Settings}

The noise-contaminated log-price $Y_{t_i}$ is observed at $t_i = i\Delta_n
= iT/n$ for $i = 0, 1, \dots, n$:
\begin{align}\label{eq:observation}
  Y_{t_i} = X_{t_i} + \eps_i,
\end{align}
where $X_{t_i}$ denotes the latent log-price, and $\eps_i$ denotes
the noise. For simplicity, for any stochastic process $V$, we denote:
(i) $V_i \coloneqq V_{t_i}$,
(ii) $\Delta V_i \coloneqq V_{i+1} - V_i$, and
(iii) $\Delta_k V_i \coloneqq V_{i+k} - V_i$.

\begin{assumption}[Underlying processes]\label{ass:process}
  Let both log-price process $(X_t)_{t \geq 0}$ and volatility process
  $(\sigma_t)_{t \geq 0}$ be Itô processes defined on a filtered
  probability space $(\Omega, \calF, (\calF_t)_{t \geq 0}, \pr)$:
  \begin{align}
    X_t &= X_0 + \int_0^t \mu_s \ud s + \int_0^t \sigma_s \ud W_s, \\
    \sigma_t &= \sigma_0 + \int_0^t a_s \ud s + \int_0^t f_s \ud W_s +
    \int_0^t g_s \ud B_s,
  \end{align}
  where $W_t$ and $B_t$ are independent Brownian motions, and $\mu_t,
  a_t, f_t, g_t$ are adapted càdlàg locally bounded processes. In
  addition, $f_t$ and $g_t$ are Itô processes and the volatility path
  $\sigma_t^2$ is bounded away from zero.
\end{assumption}

\begin{assumption}[Noise]\label{ass:noise}
  Let $\{\eps_i\}_{i=0}^n$ be mean-zero, identically distributed
  random variables independent of $\mathcal{F}$, with $\nu_k
  \coloneqq \E[\eps_i^k]$ and $\nu_4 < \infty$. In terms of serial
  dependency, $\{\eps_i\}_{i=0}^n$ are specified as either:
  \begin{enumerate}[label=(\alph*)]
    \item \label{ass:noise-iid} independent ($\eps_i \perp \eps_j$
      for all $i \neq j$); or
    \item \label{ass:noise-dep} $q$-dependent ($\eps_i \perp \eps_j$
      for $|i-j| > q$) and stationary up to the fourth moment.
  \end{enumerate}
\end{assumption}

Let $\true_T = \int_0^T 2\sigma_s^2 f_s \ud s$ denote the true
leverage effect parameter. For any estimator of interest, we maintain
a clear distinction between its infeasible version $\estx_T$ based on
latent data $\{X_i\}_{i=0}^n$, and its feasible version $\esty_T$
based on observed data $\{Y_i\}_{i=0}^n$. The statistical properties
of the estimator are investigated from two aspects:
\begin{enumerate}
  \item Asserting its \emph{unbiasedness with respect to noise}:
    \begin{align}
      \underbrace{
        \E\bigl( \esty_T  \big| \calF \bigr)
        -
        \estx_T
      }_{\text{bias due to noise}}
      = 0.
    \end{align}
  \item Assessing its total variance by decomposing it into the
    \emph{variance due to discretization} and the expected
    \emph{variance due to noise}:
    \begin{align}\label{eq:variance-decomposition}
      \var\bigl(\esty_T \bigr)
      =
      \underbrace{
        \var\bigl( \estx_T \bigr)
      }_{\text{variance due to discretization}}
      + \:
      \E\bigl[\,
        \underbrace{
          \var\bigl( \esty_T \big| \calF \bigr)
        }_{\text{variance due to noise}}
      \,\bigr].
    \end{align}
\end{enumerate}

\begin{remark}
  While some literature \citep[for
  example,][]{wang2014EstimationLeverageEffect,
  kalnina2017NonparametricEstimationLeverage} defines the leverage
  effect as $\langle X, F(\sigma^2) \rangle_T = \int_0^T 2
  F'(\sigma_s^2) \sigma_s^2 f_s \ud s$ for a general function $F \in
  \bbC^2$, we focus on the canonical case where $F(x) = x$. Beyond
  notational simplicity, this allows us to concentrate on a primary
  challenge addressed in this work: robust estimation under complex,
  dependent noise, where a general $F$ function poses significant
  additional difficulties. However, it is worth noting that the
  results of our noise-free estimators
  (Theorems~\ref{thm:SALE-clt-noisefree} and
  \ref{thm:MSLE-clt-noisefree}) can be easily extended to the general case.
\end{remark}

\subsection{Estimators and The Robustness to Noise}

For simplicity, the following notations are used for spot volatility estimation:
\begin{align}
  \label{eq:spot-vol-right}
  \widehat{\sigma}_+^2 (i, H, k, s)
  =
  \frac{1}{kH\Delta_n} \sum_{j=s+1}^{k+s} (\Delta_H Y_{i+jH})^2
  & \quad \text{for} \quad \sigma_{i+H}^2,
  \\
  \label{eq:spot-vol-left}
  \widehat{\sigma}_-^2 (i, H, k, s)
  =
  \frac{1}{kH\Delta_n} \sum_{j=-k-s}^{-s-1} (\Delta_H Y_{i+jH})^2
  & \quad \text{for} \quad \sigma_i^2,
  \\
  \label{eq:spot-vol-delta}
  \widehat{\delta} (i, H, k, s)
  =
  \widehat{\sigma}_+^2 (i, H, k, s) - \widehat{\sigma}_-^2 (i, H, k, s)
  & \quad \text{for} \quad \Delta_H \sigma_i^2.
\end{align}
Here, $H$ represents the subsampling scale, whereas $k$ and $s$
represent the size and shift of the windows for estimating spot
volatility. Specifically, the last parameter $s$ can be omitted when
$s = 1$, as this work primarily focuses on this case.

\begin{figure}[!ht]
  \centering
  \begin{subfigure}[c]{\textwidth}
    \centering
    \resizebox{\textwidth}{!}{
      \begin{tikzpicture}
        \draw[-latex] (-15, 0) -- (17, 0);
        \foreach \x in {-14, -13} {\draw (\x, 0) -- (\x, 0.1);}
        \foreach \x in {-10, ..., -6} {\draw (\x, 0) -- (\x, 0.1);}
        \foreach \x in {-3, ..., 4} {\draw (\x, 0) -- (\x, 0.1);}
        \foreach \x in {7, ..., 11} {\draw (\x, 0) -- (\x, 0.1);}
        \foreach \x in {14, ..., 16} {\draw (\x, 0) -- (\x, 0.1);}
        \node[above] at (-11.5, -0.15) {$\cdots\cdots$};
        \node[above] at (-4.5, -0.15) {$\cdots\cdots$};
        \node[above] at (5.5, -0.15) {$\cdots\cdots$};
        \node[above] at (12.5, -0.15) {$\cdots\cdots$};
        \draw[decorate,decoration={brace,amplitude=5pt,mirror}] (1,
        0.2) -- (0, 0.2);
        \node[above] at (0.5, 0.3) {$\Delta Y_i$};
        \node[below] at (0, 0) {\scriptsize{$i$}};
        \node[below] at (1, 0) {\scriptsize{$i+1$}};
        \draw[decorate,decoration={brace,amplitude=5pt,mirror}] (0,
        0.2) -- (-7, 0.2);
        \node[above] at (-3.5, 0.3) {$\widehat{\sigma}_-^2 (i, 1, k_n, 0)$};
        \node[below] at (-7, 0) {\scriptsize{$i-k_n$}};
        \draw[decorate,decoration={brace,amplitude=5pt,mirror}] (8,
        0.2) -- (1, 0.2);
        \node[above] at (4.5, 0.3) {$\widehat{\sigma}_+^2 (i, 1, k_n, 0)$};
        \node[below] at (8, 0) {\scriptsize{$i+k_n+1$}};
        \foreach \xStart/\xEnd in {-7/0, 1/8} {
          \fill [fill=red, opacity=0.2] (\xStart,0.08) -- (\xStart,
          0.0) -- (\xEnd, 0.0) -- (\xEnd, 0.08) -- cycle;
        }
        \foreach \xStart/\xEnd in {0/1} {
          \fill [fill=blue, opacity=0.2] (\xStart,0.08) -- (\xStart,
          0.0) -- (\xEnd, 0.0) -- (\xEnd, 0.08) -- cycle;
        }
      \end{tikzpicture}
    }
    \caption{The continuous leverage effect estimator in
    \cite{aitsahalia2017EstimationContinuousDiscontinuous}}
    \label{fig:base-estimators-1}
  \end{subfigure}
  \begin{subfigure}[c]{\textwidth}
    \centering
    \resizebox{\textwidth}{!}{
      \begin{tikzpicture}
        \draw[-latex] (-15, 0) -- (17, 0);
        \foreach \x in {-14, -13} {\draw (\x, 0) -- (\x, 0.1);}
        \foreach \x in {-10, ..., -6} {\draw (\x, 0) -- (\x, 0.1);}
        \foreach \x in {-3, ..., 4} {\draw (\x, 0) -- (\x, 0.1);}
        \foreach \x in {7, ..., 11} {\draw (\x, 0) -- (\x, 0.1);}
        \foreach \x in {14, ..., 16} {\draw (\x, 0) -- (\x, 0.1);}
        \node[above] at (-11.5, -0.15) {$\cdots\cdots$};
        \node[above] at (-4.5, -0.15) {$\cdots\cdots$};
        \node[above] at (5.5, -0.15) {$\cdots\cdots$};
        \node[above] at (12.5, -0.15) {$\cdots\cdots$};
        \draw[decorate,decoration={brace,amplitude=5pt,mirror}] (1,
        0.2) -- (0, 0.2);
        \node[above] at (0.5, 0.3) {$\Delta Y_i$};
        \node[below] at (0, 0) {\scriptsize{$i$}};
        \node[below] at (1, 0) {\scriptsize{$i+1$}};
        \draw[decorate,decoration={brace,amplitude=5pt,mirror}] (-1,
        0.2) -- (-8, 0.2);
        \node[above] at (-4.5, 0.3) {$\widehat{\sigma}_-^2 (i, 1, k_n, 1)$};
        \node[below] at (-1, 0) {\scriptsize{$i-1$}};
        \node[below] at (-8, 0) {\scriptsize{$i-k_n-1$}};
        \draw[decorate,decoration={brace,amplitude=5pt,mirror}] (9,
        0.2) -- (2, 0.2);
        \node[above] at (5.5, 0.3) {$\widehat{\sigma}_+^2 (i, 1, k_n, 1)$};
        \node[below] at (2, 0) {\scriptsize{$i+2$}};
        \node[below] at (9, 0) {\scriptsize{$i+k_n+2$}};
        \foreach \xStart/\xEnd in {-8/-1, 2/9} {
          \fill [fill=red, opacity=0.2] (\xStart,0.08) -- (\xStart,
          0.0) -- (\xEnd, 0.0) -- (\xEnd, 0.08) -- cycle;
        }
        \foreach \xStart/\xEnd in {0/1} {
          \fill [fill=blue, opacity=0.2] (\xStart,0.08) -- (\xStart,
          0.0) -- (\xEnd, 0.0) -- (\xEnd, 0.08) -- cycle;
        }
      \end{tikzpicture}
    }
    \caption{The all-observation leverage effect estimator defined in
    Equation~\eqref{eq:all-observation}}
    \label{fig:base-estimators-2}
  \end{subfigure}
  \begin{subfigure}[c]{\textwidth}
    \centering
    \resizebox{\textwidth}{!}{
      \begin{tikzpicture}
        \draw[-latex] (-15, 0) -- (17, 0);
        \foreach \x in {-14, -13} {\draw (\x, 0) -- (\x, 0.1);}
        \foreach \x in {-10, ..., -6} {\draw (\x, 0) -- (\x, 0.1);}
        \foreach \x in {-3, ..., 4} {\draw (\x, 0) -- (\x, 0.1);}
        \foreach \x in {7, ..., 11} {\draw (\x, 0) -- (\x, 0.1);}
        \foreach \x in {14, ..., 16} {\draw (\x, 0) -- (\x, 0.1);}
        \node[above] at (-11.5, -0.15) {$\cdots\cdots$};
        \node[above] at (-4.5, -0.15) {$\cdots\cdots$};
        \node[above] at (5.5, -0.15) {$\cdots\cdots$};
        \node[above] at (12.5, -0.15) {$\cdots\cdots$};
        \draw[decorate,decoration={brace,amplitude=5pt,mirror}] (1,
        0.2) -- (0, 0.2);
        \node[above] at (0.5, 0.3) {$\Delta Y_i$};
        \node[below] at (0, 0) {\scriptsize{$i$}};
        \node[below] at (1, 0) {\scriptsize{$i+1$}};
        \draw[decorate,decoration={brace,amplitude=5pt,mirror}] (-2,
        0.2) -- (-9, 0.2);
        \node[above] at (-5.5, 0.3) {$\widehat{\sigma}_-^2 (i, 1, k_n, 2)$};
        \node[below] at (-2, 0) {\scriptsize{$i-2$}};
        \node[below] at (-9, 0) {\scriptsize{$i-k_n-2$}};
        \draw[decorate,decoration={brace,amplitude=5pt,mirror}] (10,
        0.2) -- (3, 0.2);
        \node[above] at (6.5, 0.3) {$\widehat{\sigma}_+^2 (i, 1, k_n, 2)$};
        \node[below] at (3, 0) {\scriptsize{$i+3$}};
        \node[below] at (10, 0) {\scriptsize{$i+k_n+3$}};
        \foreach \xStart/\xEnd in {-9/-2, 3/10} {
          \fill [fill=red, opacity=0.2] (\xStart,0.08) -- (\xStart,
          0.0) -- (\xEnd, 0.0) -- (\xEnd, 0.08) -- cycle;
        }
        \foreach \xStart/\xEnd in {0/1} {
          \fill [fill=blue, opacity=0.2] (\xStart,0.08) -- (\xStart,
          0.0) -- (\xEnd, 0.0) -- (\xEnd, 0.08) -- cycle;
        }
      \end{tikzpicture}
    }
    \caption{The shifted all-observation leverage effect estimator
    (see Remark~\ref{rem:shifted-spot-vol})}
    \label{fig:base-estimators-3}
  \end{subfigure}
  \begin{subfigure}[c]{\textwidth}
    \centering
    \resizebox{\textwidth}{!}{
      \begin{tikzpicture}
        \draw[-latex] (-15, 0) -- (17, 0);
        \foreach \x in {-14, -13} {\draw (\x, 0) -- (\x, 0.1);}
        \foreach \x in {-10, ..., -6} {\draw (\x, 0) -- (\x, 0.1);}
        \foreach \x in {-3, ..., 4} {\draw (\x, 0) -- (\x, 0.1);}
        \foreach \x in {7, ..., 11} {\draw (\x, 0) -- (\x, 0.1);}
        \foreach \x in {14, ..., 16} {\draw (\x, 0) -- (\x, 0.1);}
        \node[above] at (-11.5, -0.15) {$\cdots\cdots$};
        \node[above] at (-4.5, -0.15) {$\cdots\cdots$};
        \node[above] at (5.5, -0.15) {$\cdots\cdots$};
        \node[above] at (12.5, -0.15) {$\cdots\cdots$};
        \draw[decorate,decoration={brace,amplitude=5pt,mirror}] (2,
        0.2) -- (0, 0.2);
        \node[above] at (1, 0.3) {$\Delta Y_i$};
        \node[below] at (0, 0) {\scriptsize{$i$}};
        \node[below] at (2, 0) {\scriptsize{$i+2$}};
        \draw[decorate,decoration={brace,amplitude=5pt,mirror}] (-2,
        0.2) -- (-14, 0.2);
        \node[above] at (-8, 0.3) {$\widehat{\sigma}_-^2 (i, 2, k_n', 1)$};
        \node[below] at (-2, 0) {\scriptsize{$i-2$}};
        \node[below] at (-14, 0) {\scriptsize{$i-2k_n'-2$}};
        \draw[decorate,decoration={brace,amplitude=5pt,mirror}] (16,
        0.2) -- (4, 0.2);
        \node[above] at (10, 0.3) {$\widehat{\sigma}_+^2 (i, 2, k_n', 1)$};
        \node[below] at (4, 0) {\scriptsize{$i+4$}};
        \node[below] at (16, 0) {\scriptsize{$i+2k_n'+4$}};
        \foreach \xStart/\xEnd in {-14/-2, 4/16} {
          \fill [fill=red, opacity=0.2] (\xStart,0.08) -- (\xStart,
          0.0) -- (\xEnd, 0.0) -- (\xEnd, 0.08) -- cycle;
        }
        \foreach \xStart/\xEnd in {0/2} {
          \fill [fill=blue, opacity=0.2] (\xStart,0.08) -- (\xStart,
          0.0) -- (\xEnd, 0.0) -- (\xEnd, 0.08) -- cycle;
        }
      \end{tikzpicture}
    }
    \caption{The subsampling leverage effect estimator defined in
    Equation~\eqref{eq:subsample-noisy} with $H=2$}
    \label{fig:subsample-estimators}
  \end{subfigure}
  \caption{
    Increments in the base and subsampling estimators. The increments
    in the base estimators shown in panels
    (\subref{fig:base-estimators-1}) to
    (\subref{fig:base-estimators-3}) are used as proxies for
    $\int_{t_i}^{t_{i+1}} \ud \true_t$, whereas the increment in the
    subsampling estimator shown in panel
    (\subref{fig:subsample-estimators}) is used as a proxy for
    $\int_{t_i}^{t_{i+H}} \ud \true_t$.
  }
  \label{fig:base-estimators}
\end{figure}

\subsubsection{All-Observation Estimator}

To start with, consider an \emph{all-observation Leverage Effect}
(LE) estimator that directly utilizes all noisy observations,
\begin{align}\label{eq:all-observation}
  \esty^{\rm (all)}_T
  =
  \sum_{i=k_n+1}^{n-k_n-2} (\Delta Y_{i})
  \: \widehat{\delta} (i, 1, k_n).
\end{align}
Here, the window size $k_n$ satisfies that $k_n \to \infty$ and $k_n
\Delta_n \to 0$ as $n\to \infty$. The estimator is very similar to
the continuous leverage effect estimator\footnote{Since jumps are not
  included, the truncation term in their original estimator is omitted
here.} studied
by~\citet{aitsahalia2017EstimationContinuousDiscontinuous}. The only
difference is that our estimator shifts the windows for spot
volatility estimates outward by $\Delta_{n}$, as illustrated by
Figure~\ref{fig:base-estimators-1}~and~\ref{fig:base-estimators-2}.
To see the reason, consider applying their estimator directly to
noisy observations, and it follows that there is a divergent bias due
to noise of
\begin{align}
  \E\bigl(
    \esty^{[\text{AFLWY17}]}_T \big| \calF
  \bigr)
  -
  \estx^{[\text{AFLWY17}]}_T
  =
  2 \nu_3 T^{-1} k_n^{-1} n^2
  + O_p(n),
\end{align}
when $\nu_3$ is not strictly zero. In contrast, the estimator in
Equation~\eqref{eq:all-observation} is unbiased due to noise with a
smaller variance, as described by the next
Proposition~\ref{prop:all-observation-noise}.

\begin{proposition}\label{prop:all-observation-noise}
  Under Assumptions~\ref{ass:process} and
  \ref{ass:noise}\ref{ass:noise-iid}, as $n\to\infty$, we have
  \begin{align}
    \E \big(
      \esty^{\rm (all)}_T
      \big| \calF
    \bigr)
    &=
    \estx^{\rm (all)}_T,
    \\
    \Delta_n^3 k_n^2
    \var\bigl( \esty^{\rm (all)}_T \big| \calF \bigr)
    &\convp
    (8 \nu_2 \nu_4 + 16 \nu_2^3 + 8 \nu_3^2) T.
    \label{eq:all-observation-noise-variance}
  \end{align}
\end{proposition}

With the window shift, the bias due to noise is eliminated, and the
variance due to noise is reduced, while the asymptotic behavior of
the noise-free estimator $\estx_T^{\rm (all)}$ remains the same as
$\estx^{[\text{AFLWY17}]}_{T}$. As a direct result of this
unbiasedness, a debias step in TSRV or MSRV is no longer needed.
However, the all-observation estimator is not consistent in the
presence of noise: the variance due to noise is $O(n^3 k_n^2)$ while
the noise due to discretization is $O(k_n^{-1} + n^{-1} k_n)$,
resulting in an exploding total variance.

\subsubsection{SALE: Subsampling-and-Averaging Estimator}

To mitigate the impact of noise, we employ a subsampling procedure: a
subsample of observations is denoted by a pair of integers $(H, h)$,
where $H \geq 1$ and $1 \leq h \leq H$ are the scale and
index of the subsample. The $j$th observation in subsample $(H, h)$
corresponds to the original index $j_{H,h} = jH + h - 1$, where $j =
0, 1, \dotsc, n_{H,h}$ and $n_{H,h} = \lfloor (n - h + 1)/H \rfloor$.
Assume that $k_n\to\infty$, $H_n\to\infty$ and $k_nH_n\Delta_n \to 0$
as $n\to\infty$. A subsampling estimator is constructed by applying
the all-observation estimator to a subsample of observations:
\begin{align}\label{eq:subsample-noisy}
  \esty^{(H_n, h)}_T
  =
  \sum_{i=k_n+1}^{n_{H_n,h}-k_n-2}
  (\Delta_{H_n} Y_{i_{H_n, h}})
  \: \widehat{\delta} (i_{H_n, h}, H_n, k_n).
\end{align}
As a variant of the all-observation estimator, the subsampling
estimator is not consistent. Nor is it statistically sound, as it
fails to utilize the full information in the observed data.
Therefore, by taking average over the subsampling estimators with the
same scale, we obtain an \emph{Subsampling-and-Averaging Leverage
Effect} (SALE) estimator:
\begin{align}\label{eq:SALE-noisy}
  \esty^{(H_n)}_T
  =
  \frac{1}{H_n}
  \sum_{h=1}^{H_n} \esty^{(H_n, h)}_T
  =
  \frac{1}{H_n}
  \sum_{i=(k_n+1)H_n}^{n-(k_n+2)H_n}
  (\Delta_{H_n} Y_i)
  \: \widehat{\delta} (i, H_n, k_n),
\end{align}
which can be viewed as summation of overlapping sparse increments.
The SALE estimator is consistent under Assumption~\ref{ass:noise}
with proper choices of $k_n$ and $H_n$. This is primarily attributed
to the fact that the correlation due to noise between different
subsampling estimators in Equation~\eqref{eq:SALE-noisy} can be
controlled by the dependence level of noise. Specifically, if
Assumption~\ref{ass:noise}\ref{ass:noise-iid} holds, this correlation
becomes zero. Thus, the variance due to noise significantly reduces,
as the following proposition describes.
\begin{proposition}\label{prop:SALE-noise}
  Suppose that Assumptions~\ref{ass:process} and
  \ref{ass:noise}\ref{ass:noise-dep} hold, and that $H_n > 2q$.
  Define the generalized autocorrelation functions (ACFs) of noise
  for any $l \in \Z$ as
  \begin{align}\label{eq:general-acf}
    \rho_2(l) &= \corr(\eps_i, \eps_{i+l})
    = \frac{\E [ \eps_{i}\eps_{i+l} ]}{\nu_2}, \\
    \rho_3(l) &= \corr(\eps_i, \eps_{i+l}^2)
    = \frac{\E [ \eps_{i}\eps_{i+l}^{2} ]}{\sqrt{\nu_2(\nu_4 - \nu_2^2)}}, \\
    \rho_4(l) &= \corr( \eps_{i}^{2}, \eps_{i+l}^{2} )
    = \frac{\E[ \eps_{i}^{2} \eps_{i+l}^{2} ] - \nu_2^2}{\nu_4 - \nu_2^2}.
  \end{align}
  As $n\to\infty$, we have
  \begin{gather}
    \label{eq:SALE-var-noise}
    \Delta_n^3 H_n^4 k_n^2
    \var \bigl( \esty^{(H_n)}_T \big|
    \calF \bigr)
    \convp
    \Phi T, \\
    \label{eq:Phi}
    \text{where }
    \Phi = 8 \nu_2 (\nu_4 - \nu_2^2)
    \sum_{l=-q}^{q} \bigl( \rho_2(l)\rho_4(l) + \rho_3(l) \rho_3(-l) \bigr)
    + 24 \nu_2^3 \sum_{l=-q}^{q} \rho_2^3(l).
  \end{gather}
  Specifically, under Assumptions~\ref{ass:process} and
  \ref{ass:noise}\ref{ass:noise-iid},
  Equation~\eqref{eq:SALE-var-noise} holds with
  $\Phi = 8 \nu_2 \nu_4 + 16 \nu_2^3 + 8 \nu_3^2$.
\end{proposition}

The SALE estimator has a variance due to noise of $O(n^3 H_n^{-4}
k_n^{-2})$, and a variance due to discretization of $O(k_n^{-1} +
n^{-1} H_n k_n)$. Suppose $H_n \propto n^a$ and $k_n \propto
(n/H)^b$, the optimal total variance $O(n^{-1/7})$ is achieved at
$a=5/7$ and $b=1/2$. Despite being consistent, this is far from the
optimal rate $O(n^{-1/4})$ of pre-averaging approach.

\subsubsection{MSLE: Multi-Scale Estimator}\label{sec:MSLE}

Consider a set of scales $1 \leq H_1 < \dots < H_{M_n} \leq n^a$ for
some $a\in(0,1)$, where $M_n>0$. An \emph{Multi-Scale Leverage
Effect} (MSLE) estimator is defined as a weighted average of SALE
estimators at different scales:
\begin{align}\label{eq:MSLE-noisy}
  \esty^{\rm (MS)}_T = \sum_{p=1}^{M_n} w_{p} \esty^{(H_p)}_T,
\end{align}
where the weight vector $\bm w=(w_1, \dotsc, w_{M_n})$ satisfies $\bm
w^T \bm 1_{M_n} = 1$ with $\|\bm w\|_1$ bounded.

When the noise is i.i.d., the covariance due to noise between a pair
of SALE estimators are non-zero only when a scale is double the
other, and the corresponding correlation coefficient is small (for
example, 0.1 for Gaussian noise). For serial dependent noise, the
case is more complicated and analytical results are hard to derive.
Instead, numerical calculation is available (see Supplementary Material).

\begin{proposition}\label{prop:MSLE-noise}
  Under Assumptions~\ref{ass:process} and
  \ref{ass:noise}\ref{ass:noise-iid}, and suppose $k_p = \lfloor
  \beta \lfloor n/H_p\rfloor^b \rfloor$ holds for all $p\in\{1,
  \dotsc, M_n\}$ with some constant $\beta>0$ and $b\in(0,1)$. For
  any $p, q \in \{1, \dotsc, M_n\}$, as $n\to\infty$, we have
  \begin{gather}
    \Delta_n^3 H_p^2 H_q^2 k_p k_q
    \cov \bigl(
      \esty^{(H_p)}_T, \esty^{(H_q)}_T
      \big| \calF
    \bigr)
    \convp
    F_{p,q} T. \\
    \label{eq:MSLE-noise-Fpq}
    \text{where }
    F_{p,q} = (8\nu_2\nu_4 + 16\nu_2^3 + 8\nu_3^2) 1_{\{p=q\}} +
    2\nu_2(\nu_4-\nu_2^2) (1_{\{H_p/H_q=2\}} + 1_{\{H_q/H_p=2\}}),
  \end{gather}
  and thus
  \begin{align}
    \frac
    {\var \bigl( \esty^{\rm (MS)}_T
    \big| \calF \bigr)}
    {\displaystyle \Delta_n^{-3} \sum_{p=1}^{M_n} \sum_{q=1}^{M_n}
    \frac{w_p}{H_p^2 k_p} \cdot F_{p,q} \cdot \frac{w_q}{H_q^2 k_q}}
    \convp
    T.
  \end{align}
\end{proposition}

\begin{remark}\label{rem:shifted-spot-vol}
  The ``double scale'' terms in $F_{p,q}$ can be removed by using a
  further shifted spot volatility window with $s=2$ in
  Equation~\eqref{eq:spot-vol-delta}, as illustrated in
  Figure~\ref{fig:base-estimators-3}. A similar proposition can be
  established with $F_{p,q} = (8\nu_2\nu_4 + 8\nu_2^3 + 8\nu_3^2)
  1_{\{p=q\}}$, eliminating the cross terms and reducing the variance.
\end{remark}

The MSLE estimator effectively reduces the variance due to noise. For
example, setting $H_p=p$ for $p=1, \dotsc, M_n$, $M_n = \lfloor n^a
\rfloor$, and $w_p \propto p^{4-2b}$, the variance due to noise is
$O(n^{3-5a-2b+2ab})$, while the variance due to discretization is
$O(n^{-(1-a)(b \land (1-b))})$. Thus, by selecting $a=5/9$ and $b =
1/2$, an optimal convergence rate of $n^{1/9}$ is achieved for MSLE,
close to the optimal rate $n^{1/8}$ of pre-averaging approach.

\section{Main Results}\label{sec:results}

\subsection{Central Limit Theorems for SALE}

We start by establishing the following theorem for the noise-free
version of SALE. Two scenarios for the scale $H_n$ are considered:
either it is fixed, or it goes to infinity as $n \to \infty$.
Hereafter, we use $\convst$ to denote stable convergence in law.

\begin{assumption}\label{ass:SALE-para}
  Suppose that $H_n$ and $k_n$ satisfy one of the following conditions:
  \begin{enumerate}[label=(\alph*)]
    \item \label{ass:SALE-para-finite} $H_n=H$ is a given positive
      integer, $k_n = \lfloor \beta \lfloor n/H \rfloor^b \rfloor$
      for some $\beta > 0$ and $b \in (0,1)$.
    \item \label{ass:SALE-para-asym} $H_n = \lfloor \alpha n^a \rfloor$,
      $k_n = \lfloor \beta
      \lfloor n/H_n \rfloor^b \rfloor$ for some $\alpha, \beta > 0$ and
      $a, b \in (0,1)$.
  \end{enumerate}
\end{assumption}

\begin{theorem}\label{thm:SALE-clt-noisefree}
  \begin{enumerate}[label=(\arabic*)]
    \item [] %
    \item \label{thm:SALE-clt-noisefree-1} Under
      Assumptions~\ref{ass:process} and
      \ref{ass:SALE-para}\ref{ass:SALE-para-finite}, let $u_n =
      \sqrt{k_n \land (k_n H \Delta_n)^{-1}}$. There exist a standard
      Brownian motion $(W_{1,t})_{t\geq 0}$ independent of $\calF$
      and a predictable process $(\zeta_{1,t})_{t\geq 0}$ such that,
      as $n\to\infty$,
      \begin{gather}
        u_n \bigl( \estx^{(H)}_T - \true_T \bigr)
        \convst
        \int_0^T \zeta_{1,t} \ud W_{1,t},
        \\
        \label{eq:avar-SALE-clt-noisefree-1}
        \int_0^T \zeta_{1,t}^2 \ud t =
        \frac{u_n^2}{k_n} \left(\frac{8}{3} + \frac{4}{3H^2}\right)
        \int_0^T \sigma_s^6 \ud t +
        u_n^2 k_n H \Delta_n \frac{2}{3} \int_0^T \sigma_t^2 \ud
        \langle \sigma^2, \sigma^2 \rangle_t.
      \end{gather}
    \item \label{thm:SALE-clt-noisefree-2} Under
      Assumptions~\ref{ass:process} and
      \ref{ass:SALE-para}\ref{ass:SALE-para-asym}, there exist a
      standard Brownian motion $(W_{1,t})_{t\geq 0}$ independent of
      $\calF$ and a predictable process $(\zeta_{1,t})_{t\geq 0}$
      such that, as $n\to\infty$,
      \begin{gather}
        n^{\frac{1}{2}(1-a)(b\land (1-b))}
        \bigl( \estx^{(H_n)}_T - \true_T \bigr)
        \convst
        \int_0^T \zeta_{1, t} \ud W_{1, t},
        \\
        \label{eq:avar-SALE-clt-noisefree-2}
        \int_{0}^{T} \zeta_{1, t}^{2} \ud t
        =
        \frac{8 \alpha^b}{3 \beta} \int_0^T \sigma_t^6 \ud t
        \cdot 1_{(0, 1/2]}(b)
        +
        \frac{2 \alpha^{1-b} \beta T}{3} \int_0^T \sigma_t^2 \ud
        \langle \sigma^2, \sigma^2 \rangle_t
        \cdot 1_{[1/2, 1)}(b).
      \end{gather}
  \end{enumerate}
\end{theorem}

\begin{remark}
  Taking $H=1$ in
  Theorem~\ref{thm:SALE-clt-noisefree}\ref{thm:SALE-clt-noisefree-1}
  yields the central limit theorem for the all-observation estimator,
  which has a same asymptotic variance as the continuous leverage
  effect estimator in~\citet{aitsahalia2017EstimationContinuousDiscontinuous}.
\end{remark}

\begin{remark}\label{remark:spot-vol-errors}
  Similar to existing work on leverage effect
  estimation~\citep{wang2014EstimationLeverageEffect,
    aitsahalia2014HighFrequencyFinancialEconometrics,
    aitsahalia2017EstimationContinuousDiscontinuous,
    kalnina2017NonparametricEstimationLeverage,
  yang2023EstimationLeverageEffect}, the asymptotic variance is
  determined by the spot volatility estimation, which consists of two
  sources of errors: the \emph{price variation error} and the
  \emph{volatility variation
  error}~\citep{aitsahalia2017EstimationContinuousDiscontinuous},
  corresponding to the first and second terms in
  Equation~\eqref{eq:avar-SALE-clt-noisefree-1} or
  \eqref{eq:avar-SALE-clt-noisefree-2}. Intuitively, increasing $k_n$
  leads to a wider spot volatility estimation window and thus a
  larger sample size for that estimation, which reduces the price
  variation error. On the other hand, this increases the volatility
  variation error, because the estimated volatility becomes less
  ``local''. The optimal choice of $k_n$ is thus a trade-off between
  these two sources of error.
\end{remark}

Next, we establish the following theorem for the noisy version of
SALE. Notably,
Assumption~\ref{ass:SALE-para}\ref{ass:SALE-para-finite} is not
considered, as a finite $H$ leads to a divergent variance due to
noise and thus an inconsistent estimator.

\begin{theorem}\label{thm:SALE-clt-noisy}
  Under Assumptions~\ref{ass:process},
  \ref{ass:noise}\ref{ass:noise-dep} and
  \ref{ass:SALE-para}\ref{ass:SALE-para-asym}, suppose that $H_n >
  2q$ and $4a + 2b - 2ab > 3$. Let $r = [(1-a)(b\land (1-b))] \land
  [4a+2b-2ab-3]$ and let $\Phi$ be as defined in
  Equation~\eqref{eq:Phi}. There exist a standard Brownian motion
  $(W_{2, t})_{t\geq 0}$ independent of $\calF$ and a predictable
  process $(\zeta_{2, t})_{t\geq 0}$, such that, as $n\to\infty$,
  \begin{align}
    & n^{\frac{1}{2}r}
    \bigl(
      \esty^{(H_n)}_T - \true_T
    \bigr)
    \convst
    \int_{0}^{T} \zeta_{2, t} \ud W_{2, t},
    \\
    \int_{0}^{T} \zeta_{2, t}^{2} \ud t
    & =
    \frac{8 \alpha^{b}}{3 \beta} \int_{0}^{T} \sigma_{t}^{6} \ud t
    \cdot 1_{\{(1-a)b\}}(r)
    \notag
    \\ & \qquad
    +
    \frac{2 \alpha^{1-b} \beta T}{3} \int_{0}^{T} \sigma_{t}^{2}
    \ud \langle \sigma^{2}, \sigma^{2} \rangle_t
    \cdot 1_{\{(1-a)(1-b)\}}(r)
    \notag
    \\ & \qquad
    + \frac{1}{\alpha^{4-2b} \beta^2 T^3}
    \int_{0}^{T} \Phi \ud t
    \cdot 1_{\{4a+2b-2ab-3\}}(r).
  \end{align}
\end{theorem}

\subsection{Central Limit Theorems for MSLE}

To establish the limit theorems for MSLE, the following conditions on
scales, window sizes and weights are established, and the covariances
due to discretization between SALE estimators are given by
Proposition~\ref{prop:SALE-acov-noisefree} and \ref{prop:adj-factor-asym}.

\begin{assumption}\label{ass:MSLE-para}
  The scales $\{H_p\}_{p=1}^{M_n}$ satisfy $1 \leq H_1 < \dotsc <
  H_{M_n} \leq n^a$ for some $a\in(0,1)$. The window sizes are $k_p =
  \lfloor \beta \lfloor n/H_p \rfloor^b \rfloor$ for all $p \in \{1,
  \dotsc, M_n\}$ for some $\beta > 0$ and $b \in (0, 1)$. The weight
  vector $\bm w=(w_1, \dotsc, w_{M_n})$ satisfies that $\bm w^T \bm
  1_{M_n} = 1$ and that $\|\bm w\|_1$ is bounded.
\end{assumption}

\begin{proposition}\label{prop:SALE-acov-noisefree}
  Suppose that Assumptions~\ref{ass:process} and \ref{ass:MSLE-para}
  hold. For any $1 \leq q \leq p \leq M_n$, let $u_n = \sqrt{k_p
  \land (k_p H_p \Delta_n)^{-1}}$. There exist $v_{p,q}^{(1)},
  v_{p,q}^{(2)} \in [0, 1)$ that depend on $H_p, H_q$ and $n$, such that,
  as $n\to\infty$,
  \begin{gather}
    u_n
    \begin{pmatrix}
      \estx^{(H_p)}_T - \true_T \\
      \estx^{(H_q)}_T - \true_T
    \end{pmatrix}
    \convst
    \normal\left(0,
      u_n^2
      \begin{bmatrix}
        \Sigma_{p,p}^{\mathrm{(disc)}} &
        \Sigma_{p,q}^{\mathrm{(disc)}} \\
        \Sigma_{p,q}^{\mathrm{(disc)}} &
        \Sigma_{q,q}^{\mathrm{(disc)}}
    \end{bmatrix}\right),
    \\
    \label{eq:SALE-acov-disc}
    \text{where }
    \Sigma_{p,q}^{\mathrm{(disc)}} =
    \frac{1}{k_p} \cdot 4 v_{p,q}^{(1)} \frac{H_q}{H_p}
    \cdot \int_0^T \sigma_t^6 \ud t
    +
    k_pH_p\Delta_n \cdot \frac{2}{3} v_{p,q}^{(2)}
    \left(\frac{k_qH_q}{k_pH_p}\right)^2
    \cdot \int_0^T \sigma_t^2 \ud \langle \sigma^2, \sigma^2 \rangle_t,
  \end{gather}
  and the limiting process is independent of $\calF$.
\end{proposition}

\begin{remark}
  Factors $v_{p,q}^{(1)}$ and $v_{p,q}^{(2)}$, arising from the grid
  structures of scales $H_p$ and $H_q$, contribute to price variation
  error and volatility variation error, respectively. Their
  definitions are detailed in Supplementary Material.
\end{remark}

\begin{proposition}\label{prop:adj-factor-asym}
  Suppose that the conditions of
  Proposition~\ref{prop:SALE-acov-noisefree} hold. Consider two
  sequences of scales $H_p$ and $H_q$ indexed by $n$, satisfying that
  (i) $H_p \geq H_q$ for all $n$, (ii) as $n\to\infty$, $H_p, H_q \to
  \infty$ and $H_{q} / H_{p} \to \rho$ for some constant $\rho \in (0, 1]$.
  Then, as $n\to\infty$, we have
  \begin{align}\label{eq:adj-factor-asym}
    v_{p,q}^{(1)} \to 1 - \frac{\rho}{3}
    \quad \text{and} \quad
    v_{p,q}^{(2)} \to 1.
  \end{align}
\end{proposition}

For the asymptotic behavior of MSLE, a specific set of consecutive
scales are considered, and the weights are defined with a continuous
bounded function.

\begin{assumption}\label{ass:MSLE-para-detail}
  Suppose that $\{H_p\}_{p=1}^{M_n}$ and $\bm{w}$ satisfy the
  following conditions:
  \begin{enumerate}[label=(\alph*)]
    \item \label{ass:MSLE-para-detail-scale} $H_p = m_n+p$ for all $p
      \in \{1, \dotsc, M_n\}$. Defining $H_n^* = H_{M_n}$, the
      sequences of positive integers $m_n$ and $M_n$ are selected
      such that, as $n\to\infty$, $H_n^* / n^a \to \alpha$, $m_n /
      H_n^* \to c$ for some constants $\alpha > 0$, $a\in(0,1)$, $c\in(0, 1)$.
    \item \label{ass:MSLE-para-detail-weight} $w_p =
      \frac{1-c}{M_n}\phi(c+\frac{p}{H_n^*})$ for all $p \in \{1,
      \dotsc, M_n\}$, where $\phi: [c, \infty) \to \R$ is a
      continuous bounded function satisfying that $\int_c^1 \phi(x)
      \ud x = 1$.
  \end{enumerate}
\end{assumption}

\begin{theorem}\label{thm:MSLE-clt-noisefree}
  Suppose that Assumptions~\ref{ass:process}, \ref{ass:MSLE-para} and
  \ref{ass:MSLE-para-detail} hold. There exist a standard Brownian
  motion $(W_{3, t})_{t\geq 0}$ independent of $\calF$ and a
  predictable process $(\zeta_{3, t})_{t\geq 0}$, such that, as $n\to\infty$,
  \begin{align}
    & \quad
    n^{\frac{1}{2}(1-a)(b\land (1-b))}
    \bigl( \estx^{\rm (MS)}_T - \true_T \bigr)
    \convst
    \int_0^T \zeta_{3, t} \ud W_{3, t},
    \\
    \int_0^T \zeta_{3, t}^{2} \ud t
    &=
    \frac{8\alpha^b}{\beta} \int_c^1 \int_c^x \phi(x) \phi(y)
    x^b \left(\frac{y}{x} - \frac{y^2}{3x^2}\right) \ud y \ud x
    \cdot \int_0^T \sigma_t^6 \ud t \cdot 1_{(0, 1/2]}(b)
    \notag
    \\ & \quad
    + \frac{4\alpha^{1-b}\beta T}{3} \int_c^1 \int_c^x \phi(x)
    \phi(y) \frac{y^{2(1-b)}}{x^{1-b}} \ud y \ud x \cdot
    \int_0^T \sigma_t^2 \ud \langle \sigma^2, \sigma^2
    \rangle_t \cdot 1_{[1/2, 1)}(b).
    \label{eq:avar-MSLE-clt-noisefree-2}
  \end{align}
\end{theorem}

\begin{theorem}\label{thm:MSLE-clt-noisy}
  Under Assumptions~\ref{ass:process},
  \ref{ass:noise}\ref{ass:noise-iid}, \ref{ass:MSLE-para} and
  \ref{ass:MSLE-para-detail}, suppose that $5a+2b-2ab > 3$, and let
  $r = [(1-a)(b\land (1-b))] \land [5a+2b-2ab-3]$, $F_1 = 8\nu_2\nu_4
  + 16\nu_2^3 + 8\nu_3^2$ and $F_2 = 2\nu_2(\nu_4-\nu_2^2)$. There
  exist a standard Brownian motion $(W_{4, t})_{t\geq 0}$ independent
  of $\calF$ and a predictable process $(\zeta_{4, t})_{t\geq 0}$,
  such that, as $n\to\infty$,
  \begin{align}
    & \qquad \qquad
    n^{\frac{1}{2}r}
    \bigl( \esty^{\rm (MS)}_T - \true_T \bigr)
    \convst \int_0^T \zeta_{4, t} \ud W_{4, t},
    \\
    \notag
    \int_0^T \zeta_{4, t}^{2} \ud t
    &=
    \frac{8\alpha^b}{\beta} \int_c^1 \int_c^x \phi(x) \phi(y) x^b
    \left(\frac{y}{x} - \frac{y^2}{3x^2}\right) \ud y \ud x \cdot
    \int_0^T \sigma_t^6 \ud t \cdot 1_{\{(1-a)b\}}(r)
    \\ & \quad \notag
    + \frac{4\alpha^{1-b}\beta T}{3} \int_c^1 \int_c^x \phi(x)
    \phi(y) \frac{y^{2(1-b)}}{x^{1-b}} \ud y \ud x \cdot \int_0^T
    \sigma_t^2 \ud \langle \sigma^2, \sigma^2 \rangle_t \cdot
    1_{\{(1-a)(1-b)\}}(r)
    \\ & \quad \notag
    + \frac{1}{\alpha^{5-2b}\beta^2 T^3} \int_c^1 \phi^2(x)
    x^{-(4-2b)} \ud x \cdot \int_0^T F_1 \ud t \cdot 1_{\{5a+2b-2ab-3\}}(r)
    \\ & \quad
    + \frac{1_{(0, 1/2]}(c)}{2^{2-b}\alpha^{5-2b}\beta^2 T^3}
    \int_c^{1/2} \phi(x) \phi(2x) x^{-(4-2b)} \ud x \cdot \int_0^T
    F_2 \ud t \cdot 1_{\{5a+2b-2ab-3\}}(r).
  \end{align}
\end{theorem}

Note that the asymptotic variances in
Theorems~\ref{thm:SALE-clt-noisefree} to \ref{thm:MSLE-clt-noisy} are
unobservable. Their consistent estimators and feasible central limit
theorems are detailed in Supplementary Material.

\section{Practical Aspects: Variances and Weights}
\label{sec:practical}

\subsection{Asymptotic Variances in Practice}
\label{sec:practical-variance}

Accurate asymptotic variance is important for parameter tuning. For
SALE, it helps pin down the optimal scale; for MSLE, it helps decide
the optimal weight distributions. Despite theoretical correctness,
the accuracy of derived variances may be affected by two situations
in practice: (i) small noise and (ii) violation of conditions in
Proposition~\ref{prop:adj-factor-asym}.

The asymptotic variances due to noise in
Proposition~\ref{prop:all-observation-noise}, \ref{prop:SALE-noise}
and \ref{prop:MSLE-noise} are established based on non-shrinkaging noise
assumptions. However, a small noise correction could be necessary in
practice, as some terms of small order become more pronounced as
noise becomes smaller. For all-observation estimators, we have
\begin{align}
  \Delta_n^3 & k_n^2
  \var\bigl( \esty^{\rm (all)} \big| \calF \bigr)
  \convp
  \notag
  \\
  & \Bigl(
    (8\nu_2\nu_4 + 16\nu_2^3 + 8\nu_3^2)
    + \frac{k_n}{n^2} \cdot (8\nu_4 + 16\nu_2^2) \int_0^T \sigma_t^2 \ud t
    + \frac{1}{n^2} \cdot 8T \nu_2 \int_0^T \sigma_t^4 \ud t
  \Bigr) T.
  \label{eq:all-observation-noise-variance-corrected}
\end{align}
Figure~\ref{fig:avar-noise} compares the simulated performance of
Equation~\eqref{eq:all-observation-noise-variance-corrected} and
Equation~\eqref{eq:all-observation-noise-variance}. Similar
correction for SALE estimators are provided in Supplementary Material.

\begin{figure}[!htp]
  \centering
  \centering
  \includegraphics[width=0.5\textwidth]{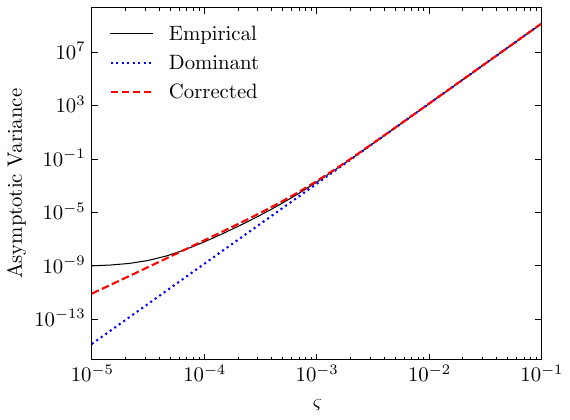}
  \caption{
    Simulated asymptotic variance due to noise of all-observation
    estimators under different noise levels. A fixed Heston path from
    Section~\ref{sec:simulation} is used for $X$, whereas i.i.d.
    $\normal(0, \varsigma^2)$ random variables are used for $\eps$.
    ``Empirical'': average of 5000 realizations of $\bigl( \esty^{\rm
    (all)}_T - \estx^{\rm (all)}_T \bigr)^2$. ``Dominant'': variance
    calculated with
    Equation~\eqref{eq:all-observation-noise-variance}.
    ``Corrected'': variance calculated with
    Equation~\eqref{eq:all-observation-noise-variance-corrected}.
  }
  \label{fig:avar-noise}
\end{figure}

As for asymptotic variance due to discretization, the limit
expression in Equation~\eqref{eq:adj-factor-asym} does not work for
$H_q / H_p \to 0$, and can be inaccurate when any of $n, H_p, H_q$ is
not large enough
(Theorem~\ref{thm:SALE-clt-noisefree}\ref{thm:SALE-clt-noisefree-1}
is an example with fixed $H_p = H_q$). Consequently,
Proposition~\ref{prop:SALE-acov-noisefree} will improve the accuracy
of asymptotic variances in such cases.

\subsection{Approximate Weights of MSLE Estimators}
\label{sec:practical-weight}

Suppose that Assumption~\ref{ass:MSLE-para} holds. Let $\bm{\Sigma}
\in \R^{M_n \times M_n}$ denote the total asymptotic covariance
matrix between scales, the optimal weight assignment can be
obtained by solving
\begin{align}
  \underset{\bm{w} \in \R^{M_n}}{\text{minimize}} & \quad V(\bm{w})
  = \bm{w}^\T \bm{\Sigma} \bm{w}, \\
  \text{subject to} & \quad \bm{w}^\T \bm{1}_{M_n} = 1.
\end{align}
The solution and the corresponding minimum are given by
\begin{align}\label{eq:weight-optimization-solution}
  \bm{w}^*
  =
  \frac
  {\bm{\Sigma}^{-1} \bm{1}_{M_n}}
  {\bm{1}_{M_n}^\T \bm{\Sigma}^{-1} \bm{1}_{M_n}},
  \quad
  V(\bm{w}^*)
  =
  \frac{1}{\bm{1}_{M_n}^\T \bm{\Sigma}^{-1} \bm{1}_{M_n}}
  =
  \Biggl(\sum_{p=1}^{M_n}\sum_{q=1}^{M_n}
  (\bm{\Sigma}^{-1})_{p,q}\Biggr)^{-1}.
\end{align}
However, direct application of
Equation~\eqref{eq:weight-optimization-solution} faces challenges in
practice: (i) the covariance matrix cannot be observed and therefore
estimated values are needed; (ii) the solution $\bm w^*$ can be
numerically unstable, sensitive to estimation errors of $\bm \Sigma$;
and (iii) calculating
Equation~\eqref{eq:weight-optimization-solution} requires matrix
inversion, which has an expensive time complexity of $O(M_n^3)$.
To address these challenges, we construct approximate weights for
MSLE estimators. For simplicity, hereafter in this section, we suppose that
Assumption~\ref{ass:MSLE-para-detail}\ref{ass:MSLE-para-detail-scale}
holds, and let
\begin{align}
  s_1 = \frac{4}{\beta} \int_0^T \sigma_t^6 \ud t, \quad
  s_2 = \frac{2\beta T}{3}
  \int_0^T \sigma_t^2 \ud \langle \sigma^2, \sigma^2 \rangle_t, \quad
  s_3 = \frac{8\nu_2\nu_4 + 16\nu_2^3 + 8\nu_3^2}{\alpha^{9/2} \beta^2 T^2}.
\end{align}
Detailed derivations for Equations~\eqref{eq:approx-weights} and
\eqref{eq:fredholm} presented in this section can be found in
Supplementary Material.

\subsubsection{Approximation in the Noise-Free Case}
\label{sec:practical-weight-noisefree}

In the absence of noise, the MSLE estimator can be used to enhance
the statistical efficiency of the all-observation estimator by
employing a more optimal weight assignment, rather than only
allocating all weight to the $H=1$ scale. For generality,
Definition~\ref{def:approx-weights} provides a closed-form expression
for the approximate weights applicable to all scales within $(m_n,
m_n+M_n]$. It is obtained by taking the limit $m_n \to \infty$ in
Equation~\eqref{eq:weight-optimization-solution}, where $\bm\Sigma$
is approximated by using Proposition~\ref{prop:adj-factor-asym} and
assuming that $s_2 \ll s_1$.

\begin{definition}[Approximate weights]\label{def:approx-weights}
  Suppose that Assumptions~\ref{ass:MSLE-para} and
  \ref{ass:MSLE-para-detail}\ref{ass:MSLE-para-detail-scale} hold.
  The approximate weights are given by $\bm{\widetilde w} = (\bm
  1_{M_n}^\T \bm{\widetilde\omega})^{-1} \bm{\widetilde\omega}$, where
  \begin{align}\label{eq:approx-weights}
    \bm{\widetilde\omega}_p =
    \begin{cases}
      2(m_n+1)^{-1/2}, & p = 1, \\
      (m_n+p)^{-3/2}, & p = 2, \dots, M_n-1, \\
      (m_n+M_n)^{-1/2}, & p = M_n.
    \end{cases}
  \end{align}
\end{definition}

Apart from being computationally and statistically efficient,
$\bm{\widetilde{w}}$ is numerically stable: since $\widetilde{w}_p >
0$ holds for any $p = 1, \dots, M_n$, we always have
$\sum_{p=1}^{M_n} |\widetilde{w}_p| = \sum_{p=1}^{M_n} \widetilde{w}_p = 1$.

\subsubsection{Approximation in the Noisy Case}
\label{sec:practical-weight-noisy}

Let $\varphi(x) = x^{-3/2} \phi(x)$, $\gamma=s_1/(3(s_1+s_2))$,
$\lambda=-(s_1+s_2)/s_3$, and
\begin{align}\label{eq:fredholm-kernel}
  K(x,y) = x^{3/2} y^{3/2} (x\lor y)^{1/2} \Biggl(
    \frac{x\land y}{x\lor y} - \gamma \left(\frac{x\land y}{x\lor y}\right)^2
  \Biggr).
\end{align}
The optimal $\phi(x)$ in
Assumption~\ref{ass:MSLE-para-detail}\ref{ass:MSLE-para-detail-weight}
is related to a Fredholm integral equation:
\begin{align}\label{eq:fredholm}
  \varphi(x) = kx^{3/2} + \lambda \int_c^1 K(x,y) \varphi(y) \ud y,
\end{align}
where $k \in \R$ is a constant such that $\int_c^1 \varphi(x) x^{3/2}
\ud x = 1$.
Equation~\eqref{eq:fredholm} relies on several simplifications:
(i) the conditions of Theorem~\ref{thm:MSLE-clt-noisy} hold with
$a=5/9$ and $b=1/2$, corresponding to the optimal convergence rate
of MSLE estimators;
(ii) the sparse off-diagonal terms of covariance due to noise are
omitted, as explained in Section~\ref{sec:MSLE}; and
(iii) finite-sample corrections in
Section~\ref{sec:practical-variance} are not considered.
These simplifications are made to isolate the dominant asymptotic
behavior for analytical tractablity.

\begin{figure}[!htp]
  \centering
  \begin{subfigure}[b]{0.24\linewidth}
    \includegraphics[width=\linewidth]{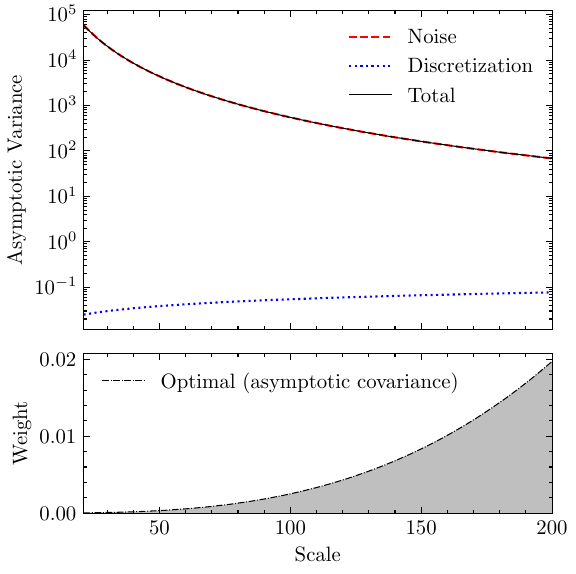}
    \caption{$\lambda=-1$}
  \end{subfigure}
  \hfill
  \begin{subfigure}[b]{0.24\linewidth}
    \includegraphics[width=\linewidth]{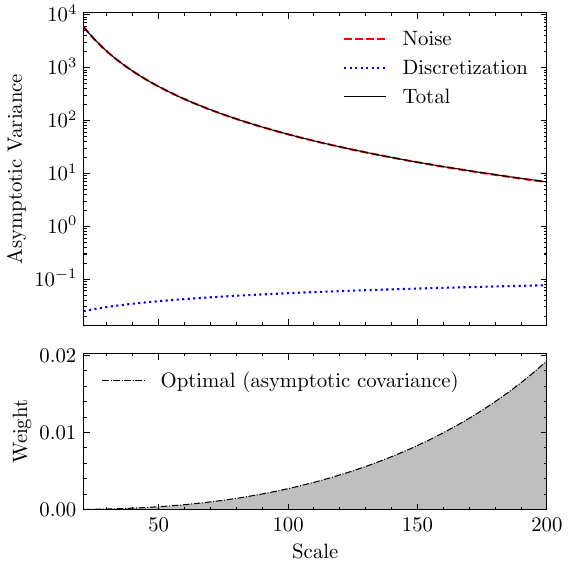}
    \caption{$\lambda=-10$}
  \end{subfigure}
  \hfill
  \begin{subfigure}[b]{0.24\linewidth}
    \includegraphics[width=\linewidth]{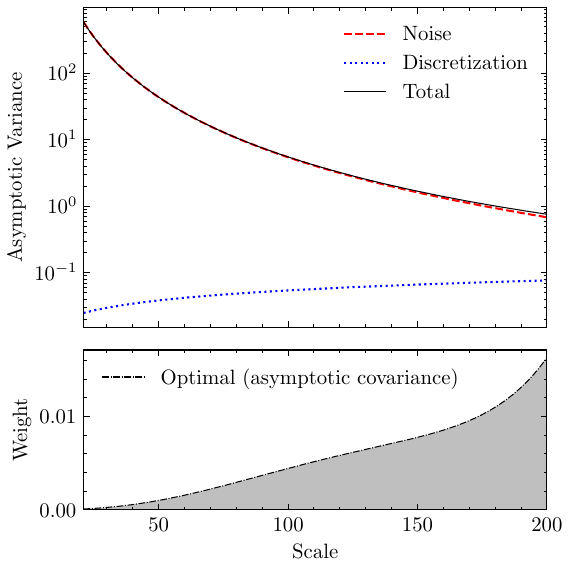}
    \caption{$\lambda=-10^2$}
  \end{subfigure}
  \hfill
  \begin{subfigure}[b]{0.24\linewidth}
    \includegraphics[width=\linewidth]{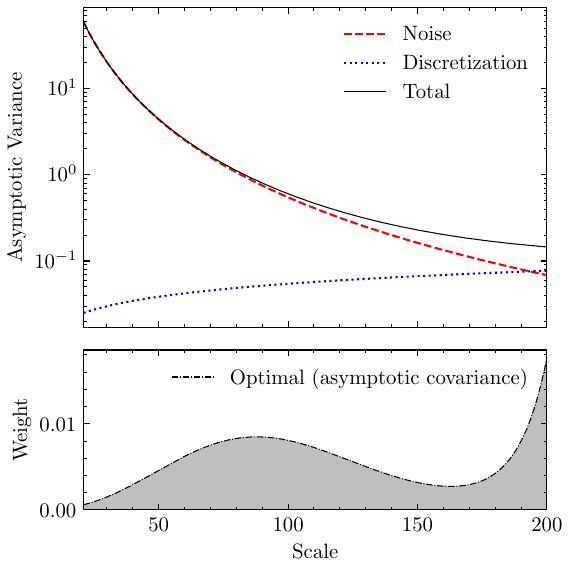}
    \caption{$\lambda=-10^3$}
  \end{subfigure}

  \begin{subfigure}[b]{0.24\linewidth}
    \includegraphics[width=\linewidth]{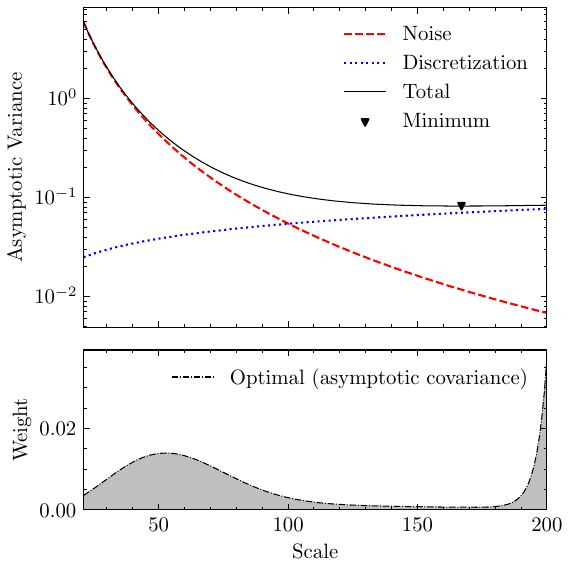}
    \caption{$\lambda=-10^4$}
  \end{subfigure}
  \hfill
  \begin{subfigure}[b]{0.24\linewidth}
    \includegraphics[width=\linewidth]{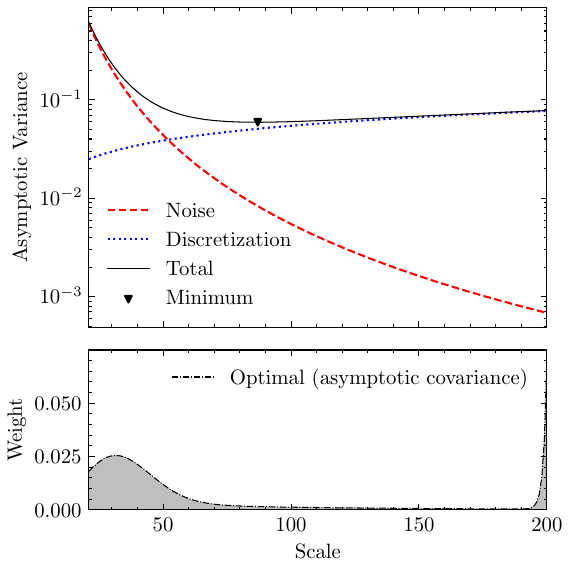}
    \caption{$\lambda=-10^5$}
  \end{subfigure}
  \hfill
  \begin{subfigure}[b]{0.24\linewidth}
    \includegraphics[width=\linewidth]{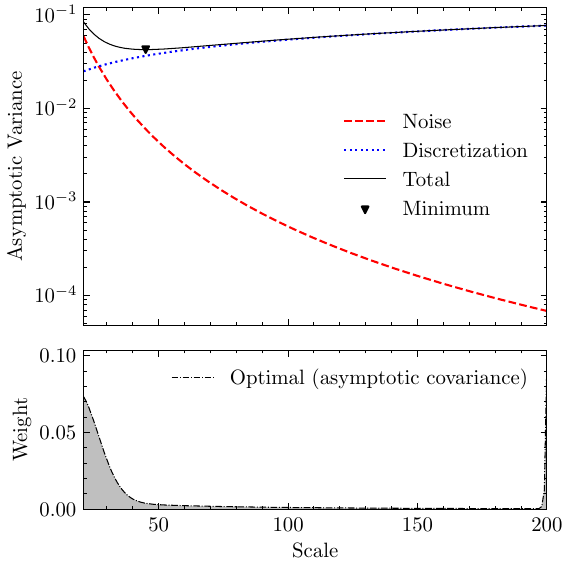}
    \caption{$\lambda=-10^6$}
  \end{subfigure}
  \hfill
  \begin{subfigure}[b]{0.24\linewidth}
    \includegraphics[width=\linewidth]{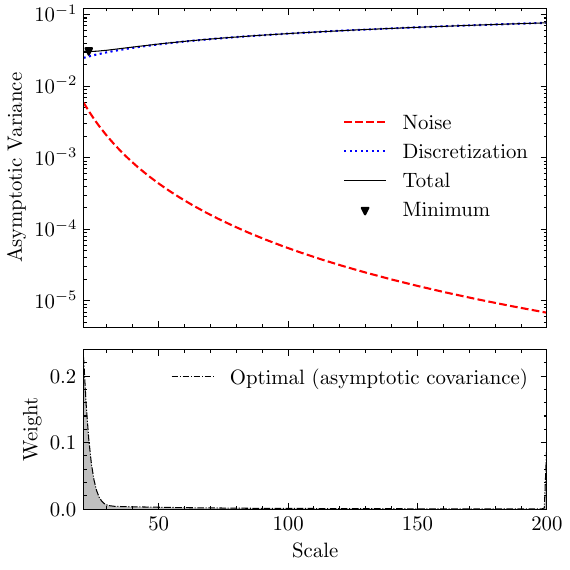}
    \caption{$\lambda=-10^7$}
  \end{subfigure}
  \caption{
    The effect of varying noise levels on the asymptotic variances of
    SALE estimators (top panels) and the resulting optimal weights
    for the MSLE estimator (bottom panels). Each subfigure
    corresponds to a different noise level, parameterized by
    $\lambda$, while the range of scales is fixed to isolate the
    effect of noise. Asymptotic variances are shown on a logarithmic
    scale. Parameters: $n=23400$, $m_n=20$, $M_n=180$, $H_n^*=200$,
    $s_1=1/2$, $s_2 = 1/2$, $s_3 = -1/\lambda$.
  }
  \label{fig:avar-weight}
\end{figure}

Figure~\ref{fig:avar-weight} shows the impact of the noise level in
this simplified situation. The asymptotic variances of SALEs and the
optimal weights of the MSLE are presented on a fixed set of scales.
Note that a smaller $|\lambda|$ represents a larger noise magnitude.
For a small $|\lambda|$, the variances due to noise dominate in most
scales. Specifically, as $|\lambda| \to 0$, the integral term in
Equation~\eqref{eq:fredholm} vanishes, and the optimal weights are
given by $\phi(x) \to 4x^3$. Despite having a closed-form expression,
the result is not useful in practice, as noise always dominates the
total variance, so the estimation error is too large. On the other
hand, as $|\lambda| \to \infty$, noise has negligible contributions
to the total variances, and Equation~\eqref{eq:fredholm} becomes
ill-posed. However, the proposed weights in
Definition~\ref{def:approx-weights} offer good approximations in this case.

Beyond these simplifications, the intuition behind our multi-scale
approach and approximate weighting strategy is illustrated by the
signature plot in Figure~\ref{fig:signature}, which compares
different estimators across various scales (or pre-averaging window
lengths) for a simulated path under realistic noise. While
illustrative, these patterns are systematic and confirmed by the
extensive simulations in Section~\ref{sec:simulation-finite-sample}.

\begin{figure}[!ht]
  \centering
  \includegraphics[width=0.7\textwidth]{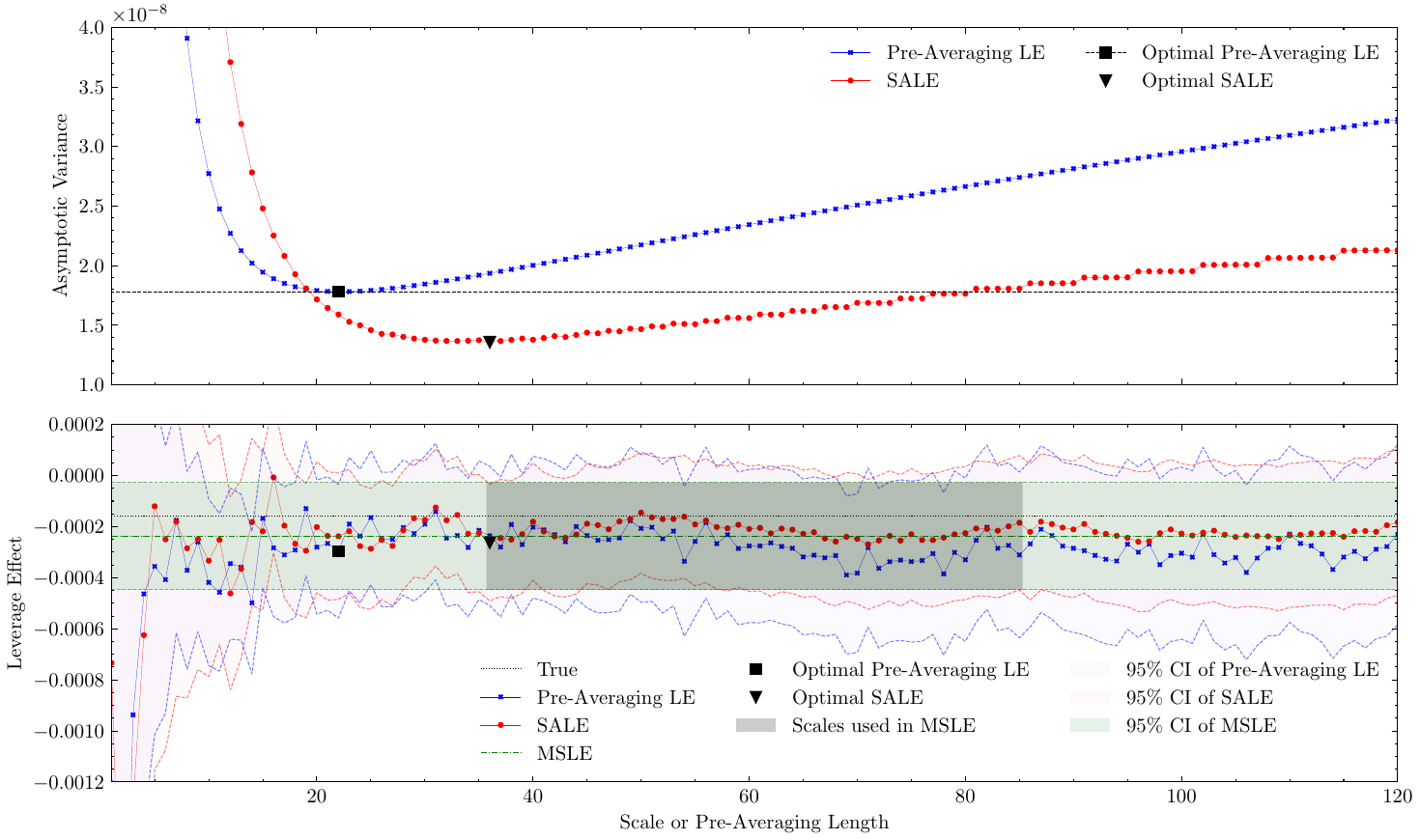}
  \caption{
    Signature plot for a simulated path with time horizon $T=5/252$
    and noise scale $\varsigma=3\times10^{-4}$.
  }
  \label{fig:signature}
\end{figure}

The plot reveals two key advantages. First, SALE outperforms the
pre-averaging estimator, as its minimum asymptotic variance (top
panel) is smaller, leading to a more efficient optimal
estimator.\footnote{The fundamental reason is that the SALE estimator
  has a smaller coefficient of price variation error as shown in
  Theorem~\ref{thm:SALE-clt-noisefree}, which is $8/3$, compared to a
larger coefficient $4$ in the pre-averaging estimator.} Second, MSLE
can improve upon SALE by averaging SALE estimates across an
appropriate range of scales (bottom panel) using an effective
weighting strategy, yielding a more accurate estimate with a tighter
confidence interval.

Based on this, we propose the following weighting method: (i) find an
optimal scale $\overline{H}_n$ for SALE estimators by minimizing the
total variance, using the corrections in
Section~\ref{sec:practical-variance}; and (ii) allocate weights by
applying Definition~\ref{def:approx-weights} with $m_n =
\overline{H}_n - 1$. This data-driven strategy ensures that MSLE
leverages the most informative scales, improving upon the optimal
SALE in two ways: the variance due to discretization is reduced
through the approximate weights, and the variance due to noise is
reduced because SALE estimators at subsequent scales exhibit smaller
noise contributions.

\section{Monte Carlo Simulations}\label{sec:simulation}

\subsection{Data Generating Processes}

The Heston model \citep{heston1993ClosedFormSolutionOptions} is used
to generate the discrete values of the underlying continuous
processes. The model is defined as
\begin{align} \ud X_t &= \left(\mu - \frac{\sigma_t^2}{2}\right) \ud
  t + \sigma_t \ud W_t, \\ \ud \sigma_t^2 &= \kappa (\theta -
  \sigma_t^2) \ud t + \gamma \sigma_t \left(\rho \ud W_t +
  \sqrt{1-\rho^2} \ud B_t \right),
\end{align} where $W_{t}$ and $B_{t}$ are independent Brownian
motions, with the leverage effect being $\true_T = \gamma \rho
\int_{0}^{T} V_{t} \ud t$. The parameters are set as follows:
$\mu=0.02, \kappa=5, \theta=0.04, \gamma=0.5, \rho=-0.7$. The initial
values are set as $X_0 = 0, V_0 = 0.02$.

Let the variance of noise random variables $\{\eps_i\}_{i=0}^n$ be
$\varsigma^2$. For independent noises, three distributions are considered:
(i) normal: $\eps_i \sim \normal(0, \varsigma^2)$;
(ii) uniform: $\eps_i \sim \unif(-\sqrt{3}\varsigma, \sqrt{3}\varsigma)$; and
(iii) skew-normal: $\eps_i$ has a PDF of $f(x) =
2\omega^{-1}\phi_0(\omega^{-1}(x-\xi)) \Phi_0(\alpha \omega^{-1}
(x-\xi))$, where $\phi_0$ and $\Phi_0$ are the PDF and CDF of
$\normal(0, 1)$, and $\xi=-\omega\delta\sqrt{2/\pi}$,
$\omega=\varsigma (1-2\delta^2/\pi)^{-1/2}$, $\delta=\alpha
(1+\alpha^2)^{-1/2}$, with the shape parameter $\alpha = 1$.
For dependent noises, consider
\begin{align}
  \text{(i) MA(2) process:}
  & \quad
  \eps_{i} = e_{i} + \theta_{1}e_{i-1} + \theta_{2}e_{i-2}, \quad
  e_i \overset{\text{i.i.d.}}{\sim} \normal(0,
  \varsigma^2(1+\theta_1^2+\theta_2^2)^{-1});
  \text{ and}
  \label{eq:ma2}
  \\
  \text{(ii) AR(1) process:}
  & \quad
  \eps_i = \phi \eps_{i-1} + e_i, \quad
  e_i \overset{\text{i.i.d.}}{\sim} \normal(0, \varsigma^2\sqrt{1-\phi^2})
  \text{ with } \phi \in (-1, 1).
  \label{eq:ar1}
\end{align}
Specifically, the AR(1) process is included to evaluate the
robustness of the proposed estimators. While AR(1) noise is not
$q$-dependent and thus technically violates
Assumption~\ref{ass:noise}, it serves as a benchmark model for
persistent, serially correlated noise
\citep{jacod2017StatisticalPropertiesMicrostructure,
li2022ReMeDIMicrostructureNoise}. As our subsequent simulations will
confirm, the proposed estimators are indeed robust to this moderate
violation, preserving their asymptotic normality and superior
finite-sample efficiency.

\subsection{Asymptotic Normality}
\label{sec:simulation-asym-normal}

This section validates the central limit theorems by examining the
distribution of standardized estimation errors. For each estimator,
the error is standardized using both its infeasible and feasible
asymptotic variance. According to the results in
Section~\ref{sec:results}, these standardized errors should converge
to a standard normal distribution.

We simulate 5000 paths for each scenario, covering noise-free,
independent noise, and dependent noise. For data generation, we set
$T=1/252$, $n=23400$, $\varsigma = 0.005$, $\theta_1=\pm 0.7$,
$\theta_2=0.5$, and $\phi=0.7$. For estimators, we set $\beta=1/2$,
$b=1/2$, with scales and weights detailed in Table~\ref{tab:scales-weights}.

\begin{table}[!ht]
  \centering
  \caption{Scales and weights used in SALE and MSLE.}
  \label{tab:scales-weights}
  \begin{tabular}{llll}
    \toprule
    \multirow{2.5}{*}{\textbf{Noise}} &
    \multicolumn{1}{c}{\textbf{SALE}} & \multicolumn{2}{c}{\textbf{MSLE}} \\
    \cmidrule(lr){2-2} \cmidrule(lr){3-4}
    & $H$ & $\{H_p\}_{p=1}^{M_n}$ & $\{w_p\}_{p=1}^{M_n}$ \\
    \midrule
    None & 1, 15 & $\{1, 2, \dotsc, 15\}$ & $w_p \propto H_p^{-3/2}$ \\
    Independent & 1, 15 & $\{1, 2, \dotsc, 15\}$ & $w_p \propto H_p^3$ \\
    Dependent & 15 & $\{11, 12, \dotsc, 15\}$ & $w_p \propto H_p^3$ \\
    \bottomrule
  \end{tabular}
\end{table}

Table~\ref{tab:standardized-errors} presents the summary statistics
for the standardized errors. Across all scenarios, these statistics
closely match those of a standard normal distribution, corroborating
our theoretical results (Theorems~\ref{thm:SALE-clt-noisefree} to
\ref{thm:MSLE-clt-noisy} and their feasible versions). Additional Q-Q
plots in the Supplementary Material further support these findings.

\begin{table}[!htb]
  \centering
  \caption{
    Summary statistics for standardized estimation errors under
    different settings. The standardized error is computed as
    $(\text{Estimate} - \text{True Value}) / \sqrt{\text{Asymptotic
    Variance}}$, using both infeasible and feasible versions of the
    asymptotic variance. The mean, standard deviation and the 25th,
    50th and 75th percentiles are reported. ``LE'' in the
    ``Estimator'' column represents the all-observation estimator
    (SALE with $H=1$), whereas ``SALE'' represents SALE with $H=15$.
  }
  \label{tab:standardized-errors}
  \resizebox{0.95\textwidth}{!}{
    \begin{tabular}{lllrrrrrrrrrr}
      \toprule
      \multicolumn{2}{l}{\multirow{2}{*}{\textbf{Noise Setting}}} &
      \multirow{2}{*}{\textbf{Estimator}} &
      \multicolumn{5}{c}{\textbf{Infeasible}} &
      \multicolumn{5}{c}{\textbf{Feasible}} \\
      \cmidrule(lr){4-8} \cmidrule(lr){9-13}
      & & & \textbf{Mean} & \textbf{Std} & $\bm{Q_1}$ & $\bm{Q_2}$ &
      $\bm{Q_3}$ & \textbf{Mean} & \textbf{Std} & $\bm{Q_1}$ &
      $\bm{Q_2}$ & $\bm{Q_3}$ \\
      \midrule
      \multicolumn{2}{l}{\multirow{3}{*}{\textbf{None}}} & LE &
      -0.016 & 1.013 & -0.693 & -0.013 & 0.683 &
      -0.014 & 1.013 & -0.693 & -0.013 & 0.684 \\
      & & SALE & 0.012 & 0.990 & -0.650 & 0.012 & 0.685 &
      0.013 & 0.996 & -0.652 & 0.012 & 0.687 \\
      & & MSLE & -0.012 & 1.013 & -0.694 & 0.005 & 0.678 &
      -0.010 & 1.014 & -0.696 & 0.005 & 0.676 \\
      \cmidrule(lr){1-13}
      \multirow{9}{*}{\textbf{Independent}} & \multirow{3}{*}{Normal} & LE
      & -0.022 & 1.004 & -0.693 & -0.030 & 0.650 &
      -0.022 & 0.998 & -0.691 & -0.030 & 0.643 \\
      & & SALE & -0.008 & 1.003 & -0.693 & -0.007 & 0.673 &
      -0.007 & 0.991 & -0.684 & -0.007 & 0.666 \\
      & & MSLE & -0.005 & 1.047 & -0.718 & 0.000 & 0.702 &
      -0.005 & 1.019 & -0.699 & 0.000 & 0.685 \\
      \cmidrule(lr){2-13}
      & \multirow{3}{*}{Uniform} & LE & 0.008 & 1.009 & -0.653 &
      0.019 & 0.680 & 0.008 & 1.002 & -0.651 & 0.019 & 0.676 \\
      & & SALE & 0.002 & 0.998 & -0.679 & -0.009 & 0.673 &
      0.002 & 0.983 & -0.667 & -0.009 & 0.663 \\
      & & MSLE & 0.004 & 1.023 & -0.707 & 0.008 & 0.689 &
      0.004 & 0.989 & -0.683 & 0.007 & 0.668 \\
      \cmidrule(lr){2-13}
      & \multirow{3}{*}{Skew-normal} & LE & 0.010 & 1.004 & -0.662
      & 0.018 & 0.685 & 0.010 & 0.999 & -0.655 & 0.018 & 0.678 \\
      & & SALE & 0.008 & 1.010 & -0.681 & 0.015 & 0.691 &
      0.008 & 0.999 & -0.676 & 0.014 & 0.686 \\
      & & MSLE & 0.010 & 1.048 & -0.699 & 0.001 & 0.716 &
      0.010 & 1.021 & -0.680 & 0.001 & 0.696 \\
      \cmidrule(lr){1-13}
      \multirow{6}{*}{\textbf{Dependent}} & \multirow{2}{*}{MA(2)
      ($\theta_1=0.7$)} & SALE & -0.013 & 1.012 & -0.706 & -0.024 & 0.676 &
      -0.012 & 1.004 & -0.698 & -0.024 & 0.671 \\
      & & MSLE & -0.008 & 1.032 & -0.727 & -0.007 & 0.696 &
      -0.008 & 1.022 & -0.721 & -0.007 & 0.689 \\
      \cmidrule(lr){2-13}
      & \multirow{2}{*}{MA(2) ($\theta_1=-0.7$)} & SALE & 0.001 &
      1.022 & -0.682 & 0.008 & 0.680 & 0.001 & 1.015 & -0.677 &
      0.008 & 0.678 \\
      & & MSLE & -0.001 & 1.073 & -0.717 & 0.004 & 0.727 &
      -0.001 & 1.063 & -0.709 & 0.004 & 0.723 \\
      \cmidrule(lr){2-13}
      & \multirow{2}{*}{AR(1)} & SALE & -0.011 & 1.006 & -0.693 &
      -0.015 & 0.666 & -0.010 & 0.996 & -0.690 & -0.015 & 0.657 \\
      & & MSLE & -0.005 & 1.040 & -0.705 & -0.008 & 0.685 &
      -0.003 & 1.028 & -0.695 & -0.008 & 0.680 \\
      \midrule
      \multicolumn{3}{l}{\textbf{Asymptotic Value (Standard Normal)}}
      & 0.000 & 1.000 & -0.674 & 0.000 & 0.674 &
      0.000 & 1.000 & -0.674 & 0.000 & 0.674 \\
      \bottomrule
    \end{tabular}
  }
\end{table}

\subsection{Finite-Sample Performance: Superior Efficiency}
\label{sec:simulation-finite-sample}

\subsubsection{Efficiency in the Noise-Free Case}

The finite-sample efficiency of the MSLE estimator, using both
optimal and approximate weights, is compared against the
all-observation estimator. To evaluate the performance across
different sample sizes, four common time horizons are considered for
$T$: one day $(T=1/252)$, one week $(T=5/252)$, two weeks
$(T=10/252)$ and one month $(T=22/252)$. For each $T$, the sample
size is set to $n = 23400 \times 252T$, and 1000 paths are simulated.
The MSLE estimators are computed with scales $H_p = 1, 2, \dots,
\lfloor 0.5 n^{0.5}\rfloor$, and the optimal weights are given by
Equation~\eqref{eq:weight-optimization-solution} and
\eqref{eq:SALE-acov-disc}.

\begin{figure}[!ht]
  \centering
  \begin{subfigure}{0.24\textwidth}
    \includegraphics[width=\textwidth]{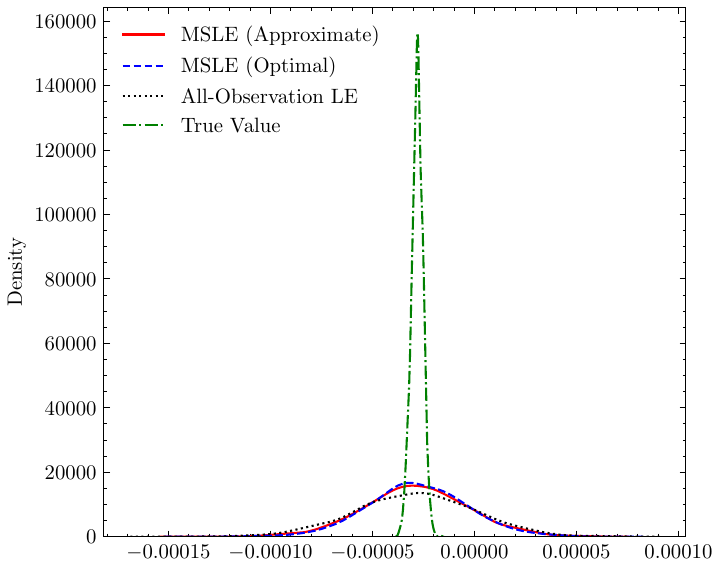}
    \caption{1 day}
  \end{subfigure}
  \hfill
  \begin{subfigure}{0.24\textwidth}
    \includegraphics[width=\textwidth]{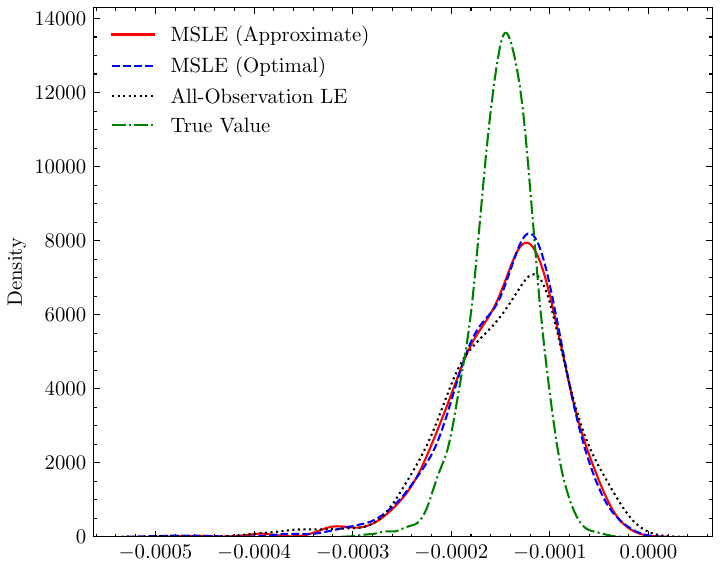}
    \caption{5 days}
  \end{subfigure}
  \hfill
  \begin{subfigure}{0.24\textwidth}
    \includegraphics[width=\textwidth]{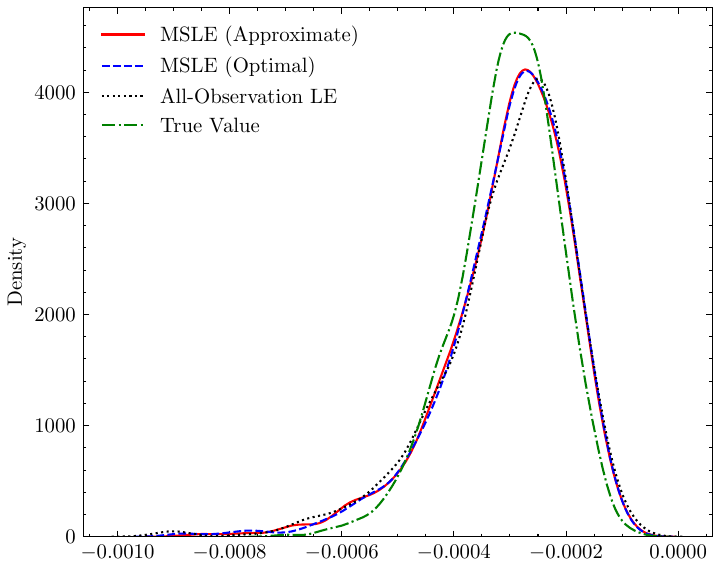}
    \caption{10 days}
  \end{subfigure}
  \hfill
  \begin{subfigure}{0.24\textwidth}
    \includegraphics[width=\textwidth]{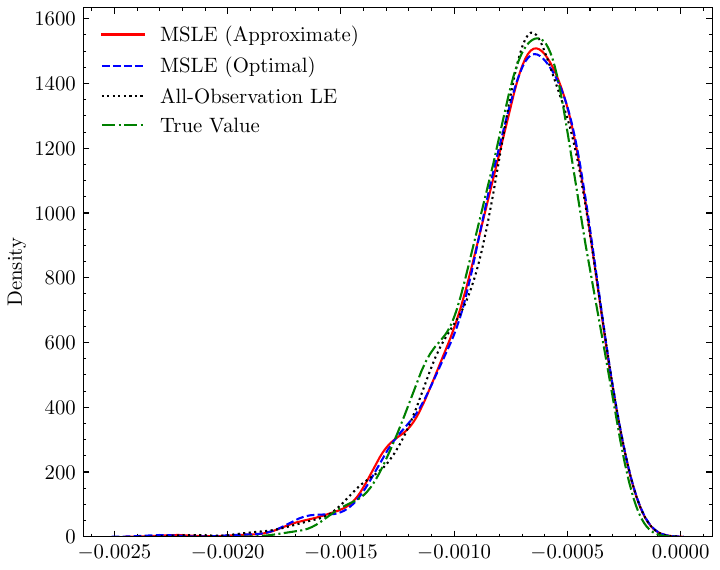}
    \caption{22 days}
  \end{subfigure}
  \begin{subfigure}{0.24\textwidth}
    \includegraphics[width=\textwidth]{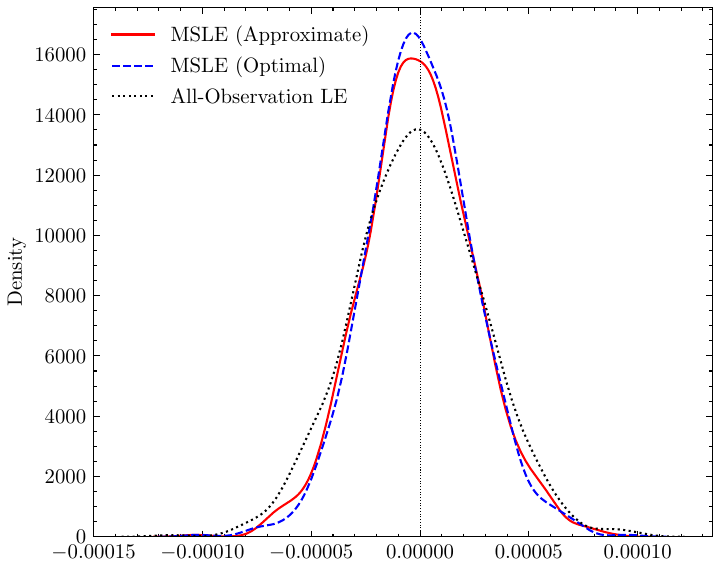}
    \caption{1 day}
  \end{subfigure}
  \hfill
  \begin{subfigure}{0.24\textwidth}
    \includegraphics[width=\textwidth]{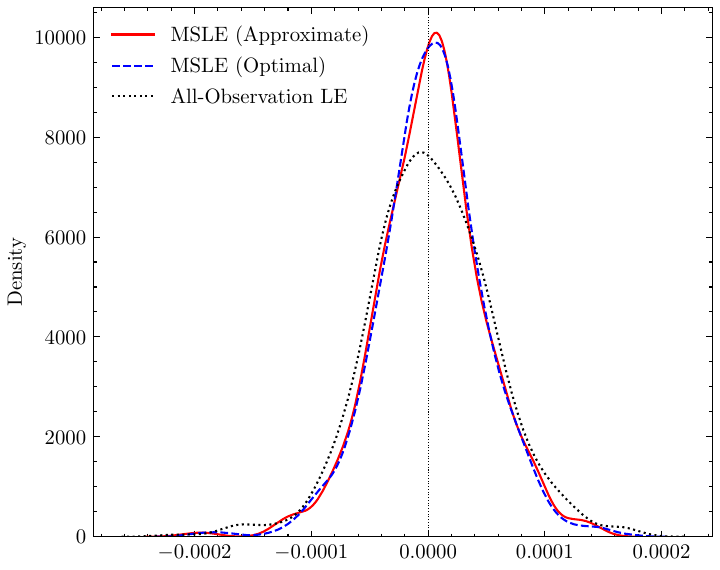}
    \caption{5 days}
  \end{subfigure}
  \hfill
  \begin{subfigure}{0.24\textwidth}
    \includegraphics[width=\textwidth]{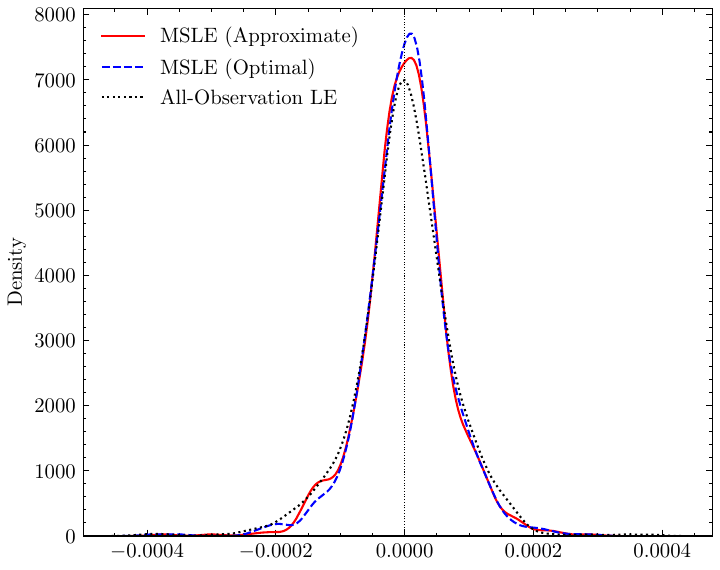}
    \caption{10 days}
  \end{subfigure}
  \hfill
  \begin{subfigure}{0.24\textwidth}
    \includegraphics[width=\textwidth]{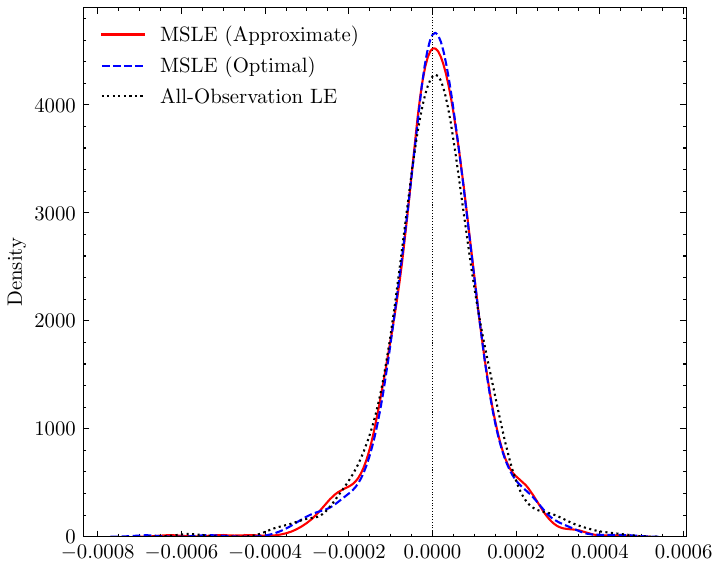}
    \caption{22 days}
  \end{subfigure}
  \begin{subfigure}{0.24\textwidth}
    \includegraphics[width=\textwidth]{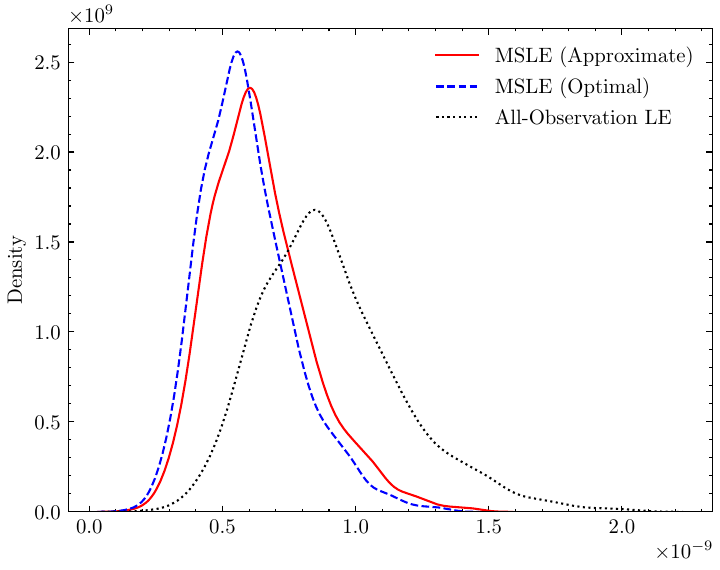}
    \caption{1 day}
  \end{subfigure}
  \hfill
  \begin{subfigure}{0.24\textwidth}
    \includegraphics[width=\textwidth]{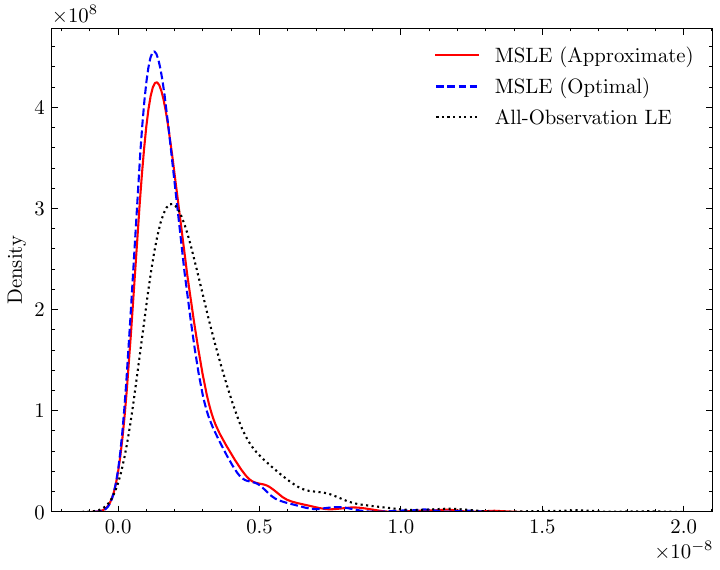}
    \caption{5 days}
  \end{subfigure}
  \hfill
  \begin{subfigure}{0.24\textwidth}
    \includegraphics[width=\textwidth]{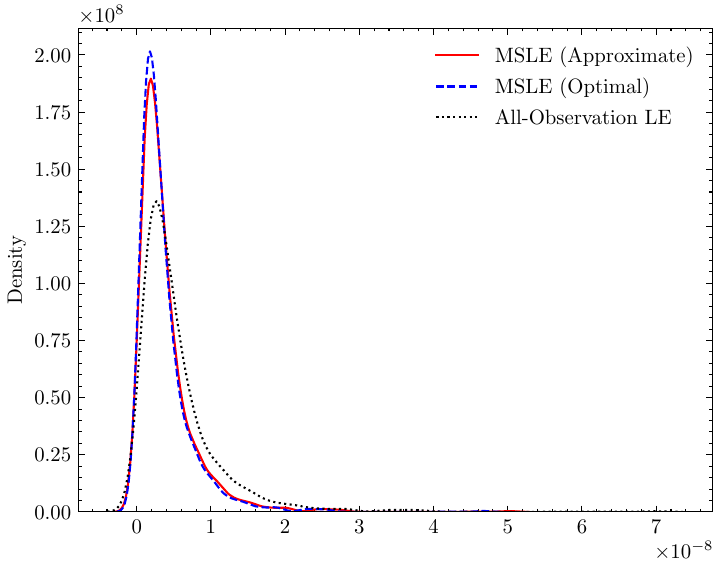}
    \caption{10 days}
  \end{subfigure}
  \hfill
  \begin{subfigure}{0.24\textwidth}
    \includegraphics[width=\textwidth]{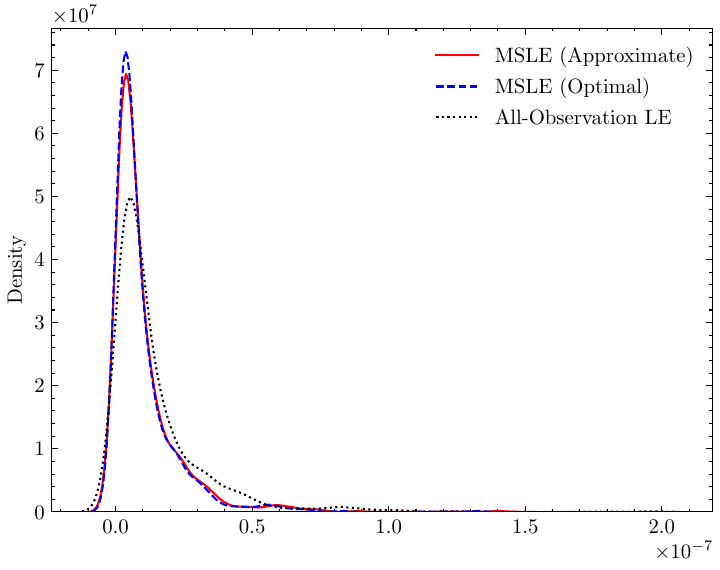}
    \caption{22 days}
  \end{subfigure}
  \caption{
    The performances of the MSLE and all-observation estimators for
    each setting of $T$ in the noise-free setting. The first row
    shows the true and estimated values of leverage effect, the
    second row shows the estimation errors, and the third row shows
    the asymptotic variances.
  }
  \label{fig:noisefree-estimate}
\end{figure}

\begin{table}[!ht]
  \centering
  \caption{
    Finite-sample performances of the MSLE and all-observation
    estimators in the noise-free setting. The finite-sample relative
    efficiency is compared with the all-observation estimator.
  }
  \label{tab:noisefree-rmse}
  \resizebox{0.95\textwidth}{!}{
    \begin{tabular}{rrrrrrrrrrrr}
      \toprule
      \multicolumn{1}{c}{\multirow{2.5}{*}{\textbf{Days}}} &
      \multicolumn{2}{c}{\textbf{True Value}} &
      \multicolumn{3}{c}{\textbf{RMSE}} &
      \multicolumn{3}{c}{\textbf{Relative Efficiency}} &
      \multicolumn{3}{c}{$\bm{\|w\|_1}$} \\
      \cmidrule(lr){2-3} \cmidrule(lr){4-6} \cmidrule(lr){7-9}
      \cmidrule{10-12}
      & \multicolumn{1}{c}{Mean} & \multicolumn{1}{c}{Std} &
      \multicolumn{1}{c}{Approximate} & \multicolumn{1}{c}{Optimal} &
      \multicolumn{1}{c}{LE} & \multicolumn{1}{c}{Approximate} &
      \multicolumn{1}{c}{Optimal} & \multicolumn{1}{c}{LE} &
      \multicolumn{1}{c}{Approximate} & \multicolumn{1}{c}{Optimal} &
      \multicolumn{1}{c}{LE} \\
      \midrule
      1 & \num{-2.80e-05} & \num{2.77e-06} & \num{2.55e-05} &
      \num{2.44e-05} & \num{2.93e-05} & 1.32 & 1.44 & 1.00 & 1.00 &
      13.77 & 1.00 \\
      5 & \num{-1.48e-04} & \num{3.10e-05} & \num{4.57e-05} &
      \num{4.50e-05} & \num{5.32e-05} & 1.36 & 1.40 & 1.00 & 1.00 &
      24.77 & 1.00 \\
      10 & \num{-3.06e-04} & \num{8.98e-05} & \num{6.43e-05} &
      \num{6.44e-05} & \num{7.28e-05} & 1.28 & 1.28 & 1.00 & 1.00 &
      28.53 & 1.00 \\
      22 & \num{-7.44e-04} & \num{2.72e-04} & \num{1.06e-04} &
      \num{1.07e-04} & \num{1.14e-04} & 1.16 & 1.13 & 1.00 & 1.00 &
      34.83 & 1.00 \\
      \bottomrule
    \end{tabular}
  }
\end{table}

Figure~\ref{fig:noisefree-estimate} and
Table~\ref{tab:noisefree-rmse} summarize the results. The findings
clearly demonstrate the superiority of the MSLE estimator over the
all-observation estimator, evidenced by its smaller asymptotic
variance, lower finite-sample RMSE, and higher finite-sample efficiency.

A key practical insight is that the MSLE with approximate weights
achieves efficiency nearly identical to that with optimal weights,
but offers significantly better numerical stability. By construction,
the approximate weights are non-negative, thus ensuring $\|\bm{w}\|_1
= 1$. In contrast, the optimal weights can become negative, causing
their $L^1$-norm to grow with the sample size. This not only
increases numerical instability, but also potentially violates our
theoretical requirement in Assumption~\ref{ass:MSLE-para}, making the
approximate weighting scheme a more robust choice for practical
implementation.

\subsubsection{Efficiency in the Noisy Case}

The finite-sample efficiency of the MSLE estimator using approximate
weights is compared against the pre-averaging LE estimator
in~\citet{aitsahalia2017EstimationContinuousDiscontinuous} in a noisy
setting. While the pre-averaging estimator has a slightly faster
theoretical convergence rate ($n^{-1/8}$) than the MSLE estimator
($n^{-1/9}$), we demonstrate that MSLE achieves superior
finite-sample efficiency, especially in realistic scenarios. To this
end, we again vary $T$ to assess the performance across different
sample sizes.

The simulation uses dependent AR(1) noise with $\phi = 0.7$, and
three noise levels: small ($\varsigma=10^{-4}$), medium
($\varsigma=10^{-3.5}$), and large ($\varsigma=10^{-3}$). Notably,
empirical evidence suggests that real-world noise levels are closer
to the ``small'' setting~\citep{christensen2014FactFrictionJumps},
which is further supported by our empirical study in
Section~\ref{sec:empirical}. For the MSLE estimator, the noise ACF is
truncated at $q=3$, the scales are set to $H_p=7, 8, \dots, \lfloor
n^{5/9} \rfloor$, and for simplicity, the same weight allocation is
used for all paths in each case. For the pre-averaging estimator,
since a closed-form optimal tuning parameter is unavailable, we grant
it an advantage by \emph{ex-post} selecting the pre-averaging window
that yields the minimum RMSE, from a wide grid of candidates ($5, 10,
30, 60, 90, 120, 180, 240,$ and $300$).

\begin{figure}[!ht]
  \centering
  \begin{subfigure}{0.24\textwidth}
    \includegraphics[width=\textwidth]{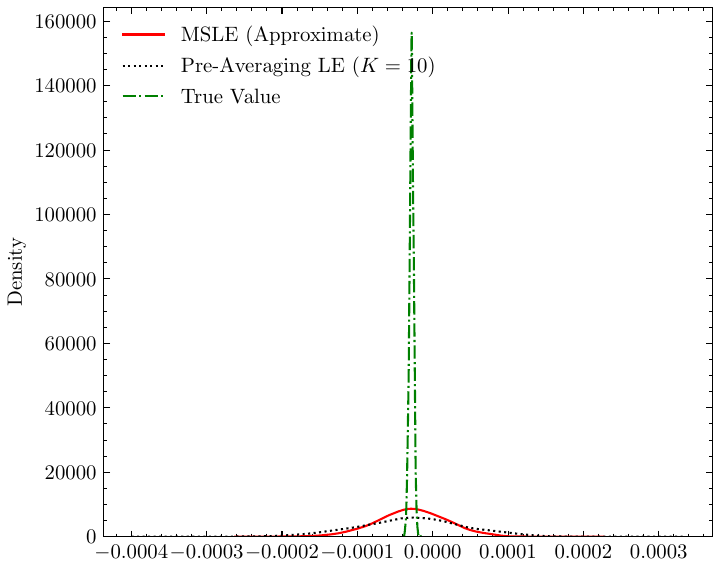}
    \caption{1 day}
  \end{subfigure}
  \hfill
  \begin{subfigure}{0.24\textwidth}
    \includegraphics[width=\textwidth]{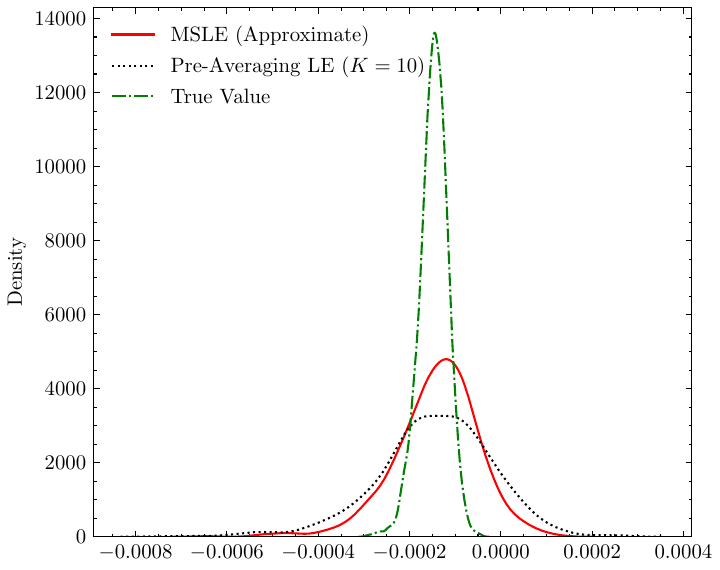}
    \caption{5 days}
  \end{subfigure}
  \hfill
  \begin{subfigure}{0.24\textwidth}
    \includegraphics[width=\textwidth]{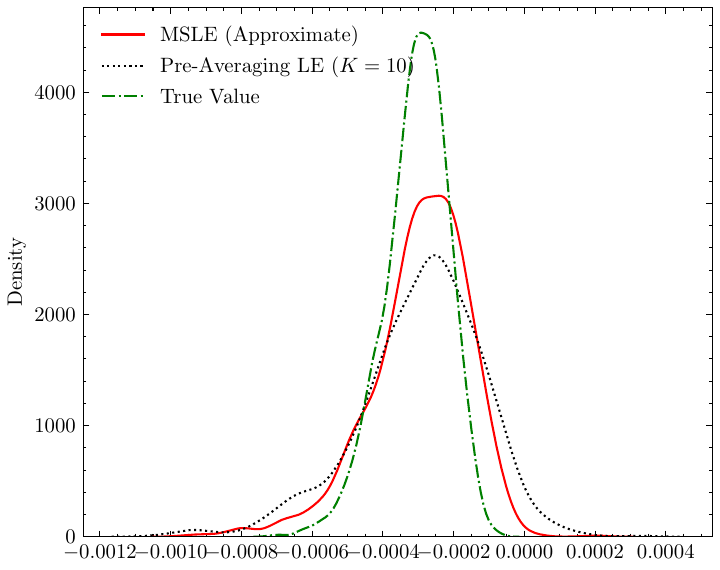}
    \caption{10 days}
  \end{subfigure}
  \hfill
  \begin{subfigure}{0.24\textwidth}
    \includegraphics[width=\textwidth]{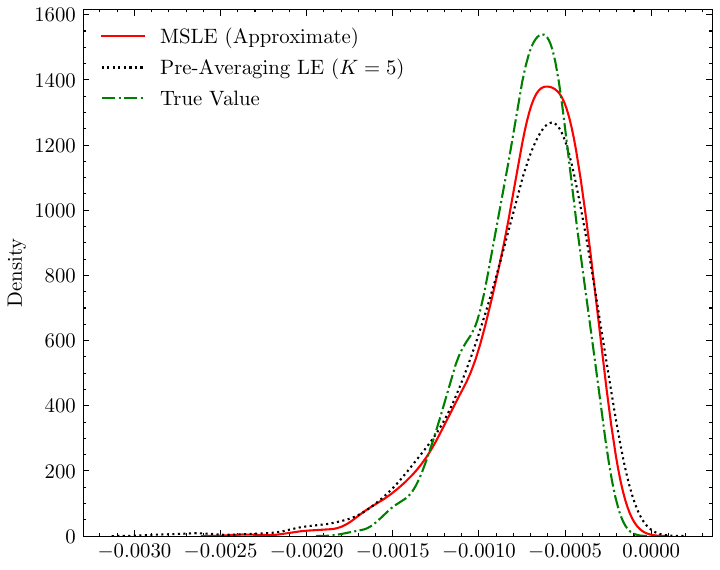}
    \caption{22 days}
  \end{subfigure}
  \begin{subfigure}{0.24\textwidth}
    \includegraphics[width=\textwidth]{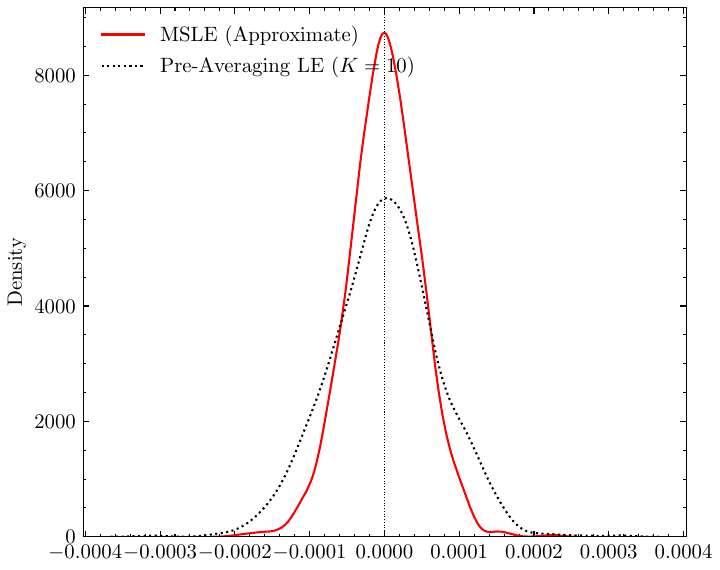}
    \caption{1 day}
  \end{subfigure}
  \hfill
  \begin{subfigure}{0.24\textwidth}
    \includegraphics[width=\textwidth]{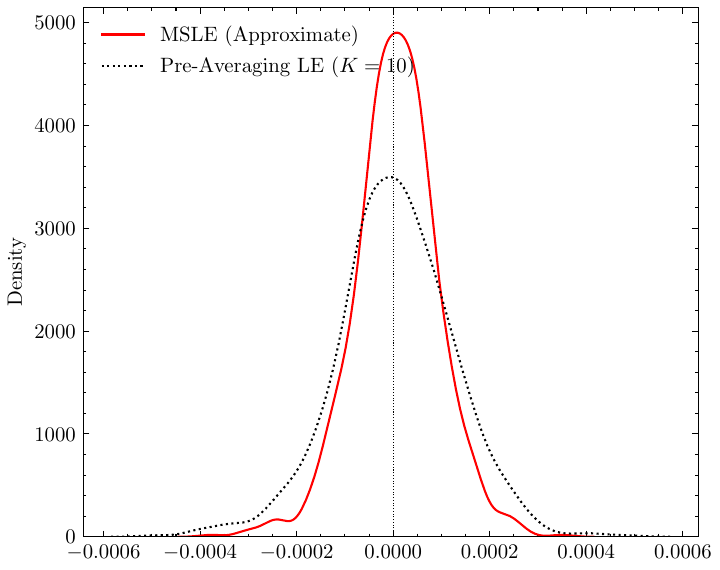}
    \caption{5 days}
  \end{subfigure}
  \hfill
  \begin{subfigure}{0.24\textwidth}
    \includegraphics[width=\textwidth]{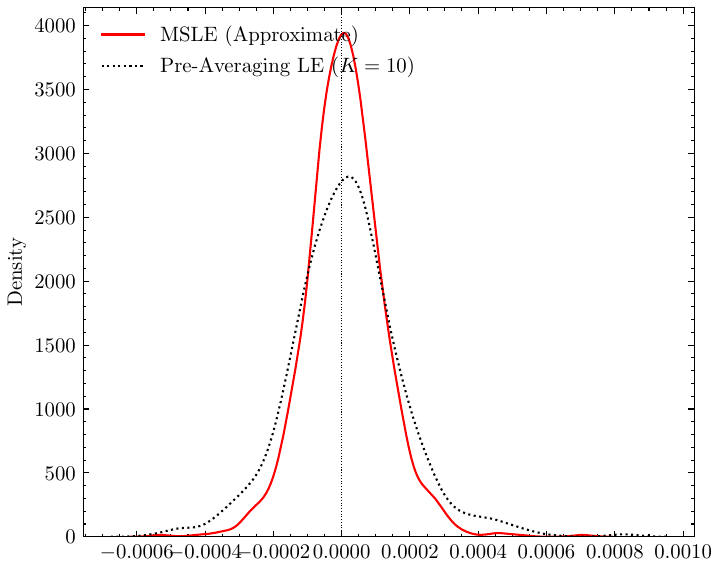}
    \caption{10 days}
  \end{subfigure}
  \hfill
  \begin{subfigure}{0.24\textwidth}
    \includegraphics[width=\textwidth]{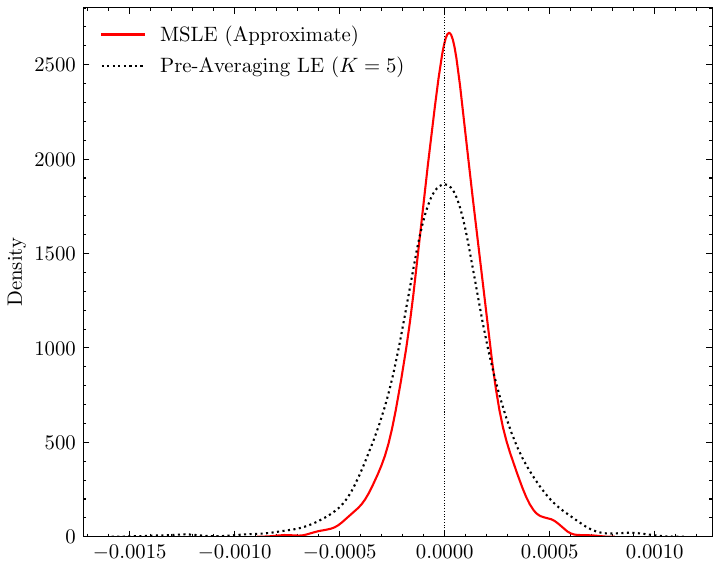}
    \caption{22 days}
  \end{subfigure}
  \caption{The performances of the MSLE and pre-averaging LE
    estimators for each setting of $T$ in the dependent noise setting
    ($\varsigma=10^{-4}$). The first row shows the true and estimated
  values of leverage effect, and the second row shows the estimation error.}
  \label{fig:dependent-estimate}
\end{figure}

\begin{table}[!ht]
  \centering
  \caption{
    Finite-sample performances of the MSLE and pre-averaging LE
    estimators in the dependent noise setting. The finite-sample
    relative efficiency is compared with the pre-averaging estimator.
  }
  \label{tab:dependent-rmse}
  \resizebox{0.95\textwidth}{!}{
    \begin{tabular}{lrrrrrrr}
      \toprule
      \multicolumn{1}{c}{\multirow{2.5}{*}{$\bm{\varsigma}$}} &
      \multicolumn{1}{c}{\multirow{2.5}{*}{\textbf{Days}}} &
      \multicolumn{2}{c}{\textbf{True Value}} &
      \multicolumn{2}{c}{\textbf{RMSE}} &
      \multicolumn{2}{c}{\textbf{Relative Efficiency}} \\
      \cmidrule(lr){3-4} \cmidrule(lr){5-6} \cmidrule(lr){7-8}
      & & \multicolumn{1}{c}{Mean} & \multicolumn{1}{c}{Std} &
      \multicolumn{1}{c}{MSLE} & \multicolumn{1}{c}{Pre-Averaging LE}
      & \multicolumn{1}{c}{MSLE} & \multicolumn{1}{c}{Pre-Averaging LE} \\
      \midrule
      \multirow{4}{*}{$10^{-4}$} & 1 & \num{-2.80e-05} &
      \num{2.77e-06} & \num{4.78e-05} & \num{7.11e-05} & 2.21 & 1.00 \\
      & 5 & \num{-1.48e-04} & \num{3.10e-05} & \num{8.67e-05} &
      \num{1.21e-04} & 1.95 & 1.00 \\
      & 10 & \num{-3.06e-04} & \num{8.98e-05} & \num{1.14e-04} &
      \num{1.62e-04} & 2.00 & 1.00 \\
      & 22 & \num{-7.44e-04} & \num{2.72e-04} & \num{1.74e-04} &
      \num{2.50e-04} & 2.05 & 1.00 \\
      \cmidrule(lr){1-8}
      \multirow{4}{*}{$10^{-3.5}$} & 1 & \num{-2.80e-05} &
      \num{2.77e-06} & \num{6.65e-05} & \num{9.28e-05} & 1.95 & 1.00 \\
      & 5 & \num{-1.48e-04} & \num{3.10e-05} & \num{1.18e-04} &
      \num{1.62e-04} & 1.86 & 1.00 \\
      & 10 & \num{-3.06e-04} & \num{8.98e-05} & \num{1.63e-04} &
      \num{2.20e-04} & 1.81 & 1.00 \\
      & 22 & \num{-7.44e-04} & \num{2.72e-04} & \num{2.67e-04} &
      \num{3.65e-04} & 1.87 & 1.00 \\
      \cmidrule(lr){1-8}
      \multirow{4}{*}{$10^{-3}$} & 1 & \num{-2.80e-05} &
      \num{2.77e-06} & \num{9.63e-05} & \num{1.18e-04} & 1.50 & 1.00 \\
      & 5 & \num{-1.48e-04} & \num{3.10e-05} & \num{1.82e-04} &
      \num{2.27e-04} & 1.56 & 1.00 \\
      & 10 & \num{-3.06e-04} & \num{8.98e-05} & \num{2.53e-04} &
      \num{3.07e-04} & 1.48 & 1.00 \\
      & 22 & \num{-7.44e-04} & \num{2.72e-04} & \num{3.82e-04} &
      \num{4.86e-04} & 1.62 & 1.00 \\
      \bottomrule
    \end{tabular}
  }
\end{table}

Figure~\ref{fig:dependent-estimate} (for $\varsigma=10^{-4}$) and
Table~\ref{tab:dependent-rmse} (for all noise levels) present the
results. The findings confirm that the MSLE estimator consistently
and substantially outperforms the pre-averaging estimator in terms of
finite-sample RMSE and efficiency across all sample sizes and noise
levels. Crucially, the advantage is most pronounced in the
small-noise setting, which is the most empirically relevant scenario.
Even as the time horizon increases to one month, MSLE's lead remains
significant, demonstrating that the theoretical convergence rate is
not the only determinant of the finite-sample performance. This
highlights the practical power of the proposed estimators and the
approximate weighting strategy. Furthermore, this superior
performance is achieved without resorting to the infeasible
\emph{ex-post} parameter tuning that was granted to the pre-averaging
estimator, underscoring the robustness and practical utility of our methods.

An additional study for the i.i.d. noise case, along with
supplementary information of the simulation details, are provided in
Supplementary Material.

\section{Empirical Study}\label{sec:empirical}

The high-frequency trading data for a selection of assets, covering
the regular trading hours from 2014 to 2023 (2,516 trading days), are
collected from the TAQ database. The data are cleaned before analysis,
retaining only regular trades and removing erroneous
entries.\footnote{A practical and detailed guideline on
  high-frequency data cleaning is offered by
  \cite{barndorffnielsen2009RealizedKernelsPractice}. While we follow
  most of the steps therein, some are omitted. For example, the entries
  with \emph{Sale Conditions} `I' (odd lot trade) and `C' (cash trade)
  are retained because of their significant contribution in our
  dataset. We also remove the ``bounceback'' outliers described by
\cite{aitsahalia2011UltraHighFrequency}.} The dataset consists of 15
ETFs and 15 individual stocks, as listed in Table~\ref{tab:emp}. The
ETFs track the performance of the S\&P 500, NASDAQ 100, Dow Jones
Industrial Average, Russell 2000 indices, as well as the 11 sectors
of the S\&P 500. The individual stocks are selected to represent a
range of liquidity and volatility levels, covering various sectors
such as technology, consumer goods, healthcare, and entertainment,
thus providing a diverse set of assets for the empirical study. Among
the 30 assets, XLC and XLRE were issued partway through the sample
period. Therefore, our analysis for them begins at the start of their
second year, in 2017 and 2019, respectively. After data cleaning, we
resample the data to obtain 1-second and 5-second returns.

We apply the jump test proposed by
\cite{aitsahalia2012TestingJumpsNoisy} to identify and remove trading
days with the presence of jumps for each stock. This test is a
robustified version of the test introduced by
\cite{aitsahalia2009TestingJumpsDiscretely}, incorporating the
pre-averaging method to deal with the MMS noise. After computing the
standardized statistics with 5-second intraday data, we apply the
universal threshold technique proposed by
\cite{bajgrowicz2016JumpsHighFrequencyData} to eliminate spurious
jump detections. This method is more stringent than the FDR procedure
and is designed to asymptotically remove all spurious detections,
thereby minimizing data loss. As a result, 909 asset-days, comprising
1.2\% of the entire dataset of 73,910 asset-days, are identified as
containing jumps and excluded from further analysis. The numbers of
days with jumps for each asset are listed in Table~\ref{tab:emp}.

We estimate the leverage effects for both weekly (defined as every
five trading days) and monthly (defined as natural months) periods
using both 1-second and 5-second data for each stock. The estimation
proceeds in several steps, showcasing the flexibility and robustness
of our framework in handling real-world data complexities.
\begin{enumerate}
  \item The ReMeDI estimator proposed by
    \cite{li2022ReMeDIMicrostructureNoise} is used to estimate the
    moments $\nu_2$, $\nu_4$ and the generalized acfs $\rho_2(l)$,
    $\rho_3(l)$, and $\rho_4(l)$ of the MMS noise in each period. The
    existence and the dependence level of noise are determined by its
    autocovariances. The results show that the noise in the dataset
    is small, while the dependencies are common. For example, the
    analysis of noise in monthly 5-second frequency data shows that:
    (i) for the ETFs, 33.9\% of asset-months exhibit significant
    noise, among which 71.5\% are dependent and the average noise
    scale is $\varsigma=9.6\times 10^{-5}$; while (ii) for the
    stocks, 63.5\% of the asset-months exhibit significant noise,
    among which 60.8\% are dependent and the average noise scale is
    $\varsigma=2.0\times 10^{-4}$.
  \item The scale $\overline{H}_n$ in the approximate weights of the
    MSLE estimator is determined by minimizing the total asymptotic
    variances of SALE estimators, where the asymptotic variances are
    estimated using 1-minute pre-averaging return data. With the
    existence of noise, additional lower bounds for $\overline{H}_n$
    are applied, such that: (i) $\overline{H}_n$ satisfies
    $\overline{H}_n \geq 2\hat q + 1$, where $\hat q$ is the
    estimated dependence level; and (ii) the minimum values of
    $\overline{H}_n$ are 20 for 1-second data (corresponding to 20
    seconds) and 12 for 5-second data (corresponding to 60 seconds).
    The former is a condition for the proposed SALE and MSLE
    estimators, while the latter is a rather conservative manual
    intervention, which mitigates numerical instability at the cost
    of larger asymptotic variance.
  \item The leverage effect and its asymptotic variance are estimated
    using the MSLE estimator with approximate weights, where the
    number of scales is set to $M_n=50$ for computational
    efficiency.
\end{enumerate}

\begin{figure}[!ht]
  \centering
  \begin{subfigure}{\textwidth}
    \includegraphics[width=\textwidth]{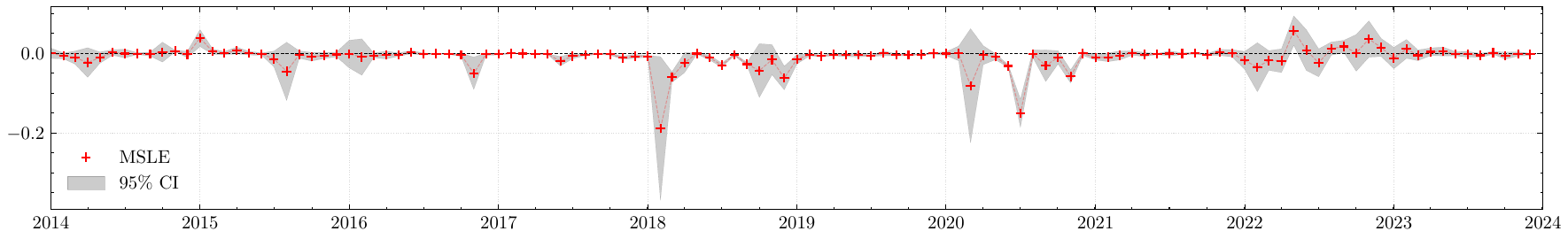}
    \caption{Month data, 5-second frequency}
  \end{subfigure}
  \begin{subfigure}{\textwidth}
    \includegraphics[width=\textwidth]{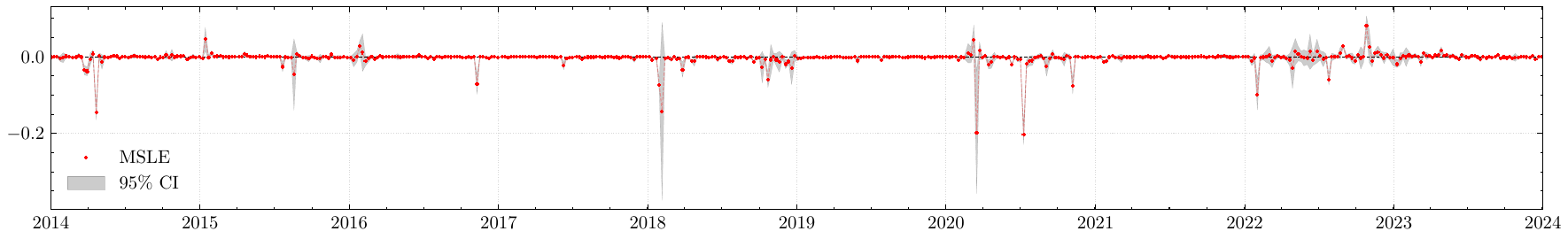}
    \caption{Week data, 5-second frequency}
  \end{subfigure}
  \caption{Leverage effect estimation for AMZN (Amazon.com, Inc.).}
  \label{fig:estimation-amzn}
\end{figure}

Figure~\ref{fig:estimation-amzn} illustrates the dynamic nature of
the leverage effect for AMZN, showcasing our estimator's ability to
capture its time-varying behavior. The monthly estimates reveal
significant fluctuations, clearly capturing major market stress
events such as the February 2018 ``Volpocalypse'' and the COVID-19
sell-off in early 2020. The weekly estimates, while more volatile,
provide a higher-resolution view of these dynamics.

\begin{table}[!ht]
  \centering
  \caption{Data descriptions and results of empirical study.}
  \label{tab:emp}
  \resizebox{1.00\textwidth}{!}{
    \begin{tabular}{lllrrrrrrrrrrrrrr}
      \toprule
      \multicolumn{1}{c}{\multirow{3.5}{*}{\textbf{Type}}} &
      \multicolumn{1}{c}{\multirow{3.5}{*}{\textbf{Ticker}}} &
      \multicolumn{1}{c}{\multirow{3.5}{*}{\textbf{Name}}} &
      \multicolumn{1}{c}{\multirow{3.5}{*}{\textbf{\shortstack{Average
      \\ Daily \\ Observations}}}} &
      \multicolumn{1}{c}{\multirow{3.5}{*}{\textbf{\shortstack{Average
      \\ Daily \\ Volume}}}} &
      \multicolumn{1}{c}{\multirow{3.5}{*}{\textbf{\shortstack{Annualized
      \\ Volatility}}}} &
      \multicolumn{1}{c}{\multirow{3.5}{*}{\textbf{\shortstack{Trading
      \\ Days}}}} &
      \multicolumn{1}{c}{\multirow{3.5}{*}{\textbf{\shortstack{Days
      \\ with \\ Jumps}}}} & \multicolumn{8}{c}{\textbf{Signs of
      Leverage Effects (\%)}} \\
      \cmidrule(lr){9-16}
      & & & & & & & & \multicolumn{2}{c}{\textbf{M, 1-sec}} &
      \multicolumn{2}{c}{\textbf{M, 5-sec}} &
      \multicolumn{2}{c}{\textbf{W, 1-sec}} &
      \multicolumn{2}{c}{\textbf{W, 5-sec}} \\
      \cmidrule(lr){9-10} \cmidrule(lr){11-12} \cmidrule(lr){13-14}
      \cmidrule(lr){15-16}
      & & & & & & & & \multicolumn{1}{c}{$\bm{-}$} &
      \multicolumn{1}{c}{$\bm{+}$} & \multicolumn{1}{c}{$\bm{-}$} &
      \multicolumn{1}{c}{$\bm{+}$} & \multicolumn{1}{c}{$\bm{-}$} &
      \multicolumn{1}{c}{$\bm{+}$} & \multicolumn{1}{c}{$\bm{-}$} &
      \multicolumn{1}{c}{$\bm{+}$} \\
      \midrule
      \multirow{15}{*}{\textbf{ETF}} & SPY & SPDR S\&P 500 ETF Trust
      & 428321 & \num{9.27e+07} & 0.175 & 2516 & 5 & 89.2 & 10.8 &
      93.3 & 6.7 & 83.3 & 16.7 & 86.7 & 13.3 \\
      & QQQ & Invesco QQQ Trust & 219580 & \num{4.14e+07} & 0.215 &
      2516 & 3 & 88.3 & 11.7 & 90.8 & 9.2 & 84.9 & 15.1 & 86.7 & 13.3 \\
      & DIA & SPDR Dow Jones Industrial Average ETF Trust & 35501 &
      \num{4.53e+06} & 0.174 & 2516 & 11 & 86.7 & 13.3 & 90.0 & 10.0
      & 77.4 & 22.6 & 82.1 & 17.9 \\
      & IWM & iShares Russell 2000 ETF & 153098 & \num{2.98e+07} &
      0.221 & 2516 & 5 & 86.7 & 13.3 & 89.2 & 10.8 & 77.0 & 23.0 &
      80.4 & 19.6 \\
      & XLC & The Communication Services Select Sector SPDR ETF Fund
      & 27626 & \num{4.65e+06} & 0.242 & 1393 & 6 & 88.3 & 11.7 &
      86.7 & 13.3 & 75.1 & 24.9 & 76.3 & 23.7 \\
      & XLY & The Consumer Discretionary Select Sector SPDR Fund &
      49127 & \num{5.63e+06} & 0.208 & 2516 & 12 & 88.3 & 11.7 & 90.0
      & 10.0 & 74.2 & 25.8 & 79.8 & 20.2 \\
      & XLP & The Consumer Staples Select Sector SPDR Fund & 45530 &
      \num{1.19e+07} & 0.146 & 2516 & 21 & 70.8 & 29.2 & 75.8 & 24.2
      & 62.5 & 37.5 & 61.9 & 38.1 \\
      & XLE & The Energy Select Sector SPDR Fund & 112759 &
      \num{2.10e+07} & 0.298 & 2516 & 24 & 71.7 & 28.3 & 75.8 & 24.2
      & 62.5 & 37.5 & 64.9 & 35.1 \\
      & XLF & The Financial Select Sector SPDR Fund & 72205 &
      \num{5.54e+07} & 0.221 & 2516 & 15 & 81.7 & 18.3 & 87.5 & 12.5
      & 71.2 & 28.8 & 71.6 & 28.4 \\
      & XLV & The Health Care Select Sector SPDR Fund & 63047 &
      \num{9.89e+06} & 0.170 & 2516 & 19 & 82.5 & 17.5 & 88.3 & 11.7
      & 69.0 & 31.0 & 68.8 & 31.2 \\
      & XLI & The Industrial Select Sector SPDR Fund & 65217 &
      \num{1.15e+07} & 0.196 & 2516 & 13 & 80.8 & 19.2 & 90.8 & 9.2 &
      71.4 & 28.6 & 73.8 & 26.2 \\
      & XLB & The Materials Select Sector SPDR Fund & 38370 &
      \num{6.33e+06} & 0.206 & 2516 & 20 & 73.3 & 26.7 & 77.5 & 22.5
      & 69.0 & 31.0 & 68.8 & 31.2 \\
      & XLRE & The Real Estate Select Sector SPDR Fund & 15748 &
      \num{4.21e+06} & 0.214 & 2069 & 30 & 73.8 & 26.2 & 75.0 & 25.0
      & 59.8 & 40.2 & 63.5 & 36.5 \\
      & XLK & The Technology Select Sector SPDR Fund & 60643 &
      \num{1.04e+07} & 0.226 & 2516 & 7 & 89.2 & 10.8 & 91.7 & 8.3 &
      83.3 & 16.7 & 83.7 & 16.3 \\
      & XLU & The Utilities Select Sector SPDR Fund & 66683 &
      \num{1.48e+07} & 0.191 & 2516 & 48 & 64.2 & 35.8 & 65.0 & 35.0
      & 55.4 & 44.6 & 57.1 & 42.9 \\
      \cmidrule(lr){1-16}
      \multirow{15}{*}{\textbf{Stock}} & AAPL & Apple Inc. & 368263 &
      \num{1.37e+08} & 0.284 & 2516 & 22 & 76.7 & 23.3 & 75.8 & 24.2
      & 70.2 & 29.8 & 73.6 & 26.4 \\
      & AMC & AMC Entertainment Holdings, Inc. & 104278 &
      \num{2.77e+06} & 1.352 & 2516 & 56 & 50.8 & 49.2 & 48.3 & 51.7
      & 47.8 & 52.2 & 49.6 & 50.4 \\
      & AMZN & Amazon.com, Inc. & 164947 & \num{8.02e+07} & 0.332 &
      2516 & 21 & 78.3 & 21.7 & 76.7 & 23.3 & 68.7 & 31.3 & 68.8 & 31.2 \\
      & CLX & The Clorox Company & 16352 & \num{1.18e+06} & 0.227 &
      2516 & 70 & 52.5 & 47.5 & 55.8 & 44.2 & 49.6 & 50.4 & 51.2 & 48.8 \\
      & CPB & The Campbell's Company & 18902 & \num{2.32e+06} & 0.236
      & 2516 & 66 & 51.7 & 48.3 & 49.2 & 50.8 & 54.8 & 45.2 & 53.6 & 46.4 \\
      & GME & GameStop Corp. & 51874 & \num{1.82e+07} & 1.080 & 2516
      & 52 & 43.3 & 56.7 & 49.2 & 50.8 & 47.2 & 52.8 & 48.2 & 51.8 \\
      & KO & The Coca-Cola Company & 78632 & \num{1.42e+07} & 0.180 &
      2516 & 44 & 65.8 & 34.2 & 67.5 & 32.5 & 55.4 & 44.6 & 59.1 & 40.9 \\
      & MRK & Merck \& Co., Inc. & 68200 & \num{1.08e+07} & 0.214 &
      2516 & 52 & 64.2 & 35.8 & 56.7 & 43.3 & 56.0 & 44.0 & 50.8 & 49.2 \\
      & MSFT & Microsoft Corporation & 231381 & \num{3.02e+07} &
      0.271 & 2516 & 16 & 72.5 & 27.5 & 73.3 & 26.7 & 65.5 & 34.5 &
      66.3 & 33.7 \\
      & NVDA & NVIDIA Corporation & 204440 & \num{4.58e+08} & 0.464 &
      2516 & 30 & 71.7 & 28.3 & 76.7 & 23.3 & 68.1 & 31.9 & 68.7 & 31.3 \\
      & PEP & PepsiCo, Inc. & 44764 & \num{4.69e+06} & 0.184 & 2516 &
      44 & 66.7 & 33.3 & 68.3 & 31.7 & 52.0 & 48.0 & 54.6 & 45.4 \\
      & PFE & Pfizer Inc. & 113288 & \num{2.79e+07} & 0.226 & 2516 &
      40 & 65.8 & 34.2 & 61.7 & 38.3 & 57.3 & 42.7 & 57.1 & 42.9 \\
      & PG & The Procter \& Gamble Company & 61423 & \num{8.27e+06} &
      0.182 & 2516 & 61 & 69.2 & 30.8 & 65.8 & 34.2 & 59.1 & 40.9 &
      61.5 & 38.5 \\
      & TAP & Molson Coors Beverage Company & 16751 & \num{1.83e+06}
      & 0.282 & 2516 & 67 & 60.0 & 40.0 & 59.2 & 40.8 & 52.0 & 48.0 &
      52.6 & 47.4 \\
      & TSLA & Tesla, Inc. & 368486 & \num{1.13e+08} & 0.557 & 2516 &
      29 & 75.8 & 24.2 & 74.2 & 25.8 & 63.7 & 36.3 & 63.7 & 36.3 \\
      \bottomrule
    \end{tabular}
  }
\end{table}

\begin{figure}[!ht]
  \centering
  \includegraphics[width=0.5\textwidth]{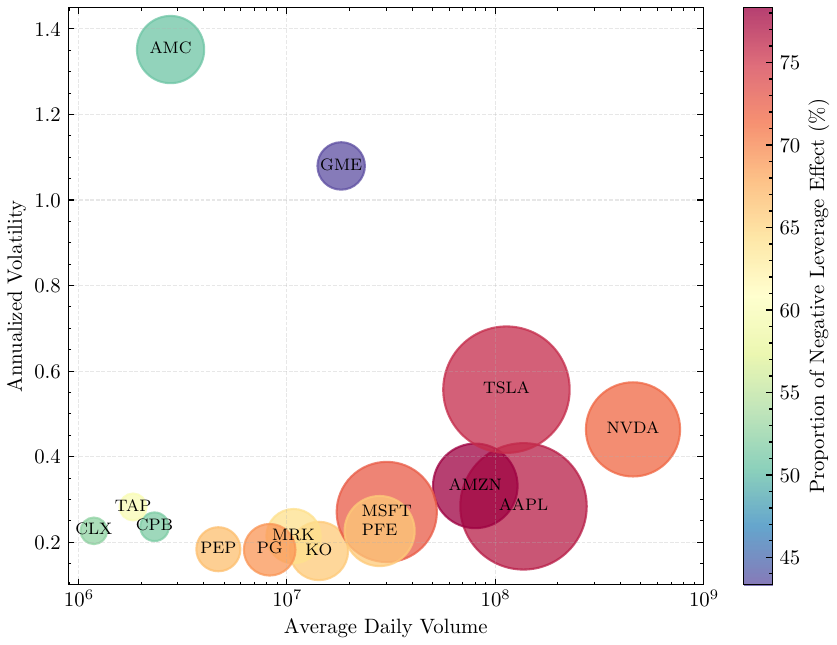}
  \caption{Individual stocks included in the empirical study. The
    sizes of the circles represent the average number of daily
    observations in our dataset, while the colors represent the
    proportion of negative leverage effect detected using monthly data
  sampled at 1-second frequency.}
  \label{fig:stocks}
\end{figure}

As summarized in Table~\ref{tab:emp} and visualized in
Figure~\ref{fig:stocks}, a negative leverage effect is predominant
across most assets, particularly within established large-cap and
defensive stocks, consistent with financial theory. The notable
exceptions are the ``meme stocks'' AMC and GME, where
retail-investor-driven speculative trading results in extreme
volatility and a weaker or non-negative leverage effect. This
demonstrates our method's ability to uncover such asset-specific idiosyncrasies.

\section{Conclusion}

We introduce a multi-scale framework for the robust and efficient
estimation of the leverage effect from high-frequency data
contaminated by complex, serially dependent microstructure noise. We
construct two estimators, SALE and MSLE, by combining the shifted
window, subsampling, and multi-scale techniques. We develop the
asymptotic theory, and design an effective weighting strategy for the
MSLE estimator. Beyond noise robustness, a central merit of our
framework is its superior efficiency. In the absence of noise, the
MSLE estimator improves the efficiency of the base estimator. Under a
realistic setting of noise and sample size, the SALE estimator
already outperforms the standard pre-averaging estimator, and the
MSLE estimator further improves upon this, delivering consistently
more accurate and reliable estimates. Extensive simulations and
empirical applications have validated the asymptotic theory,
finite-sample performance, and the practical robustness and
flexibility of the proposed methods.

\bibliographystyle{plainnat}
\bibliography{ref.bib}

@article{aitsahalia2005HowOftenSample,
  title = {How {{Often}} to {{Sample}} a {{Continuous-Time Process}} in the {{
           Presence}} of {{Market Microstructure Noise}}},
  author = {{A{\"i}t-Sahalia}, Yacine and Mykland, Per A. and Zhang, Lan},
  year = {2005},
  journal = {Review of Financial Studies},
  volume = {18},
  number = {2},
  pages = {351--416},
  issn = {0893-9454, 1465-7368},
  doi = {10.1093/rfs/hhi016},
  urldate = {2025-05-08},
  langid = {english},
}

@article{aitsahalia2009TestingJumpsDiscretely,
  title = {Testing for Jumps in a Discretely Observed Process},
  author = {{A{\"i}t-Sahalia}, Yacine and Jacod, Jean},
  year = {2009},
  month = feb,
  journal = {The Annals of Statistics},
  volume = {37},
  number = {1},
  issn = {0090-5364},
  doi = {10.1214/07-AOS568},
  urldate = {2025-05-08},
  langid = {american},
}

@article{aitsahalia2011UltraHighFrequency,
  title = {Ultra High Frequency Volatility Estimation with Dependent
           Microstructure Noise},
  author = {{A{\"i}t-Sahalia}, Yacine and Mykland, Per A. and Zhang, Lan},
  year = {2011},
  month = jan,
  journal = {Journal of Econometrics},
  volume = {160},
  number = {1},
  pages = {160--175},
  issn = {03044076},
  doi = {10.1016/j.jeconom.2010.03.028},
  urldate = {2025-05-08},
  copyright = {https://www.elsevier.com/tdm/userlicense/1.0/},
  langid = {english},
}

@article{aitsahalia2012TestingJumpsNoisy,
  title = {Testing for Jumps in Noisy High Frequency Data},
  author = {{A{\"i}t-Sahalia}, Yacine and Jacod, Jean and Li, Jia},
  year = {2012},
  month = jun,
  journal = {Journal of Econometrics},
  volume = {168},
  number = {2},
  pages = {207--222},
  issn = {03044076},
  doi = {10.1016/j.jeconom.2011.12.004},
  urldate = {2025-05-08},
  copyright = {https://www.elsevier.com/tdm/userlicense/1.0/},
  langid = {english},
}

@book{aitsahalia2014HighFrequencyFinancialEconometrics,
  title = {High-{{Frequency Financial Econometrics}}},
  author = {{A{\"i}t-Sahalia}, Yacine and Jacod, Jean},
  year = {2014},
  month = jul,
  publisher = {Princeton University Press},
  doi = {10.1515/9781400850327},
  urldate = {2025-05-08},
  isbn = {978-1-4008-5032-7},
}

@article{aitsahalia2017EstimationContinuousDiscontinuous,
  title = {Estimation of the {{Continuous}} and {{Discontinuous Leverage Effects
           }}},
  author = {{A{\"i}t-Sahalia}, Yacine and Fan, Jianqing and Laeven, Roger J. A.
            and Wang, Christina Dan and Yang, Xiye},
  year = {2017},
  month = oct,
  journal = {Journal of the American Statistical Association},
  volume = {112},
  number = {520},
  pages = {1744--1758},
  issn = {0162-1459, 1537-274X},
  doi = {10.1080/01621459.2016.1240082},
  urldate = {2025-05-08},
  langid = {english},
}

@article{aitsahalia2019HausmanTestPresence,
  title = {A {{Hausman}} Test for the Presence of Market Microstructure Noise in
           High Frequency Data},
  author = {{A{\"i}t-Sahalia}, Yacine and Xiu, Dacheng},
  year = {2019},
  month = jul,
  journal = {Journal of Econometrics},
  volume = {211},
  number = {1},
  pages = {176--205},
  issn = {03044076},
  doi = {10.1016/j.jeconom.2018.12.013},
  urldate = {2025-05-08},
  langid = {english},
}

@article{andersen2000GreatRealizations,
  title = {Great Realizations},
  author = {Andersen, Torben G. and Bollerslev, Tim and Diebold, Francis X. and
            Labys, Paul},
  year = {2000},
  journal = {Risk},
  volume = {13},
  pages = {105--108},
}

@article{bajgrowicz2016JumpsHighFrequencyData,
  title = {Jumps in {{High-Frequency Data}}: {{Spurious Detections}}, {{Dynamics
           }}, and {{News}}},
  shorttitle = {Jumps in {{High-Frequency Data}}},
  author = {Bajgrowicz, Pierre and Scaillet, Olivier and Treccani, Adrien},
  year = {2016},
  month = aug,
  journal = {Management Science},
  volume = {62},
  number = {8},
  pages = {2198--2217},
  issn = {0025-1909, 1526-5501},
  doi = {10.1287/mnsc.2015.2234},
  urldate = {2025-05-08},
  abstract = {Applying tests for jumps to financial data sets can lead to an
              important number of spurious detections. Bursts of volatility are
              often incorrectly identified as jumps when the sampling is too
              sparse. At a higher frequency, methods robust to microstructure
              noise are required. We argue that whatever the jump detection test
              and the sampling frequency, a large number of spurious detections
              remain because of multiple testing issues. We propose a formal
              treatment based on an explicit thresholding on available test
              statistics. We prove that our method eliminates asymptotically all
              remaining spurious detections. In Dow Jones stocks between 2006 and
              2008, spurious detections can represent up to 90\% of the jumps
              detected initially. For the stocks considered, jumps are rare
              events, they do not cluster in time, and no cojump affects all
              stocks simultaneously, suggesting jump risk is diversifiable. We
              relate the remaining jumps to macroeconomic news, prescheduled
              company-specific announcements, and stories from news agencies that
              include a variety of unscheduled and uncategorized events. The vast
              majority of news does not cause jumps but may generate a market
              reaction in the form of bursts of volatility. This paper was
              accepted by J{\'e}r{\^o}me Detemple, finance.},
  langid = {english},
}

@article{bandi2012TimevaryingLeverageEffects,
  title = {Time-Varying Leverage Effects},
  author = {Bandi, Federico M. and Ren{\`o}, Roberto},
  year = {2012},
  month = jul,
  journal = {Journal of Econometrics},
  volume = {169},
  number = {1},
  pages = {94--113},
  issn = {03044076},
  doi = {10.1016/j.jeconom.2012.01.010},
  urldate = {2025-05-07},
  copyright = {https://www.elsevier.com/tdm/userlicense/1.0/},
  langid = {english},
}

@article{barndorffnielsen2009RealizedKernelsPractice,
  title = {Realized Kernels in Practice: Trades and Quotes},
  shorttitle = {Realized Kernels in Practice},
  author = {Barndorff-Nielsen, O. E. and Hansen, P. Reinhard and Lunde, A. and
            Shephard, N.},
  year = {2009},
  month = nov,
  journal = {The Econometrics Journal},
  volume = {12},
  number = {3},
  pages = {C1-C32},
  issn = {1368-4221, 1368-423X},
  doi = {10.1111/j.1368-423X.2008.00275.x},
  urldate = {2025-05-08},
  copyright = {http://doi.wiley.com/10.1002/tdm\_license\_1.1},
  langid = {english},
}

@inproceedings{black1976StudiesStockPrice,
  title = {Studies of Stock Price Volatility Changes},
  booktitle = {Proceedings of the 1976 {{Meeting}} of the {{Business}} and {{
               Economic Statistics Section}}, {{American Statistical Association}
               }},
  author = {Black, Fischer},
  year = {1976},
  pages = {177--181},
  publisher = {American Statistical Association},
}

@article{chong2024VolatilityVolatilityLeverage,
  title = {Volatility of Volatility and Leverage Effect from Options},
  author = {Chong, Carsten H. and Todorov, Viktor},
  year = {2024},
  month = mar,
  journal = {Journal of Econometrics},
  volume = {240},
  number = {1},
  pages = {105669},
  issn = {03044076},
  doi = {10.1016/j.jeconom.2024.105669},
  urldate = {2025-05-07},
  langid = {english},
}

@article{chong2025WhenFrictionsAre,
  title = {When {{Frictions Are Fractional}}: {{Rough Noise}} in {{
           High-Frequency Data}}},
  shorttitle = {When {{Frictions Are Fractional}}},
  author = {Chong, Carsten H. and Delerue, Thomas and Li, Guoying},
  year = {2025},
  month = jan,
  journal = {Journal of the American Statistical Association},
  pages = {1--14},
  issn = {0162-1459, 1537-274X},
  doi = {10.1080/01621459.2024.2428466},
  urldate = {2025-05-12},
  langid = {english},
}

@article{christensen2014FactFrictionJumps,
  title = {Fact or Friction: {{Jumps}} at Ultra High Frequency},
  shorttitle = {Fact or Friction},
  author = {Christensen, Kim and Oomen, Roel C.A. and Podolskij, Mark},
  year = {2014},
  month = dec,
  journal = {Journal of Financial Economics},
  volume = {114},
  number = {3},
  pages = {576--599},
  issn = {0304405X},
  doi = {10.1016/j.jfineco.2014.07.007},
  urldate = {2025-05-13},
  langid = {english},
}

@article{christie1982StochasticBehaviorCommon,
  title = {The Stochastic Behavior of Common Stock Variances {{Value}}, Leverage
           and Interest Rate Effects},
  author = {Christie, A},
  year = {1982},
  month = dec,
  journal = {Journal of Financial Economics},
  volume = {10},
  number = {4},
  pages = {407--432},
  issn = {0304405X},
  doi = {10.1016/0304-405X(82)90018-6},
  urldate = {2025-05-09},
  copyright = {https://www.elsevier.com/tdm/userlicense/1.0/},
}

@article{curato2022StochasticLeverageEffect,
  title = {Stochastic Leverage Effect in High-Frequency Data: A {{Fourier}}
           Based Analysis},
  shorttitle = {Stochastic Leverage Effect in High-Frequency Data},
  author = {Curato, Imma Valentina and Sanfelici, Simona},
  year = {2022},
  month = jul,
  journal = {Econometrics and Statistics},
  volume = {23},
  pages = {53--82},
  issn = {24523062},
  doi = {10.1016/j.ecosta.2021.03.001},
  urldate = {2025-05-07},
  langid = {english},
}

@article{da2021WhenMovingAverageModels,
  title = {When {{Moving}}-{{Average Models Meet High}}-{{Frequency Data}}: {{
           Uniform Inference}} on {{Volatility}}},
  shorttitle = {When {{Moving}}-{{Average Models Meet High}}-{{Frequency Data}}},
  author = {Da, Rui and Xiu, Dacheng},
  year = {2021},
  journal = {Econometrica},
  volume = {89},
  number = {6},
  pages = {2787--2825},
  issn = {0012-9682},
  doi = {10.3982/ECTA15593},
  urldate = {2025-05-08},
  abstract = {We conduct inference on volatility with noisy high-frequency data.
              We assume the observed transaction price follows a continuous-time
              It{\^o}-semimartingale, contaminated by a discrete-time
              moving-average noise process associated with the arrival of trades.
              We estimate volatility, defined as the quadratic variation of the
              semimartingale, by maximizing the likelihood of a misspecified
              moving-average model, with its order selected based on an
              information criterion. Our inference is uniformly valid over a
              large class of noise processes whose magnitude and dependence
              structure vary with sample size. We show that the convergence rate
              of our estimator dominates n 1/4 as noise vanishes, and is
              determined by the selected order of noise dependence when noise is
              sufficiently small. Our implementation guarantees positive
              estimates in finite samples.},
  langid = {english},
}

@article{heston1993ClosedFormSolutionOptions,
  title = {A {{Closed-Form Solution}} for {{Options}} with {{Stochastic
           Volatility}} with {{Applications}} to {{Bond}} and {{Currency Options}
           }},
  author = {Heston, Steven L.},
  year = {1993},
  month = apr,
  journal = {Review of Financial Studies},
  volume = {6},
  number = {2},
  pages = {327--343},
  issn = {0893-9454, 1465-7368},
  doi = {10.1093/rfs/6.2.327},
  urldate = {2025-05-08},
  langid = {english},
}

@incollection{jacod1997ContinuousConditionalGaussian,
  title = {On Continuous Conditional {{Gaussian}} Martingales and Stable
           Convergence in Law},
  booktitle = {S{\'e}minaire de {{Probabilit{\'e}s XXXI}}},
  author = {Jacod, Jean},
  editor = {Az{\'e}ma, Jacques and Yor, Marc and Emery, Michel},
  year = {1997},
  volume = {1655},
  pages = {232--246},
  publisher = {Springer Berlin Heidelberg},
  address = {Berlin, Heidelberg},
  doi = {10.1007/BFb0119308},
  urldate = {2025-05-08},
  isbn = {978-3-540-62634-3 978-3-540-68352-0},
  langid = {english},
}

@book{jacod2003LimitTheoremsStochastic,
  title = {Limit Theorems for Stochastic Processes},
  author = {Jacod, Jean and {\v S}irjaev, Al'bert N.},
  year = {2003},
  series = {Grundlehren Der Mathematischen {{Wissenschaften}}},
  edition = {Second edition},
  number = {288},
  publisher = {Springer},
  address = {Berlin, [Germany] ; Heidelberg, [Germany]},
  isbn = {978-3-662-05265-5},
  langid = {english},
}

@article{jacod2009MicrostructureNoiseContinuous,
  title = {Microstructure Noise in the Continuous Case: {{The}} Pre-Averaging
           Approach},
  shorttitle = {Microstructure Noise in the Continuous Case},
  author = {Jacod, Jean and Li, Yingying and Mykland, Per A. and Podolskij, Mark
            and Vetter, Mathias},
  year = {2009},
  month = jul,
  journal = {Stochastic Processes and their Applications},
  volume = {119},
  number = {7},
  pages = {2249--2276},
  issn = {03044149},
  doi = {10.1016/j.spa.2008.11.004},
  urldate = {2025-05-08},
  copyright = {https://www.elsevier.com/tdm/userlicense/1.0/},
  langid = {english},
}

@book{jacod2012DiscretizationProcesses,
  title = {Discretization of {{Processes}}},
  author = {Jacod, Jean and Protter, Philip E.},
  year = {2012},
  series = {Stochastic {{Modelling}} and {{Applied Probability}}},
  number = {67},
  publisher = {Springer-Verlag Berlin Heidelberg},
  address = {Berlin, Heidelberg},
  doi = {10.1007/978-3-642-24127-7},
  isbn = {978-3-642-24126-0 978-3-642-24127-7 978-1-283-45152-9},
  langid = {english},
}

@article{jacod2017StatisticalPropertiesMicrostructure,
  title = {Statistical {{Properties}} of {{Microstructure Noise}}},
  author = {Jacod, Jean and Li, Yingying and Zheng, Xinghua},
  year = {2017},
  journal = {Econometrica},
  volume = {85},
  number = {4},
  pages = {1133--1174},
  issn = {0012-9682},
  doi = {10.3982/ECTA13085},
  urldate = {2025-05-08},
  langid = {english},
}

@article{kalnina2008EstimatingQuadraticVariation,
  title = {Estimating Quadratic Variation Consistently in the Presence of
           Endogenous and Diurnal Measurement Error},
  author = {Kalnina, Ilze and Linton, Oliver},
  year = {2008},
  month = nov,
  journal = {Journal of Econometrics},
  volume = {147},
  number = {1},
  pages = {47--59},
  issn = {03044076},
  doi = {10.1016/j.jeconom.2008.09.016},
  urldate = {2025-05-12},
  copyright = {https://www.elsevier.com/tdm/userlicense/1.0/},
  langid = {english},
}

@article{kalnina2017NonparametricEstimationLeverage,
  title = {Nonparametric {{Estimation}} of the {{Leverage Effect}}: {{A
           Trade-Off Between Robustness}} and {{Efficiency}}},
  shorttitle = {Nonparametric {{Estimation}} of the {{Leverage Effect}}},
  author = {Kalnina, Ilze and Xiu, Dacheng},
  year = {2017},
  month = jan,
  journal = {Journal of the American Statistical Association},
  volume = {112},
  number = {517},
  pages = {384--396},
  issn = {0162-1459, 1537-274X},
  doi = {10.1080/01621459.2016.1141687},
  urldate = {2025-05-07},
  langid = {english},
}

@book{lao2021FundamentalTheoriesTheir,
  title = {Fundamental {{Theories}} and {{Their Applications}} of the {{Calculus
           }} of {{Variations}}},
  author = {Lao, Dazhong and Zhao, Shanshan},
  year = {2021},
  edition = {1st ed. 2021},
  publisher = {Springer Singapore},
  address = {Singapore},
  doi = {10.1007/978-981-15-6070-5},
  isbn = {978-981-15-6069-9 978-981-15-6070-5},
  langid = {english},
}

@book{lehmann1999ElementsLargesampleTheory,
  title = {Elements of Large-Sample Theory},
  author = {Lehmann, E. L.},
  year = {1999},
  series = {Springer Texts in Statistics},
  publisher = {Springer},
  address = {New York},
  isbn = {978-0-387-98595-4},
  lccn = {QA276.6 .L45 1999},
  keywords = {Asymptotic distribution (Probability theory),Law of large numbers,
              Sampling (Statistics)},
}

@article{li2016EfficientEstimationIntegrated,
  title = {Efficient Estimation of Integrated Volatility Incorporating Trading
           Information},
  author = {Li, Yingying and Xie, Shangyu and Zheng, Xinghua},
  year = {2016},
  month = nov,
  journal = {Journal of Econometrics},
  volume = {195},
  number = {1},
  pages = {33--50},
  issn = {03044076},
  doi = {10.1016/j.jeconom.2016.05.017},
  urldate = {2025-05-08},
  langid = {english},
}

@article{li2020DependentMicrostructureNoise,
  title = {Dependent Microstructure Noise and Integrated Volatility Estimation
           from High-Frequency Data},
  author = {Li, Z. Merrick and Laeven, Roger J.A. and Vellekoop, Michel H.},
  year = {2020},
  month = apr,
  journal = {Journal of Econometrics},
  volume = {215},
  number = {2},
  pages = {536--558},
  issn = {03044076},
  doi = {10.1016/j.jeconom.2019.10.004},
  urldate = {2025-05-08},
  langid = {english},
}

@article{li2022ReMeDIMicrostructureNoise,
  title = {A {{ReMeDI}} for {{Microstructure Noise}}},
  author = {Li, Z. Merrick and Linton, Oliver},
  year = {2022},
  journal = {Econometrica},
  volume = {90},
  number = {1},
  pages = {367--389},
  issn = {0012-9682},
  doi = {10.3982/ECTA17505},
  urldate = {2025-05-08},
  abstract = {We introduce the Realized moMents of Disjoint Increments (ReMeDI)
              paradigm to measure microstructure noise (the deviation of the
              observed asset prices from the fundamental values caused by market
              imperfections). We propose consistent estimators of arbitrary
              moments of the microstructure noise process based on high-frequency
              data, where the noise process could be serially dependent,
              endogenous, and nonstationary. We characterize the limit
              distributions of the proposed estimators and construct confidence
              intervals under infill asymptotics. Our simulation and empirical
              studies show that the ReMeDI approach is very effective to measure
              the scale and the serial dependence of microstructure noise.
              Moreover, the estimators are quite robust to model specifications,
              sample sizes, and data frequencies.},
  langid = {english},
}

@article{mykland2009InferenceContinuousSemimartingales,
  title = {Inference for {{Continuous Semimartingales Observed}} at {{High
           Frequency}}},
  author = {Mykland, Per and Zhang, Lan},
  year = {2009},
  journal = {Econometrica},
  volume = {77},
  number = {5},
  pages = {1403--1445},
  issn = {0012-9682},
  doi = {10.3982/ECTA7417},
  urldate = {2025-05-08},
  copyright = {http://doi.wiley.com/10.1002/tdm\_license\_1.1},
  langid = {english},
}

@article{mykland2016DataCleaningInference,
  title = {Between Data Cleaning and Inference: {{Pre-averaging}} and Robust
           Estimators of the Efficient Price},
  shorttitle = {Between Data Cleaning and Inference},
  author = {Mykland, Per A. and Zhang, Lan},
  year = {2016},
  month = oct,
  journal = {Journal of Econometrics},
  volume = {194},
  number = {2},
  pages = {242--262},
  issn = {03044076},
  doi = {10.1016/j.jeconom.2016.05.005},
  urldate = {2025-05-08},
  langid = {english},
}

@article{patton2011DatabasedRankingRealised,
  title = {Data-Based Ranking of Realised Volatility Estimators},
  author = {Patton, Andrew J.},
  year = {2011},
  month = apr,
  journal = {Journal of Econometrics},
  volume = {161},
  number = {2},
  pages = {284--303},
  issn = {03044076},
  doi = {10.1016/j.jeconom.2010.12.010},
  urldate = {2025-05-11},
  copyright = {https://www.elsevier.com/tdm/userlicense/1.0/},
  langid = {english},
}

@article{podolskij2009BipowertypeEstimationNoisy,
  title = {Bipower-Type Estimation in a Noisy Diffusion Setting},
  author = {Podolskij, Mark and Vetter, Mathias},
  year = {2009},
  month = sep,
  journal = {Stochastic Processes and their Applications},
  volume = {119},
  number = {9},
  pages = {2803--2831},
  issn = {03044149},
  doi = {10.1016/j.spa.2009.02.006},
  urldate = {2025-05-08},
  copyright = {https://www.elsevier.com/tdm/userlicense/1.0/},
  langid = {english},
}

@article{podolskij2010UnderstandingLimitTheorems,
  title = {Understanding Limit Theorems for Semimartingales: A Short Survey},
  shorttitle = {Understanding Limit Theorems for Semimartingales},
  author = {Podolskij, Mark and Vetter, Mathias},
  year = {2010},
  month = aug,
  journal = {Statistica Neerlandica},
  volume = {64},
  number = {3},
  pages = {329--351},
  issn = {0039-0402, 1467-9574},
  doi = {10.1111/j.1467-9574.2010.00460.x},
  urldate = {2025-05-08},
  abstract = {This paper presents a short survey on limit theorems for certain
              functionals of semimartingales that are observed at high frequency.
              Our aim is to explain the main ideas of the theory to a broader
              audience. We introduce the concept of stable convergence, which is
              crucial for our purpose. We show some laws of large numbers (for
              the continuous and the discontinuous case) that are the most
              interesting from a practical point of view, and demonstrate the
              associated stable central limit theorems. Moreover, we state a
              simple sketch of the proofs and give some examples.},
  langid = {english},
}

@article{vetter2015EstimationIntegratedVolatility,
  title = {Estimation of Integrated Volatility of Volatility with Applications
           to Goodness-of-Fit Testing},
  author = {Vetter, Mathias},
  year = {2015},
  month = nov,
  journal = {Bernoulli},
  volume = {21},
  number = {4},
  issn = {1350-7265},
  doi = {10.3150/14-BEJ648},
  urldate = {2025-05-08},
}

@article{wang2014EstimationLeverageEffect,
  title = {The {{Estimation}} of {{Leverage Effect With High-Frequency Data}}},
  author = {Wang, Christina D. and Mykland, Per A.},
  year = {2014},
  month = jan,
  journal = {Journal of the American Statistical Association},
  volume = {109},
  number = {505},
  pages = {197--215},
  issn = {0162-1459, 1537-274X},
  doi = {10.1080/01621459.2013.864189},
  urldate = {2025-05-08},
  langid = {english},
}

@article{yang2023EstimationLeverageEffect,
  title = {Estimation of {{Leverage Effect}}: {{Kernel Function}} and {{
           Efficiency}}},
  shorttitle = {Estimation of {{Leverage Effect}}},
  author = {Yang, Xiye},
  year = {2023},
  month = jul,
  journal = {Journal of Business \& Economic Statistics},
  volume = {41},
  number = {3},
  pages = {939--956},
  issn = {0735-0015, 1537-2707},
  doi = {10.1080/07350015.2022.2097910},
  urldate = {2025-05-07},
  langid = {english},
}

@article{yuan2020LeverageEffectHighfrequency,
  title = {Leverage Effect in High-Frequency Data with Market Microstructure},
  author = {Yuan, Huiling and Mu, Yan and Zhou, Yong},
  year = {2020},
  journal = {Statistics and Its Interface},
  volume = {13},
  number = {1},
  pages = {91--101},
  issn = {19387989, 19387997},
  doi = {10.4310/SII.2020.v13.n1.a8},
  urldate = {2025-05-07},
  langid = {english},
}

@article{zhang2005TaleTwoTime,
  title = {A {{Tale}} of {{Two Time Scales}}: {{Determining Integrated
           Volatility With Noisy High-Frequency Data}}},
  shorttitle = {A {{Tale}} of {{Two Time Scales}}},
  author = {Zhang, Lan and Mykland, Per A and {A{\"i}t-Sahalia}, Yacine},
  year = {2005},
  month = dec,
  journal = {Journal of the American Statistical Association},
  volume = {100},
  number = {472},
  pages = {1394--1411},
  issn = {0162-1459, 1537-274X},
  doi = {10.1198/016214505000000169},
  urldate = {2025-05-08},
  langid = {english},
}

@article{zhang2006EfficientEstimationStochastic,
  title = {Efficient Estimation of Stochastic Volatility Using Noisy
           Observations: A Multi-Scale Approach},
  shorttitle = {Efficient Estimation of Stochastic Volatility Using Noisy
                Observations},
  author = {Zhang, Lan},
  year = {2006},
  month = dec,
  journal = {Bernoulli},
  volume = {12},
  number = {6},
  issn = {1350-7265},
  doi = {10.3150/bj/1165269149},
  urldate = {2025-05-08},
}

@article{zhou1996HighFrequencyDataVolatility,
  title = {High-{{Frequency Data}} and {{Volatility}} in {{Foreign-Exchange
           Rates}}},
  author = {Zhou, Bin},
  year = {1996},
  month = jan,
  journal = {Journal of Business \& Economic Statistics},
  volume = {14},
  number = {1},
  pages = {45--52},
  issn = {0735-0015, 1537-2707},
  doi = {10.1080/07350015.1996.10524628},
  urldate = {2025-05-11},
  langid = {english},
}

\pagebreak

\begin{appendices}

  \renewcommand{\thesection}{S\arabic{section}}
  \renewcommand{\thetable}{S\arabic{table}}
  \renewcommand{\thefigure}{S\arabic{figure}}
  \renewcommand{\theequation}{S\arabic{equation}}
  \renewcommand{\thetheorem}{S\arabic{theorem}}
  \renewcommand{\theproposition}{S\arabic{proposition}}
  \renewcommand{\thelemma}{S\arabic{lemma}}
  \renewcommand{\thecorollary}{S\arabic{corollary}}
  \renewcommand{\theexample}{S\arabic{example}}
  \renewcommand{\thedefinition}{S\arabic{definition}}
  \renewcommand{\theassumption}{S\arabic{assumption}}
  \renewcommand{\theremark}{S\arabic{remark}}

  \setcounter{section}{0}
  \setcounter{table}{0}
  \setcounter{figure}{0}
  \setcounter{equation}{0}
  \setcounter{theorem}{0}
  \setcounter{proposition}{0}
  \setcounter{lemma}{0}
  \setcounter{corollary}{0}
  \setcounter{example}{0}
  \setcounter{definition}{0}
  \setcounter{assumption}{0}
  \setcounter{remark}{0}

  \section{Technical Proofs}

  \subsection{Preliminary Results}\label{app:preliminary}

  \begin{lemma}\label{lem:err-oracle-leverage}
    Suppose that Assumption~\ref{ass:process} holds, and let $s < t$.
    As $(t-s) \to 0$, we have
    \begin{equation}
      \E\left[\left.
        \left(
          \left(X_{t} - X_{s}\right)\left(\sigma_{t}^{2} - \sigma_{s}^{2}\right)
          - \int_{s}^{t} 2\sigma_r^{2} f_r \ud r
        \right)^{2}
        \right| \calF_{s}
      \right]
      =
      4\sigma_{s}^{4} \left(2f_{s}^{2} + g_{s}^{2}\right) (t-s)^{2}
      + O_p\left((t-s)^{5/2}\right).
    \end{equation}
  \end{lemma}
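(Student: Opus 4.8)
The plan is to freeze the coefficients of the two It\^o processes at the left endpoint $s$, expand both increments into a leading Gaussian (martingale) part plus a remainder, and then compute the conditional second moment of the error by separating the contribution of the leading part (which yields the stated main term exactly) from the remainders (which are absorbed into the $O_p((t-s)^{5/2})$ error). By a standard localization argument I may assume that $\mu$, $a$, $f$, $g$ and $\sigma$ are bounded and that $\sigma^2$ is bounded away from zero, so that all moment bounds below hold conditionally on $\filt_s$ with constants independent of the interval.

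First I would write, with $\Delta W = W_t - W_s$ and $\Delta B = B_t - B_s$,
\begin{equation}
X_t - X_s = \sigma_s \Delta W + R_X, \qquad
\sigma_t^2 - \sigma_s^2 = 2\sigma_s f_s \Delta W + 2\sigma_s g_s \Delta B + R_{\sigma^2},
\end{equation}
where the second expansion uses It\^o's formula $\ud(\sigma^2) = (2\sigma a + f^2 + g^2)\,\ud t + 2\sigma f\,\ud W + 2\sigma g\,\ud B$. Here $R_X$ collects the drift of $X$ and the time variation $\int_s^t(\sigma_r-\sigma_s)\,\ud W_r$, while $R_{\sigma^2}$ collects the drift of $\sigma^2$ together with the time variation of its two martingale parts. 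Using the It\^o isometry and the Burkholder--Davis--Gundy inequality, together with the fact that each coefficient process $\xi\in\{\sigma,f,g\}$ is itself an It\^o process and hence satisfies $\|\xi_r-\xi_s\|_{L^2(\filt_s)}=O((r-s)^{1/2})$, I would establish the conditional bounds $\|R_X\|_{L^4(\filt_s)} = O(t-s)$ and $\|R_{\sigma^2}\|_{L^4(\filt_s)} = O(t-s)$, whereas $\sigma_s\Delta W$ and the martingale part of $\sigma_t^2-\sigma_s^2$ are of order $(t-s)^{1/2}$.

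Multiplying the two expansions and subtracting $\int_s^t 2\sigma_r^2 f_r\,\ud r$, I would isolate the leading error
\begin{equation}
E_0 = 2\sigma_s^2 f_s\bigl[(\Delta W)^2 - (t-s)\bigr] + 2\sigma_s^2 g_s\,\Delta W\,\Delta B,
\end{equation}
so that the full error equals $E_0 + R$ with
\begin{equation}
R = \sigma_s\Delta W\,R_{\sigma^2} + R_X\bigl(2\sigma_s f_s\Delta W + 2\sigma_s g_s\Delta B\bigr) + R_X R_{\sigma^2} - \int_s^t\bigl(2\sigma_r^2 f_r - 2\sigma_s^2 f_s\bigr)\,\ud r.
\end{equation}
Since $\Delta W$ and $\Delta B$ are independent of $\filt_s$ and jointly Gaussian with variance $t-s$, conditioning on $\filt_s$ and using $\E[((\Delta W)^2-(t-s))^2\mid\filt_s] = 2(t-s)^2$, $\E[(\Delta W\,\Delta B)^2\mid\filt_s] = (t-s)^2$, and the vanishing of the cross term gives the exact identity $\E[E_0^2\mid\filt_s] = 4\sigma_s^4(2f_s^2+g_s^2)(t-s)^2$, which is the advertised main term.

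It then remains to show $\E[R^2\mid\filt_s]$ and $\E[E_0 R\mid\filt_s]$ are of order $O_p((t-s)^{5/2})$ or smaller. The four pieces of $R$ are each controlled by Cauchy--Schwarz: the first two are bounded in conditional $L^2$ by $\|\sigma_s\Delta W\|_{L^4}\,\|R_{\sigma^2}\|_{L^4} = O((t-s)^{3/2})$ and its analogue for the $R_X$ term, the product $R_X R_{\sigma^2}$ is $O((t-s)^2)$, and the time integral is $O((t-s)^{3/2})$ because its integrand has conditional $L^2$ norm $O((r-s)^{1/2})$; hence $\|R\|_{L^2(\filt_s)} = O((t-s)^{3/2})$ and $\E[R^2\mid\filt_s] = O_p((t-s)^3)$. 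The cross term is then handled by the conditional Cauchy--Schwarz inequality, $|\E[E_0 R\mid\filt_s]|\le \sqrt{\E[E_0^2\mid\filt_s]}\,\sqrt{\E[R^2\mid\filt_s]} = O_p(t-s)\cdot O_p((t-s)^{3/2}) = O_p((t-s)^{5/2})$, and collecting the three contributions yields the claim. I expect the main obstacle to be the bookkeeping in the remainder estimate --- in particular verifying that the time variation of the coefficient processes and the drift terms enter only at order $(t-s)^{3/2}$ in conditional $L^2$ and $L^4$, since it is precisely the $E_0$--$R$ cross term, and not $R$ itself, that produces the $(t-s)^{5/2}$ error floor.
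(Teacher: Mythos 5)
Your proof is correct, but it takes a genuinely different route from the paper's. The paper starts from the semimartingale integration-by-parts identity
\begin{equation*}
(X_t-X_s)(\sigma_t^2-\sigma_s^2)=\int_s^t(\sigma_r^2-\sigma_s^2)\,\ud X_r+\int_s^t(X_r-X_s)\,\ud\sigma_r^2+\int_s^t\ud\langle X,\sigma^2\rangle_r,
\end{equation*}
applies the It\^o isometry to turn the conditional second moment into $\E[\int_s^t(\sigma_r^2-\sigma_s^2)^2\,\ud\langle X,X\rangle_r+\int_s^t(X_r-X_s)^2\,\ud\langle\sigma^2,\sigma^2\rangle_r+2\int_s^t(\sigma_r^2-\sigma_s^2)(X_r-X_s)\,\ud\langle X,\sigma^2\rangle_r\mid\filt_s]$, and then locally approximates each squared increment by $(r-s)$ times the corresponding instantaneous quadratic (co)variation, arriving at $[\partial\langle X,X\rangle_s\,\partial\langle\sigma^2,\sigma^2\rangle_s+(\partial\langle X,\sigma^2\rangle_s)^2](t-s)^2$, which equals the stated constant. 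You instead perform an Euler--Maruyama-type expansion, freezing coefficients at $s$ and isolating the explicit Gaussian leading error $E_0=2\sigma_s^2f_s[(\Delta W)^2-(t-s)]+2\sigma_s^2g_s\,\Delta W\,\Delta B$, whose second moment you compute from the moments of $(\Delta W,\Delta B)$; your remainder analysis (conditional $L^4$ bounds on $R_X,R_{\sigma^2}$ via BDG, then Cauchy--Schwarz on the $E_0$--$R$ cross term) is sound and correctly identifies that it is the cross term that sets the $(t-s)^{5/2}$ error floor. What each approach buys: the paper's route avoids explicit reference to the driving Brownian motions and lets the cross-variation structure produce the constant $2f_s^2+g_s^2$ automatically, with the remainder handled implicitly through the local approximation of the integrands; yours requires more bookkeeping but makes the distributional form of the leading error explicit, which is conceptually closer to how the limiting variance is assembled in the CLT. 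Both yield the same answer.
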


  \begin{proof}
    Note that $\ud \langle X,\sigma^2 \rangle_r = 2\sigma_r^2 f_r$, and that
    \begin{equation}
      \left(X_{t} - X_{s}\right)\left(\sigma_{t}^{2} - \sigma_{s}^{2}\right)
      =
      \int_s^t (\sigma_r^2 - \sigma_s^2) \ud X_r
      + \int_s^t (X_r - X_s) \ud \sigma_r^2
      + \int_s^t \ud \langle X, \sigma^2 \rangle_r.
    \end{equation}
    According the It\^{o} isometry, it follows that
    \begin{align}
      &
      \E\left[\left.
        \left(
          \left(X_{t} - X_{s}\right)\left(\sigma_{t}^{2} - \sigma_{s}^{2}\right)
          - \int_{s}^{t} 2\sigma_r^{2} f_r \ud r
        \right)^{2}
        \right| \calF_{s}
      \right]
      \notag
      \\
      &=
      \E\left[\left.
        \left(
          \int_s^t (\sigma_r^2 - \sigma_s^2) \ud X_r
          + \int_s^t (X_r - X_s) \ud \sigma_r^2
        \right)^{2}
        \right| \calF_{s}
      \right]
      \notag
      \\
      &=
      \E\Biggl[
        \int_s^t (\sigma_r^2 - \sigma_s^2)^2 \ud \langle X, X \rangle_r
        + \int_s^t (X_r - X_s)^2 \ud \langle \sigma^2, \sigma^2 \rangle_r
        \notag
        \\
        &\qquad \qquad \qquad
        + 2 \int_s^t (\sigma_r^2 - \sigma_s^2) (X_r - X_s) \ud
        \langle X, \sigma^2 \rangle_r
        \Bigg| \calF_{s}
      \Biggr].
    \end{align}
    Denote $\partial \langle A, B \rangle_t = \frac{\ud \langle A, B
    \rangle_r}{\ud r}|_{r=t}$ for processes $A$ and $B$. For the first
    term, we have
    \begin{align}
      &
      \E\Biggl[
        \int_s^t (\sigma_r^2 - \sigma_s^2)^2 \ud \langle X, X \rangle_r
        \Bigg| \calF_{s}
      \Biggr]
      \notag
      \\ &=
      \Bigl(\partial \langle X, X \rangle_s + O_p((t-s)^{1/2})\Bigr)
      \E\Biggl[
        \int_s^t (\sigma_r^2 - \sigma_s^2)^2 \ud r
        \Bigg| \calF_{s}
      \Biggr]
      \notag
      \\ &=
      \Bigl(\partial \langle X, X \rangle_s + O_p((t-s)^{1/2})\Bigr)
      \E\Biggl[
        \int_s^t (t - r) \ud (\sigma_r^2 - \sigma_s^2)^2
        \Bigg| \calF_{s}
      \Biggr]
      \notag
      \\ &=
      \Bigl(\partial \langle X, X \rangle_s + O_p((t-s)^{1/2})\Bigr)
      \Bigl(\partial \langle \sigma^2, \sigma^2 \rangle_s +
      O_p((t-s)^{1/2})\Bigr)
      \frac{1}{2} (t-s)^2
      \notag
      \\ &=
      \partial \langle X, X \rangle_s
      \partial \langle \sigma^2, \sigma^2 \rangle_s
      \frac{(t-s)^2}{2}
      + O_p\left((t-s)^{5/2}\right).
    \end{align}
    The same approach applies to the remaining two terms, leading to a
    final result of
    \begin{align}
      &
      \E\left[\left.
        \left(
          \left(X_{t} - X_{s}\right)\left(\sigma_{t}^{2} - \sigma_{s}^{2}\right)
          - \int_{s}^{t} 2\sigma_r^{2} f_r \ud r
        \right)^{2}
        \right| \calF_{s}
      \right]
      \notag
      \\
      &=
      \Bigl[
        \partial \langle X, X \rangle_s \partial \langle \sigma^2,
        \sigma^2 \rangle_s
        + \left(\partial \langle X, \sigma^2 \rangle_s\right)^2
      \Bigr] (t-s)^2
      + O_p\left((t-s)^{5/2}\right)
      \notag
      \\
      &=
      4\sigma_{s}^{4} \left(2f_{s}^{2} + g_{s}^{2}\right) (t-s)^{2}
      + O_p\left((t-s)^{5/2}\right).
    \end{align}
    This completes the proof.
  \end{proof}

  \begin{lemma}\label{lem:overlap-expectation}
    Suppose that Assumption~\ref{ass:process} holds, and let $v < s$,
    $u < t$, $d = (s-v)\lor(t-u) $.
    \begin{enumerate}[label=(\alph*)]
      \item If $v \leq u < s \leq t$, as $d \to 0$, we have
        \begin{align}
          &\E\left[\left.
            \left( \int_{v}^{s} (X_r - X_v) \ud X_r \right)
            \left( \int_{u}^{t} (X_r - X_u) \ud X_r \right)
          \right| \calF_v \right]
          =
          \frac{1}{2}\sigma_v^4 (s-u)^2 + O_p(d^{5/2}), \\
          &\E\left[\left.
            \left( \int_{v}^{s} (\sigma_r^2 - \sigma_v^2) \ud r \right)
            \left( \int_{u}^{t} (\sigma_r^2 - \sigma_u^2) \ud r \right)
          \right| \calF_v \right]
          =
          \frac{1}{3} \left.\frac{\ud\langle\sigma^2,
          \sigma^2\rangle_t}{\ud t}\right|_{t=v} (s-u)^3 + O_p(d^{7/2}).
        \end{align}
      \item If $v \leq u < t \leq s$, as $d \to 0$, we have
        \begin{align}
          &\E\left[\left.
            \left( \int_{v}^{s} (X_r - X_v) \ud X_r \right)
            \left( \int_{u}^{t} (X_r - X_u) \ud X_r \right)
          \right| \calF_v \right]
          =
          \frac{1}{2}\sigma_v^4 (t-u)^2 + O_p(d^{5/2}), \\
          &\E\left[\left.
            \left( \int_{v}^{s} (\sigma_r^2 - \sigma_v^2) \ud r \right)
            \left( \int_{u}^{t} (\sigma_r^2 - \sigma_u^2) \ud r \right)
          \right| \calF_v \right]
          =
          \frac{1}{3} \left.\frac{\ud\langle\sigma^2,
          \sigma^2\rangle_t}{\ud t}\right|_{t=v} (t-u)^3 + O_p(d^{7/2}).
        \end{align}
    \end{enumerate}
  \end{lemma}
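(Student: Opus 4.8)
The plan is to reduce both products to conditional covariances of single It\^o integrals, evaluated over the common overlap interval, and then to replace all coefficients by their values frozen at the left endpoint $v$. As in Lemma~\ref{lem:err-oracle-leverage} and following the locally-constant approximation of \cite{mykland2009InferenceContinuousSemimartingales}, the drift contributions are of strictly higher order and may be pushed into the remainder. I would therefore first record the two martingale decompositions coming from \eqref{eq:underlying-processes}: $X_r-X_v=\int_v^r\sigma_w\,\ud W_w+O_p(d)$, and, applying It\^o's formula to $\sigma^2$, $\sigma_r^2-\sigma_v^2=M^\sigma_r+O_p(d)$ where $M^\sigma_r=\int_v^r 2\sigma_w f_w\,\ud W_w+\int_v^r 2\sigma_w g_w\,\ud B_w$ is a continuous martingale with $\ud\langle M^\sigma\rangle_w=\bigl(\ud\langle\sigma^2,\sigma^2\rangle_w/\ud w\bigr)\ud w$. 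Write $O:=[v,s)\cap[u,t)$ for the overlap, which is $[u,s)$ in case (a) and $[u,t)$ in case (b).

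For the price-variation identity I would express each factor as a single stochastic integral against $W$: with $M_r:=\int_v^r\sigma_w\,\ud W_w$, one has $\int_v^s(X_r-X_v)\,\ud X_r=\int_v^s M_r\sigma_r\,\ud W_r+(\text{h.o.})$ and $\int_u^t(X_r-X_u)\,\ud X_r=\int_u^t(M_r-M_u)\sigma_r\,\ud W_r+(\text{h.o.})$. Since both are driven by the same $W$, the It\^o isometry gives the $\filt_v$-conditional covariance as $\E\bigl[\int_O M_r(M_r-M_u)\sigma_r^2\,\ud r\mid\filt_v\bigr]$. Freezing $\sigma_r^2\approx\sigma_v^2$ and using the tower/martingale property to evaluate $\E[M_r(M_r-M_u)\mid\filt_v]=\sigma_v^2(r-u)$ for $r\ge u$, the remaining one-dimensional integral over $O$ yields $\tfrac12\sigma_v^4(\text{overlap length})^2$, as claimed.

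For the volatility-variation identity the key device is the stochastic Fubini relation $\int_v^s(\sigma_r^2-\sigma_v^2)\,\ud r=\int_v^s(s-w)\,\ud\sigma^2_w$, whose martingale part is $\int_v^s(s-w)\,\ud M^\sigma_w$; analogously $\int_u^t(\sigma_r^2-\sigma_u^2)\,\ud r$ has martingale part $\int_u^t(t-w)\,\ud M^\sigma_w$. Applying the It\^o isometry over $O$ and freezing the quadratic-variation density $\Lambda_v:=\left.\ud\langle\sigma^2,\sigma^2\rangle_t/\ud t\right|_{t=v}$ reduces the leading term to $\Lambda_v\int_O(s-w)(t-w)\,\ud w$. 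Evaluating this kernel integral over the overlap produces the stated $\tfrac13\Lambda_v(\text{overlap length})^3$; the numerical coefficient $\tfrac13$ comes precisely from the double-kernel weighting near the terminal endpoint of $O$, and I would verify that any cross-endpoint contribution is of lower order in the configuration at hand.

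The main obstacle is the rigorous error control behind every ``$\approx$'' above: justifying (i) the replacement of $X_r-X_v$ and $\sigma_r^2-\sigma_v^2$ by their martingale parts, and (ii) the freezing of $\sigma_w,f_w,g_w$ at $w=v$. Both rest on the c\`adl\`ag, locally bounded assumptions on the coefficients in \eqref{eq:underlying-processes}; I would bound each discarded piece by Cauchy--Schwarz combined with Burkholder--Davis--Gundy and the standard $L^2$ moment estimates for It\^o integrals over intervals of length $O(d)$, showing the aggregate remainder is $O_p(d^{5/2})$ for the price term and $O_p(d^{7/2})$ for the volatility term. The secondary delicate point is the conditional-expectation calculus for the overlapping martingale increments, in particular $\E[M^\sigma_r(M^\sigma_{r'}-M^\sigma_u)\mid\filt_v]=\Lambda_v(\min(r,r')-u)^+$, since it dictates exactly which sub-interval of $O$ carries the leading contribution and thereby pins down the constants $\tfrac12$ and $\tfrac13$.
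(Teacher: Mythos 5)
Your route is essentially the paper's: kill the drift and cross terms with Burkholder--Davis--Gundy bounds, localize the conditional covariance to the overlap $O$ via the It\^o isometry, and freeze the coefficients at $v$; for the volatility factors you use the same integration-by-parts identity $\int_a^b(\sigma_r^2-\sigma_a^2)\,\ud r=\int_a^b(b-w)\,\ud\sigma_w^2$ that the paper does. Your polarization treatment of the price term is a compact repackaging of the paper's three-piece decomposition, and the resulting kernel $\sigma_v^4\int_O(r-u)\,\ud r=\tfrac12\sigma_v^4|O|^2$ is exact, so that half is fine.

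The problem sits exactly in the step you deferred. In case (a) your kernel integral evaluates exactly to
\begin{equation*}
\Lambda_v\int_u^s(s-w)(t-w)\,\ud w
=\Lambda_v\left[\frac{(s-u)^3}{3}+\frac{(t-s)(s-u)^2}{2}\right],
\end{equation*}
and symmetrically in case (b) to $\Lambda_v[(t-u)^3/3+(s-t)(t-u)^2/2]$. The cross-endpoint contribution $\tfrac12\Lambda_v\,|t-s|\,|O|^2$ is generically of order $d^3$ --- the same order as the main term --- and is not $O_p(d^{7/2})$ under the stated hypotheses: with $v=u$, $s=d/2$, $t=d$ the constant in front of $\Lambda_v d^3$ is $5/48$, not $1/24$. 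So the verification you promise cannot be completed from the lemma's hypotheses alone; it requires the extra condition $|t-s|\,|O|^2=O(d^{7/2})$, for which $|t-s|=o(|O|)$ with $|O|\asymp d$ suffices. That condition does hold where the lemma is invoked (in Lemma~\ref{lem:err-spot-volatility} the two spot-volatility windows have equal length $k_nH_n\Delta_n$ and relative offset at most $H_n\Delta_n=o(k_nH_n\Delta_n)$), and the paper's own proof never confronts the issue because it dispatches the volatility half with ``the rest of the proof is similar.'' Your more explicit derivation is precisely what exposes it: either add the hypothesis to the lemma or carry the extra term and check its negligibility at each point of use.
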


  \begin{proof}
    Let $a < b$. By the Burkholder-Davis-Gundy inequality, there is a
    constant $C_1 > 0$ such that (ignoring the drift term of $O_p(b-a)$)
    \begin{equation}
      \E \Biggl[\left.
        \sup_{a \leq r \leq b} | X_r - X_a |
      \right| \calF_a \Biggr]
      \leq
      C_1
      \E \Biggl[\left.
        \left( \int_a^b \sigma_r^2 \ud r \right)^{1/2}
      \right| \calF_a \Biggr]
      \leq
      C_1 \sigma_+ (b-a)^{1/2}.
    \end{equation}
    Consequently, we have
    \begin{equation}
      \E \Biggl[\left.
        \Bigl| \int_a^b (X_r - X_a) \mu_r \ud r \Bigr|
      \right| \calF_a \Biggr]
      \leq
      C_1 \sigma_+ \mu_+ (b-a)^{3/2},
    \end{equation}
    where $\sigma_+$ and $\mu_+$ are some local upper bounds of
    $|\sigma_r|$ and $|\mu_r|$. It follows that
    \begin{equation}
      \E \Biggl[\left.
        \int_a^b (X_r - X_a) \ud X_r
      \right| \calF_a \Biggr]
      =
      \E \Biggl[\left.
        \int_a^b (X_r - X_a) \mu_r \ud r
      \right| \calF_a \Biggr]
      =
      O_p \left( (b-a)^{3/2} \right).
    \end{equation}
    Obviously, $\E [X_b - X_a|\calF_a] = O_p(b-a)$. According to the
    It\^o isometry, we also have
    \begin{equation}
      \E \Biggl[\left.
        (X_b - X_a) \int_a^b (X_r - X_a) \ud X_r
      \right| \calF_a \Biggr]
      =
      \E \Biggl[\left.
        \int_a^b (X_r - X_a) \sigma_r^2 \ud r
      \right| \calF_a \Biggr]
      =
      O_p \left( (b-a)^{3/2} \right).
    \end{equation}
    For the first equality in (a), decompose the integral as
    \begin{align}
      \int_v^s (X_r - X_v) \ud X_r
      &=
      \int_v^s (X_r - X_v) \ud X_r + (X_u - X_v) (X_s - X_u) + \int_u^s
      (X_r - X_u) \ud X_r, \\
      \int_u^t (X_r - X_u) \ud X_r
      &=
      \int_u^s (X_r - X_u) \ud X_r + (X_s - X_u) (X_t - X_s) + \int_s^t
      (X_r - X_s) \ud X_r.
    \end{align}
    By utilizing the nested property $\E[\cdot|\calF_v] =
    \E[\E(\E(\cdot|\calF_s)|\calF_u)|\calF_v]$ and applying the
    previous results, one can verify that the only dominant term turns out to be
    \begin{align}
      \E \Biggl[\left.
        \left(\int_u^s (X_r - X_u) \ud X_r\right)^2
      \right| \calF_v \Biggr]
      &=
      \E \Biggl[\left.
        \int_u^s (X_r - X_u)^2 \sigma_r^2 \ud r
      \right| \calF_v \Biggr]
      \notag
      \\
      &=
      \sigma_v^4 \int_u^s (r-u) \ud r + O_p\left((s-u)^{5/2}\right)
      \notag
      \\
      &=
      \frac{1}{2}\sigma_v^4 (s-u)^2 + O_p(d^{5/2}),
    \end{align}
    and all other terms are $O_p(d^{5/2})$. The rest of the proof is
    similar after using integration by parts:
    \begin{equation}
      \int_a^b (\sigma_r^2 - \sigma_a^2) \ud r
      =
      -\int_a^b (\sigma_r^2 - \sigma_b^2) \ud (b - r)
      =
      \int_a^b (b-r) \ud \sigma_r^2.
    \end{equation}
    This completes the proof.
  \end{proof}

  Without loss of generality, consider grids of two scales, with
  intervals being $1$ for the larger scale and $l \in (0, 1)$ for the
  smaller one. For a given larger scale interval, the grids of smaller
  scale partition the interval into several subintervals, with their
  lengths denoted as $(\alpha_1, \dotsc, \alpha_p)$. Define $\alpha_1$
  as a uniformly distributed random variable on $(0, l]$, so that $p$
  is also a random variable related to $\alpha_1$ and $l$. The
  parameter of interest is the summation of $k$th powers of
  $\alpha_i$'s as $S_k = S_k(\alpha_1, l) = \sum_{i=1}^{p} \alpha_i^k$,
  and its expectation with respect to $\alpha_1 \sim \unif (0, l]$ is
  discussed in the following lemma.

  \begin{lemma}\label{lem:grid-integral}
    For any integer $k \geq 1$, we have
    \begin{equation}
      \E_{\alpha_1\sim\unif(0,l]}\Bigl[ S_k (\alpha_1, l) \Bigr]
      =
      l^{k-1} \left(1 - \frac{k-1}{k+1} l \right)
      =
      l^{k-1} - \frac{k-1}{k+1} l^k.
    \end{equation}
  \end{lemma}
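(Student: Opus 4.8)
The plan is to evaluate the expectation directly as the average
$\E_{\alpha_1\sim\unif(0,l]}\bigl[S_k\bigr] = \tfrac{1}{l}\int_0^l S_k(\alpha_1,l)\,\ud\alpha_1$
by making the piecewise-constant combinatorial structure of $S_k$ explicit. First I would introduce $M = \lfloor 1/l\rfloor$ and the fractional part $\theta = 1/l - M \in [0,1)$, so that $1 = (M+\theta)l$. The offset $\alpha_1$ fixes the entire partition: the smaller-scale grid points inside $(0,1)$ sit at $\alpha_1 + jl$ for $j\ge 0$, the first subinterval has length $\alpha_1$, every interior subinterval has length exactly $l$, and only the first and last subintervals can be shorter than $l$.

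The key step is to split the range of $\alpha_1$ at the threshold $\theta l$. For $\alpha_1\in(0,\theta l]$ the point $\alpha_1 + Ml$ still lies in $(0,1)$, so there are $M$ interior intervals of length $l$ and a final subinterval of length $\theta l - \alpha_1$, giving $S_k = \alpha_1^k + Ml^k + (\theta l - \alpha_1)^k$. For $\alpha_1\in(\theta l, l]$ the point $\alpha_1 + Ml$ exceeds $1$, so there are only $M-1$ interior intervals and a final subinterval of length $(1+\theta)l - \alpha_1$, giving $S_k = \alpha_1^k + (M-1)l^k + ((1+\theta)l - \alpha_1)^k$. The single boundary value $\alpha_1=\theta l$ has measure zero and does not affect the integral.

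Next I would integrate term by term over each piece, using the substitutions $u=\theta l-\alpha_1$ and $u=(1+\theta)l-\alpha_1$ on the two final-length terms. The contributions combine cleanly: the $\alpha_1^k$ term together with the two final-length terms gives $2l^{k+1}/(k+1)$, because the $\theta^{k+1}$ pieces from the two cases cancel, while the interior $l^k$ terms give $l^{k+1}\bigl(M\theta + (M-1)(1-\theta)\bigr) = l^{k+1}(M-1+\theta)$. Dividing by $l$ and substituting $M+\theta = 1/l$ then yields
\begin{equation*}
\E_{\alpha_1\sim\unif(0,l]}\bigl[S_k\bigr] = l^k\Bigl(\tfrac{1}{l}-1+\tfrac{2}{k+1}\Bigr) = l^{k-1} - \tfrac{k-1}{k+1}l^k,
\end{equation*}
which is the claimed identity. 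As a sanity check, $k=1$ recovers $\E[S_1]=1$, consistent with $\sum_i\alpha_i=1$.

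I expect the bookkeeping in the case split to be the only delicate point: one must track the exact number of full-length interior intervals ($M$ versus $M-1$) and the precise last-interval length across the threshold $\theta l$, since an off-by-one there would corrupt the coefficient of $l^k$. Everything else is elementary polynomial integration, and the role of the floor/fractional-part decomposition is precisely to render this counting unambiguous.
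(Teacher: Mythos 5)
Your proof is correct, and the bookkeeping checks out: for $\alpha_1\in(0,\theta l]$ there are indeed $M$ full interior intervals with a final piece of length $\theta l-\alpha_1$, for $\alpha_1\in(\theta l,l]$ there are $M-1$ with a final piece of length $(1+\theta)l-\alpha_1$, the three contributions integrate to $\tfrac{2l^{k+1}}{k+1}+l^{k+1}(M+\theta-1)$, and $M+\theta=1/l$ gives the claim. Your route is genuinely different from the paper's. The paper also reduces the statement to the integral $I(l)=\int_0^l S_k(\alpha_1,l)\,\ud\alpha_1$, but rather than integrating directly it uses a derivative-matching argument: it isolates $J(l)=\int_0^l T_k(x,l)\,\ud x$ (the contribution of all intervals after the first), checks the boundary condition $J(0^+)=0$, and computes $\tfrac{\ud}{\ud l}J(l)$ via Leibniz's rule, which requires deriving $\tfrac{\partial}{\partial l}T_k(x,l)=(m-1)k\bigl(l^{k-1}-a_m^{k-1}\bigr)$ and then performing the same case split you do, at the same threshold $x_0=1-(r-1)l=\theta l$. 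Your direct integration is arguably cleaner: every step is an explicit antiderivative, whereas the paper's differentiation under the integral sign must be justified across the values of $l$ where the interval count $m$ jumps, which it handles only informally (by taking $\Delta l$ small enough that $m$ is unchanged). Both proofs ultimately rest on the same delicate point you identify --- the count of full-length interior intervals and the exact last-interval length on either side of the threshold --- and an off-by-one there would corrupt the coefficient of $l^k$ in either argument.
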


  \begin{proof}
    It is equivalent to prove that
    \begin{equation}\label{eq:grid-integral-I}
      I(l)
      = \int_0^l S_k (\alpha_1, l) \ud \alpha_1
      = l^k - \frac{k-1}{k+1} l^{k+1}.
    \end{equation}
    Denote $(\alpha_1, \alpha_2, \dotsc, \alpha_p) = (x, a_1, \dotsc,
    a_m)$, where $m = \lceil (1-x) / l \rceil = p - 1$, $x = \alpha_1
    \in (0, l]$, $a_1 = \dotsb, = a_{m-1} = l$, and $a_m = 1-x-(m-1)l$.
    The summation can be rewritten as
    \begin{equation}
      S_k(\alpha_1, l) = x^k + T_k(x, l), \quad \text{where } T_k(x, l)
      = \sum_{i=1}^m a_i^k,
    \end{equation}
    and the target integral is
    \begin{equation}
      I(l) = \frac{1}{k+1}l^{k+1} + J(l), \quad \text{where } J(l) =
      \int_0^l T_k(x, l) \ud x.
    \end{equation}
    Therefore, it is further equivalent to prove that
    \begin{equation}\label{eq:grid-integral-J}
      J(l)
      = \int_0^l T_k(x, l) \ud x
      = l^k - \frac{k}{k+1} l^{k+1}.
    \end{equation}
    It is easy to see that $\lim_{l\to 0^+} J(l) = 0$, which is
    consistent with Equation~\eqref{eq:grid-integral-J}. According to
    the Leibniz's rule, the derivative of $J(l)$ is
    \begin{equation}\label{eq:grid-integral-J-derivative}
      \frac{\ud}{\ud l} J(l)
      = T_k(l, l) + \int_0^l \frac{\partial}{\partial l} T_k(x, l) \ud x,
    \end{equation}
    so we only need to prove that for any $l \in (0, 1)$,
    \begin{equation}\label{eq:grid-integral-J-derivative-target}
      \frac{\ud}{\ud l} J(l)
      = k l^{k-1} - k l^k.
    \end{equation}
    For the first term in
    Equation~\eqref{eq:grid-integral-J-derivative}, we have
    \begin{equation}
      T_k (l, l) = (m-1)l^k + a_m^k.
    \end{equation}
    For the second term in
    Equation~\eqref{eq:grid-integral-J-derivative}, we need to derive
    the detailed form of $\frac{\partial}{\partial l} T_k(x, l)$. For a
    given $a_m \in (0, l]$, consider an arbitrarily small $\Delta l >
    0$, so that $m$ remains unchanged. In this case, $a_i = l \to l +
    \Delta l$, $\forall i=1,\dotsc, m-1$; while $a_m \to a_m - (m-1)l$.
    Therefore, the corresponding change in $T_k(x, l)$ is
    \begin{align}
      \Delta T_k(x, l)
      &=
      T_k(x, l+\Delta l) - T_k(x, l)
      \notag
      \\
      &=
      (m-1)\Bigl[ (l+\Delta l)^k - l^k \Bigr] + \Bigl[\left( a_m -
      (m-1)\Delta l \right)^k - a_m^k\Bigr]
      \notag
      \\
      &=
      (m-1) k (l^{k-1} - a_m^{k-1}) \Delta l + o(\Delta l),
    \end{align}
    so we have
    \begin{equation}\label{eq:grid-integral-derivative-Tk}
      \frac{\partial}{\partial l} T_k(x, l)
      = (m-1) k \left(l^{k-1} - a_m^{k-1}\right).
    \end{equation}
    Next, we need to consider the specific value of $a_m$ as $x$ ranges
    from $0$ to $l$. In fact, for a given $l$, there exists $r = \lceil
    1/l \rceil$ and $x_0 = 1-(r-1)l$, such that for $x \in (0, l]$, we have
    \begin{equation}
      m =
      \begin{cases}
        r, & \text{if } 0 < x < x_0, \\
        r - 1, & \text{if } x_0 \leq x \leq l,
      \end{cases}
    \end{equation}
    \begin{equation}
      a_m =
      \begin{cases}
        - x + x_0, & \text{if } 0 < x < x_0, \\
        - x + x_0 + l, & \text{if } x_0 \leq x \leq l,
      \end{cases}
    \end{equation}
    \begin{equation}
      \frac{\partial T_k}{\partial l} =
      \begin{cases}
        (r - 1) k \left(l^{k-1} - (x_0 - x)^{k-1}\right), & \text{if }
        0 < x < x_0, \\
        (r - 2) k \left(l^{k-1} - (x_0 + l - x)^{k-1}\right), &
        \text{if } x_0 \leq x \leq l.
      \end{cases}
    \end{equation}
    As a result, the first term in
    Equation~\eqref{eq:grid-integral-J-derivative} is
    \begin{equation}
      T_k (l, l) =  (r-2) l^k + x_0^k,
    \end{equation}
    and the second term in Equation~\eqref{eq:grid-integral-J-derivative} is
    \begin{align}
      \int_0^l \frac{\partial}{\partial l} T_k(x, l) \ud x
      &=
      \int_0^{x_0} \frac{\partial}{\partial l} T_k(x, l) \ud x
      + \int_{x_0}^l \frac{\partial}{\partial l} T_k(x, l) \ud x
      \notag
      \\
      &=
      (r-1) k \Bigl[ x_0 l^{k-1} - \frac{1}{k} x_0^k \Bigr]
      + (r-2) k \Bigl[ (l-x_0)l^{k-1} - \frac{1}{k} l^k + \frac{1}{k}
      x_0^k \Bigr]
      \notag
      \\
      &=
      (r-2)(k-1) l^k + kx_0 l^{k-1} - x_0^k.
    \end{align}
    Therefore, the derivative of $J(l)$ is
    \begin{align}
      \frac{\ud}{\ud l} J(l)
      &=
      T_k(l, l) + \int_0^l \frac{\partial}{\partial l} T_k(x, l) \ud x
      \notag
      \\
      &=
      (r-2) l^k + x_0^k
      + (r-2)(k-1) l^k + kx_0 l^{k-1} - x_0^k
      \notag
      \\
      &=
      (r-2) kl^k + k [1-(r-1)l] l^{k-1}
      \notag
      \\
      &=
      kl^{k-1} - kl^k,
    \end{align}
    which is the same as
    Equation~\eqref{eq:grid-integral-J-derivative-target}. This
    completes the proof of Lemma~\ref{lem:grid-integral}.
  \end{proof}

  \subsection{Proof of Proposition~\ref{prop:all-observation-noise}}
  \label{sec:app-all-observation-noise}

  \subsubsection{Notations and Unbiasedness}
  \label{sec:app-all-observation-noise-notation-bias}

  Rewrite the noisy all-observation estimator as
  \begin{align}
    \esty^{\rm (all)}_T
    =
    \sum_{i \in I}
    (\Delta Y_i)
    \: \widehat{\delta} (i, 1, k_n)
    =
    \frac{1}{k_n \Delta_n}
    \sum_{i \in I}
    u_i,
  \end{align}
  where $I = \{k_n+1, \dotsc, n-k_n-2\}$. Denote the noise-free version
  of $\widehat{\delta}(i, H, k, s)$ defined in
  Equation~\eqref{eq:spot-vol-delta} as $\overline{\delta}(i, H, k,
  s)$. Decompose $u_i$ as
  \begin{equation}
    u_i = \left( a_i + A_i \right) \left( b_i + B_i \right),
  \end{equation}
  where
  \begin{align}
    \textcolor{blue}{a_i} + \textcolor{red}{A_i} & = (\Delta Y_{i}) =
    \textcolor{blue}{(\Delta X_{i})}
    +
    \textcolor{red}{\langle \theta_{A_i}, \eps \rangle}, \\
    \textcolor{blue}{b_i} + \textcolor{red}{B_i}
    & =
    k_n \Delta_n \widehat{\delta} (i, 1, k_n)
    =
    \textcolor{blue}{
      k_n \Delta_n \overline{\delta} (i, 1, k_n)
    }
    +
    \textcolor{red}{
      \langle \theta_{B_i}, \eps \rangle
      + \langle \phi_{B_i}, \eps^{2} \rangle
      + \langle \psi_{B_i}, \eps\eps_{+} \rangle
    }.
  \end{align}
  The lowercase letters $a_i$ and $b_i$ represent pure process terms
  $u_i$, while the capital letters $A_i$ and $B_i$ represent
  noise-related terms. The latter includes three types: $\eps_j$,
  $\eps_j^2$, and $\eps_j \eps_{j+1}$ (for valid $j$ values). For
  simplicity, we denote the corresponding noise vectors as $\eps$,
  $\eps^2$, and $\eps\eps_+$, defined by
  \begin{equation}
    (\eps)_{j} = \eps_{j}, \quad (\eps^{2})_{j} = \eps_{j}^{2}, \quad
    (\eps\eps_{+})_{j} = \eps_{j}\eps_{j+1}I(j \neq n),
  \end{equation}
  for $j \in \{0, 1, \dotsc, n\}$. The non-zero values of the
  coefficient vectors $\theta_{A_i}$, $\theta_{B_i}$, $\phi_{B_i}$, and
  $\psi_{B_i}$ are listed in Table~\ref{tab:coeff-all-observation}. We
  also adopt the convention that the index of coefficient vectors is
  written as a superscript, so we have, for instance, $\langle
  \psi_{B_i}, \eps\eps_+ \rangle = \sum_{j=0}^{n} \psi_{B_i}^{j}
  \eps_{j}\eps_{j+1}$.

  \begin{table}[!ht]
    \centering
    \caption{Values of coefficient vectors in all-observation estimator}
    \label{tab:coeff-all-observation}
    \resizebox{\textwidth}{!}{
      \begin{tabular}{ccccccccccccc}
        \toprule
        ~ & $i-k_n-1$ & $i-k_n$ & $\cdots$ & $i-2$ & $i-1$ & $i$ &
        $i+1$ & $i+2$ & $i+3$ & $\cdots$ & $i+k_n+1$ & $i+k_n+2$ \\
        \midrule
        $\theta_{A_i}$ & \textcolor{blue}{$0$} & \textcolor{blue}{$0$}
        & $\cdots$ & \textcolor{blue}{$0$} & \textcolor{blue}{$0$} &
        $-1$ & $+1$ & \textcolor{blue}{$0$} & \textcolor{blue}{$0$} &
        $\cdots$ & \textcolor{blue}{$0$} & \textcolor{blue}{$0$} \\
        $\theta_{B_i}$ & $+2\Delta X_{i-k_n-1}$ & $+2\Delta^2
        X_{i-k_n-1}$ & $\cdots$ & $+2\Delta^2 X_{i-3}$ & $-2\Delta
        X_{i-2}$ & \textcolor{blue}{$0$} & \textcolor{blue}{$0$} &
        $-2\Delta X_{i+2}$ & $-2\Delta^2 X_{i+2}$ & $\cdots$ &
        $-2\Delta^2 X_{i+k_n}$ & $+2\Delta X_{i+k_n+1}$ \\
        $\phi_{B_i}$ & $-1$ & $-2$ & $\cdots$ & $-2$ & $-1$ &
        \textcolor{blue}{$0$} & \textcolor{blue}{$0$} & $+1$ & $+2$ &
        $\cdots$ & $+2$ & $+1$ \\
        $\psi_{B_i}$ & $+2$ & $+2$ & $\cdots$ & $+2$ &
        \textcolor{blue}{$0$} & \textcolor{blue}{$0$} &
        \textcolor{blue}{$0$} & $-2$ & $-2$ & $\cdots$ & $-2$ &
        \textcolor{blue}{$0$} \\
        \bottomrule
      \end{tabular}
    }
  \end{table}

  Similarly, rewrite the noise-free all-observation estimator as
  \begin{align}
    \estx_T^{\rm (all)}
    =
    \sum_{i\in I}
    (\Delta X_i)
    \: \overline{\delta} (i, H_n, k_n)
    =
    \frac{1}{k_n \Delta_n}
    \sum_{i\in I}
    v_i,
  \end{align}
  where $v_i = a_i b_i$. Note that $\E[\eps_i] = 0$, so $\E [A_i |
  \calF] = 0$; and that $\E[\eps_i \eps_{i+1}] = 0$, $\sum_j
  \phi_{B_i}^j = 0$, so $\E [B_i | \calF] = 0$. Also note that $(A_i
  \perp B_i) \mid \calF$. We have thus
  \begin{align}
    \E[u_i | \calF]
    &=
    \E \Bigl(
      (a_i + A_i) (b_i + B_i)
      \Big| \calF
    \Bigr)
    \\
    &=
    a_i b_i
    + a_i \E [B_i | \calF]
    + b_i \E [A_i | \calF]
    + \E [A_i | \calF] \E [B_i | \calF]
    \\
    &=
    v_i.
  \end{align}
  Therefore, the all-observation estimator is unbiased due to noise:
  \begin{equation}
    \E \left( \esty^{\rm (all)}_T \middle| \calF \right)
    =
    \estx^{\rm (all)}_T.
  \end{equation}

  \subsubsection{Variance due to noise}
  \label{sec:app-all-observation-noise-variance}

  The variance due to noise is given by
  \begin{align}
    \var\left( \esty^{\rm (all)}_T \middle| \calF \right)
    &=
    \E\left( \left( \esty^{\rm (all)}_T \right)^2 \middle| \calF \right)
    -
    \left[\E\left( \esty^{\rm (all)}_T \middle| \calF \right)\right]^2
    \\
    &=
    \E\left( \left( \esty^{\rm (all)}_T \right)^2 \middle| \calF \right)
    -
    \left( \estx^{\rm (all)}_T \right)^2
    \\
    &=
    \frac{1}{k_n^2 \Delta_n^2}
    \E\left(\left.
      \sum_{k \in I} \sum_{l \in I}
      \left( u_k u_l - v_k v_l \right)
    \right| \calF \right).
  \end{align}
  At this stage, we obtain an important decomposition:
  \begin{equation}
    \begin{aligned}
      &
      k_n^2 \Delta_n^2 \var\left( \esty^{\rm (all)}_T \middle| \calF \right)
      \\
      &=
      \sum_{k \in I} \sum_{l \in I}
      \E\Bigl[
        \left( a_k + A_k \right) \left( b_k + B_k \right)
        \left( a_l + A_l \right) \left( b_l + B_l \right)
        -
        a_k b_k a_l b_l
      \Big| \calF \Bigr] \\
      &=
      \sum_{k \in I} \sum_{l \in I}
      \E\Bigl[
        N_1(k,l) + N_2(k,l) + N_3(k,l) + N_4(k,l)
      \Big| \calF \Bigr].
    \end{aligned}
  \end{equation}
  Note that in the above equation, the pure process-related term $a_k
  b_k a_l b_l$ is elimininated, while the remaining terms may contain
  one to four noise-related terms. For simplicity, we omit the
  subscript $k$ in $(a_k, b_k, A_k, B_k)$ and denote $(a_l, b_l, A_l,
  B_l)$ as $(c, d, C, D)$ when these terms appear inside $\sum_{k,l}$.
  Thus, we have
  \begin{align}
    N_1(k, l) &= \sum_{\rm comb} Abcd = Abcd + aBcd + abCd + abcD, \\
    N_2(k, l) &= \sum_{\rm comb} ABcd = ABcd + AbCd + AbcD + aBCd +
    aBcD + abCD, \\
    N_3(k, l) &= \sum_{\rm comb} ABCd = ABCd + ABcD + AbCD + aBCD, \\
    N_4(k, l) &= ABCD.
  \end{align}
  We further adopt the conventions that noise terms in $A, B, C, D$ are
  indexed by $i, \alpha, j, \beta$ respectively, for example, $A =
  \langle \theta_{A}, \eps \rangle = \sum_{i=0}^{n} \theta_{A}^{i} \eps_{i}$.

  To summarize, the variance due to noise can be calculated by
  \begin{equation}\label{all-observation-noise-variance-four-terms}
    \var\left( \esty^{\rm (all)}_T \middle| \calF \right)
    =
    \sum_{m=1}^{4}
    \frac{1}{k_n^2 \Delta_n^2}
    \sum_{k \in I} \sum_{l \in I}
    \E \Bigl(
      N_{m}(k, l)
    \Big| \calF \Bigr),
  \end{equation}
  and we will proceed by analyzing the contribution of $m=1, 2, 3, 4$,
  respectively.

  The following observation, while elementary, will be repeatedly used
  in the calculations.
  \begin{lemma}\label{lem:app-noise-prod-expect}
    Let $\{X_{i}\}_{i=1}^{l}$ be mean zero, independent and identically
    distributed random variables, and let $\nu_{k} = \E[X_i^k]$ denote
    the $k$th moment of $X_i$. For integers $k_{1}, k_{2}, \dots, k_{l}
    \geq 1$, we have
    \begin{equation}
      \E[X_{1}^{k_{1}}X_{2}^{k_{2}}\dotsb X_{l}^{k_{l}}] =
      \nu_{k_{1}}\nu_{k_{2}}\dotsb \nu_{k_{l}},
    \end{equation}
    A necessary condition for the above expectation to be non-zero is that
    $k_i \geq 2$ for all $i=1, \dots, l$. Moreover, let $n =
    \sum_{i=1}^{l} k_{i}$, then $n \geq 2l$.
  \end{lemma}

  \paragraph*{Contribution of $N_1$}
  With the same argument as in
  Section~\ref{sec:app-all-observation-noise-notation-bias}, we have
  \begin{equation}
    \E[Abcd | \calF] = bcd\E[A | \calF] = 0.
  \end{equation}
  Similarly,
  \begin{equation}
    \E[aBcd | \calF] = \E[abCd | \calF] = \E[abcD | \calF] = 0.
  \end{equation}
  Therefore, $N_1$ contributes nothing to the variance due to noise.

  \paragraph*{Contribution of $N_2$}
  The following lemma concerns the noise-related terms in $N_2$.

  \begin{lemma}\label{lem:app-all-observation-noise-N2}
    For $P, Q \in \{A, B, C, D\}$, $P \neq Q$, we have
    \begin{equation}
      \begin{aligned}
        \E[PQ | \calF] &= \nu_2 \langle\theta_P, \theta_Q\rangle
        +\nu_4\langle\phi_P, \phi_Q\rangle \\
        &\quad +\nu_2^2 \left[-\langle\phi_P,
        \phi_Q\rangle+\langle\psi_P, \psi_Q\rangle\right] + \nu_3
        \left[\langle\theta_P, \phi_Q\rangle+\langle\phi_P,
        \theta_Q\rangle\right].
      \end{aligned}
    \end{equation}
  \end{lemma}
  \begin{proof}
    Suppose that the error terms are $\eps_{\alpha}$, $\eps_{\alpha}^{2}$,
    $\eps_{\alpha}\eps_{\alpha+1}$ in $P$, and $\eps_{\beta}$,
    $\eps_{\beta}^{2}$, $\eps_{\beta}\eps_{\beta+1}$ in $Q$. For all
    nine possible combination in $PQ$, by
    Lemma~\ref{lem:app-noise-prod-expect}, the non-zero terms are
    \begin{equation}
      \begin{aligned}
        \E [\eps_{\alpha} \eps_{\beta}] &= \nu_2 \delta_{\alpha\beta}, \\
        \E [\eps_{\alpha}^{2} \eps_{\beta}^{2}] &= \nu_4
        \delta_{\alpha\beta} + \nu_2^{2} (1 - \delta_{\alpha\beta}), \\
        \E [\eps_{\alpha} \eps_{\alpha+1} \eps_{\beta} \eps_{\beta+1}]
        &= \nu_2^{2} \delta_{\alpha\beta}, \\
        \E [\eps_{\alpha} \eps_{\beta}^{2}] &= \nu_3 \delta_{\alpha\beta}, \\
        \E [\eps_{\alpha}^{2} \eps_{\beta}] &= \nu_3 \delta_{\alpha\beta}.
      \end{aligned}
    \end{equation}
    Take the coefficients into considerations and note that
    \begin{equation}
      \sum_{\alpha, \beta} \phi_{P}^{\alpha} \phi_{Q}^{\beta} =
      \sum_{\alpha} \phi_{P}^{\alpha} \sum_{\beta} \phi_{Q}^{\beta} = 0,
    \end{equation}
    so we have
    \begin{equation}
      \begin{aligned}
        \E[PQ | \calF]
        &=
        \nu_2 \sum_{\alpha, \beta} \theta_{P}^{\alpha}
        \theta_{Q}^{\beta} \delta_{\alpha \beta}
        + \nu_4 \sum_{\alpha, \beta} \phi_{P}^{\alpha} \phi_{Q}^{\beta}
        \delta_{\alpha \beta} \\
        & \quad
        + \nu_2^{2} \sum_{\alpha, \beta} \left(
          \phi_{P}^{\alpha} \phi_{Q}^{\beta} (1 - \delta_{\alpha\beta})
          + \psi_{P}^{\alpha} \psi_{Q}^{\beta} \delta_{\alpha\beta}
        \right) \\
        & \quad
        + \nu_3 \sum_{\alpha, \beta} \left(
          \theta_{P}^{\alpha} \phi_{Q}^{\beta} \delta_{\alpha \beta}
          + \phi_{P}^{\alpha} \theta_{Q}^{\beta} \delta_{\alpha \beta}
        \right) \\
        &= \nu_2 \langle\theta_P, \theta_Q\rangle +\nu_4\langle\phi_P,
        \phi_Q\rangle \\
        &\quad +\nu_2^2 \left[-\langle\phi_P,
        \phi_Q\rangle+\langle\psi_P, \psi_Q\rangle\right] + \nu_3
        \left[\langle\theta_P, \phi_Q\rangle+\langle\phi_P,
        \theta_Q\rangle\right].
      \end{aligned}
    \end{equation}
    The proof is thus completed.
  \end{proof}

  Although there are six terms to consider in the contribution of $N_2$, as
  \begin{equation}
    \E [AB | \calF] = \E[CD | \calF] = 0,
  \end{equation}
  we only need to calculate
  \begin{equation}
    \frac{1}{k_n^2 \Delta_n^2}
    \sum_{k\in I} \sum_{l\in I}
    \E\Bigl(bdAC + bcAD + adBC + acBD \Big| \calF\Bigr).
  \end{equation}

  Notice that
  \begin{equation}
    \E[AC | \calF] = \nu_2 \Bigl(2I(k=l) - I(|k-l|=1)\Bigr),
  \end{equation}
  so
  \begin{equation}
    \begin{aligned}
      \sum_{k\in I}\sum_{l\in I}
      \E [bdAC | \calF]
      &=
      \nu_2 \left(
        2 \sum_{k=k_n+1}^{n-k_n-2} b_k^2
        - \sum_{k=k_n+2}^{n-k_n-2} b_k b_{k-1}
        - \sum_{k=k_n+1}^{n-k_n-3} b_k b_{k+1}
      \right) \\
      &=
      \nu_2 \left(
        \sum_{k=k_n+1}^{n-k_n-3} (\Delta b_k)^2
        + b_{k_n+1}^2 + b_{n-k_n-2}^2
      \right),
    \end{aligned}
  \end{equation}
  where
  \begin{equation}
    \begin{aligned}
      (\Delta b_k)^2
      &=
      \left(
        \left(\Delta X_{k+k_n+2}\right)^2 - \left(\Delta X_{k+2}\right)^2
        - \left(\Delta X_{k-1}\right)^2 + \left(\Delta X_{k-k_n-1}\right)^2
      \right)^2,
    \end{aligned}
  \end{equation}
  As \cite{mykland2009InferenceContinuousSemimartingales} point out,
  the drift terms typically have no impact on the final result in the
  high-frequency setting, and locally constant approximations can be
  applied to simplify the analysis of asymptotic results. Therefore,
  by noticing that
  \begin{equation}
    \begin{aligned}
      (\Delta b_k)^2
      &\approx
      \Bigl(
        \sigma_{k+k_n+2}^2 \left(\Delta W_{k+k_n+2}\right)^2 -
        \sigma_{k+2}^2 \left(\Delta W_{k+2}\right)^2 \\
        &\qquad
        - \sigma_{k-1}^2 \left(\Delta W_{k-1}\right)^2 +
        \sigma_{k-k_n-1}^2\left(\Delta W_{k-k_n-1}\right)^2
      \Bigr)^2 \\
      &\approx
      8\sigma_k^4 \Delta_n^2,
    \end{aligned}
  \end{equation}
  we have
  \begin{equation}\label{eq:all-observation-noise-N2-AC}
    \Delta_n^{-1}
    \sum_{k\in I}\sum_{l\in I}
    \E \Bigl(bdAC \Big| \calF\Bigr)
    \convp
    8\nu_2 \int_{0}^{T} \sigma_{t}^4 \ud t.
  \end{equation}

  Notice that
  \begin{equation}
    \begin{aligned}
      \E[AD | \calF]
      &= \nu_2 \langle\theta_A, \theta_D\rangle + \nu_3
      \langle\theta_A, \phi_D\rangle,
    \end{aligned}
  \end{equation}
  so
  \begin{equation}
    \begin{aligned}
      \sum_{k\in I}\sum_{l\in I}
      \E [bcAD | \calF]
      &=
      \nu_2 \sum_{k\in I}\sum_{l\in I} bc \langle\theta_A, \theta_D\rangle
      + \nu_3 \sum_{k\in I}\sum_{l\in I} bc \langle\theta_A, \phi_D\rangle.
    \end{aligned}
  \end{equation}
  When $\langle\theta_A, \theta_D\rangle$ is not zero, it can be
  written as $\sum_{r} \eta_r \Delta X_{r}$, where up to four terms are
  present in this summation. The coefficient $\eta_r$ satisfies
  $|\eta_r| \leq 3$, and $\langle\theta_A, \theta_D\rangle$ is
  obviously mean zero as well as independent of $bc$, while $bc$ only
  includes the linear and cubic terms of $(\Delta X_r)$. Therefore, we
  can ignore the impact of $\sum_{k\in I}\sum_{l\in I} bc
  \langle\theta_A, \theta_D\rangle$ as its ``standardized version"
  (divided by the order of its standard deviation) is asymptotically
  mean zero and therefore tight. As for $\langle \theta_A,
  \phi_D\rangle$, one can verify that
  \begin{equation}\label{eq:theta_A-phi_D}
    \begin{aligned}
      \langle \theta_A, \phi_D\rangle
      &=
      I\Bigl(|k-l| \in \{1,2\}\Bigr) - I\Bigl(|k-l| \in \{k_n+1, k_n+2\}\Bigr),
    \end{aligned}
  \end{equation}
  so for a given $k$, there are up to eight non-zero $\langle \theta_A,
  \phi_D\rangle$, each with constant absolute value 1. When $\langle
  \theta_A, \phi_D\rangle$ is non-zero, $bc$ vanishes. Therefore, the
  ``standardized version" of $\sum_{k\in I}\sum_{l\in I} bc
  \langle\theta_A, \phi_D\rangle$ is tight as well. The order of
  variance can be estimated as $O_p(k_n^2 \Delta_n)$ for $\sum_{k\in
  I}\sum_{l\in I} bc \langle\theta_A, \phi_D\rangle$ and even smaller
  for $\sum_{k\in I}\sum_{l\in I} bc \langle\theta_A, \theta_D\rangle$.
  It follows that
  \begin{equation}\label{eq:all-observation-noise-N2-AD}
    k_n^{-1} \Delta_n^{-1/2}
    \sum_{k\in I}\sum_{l\in I}
    \E \Bigl(bcAD \Big| \calF\Bigr)
    =
    O_p(1).
  \end{equation}
  Moreover, because
  \begin{equation}
    \sum_{k\in I}\sum_{l\in I} adBC = \sum_{k\in I}\sum_{l\in I} a_k
    b_l B_k A_l = \sum_{k\in I}\sum_{l\in I} a_l b_k B_l A_k =
    \sum_{k\in I}\sum_{l\in I} bcAD,
  \end{equation}
  we also have
  \begin{equation}\label{eq:all-observation-noise-N2-BC}
    k_n^{-1} \Delta_n^{-1/2}
    \sum_{k\in I}\sum_{l\in I}
    \E \Bigl(adBC \Big| \calF\Bigr)
    =
    O_p(1).
  \end{equation}

  Lastly, notice that
  \begin{equation}
    \begin{aligned}
      \E[BD | \calF]
      &= \nu_2 \langle\theta_B, \theta_D\rangle +\nu_4 \langle\phi_B,
      \phi_D\rangle \\
      &\quad +\nu_2^2 \left[-\langle\phi_B,
      \phi_D\rangle+\langle\psi_B, \psi_D\rangle\right] + \nu_3
      \left[\langle\theta_B, \phi_D\rangle+\langle\phi_B,
      \theta_D\rangle\right].
    \end{aligned}
  \end{equation}
  Because $ac \approx \sigma_{k}^2 \Delta_n I(k=l)$, and
  \begin{equation}
    \|\phi_B\|^2 = 8k_n - 4, \quad \|\psi_B\|^2 = 8k_n,
  \end{equation}
  we can know that, as $n\to\infty$,
  \begin{equation}
    \begin{aligned}
      &
      \sum_{k\in I}\sum_{l\in I}
      ac \Bigl(
        \nu_4 \langle\phi_B, \phi_D\rangle
        +\nu_2^2 \left[-\langle\phi_B, \phi_D\rangle + \langle\psi_B,
        \psi_D\rangle\right]
      \Bigr) \\
      &\approx
      \sum_{k\in I}
      \sigma_k^2 \Delta_n \Bigl[
        \nu_4 \|\phi_B\|^2
        +\nu_2^2 \left(-\|\phi_B\|^2 + \|\psi_B\|^2\right)
      \Bigr] \\
      &=
      \sum_{k\in I}
      \sigma_k^2 \Delta_n \Bigl[
        (8k_n - 4) (\nu_4 - \nu_2^2)
        + 8k_n \nu_2^2
      \Bigr] \\
      &\approx
      \Bigl[
        (8k_n - 4) (\nu_4 - \nu_2^2)
        + 8k_n \nu_2^2
      \Bigr]
      \int_{0}^{T} \sigma_t^2 \ud t
      \\
      &\approx
      8k_n \nu_4 \int_{0}^{T} \sigma_t^2 \ud t.
    \end{aligned}
  \end{equation}
  For the $ac \langle\theta_B, \theta_D\rangle$ term, when $k \neq l$,
  only $(a, D)$ and $(c, B)$ can overlap, so $ac \langle\theta_B,
  \theta_D\rangle$ only contains the cubic terms of $(\Delta X_r)$, and
  thus vanishes. Therefore, the corresponding summation can be roughly
  estimated as
  \begin{equation}
    \begin{aligned}
      \sum_{k\in I}\sum_{l\in I}
      ac \Bigl(
        \nu_2 \langle\theta_B, \theta_D\rangle
      \Bigr)
      &\approx
      \nu_2
      \sum_{k\in I}
      a^2 \|\theta_B\|^2
      \approx
      \nu_2
      \sum_{k\in I}
      \sigma_k^2 \Delta_n \|\theta_B\|^2 \\
      &=
      \nu_2
      \left(\int_{0}^{T} \sigma_t^2 \ud t\right)
      O_p(k_n \Delta_n)
      =
      O_p(k_n \Delta_n).
    \end{aligned}
  \end{equation}
  As for the $ac \langle\theta_B, \phi_D\rangle$ term, notice that
  \begin{equation}
    \sum_{k\in I}\sum_{l\in I}
    ac \langle\theta_B, \phi_D\rangle
    =
    \sum_{k\in I}
    a^2 \langle\theta_B, \phi_B\rangle
    +
    \sum_{k\in I}\sum_{l\in I}
    ac \langle\theta_B, \phi_D\rangle
    I(k \neq l).
  \end{equation}
  The first term vanishes because $\langle\theta_B, \phi_B\rangle$ only
  contains linear terms of $(\Delta X_r)$ and does not overlap with
  $a^2$. The second term vanishes because $a$ only contains the linear
  terms of $(\Delta X_r)$ and does not overlap with $c$ or
  $\langle\theta_B, \phi_D\rangle$. As a result, the ``standardized
  version" of $\sum_{k\in I} \sum_{l\in I} ac \langle\theta_B,
  \phi_D\rangle$ is tight, which is
  \begin{equation}
    k_n^{-1} \Delta_n^{-3/2}
    \sum_{k\in I}\sum_{l\in I}
    ac \langle\theta_B, \phi_D\rangle
    =
    O_p(1).
  \end{equation}
  The same argument can be applied to the $ac \langle\phi_B,
  \theta_D\rangle$ term. Therefore, we have
  \begin{equation}\label{eq:all-observation-noise-N2-BD}
    k_n^{-1}
    \sum_{k\in I}\sum_{l\in I}
    \E \Bigl(acBD \Big| \calF\Bigr)
    \convp
    8 \nu_4 \int_{0}^{T} \sigma_t^2 \ud t.
  \end{equation}

  Therefore, by combining
  Equation~\eqref{eq:all-observation-noise-N2-AC},
  \eqref{eq:all-observation-noise-N2-AD},
  \eqref{eq:all-observation-noise-N2-BC}, and
  \eqref{eq:all-observation-noise-N2-BD}, we can conclude that the
  contribution of the $N_2$ term to the variance due to noise is
  \begin{equation}\label{eq:all-observation-noise-N2}
    k_n^{-1}
    \sum_{k\in I} \sum_{l\in I}
    \E\Bigl(N_2(k, l) \Big| \calF\Bigr)
    \convp
    8\nu_4 \int_{0}^{T} \sigma_t^2 \ud t.
  \end{equation}

  \paragraph*{Contribution of $N_3$}
  The contribution of $N_3$ can be calculated in a similar way as
  $N_2$. We have the following lemma concerning the noise-related
  terms in $N_3$.

  \begin{lemma}\label{lem:app-all-observation-noise-N3}
    For $ABC$ and $ACD$ terms, we have
    \begin{equation}
      \E\left[ ABC | \calF \right] = \E\left[ ACD | \calF \right] = 0.
    \end{equation}
    For $ABD$ and $BCD$ terms, we have
    \begin{equation}
      \begin{aligned}
        \E\left[ ABD | \calF \right]
        = \sum_{i, \alpha, \beta} \Bigl\{
          & \nu_2^{2} \left[ \theta_{A}^{i} \theta_{B}^{\alpha}
          \psi_{D}^{\beta} \right]
          \left( \delta_{i\beta}^{(k+1)} \delta_{\alpha,
            \beta+1}^{(k+2)} + \delta_{\alpha\beta}^{(k)} \delta_{i,
          \beta+1}^{(k-1)} \right) \\
          & + \nu_2 \nu_3 \left[ \theta_{A}^{i} \phi_{B}^{\alpha}
          \psi_{D}^{\beta} \right]
          \left( \delta_{i\beta}^{(k+1)} \delta_{\alpha,
            \beta+1}^{(k+2)} + \delta_{\alpha\beta}^{(k)} \delta_{i,
          \beta+1}^{(k-1)} \right)
        \Bigr\},
      \end{aligned}
    \end{equation}
    where $\delta_{i\beta}^{(k+1)}$ is defined as $I(i = \beta = k+1)$, and
    \begin{equation}
      \begin{aligned}
        \E\left[ BCD | \calF \right]
        = \sum_{j, \alpha, \beta} \Bigl\{
          & \nu_2^{2} \left[ \theta_{C}^{j} \theta_{D}^{\beta}
          \psi_{B}^{\alpha} \right]
          \left( \delta_{j\alpha}^{(l+1)} \delta_{\beta,
            \alpha+1}^{(l+2)} + \delta_{\beta\alpha}^{l} \delta_{j,
          \alpha+1}^{(l-1)} \right) \\
          & + \nu_2 \nu_3 \left[ \theta_{C}^{j} \phi_{D}^{\beta}
          \psi_{B}^{\alpha} \right]
          \left( \delta_{j\alpha}^{(l+1)} \delta_{\beta,
            \alpha+1}^{(l+2)} + \delta_{\beta\alpha}^{l} \delta_{j,
          \alpha+1}^{(l-1)} \right)
        \Bigr\}.
      \end{aligned}
    \end{equation}
  \end{lemma}

  \begin{proof}
    Firstly, consider $\E\left[ ABC | X, \sigma \right]$. Notice that
    \begin{equation}
      ABC
      =
      \sum_{i,j,\alpha}
      \theta_A^i \eps_i
      \Bigl(\theta_B^\alpha \eps_\alpha + \phi_B^\alpha \eps_\alpha^2 +
      \psi_B^\alpha \eps_\alpha \eps_{\alpha+1}\Bigr)
      \theta_C^j \eps_j.
    \end{equation}
    There are three types of possible combinations in $ABC$, the
    expectations of which are
    \begin{equation}
      \begin{aligned}
        \E [\eps_{i} \eps_{\alpha} \eps_{j}] &= 0, \\
        \E [\eps_{i} \eps_{\alpha}^{2} \eps_{j}] &= \nu_2^{2} \delta_{ij}, \\
        \E [\eps_{i} \eps_{\alpha} \eps_{\alpha+1} \eps_{j}] &= 0,
      \end{aligned}
    \end{equation}
    according to Lemma~\ref{lem:app-noise-prod-expect}. This is because
    $i \neq \alpha$ for all terms and $i \neq \alpha+1$ additionally
    for terms with $\eps_{\alpha}\eps_{\alpha+1}$. Taking their
    coefficients into consideration, and notice that
    \begin{equation}
      \sum_{\alpha} \phi_{B}^{\alpha} = 0,
      \E\left[ ABC | X, \sigma \right]
      = \sum_{i, j, \alpha} \nu_2^{2} \left[
      \theta_{A}^{i} \phi_{B}^{\alpha} \theta_{C}^{j} \right]\left(
      \delta_{ij} \right)
      = \sum_{i} \nu_2^{2} \left[
      \theta_{A}^{i} \theta_{C}^{i} \right] \sum_{\alpha} \phi_{B}^{\alpha}
      = 0.
    \end{equation}
    For the same reason, we can conclude that
    \begin{equation}
      \E\left[ ACD | X, \sigma \right] = 0.
    \end{equation}

    Next, consider $\E\left[ ABD | X, \sigma \right]$. Notice that
    \begin{equation}
      ABD
      =
      \sum_{i,\alpha,\beta}
      \theta_A^i \eps_i
      \Bigl(\theta_B^\alpha \eps_\alpha + \phi_B^\alpha \eps_\alpha^2 +
      \psi_B^\alpha \eps_\alpha \eps_{\alpha+1}\Bigr)
      \Bigl(\theta_D^\beta \eps_\beta + \phi_D^\beta \eps_\beta^2 +
      \psi_D^\beta \eps_\beta \eps_{\beta+1}\Bigr).
    \end{equation}
    There are nine types of possible combinations, among which four
    types have non-zero expectations, and they are
    \begin{equation}
      \begin{aligned}
        \E \left[ \eps_{i} \eps_{\alpha} \eps_{\beta} \eps_{\beta+1} \right]
        &= \nu_2^{2} \left(
          \delta_{i\beta} \delta_{\alpha, \beta+1} +
          \delta_{\alpha\beta} \delta_{i, \beta+1}
        \right), \\
        \E \left[ \eps_{i} \eps_{\alpha}^{2} \eps_{\beta} \right]
        &= \nu_2^{2} \delta_{i\beta}, \\
        \E \left[ \eps_{i} \eps_{\alpha}^{2} \eps_{\beta}^{2} \right]
        &= \nu_2 \nu_3 \delta_{i\beta}, \\
        \E \left[ \eps_{i} \eps_{\alpha}^{2} \eps_{\beta} \eps_{\beta+1} \right]
        &= \nu_2 \nu_3 \left(
          \delta_{i\beta} \delta_{\alpha, \beta+1} +
          \delta_{\alpha\beta} \delta_{i, \beta+1}
        \right).
      \end{aligned}
    \end{equation}
    Therefore, we have
    \begin{equation}
      \begin{aligned}
        \E\left[ ABD | \calF \right]
        = \sum_{i, \alpha, \beta} \Bigl\{
          & \nu_2^{2} \left[ \theta_{A}^{i} \theta_{B}^{\alpha}
          \psi_{D}^{\beta} \right]
          \left( \delta_{i\beta} \delta_{\alpha, \beta+1} +
          \delta_{\alpha\beta} \delta_{i, \beta+1} \right) \\
          & + \nu_2^{2} \left[ \theta_{A}^{i} \phi_{B}^{\alpha}
          \theta_{D}^{\beta} \right] \left( \delta_{i\beta} \right)
          + \nu_2 \nu_3 \left[ \theta_{A}^{i} \phi_{B}^{\alpha}
          \phi_{D}^{\beta} \right] \left( \delta_{i\beta} \right) \\
          & + \nu_2 \nu_3 \left[ \theta_{A}^{i} \phi_{B}^{\alpha}
          \psi_{D}^{\beta} \right] \left( \delta_{i\beta} \delta_{\alpha,
        \beta+1} + \delta_{\alpha\beta} \delta_{i, \beta+1} \right) \Bigr\} \\
        = \sum_{i, \alpha, \beta} \Bigl\{
          & \nu_2^{2} \left[ \theta_{A}^{i} \theta_{B}^{\alpha}
          \psi_{D}^{\beta} \right]
          \left( \delta_{i\beta}^{(k+1)} \delta_{\alpha,
            \beta+1}^{(k+2)} + \delta_{\alpha\beta}^{(k)} \delta_{i,
          \beta+1}^{(k-1)} \right) \\
          & + \nu_2 \nu_3 \left[ \theta_{A}^{i} \phi_{B}^{\alpha}
          \psi_{D}^{\beta} \right] \left( \delta_{i\beta}^{(k+1)}
            \delta_{\alpha, \beta+1}^{(k+2)} + \delta_{\alpha\beta}^{(k)}
        \delta_{i, \beta+1}^{(k-1)} \right) \Bigr\},
      \end{aligned}
    \end{equation}
    and $\E[BCD|\calF]$ can be calculated in a similar way. Thus, the
    proof is completed.
  \end{proof}

  Lemma~\ref{lem:app-all-observation-noise-N3} implies that we need to calculate
  \begin{equation}
    \frac{1}{k_n^2 \Delta_n^2}
    \sum_{k\in I} \sum_{l\in I}
    \E\Bigl(cABD + aBCD \Big| \calF\Bigr)
    =
    \frac{2}{k_n^2 \Delta_n^2}
    \sum_{k\in I} \sum_{l\in I}
    \E\Bigl(cABD \Big| \calF\Bigr).
  \end{equation}

  First, notice that
  \begin{equation}
    \begin{aligned}
      &
      \sum_{i,\alpha,\beta}
      \nu_2^2
      \left[ \theta_{A}^{i} \theta_{B}^{\alpha} \psi_{D}^{\beta} \right]
      \left( \delta_{i\beta}^{(k+1)} \delta_{\alpha, \beta+1}^{(k+2)} +
      \delta_{\alpha\beta}^{(k)} \delta_{i, \beta+1}^{(k-1)} \right) \\
      &=
      \nu_2^2
      \Bigl\{
        \theta_A^{k+1} \theta_B^{k+2} \psi_D^{k+1}
        + \theta_A^{k} \theta_B^{k-1} \psi_D^{k-1}
      \Bigr\} \\
      &=
      \nu_2^2
      \Bigl\{
        (+1) (-2\Delta X_{k+2}) \psi_D^{k+1}
        + (-1) (-2\Delta X_{k-2}) \psi_D^{k-1}
      \Bigr\},
    \end{aligned}
  \end{equation}
  and $\psi_D^{k+1}$ as well as $\psi_D^{k-1}$ can be written as
  \begin{equation}
    \begin{aligned}
      \psi_D^{k+1}
      &=
      +2 I(k+3\leq l \leq k+k_n+2)
      -2 I(k-k_n \leq l \leq k-1), \\
      \psi_D^{k-1}
      &=
      +2 I(k+1\leq l \leq k+k_n)
      -2 I(k-k_n-2 \leq l \leq k-3),
    \end{aligned}
  \end{equation}
  so
  \begin{equation}
    \begin{aligned}
      &
      \sum_{k\in I} \sum_{l\in I}
      c
      \sum_{i,\alpha,\beta}
      \nu_2^2
      \left[ \theta_{A}^{i} \theta_{B}^{\alpha} \psi_{D}^{\beta} \right]
      \left( \delta_{i\beta}^{(k+1)} \delta_{\alpha, \beta+1}^{(k+2)} +
      \delta_{\alpha\beta}^{(k)} \delta_{i, \beta+1}^{(k-1)} \right) \\
      &=
      2\nu_2^2
      \Biggl\{
        - \sum_{k\in I} \sum_{l\in I}
        (\Delta X_l) (\Delta X_{k+2}) \psi_D^{k+1}
        + \sum_{k\in I} \sum_{l\in I}
        (\Delta X_l) (\Delta X_{k-2}) \psi_D^{k-1}
      \Biggr\} \\
      &=
      4\nu_2^2
      \Biggl\{
        - \sum_{k\in I} \sum_{l\in I}
        (\Delta X_l) (\Delta X_{k+2}) I(k+3\leq l \leq k+k_n+2) \\
        &\qquad\quad\;
        + \sum_{k\in I} \sum_{l\in I}
        (\Delta X_l) (\Delta X_{k+2}) I(k-k_n \leq l \leq k-1) \\
        &\qquad\quad\;
        + \sum_{k\in I} \sum_{l\in I}
        (\Delta X_l) (\Delta X_{k-2}) I(k+1\leq l \leq k+k_n) \\
        &\qquad\quad\;
        - \sum_{k\in I} \sum_{l\in I}
        (\Delta X_l) (\Delta X_{k-2}) I(k-k_n-2 \leq l \leq k-3)
      \Biggr\} \\
      &=
      4\nu_2^2
      \sum_{k=k_n+2}^{n-k_n-3}
      \Bigl\{
        - (\Delta X_{k+2}) (X_{k+k_n+3} - X_{k+3})
        + (\Delta X_{k+2}) (X_{k} - X_{k-k_n}) \\
        &\qquad\qquad\qquad\,
        + (\Delta X_{k-2}) (X_{k+k_n+1} - X_{k+1})
        - (\Delta X_{k-2}) (X_{k-2} - X_{k-k_n-2})
      \Bigr\}\\
      &\qquad
      + \text{(edge terms)} \\
      &\approx
      4\nu_2^2
      \sum_{k=k_n+2}^{n-k_n-3}
      (\Delta X_k)
      \Bigl(
        \Delta_2 X_{k+k_n+1} - \Delta_2 X_{k+1} + \Delta_2 X_{k-2} -
        \Delta_2 X_{k-k_n-2}
      \Bigr).
    \end{aligned}
  \end{equation}
  It is clear that the expectation of this term vanishes, for $(\Delta
  X_k)$ does not overlap with $(\Delta_2 X_{k+k_n+1} - \Delta_2 X_{k+1}
  + \Delta_2 X_{k-2} - \Delta_2 X_{k-k_n-2})$. In fact, a CLT for
  $\eta_n = \sum_{k=k_n+2}^{n-k_n-3} (\Delta X_k)(\Delta_2 X_{k+k_n+1}
  - \Delta_2 X_{k+1} + \Delta_2 X_{k-2} - \Delta_2 X_{k-k_n-2})$ is
  \begin{equation}
    \Delta_n^{1/2} \eta_n
    \convst
    \int_0^T 4\sigma_t^2 \ud W_t^{(\eta)}, \quad \text{as }n \to \infty,
  \end{equation}
  where $(W_{t}^{(\eta)})_{t\geq 0}$ is a Wiener process independent of
  $(W_t)_{t\geq 0}$, and the convergence mode is stable convergence.
  As a result,
  \begin{equation}\label{eq:all-observation-noise-N3-ABD-1}
    \Delta_n^{1/2}
    \sum_{k\in I} \sum_{l\in I}
    c
    \sum_{i,\alpha,\beta}
    \nu_2^2
    \left[ \theta_{A}^{i} \theta_{B}^{\alpha} \psi_{D}^{\beta} \right]
    \left( \delta_{i\beta}^{(k+1)} \delta_{\alpha, \beta+1}^{(k+2)} +
    \delta_{\alpha\beta}^{(k)} \delta_{i, \beta+1}^{(k-1)} \right)
    =
    O_p(1).
  \end{equation}

  Next, notice that
  \begin{equation}
    \begin{aligned}
      &
      \sum_{i, \alpha, \beta}
      \nu_2 \nu_3 \left[ \theta_{A}^{i} \phi_{B}^{\alpha}
      \psi_{D}^{\beta} \right]
      \left( \delta_{i\beta}^{(k+1)} \delta_{\alpha, \beta+1}^{(k+2)} +
      \delta_{\alpha\beta}^{(k)} \delta_{i, \beta+1}^{(k-1)} \right) \\
      &=
      \nu_2\nu_3
      \Bigl\{
        \theta_A^{k+1} \phi_B^{k+2} \psi_D^{k+1}
        + \theta_A^k \phi_B^{k-1} \psi_D^{k-1}
      \Bigr\} \\
      &=
      \nu_2\nu_3
      \Bigl\{
        (+1) (+1) \psi_D^{k+1}
        + (-1) (-1) \psi_D^{k-1}
      \Bigr\} \\
      &=
      2\nu_2\nu_3
      \Bigl\{
        I(k+3\leq l \leq k+k_n+2) - I(k-k_n \leq l \leq k-1) \\
        &\qquad\qquad\,
        + I(k+1\leq l \leq k+k_n) - I(k-k_n-2 \leq l \leq k-3)
      \Bigr\},
    \end{aligned}
  \end{equation}
  so
  \begin{equation}
    \begin{aligned}
      &
      \sum_{k\in I} \sum_{l\in I}
      c
      \sum_{i, \alpha, \beta}
      \nu_2 \nu_3 \left[ \theta_{A}^{i} \phi_{B}^{\alpha}
      \psi_{D}^{\beta} \right]
      \left( \delta_{i\beta}^{(k+1)} \delta_{\alpha, \beta+1}^{(k+2)} +
      \delta_{\alpha\beta}^{(k)} \delta_{i, \beta+1}^{(k-1)} \right) \\
      &=
      2\nu_2\nu_3
      \Biggl\{
        \sum_{k\in I} \sum_{l\in I}
        (\Delta X_l)
        \Bigl[
          I(k+3\leq l \leq k+k_n+2) - I(k-k_n \leq l \leq k-1) \\
          &\qquad\qquad\qquad\qquad\qquad
          + I(k+1\leq l \leq k+k_n) - I(k-k_n-2 \leq l \leq k-3)
        \Bigr]
      \Biggr\} \\
      &=
      2\nu_2\nu_3
      \Biggl\{
        \sum_{l\in I}
        (\Delta X_k)
        \sum_{k\in I}
        \Bigl[
          I(k+3\leq l \leq k+k_n+2) - I(k-k_n \leq l \leq k-1) \\
          &\qquad\qquad\qquad\qquad\qquad
          + I(k+1\leq l \leq k+k_n) - I(k-k_n-2 \leq l \leq k-3)
        \Bigr]
      \Biggr\}.
    \end{aligned}
  \end{equation}
  Different from some other summations in this section, in this
  summation, the center terms vanishes, while the edge terms dominate
  the result. After some careful calculation, we have
  \begin{equation}
    \begin{aligned}
      &
      \sum_{k\in I} \sum_{l\in I}
      c
      \sum_{i, \alpha, \beta}
      \nu_2 \nu_3 \left[ \theta_{A}^{i} \phi_{B}^{\alpha}
      \psi_{D}^{\beta} \right]
      \left( \delta_{i\beta}^{(k+1)} \delta_{\alpha, \beta+1}^{(k+2)} +
      \delta_{\alpha\beta}^{(k)} \delta_{i, \beta+1}^{(k-1)} \right) \\
      &=
      2\nu_2\nu_3
      \Biggl\{
        \Biggl(
          (-2k_n) \Delta X_{k_n+1}
          +(-2k_n+1) \Delta X_{k_n+2}
          +\sum_{q=1}^{k_n-1} (-2k_n+2q) \Delta X_{k_n+2+q}
          \\ &\qquad\qquad\qquad
          +(-1) \Delta X_{2k_n+2}
        \Biggr)
        +
        \Biggl(
          \Delta X_{n-2k_n-3}
          +\sum_{q=1}^{k_n-1} 2q \Delta X_{n-2k_n-3-q}
          \\ &\qquad\qquad\qquad
          +(2k_n-1) \Delta X_{n-k_n-3}
          +2k_n \Delta X_{n-k_n-2}
        \Biggr)
      \Biggr\}.
    \end{aligned}
  \end{equation}
  Of course, the expression vanishes if we omit the drift terms.
  However, in this case, it is safer to check it more carefully because
  of the special structure, where the drift terms at the start and end
  of the day are compared. Let the term inside the large bracket be
  denoted as $\xi_n$. The asymptotic mean of $\xi_n$ is given by
  $(\mu_T - \mu_0) k_n^2 \Delta_n$, and its asymptotic stochastic
  variance is $4 (\sigma_0^2 + \sigma_T^2) k_n^3 \Delta_n / 3$. Thus,
  we can conclude that the ``standardized version" of $\xi_n$ is tight.
  Specifically, we have
  \begin{equation}\label{eq:all-observation-noise-N3-ABD-2}
    k_n^{-3/2} \Delta_n^{-1/2}
    \sum_{k\in I} \sum_{l\in I}
    c
    \sum_{i, \alpha, \beta}
    \nu_2 \nu_3 \left[ \theta_{A}^{i} \phi_{B}^{\alpha} \psi_{D}^{\beta} \right]
    \left( \delta_{i\beta}^{(k+1)} \delta_{\alpha, \beta+1}^{(k+2)} +
    \delta_{\alpha\beta}^{(k)} \delta_{i, \beta+1}^{(k-1)} \right)
    =
    O_p(1).
  \end{equation}

  Therefore, by combining
  Equation~\eqref{eq:all-observation-noise-N3-ABD-1} and
  \eqref{eq:all-observation-noise-N3-ABD-2}, we can conclude that the
  contribution of the $N_3$ term to the variance due to noise satisfies
  \begin{equation}\label{eq:all-observation-noise-N3}
    k_n^{-3/2} \Delta_n^{-1/2}
    \sum_{k\in I} \sum_{l\in I}
    \E\Bigl(N_3(k, l) \Big| \calF\Bigr)
    =
    O_p(1).
  \end{equation}

  \paragraph*{Contribution of $N_4$}
  Similarly, we have the following lemma.

  \begin{lemma}\label{lem:app-all-observation-noise-N4}
    For $\E\left[ ABCD | \calF \right]$, we have
    \begin{equation}\label{eq:ABCD}
      \begin{aligned}
        \E\left[ ABCD | \calF \right]
        &=
        \sum_{i, j, \alpha, \beta} \Bigl\{
          \nu_2^{2} \left[
            \theta_{A}^{i} \theta_{B}^{\alpha} \theta_{C}^{j} \theta_{D}^{\beta}
          \right] \left(
            \delta_{ij} \delta_{\alpha\beta} + \delta_{i\beta} \delta_{\alpha j}
          \right)
          \\ &\qquad\quad\;\;
          + \nu_2 \nu_3 \left[
            \theta_{A}^{i} \phi_{B}^{\alpha} \theta_{C}^{j} \theta_{D}^{\beta}
            +
            \theta_{A}^{i} \theta_{B}^{\alpha} \theta_{C}^{j} \phi_{D}^{\beta}
          \right] \left(
            \delta_{ij} \delta_{\alpha\beta} + \delta_{i\beta} \delta_{\alpha j}
          \right)
          \\ &\qquad\quad\;\;
          + \nu_2 \nu_4 \left[
            \theta_{A}^{i} \phi_{B}^{\alpha} \theta_{C}^{j} \phi_{D}^{\beta}
          \right] \left(
            \delta_{ij} \delta_{\alpha\beta}
          \right)
          \\ &\qquad\quad\;\;
          + \nu_2^{3} \left[
            \theta_{A}^{i} \phi_{B}^{\alpha} \theta_{C}^{j} \phi_{D}^{\beta}
          \right] \left(
            \delta_{ij} \left(1 - \delta_{\alpha\beta}\right)
          \right)
          \\ &\qquad\quad\;\;
          + \nu_3^{2} \left[
            \theta_{A}^{i} \phi_{B}^{\alpha} \theta_{C}^{j} \phi_{D}^{\beta}
          \right] \left(
            \delta_{i\beta} \delta_{\alpha j}
          \right)
          \\ &\qquad\quad\;\;
          + \nu_2^{3}  \left[
            \theta_{A}^{i} \psi_{B}^{\alpha} \theta_{C}^{j} \psi_{D}^{\beta}
          \right] \Bigl(
            \delta_{ij} \delta_{\alpha\beta}
            + \delta_{i\beta}^{(k+1)} \delta_{\alpha, \beta+1}^{(k+2)}
            \delta_{\alpha+1, j}^{(k+3)}I(l=k+3)
            \\ &\qquad\qquad\qquad\qquad\qquad\qquad
            + \delta_{\alpha j}^{(l+1)} \delta_{\alpha+1,
            \beta}^{(l+2)} \delta_{i, \beta+1}^{(l+3)}I(l=k-3)
          \Bigr)
        \Bigr\}.
      \end{aligned}
    \end{equation}
  \end{lemma}

  \begin{proof}
    Notice that
    \begin{equation}
      ABCD
      =
      \sum_{i,j,\alpha,\beta}
      \theta_A^i \eps_i
      \Bigl(\theta_B^\alpha \eps_\alpha + \phi_B^\alpha \eps_\alpha^2 +
      \psi_B^\alpha \eps_\alpha \eps_{\alpha+1}\Bigr)
      \theta_C^j \eps_j
      \Bigl(\theta_D^\beta \eps_\beta + \phi_D^\beta \eps_\beta^2 +
      \psi_D^\beta \eps_\beta \eps_{\beta+1}\Bigr).
    \end{equation}
    There are nine types of possible combinations of noise terms, among
    which five types have non-zero expectations, and they are
    \begin{equation}
      \begin{aligned}
        \E[ \eps_{i} \eps_{\alpha} \eps_{j} \eps_{\beta} ]
        &= \nu_2^{2} \left(
          \delta_{ij} \delta_{\alpha\beta} + \delta_{i\beta} \delta_{\alpha j}
        \right), \\
        \E[ \eps_{i} \eps_{\alpha}^{2} \eps_{j} \eps_{\beta} ]
        &= \nu_2 \nu_3 \left(
          \delta_{ij} \delta_{\alpha\beta} + \delta_{i\beta} \delta_{\alpha j}
        \right), \\
        \E[ \eps_{i} \eps_{\alpha} \eps_{j} \eps_{\beta}^{2} ]
        &= \nu_2 \nu_3 \left(
          \delta_{ij} \delta_{\alpha\beta} + \delta_{i\beta} \delta_{\alpha j}
        \right), \\
        \E[ \eps_{i} \eps_{\alpha}^{2} \eps_{j} \eps_{\beta}^{2} ]
        &= \nu_2 \nu_4 \left(
          \delta_{ij} \delta_{\alpha\beta}
        \right) + \nu_2^{3} \left(
          \delta_{ij} \left( 1 - \delta_{\alpha\beta} \right)
        \right) + \nu_3^{2} \left(
          \delta_{i\beta} \delta_{\alpha j}
        \right), \\
        \E[ \eps_{i} \eps_{\alpha} \eps_{\alpha+1} \eps_{j}
        \eps_{\beta} \eps_{\beta+1} ]
        &= \nu_2^{3} \left(
          \delta_{ij} \delta_{\alpha\beta} + \delta_{i\beta}
          \delta_{\alpha, \beta+1} \delta_{\alpha+1, j} +
          \delta_{\alpha j} \delta_{\alpha+1, \beta} \delta_{i, \beta+1}
        \right).
      \end{aligned}
    \end{equation}
    The proof is completed by taking the coefficients into consideration.
  \end{proof}

  According to Lemma~\ref{lem:app-all-observation-noise-N4}, we need
  to calculate
  \begin{equation}
    \frac{1}{k_n^2 \Delta_n^2}
    \sum_{k\in I} \sum_{l\in I}
    \E\Bigl(ABCD \Big| \calF\Bigr).
  \end{equation}
  Before we proceed, it would be helpful to notice that there are two
  frequently present patterns of coefficients and Kronecker deltas,
  which imply some conditions on $k$ and $l$, and can be used to help
  simplify the calculation. Notice that
  \begin{equation}\label{eq:all-observation-tAi-tCi-dij}
    \begin{aligned}
      \sum_{i,j}
      \theta_A^i \theta_C^j \delta_{ij}
      &=
      \sum_{i=k}^{k+1} \sum_{j=l}^{l+1}
      \theta_A^i \theta_C^j \delta_{ij} I(|k-l|\leq 1)
      \\ &=
      (\theta_A^k \theta_C^k + \theta_A^{k+1} \theta_C^{k+1}) I(k=l)
      \\ & \quad
      + \theta_A^k \theta_C^k I(l=k-1)
      + \theta_A^{k+1} \theta_C^{k+1} I(l=k+1)
      \\ &=
      2I(k=l) - I(k=l-1) - I(k=l+1),
    \end{aligned}
  \end{equation}
  so for any coefficient vectors $\xi$ and $\eta$, we have the first
  typical pattern:
  \begin{equation}
    \begin{aligned}
      &
      \sum_{k\in I} \sum_{l\in I}
      \sum_{i,j,\alpha,\beta}
      \left[\theta_A^i \xi_B^\alpha \theta_C^j \eta_D^\beta\right]
      \left(\delta_{ij} \delta_{\alpha\beta}\right)
      \\ &=
      \sum_{k\in I} \sum_{l\in I}
      \Bigl\{
        \sum_\alpha \xi_B^\alpha \eta_D^\alpha
        \sum_{i,j}
        \theta_A^i \theta_C^j \delta_{ij}
      \Bigr\}
      \\ &=
      \sum_{k\in I}
      \Bigl\{
        2 \langle \xi_B, \eta_B \rangle
        - \langle \xi_{B_k}, \eta_{B_{k-1}} \rangle I\bigl((k-1)\in I\bigr)
        - \langle \xi_{B_k}, \eta_{B_{k+1}} \rangle I\bigl((k+1)\in I\bigr)
      \Bigr\},
    \end{aligned}
  \end{equation}
  The second typical pattern is
  \begin{equation}
    \begin{aligned}
      \sum_{k\in I} \sum_{l\in I}
      \sum_{i,j,\alpha,\beta}
      \left[\theta_A^i \xi_B^\alpha \theta_C^j \eta_D^\beta\right]
      \left(\delta_{i\beta} \delta_{\alpha j}\right)
      &=
      \sum_{k\in I} \sum_{l\in I}
      \Bigl\{
        \sum_i \theta_A^i \eta_D^i
        \sum_j \theta_C^i \xi_B^i
      \Bigr\}
      \\ &=
      \sum_{k\in I} \sum_{l\in I}
      \langle \theta_A, \eta_D \rangle \langle \theta_C, \xi_B \rangle.
    \end{aligned}
  \end{equation}
  These two patterns can be used to simplify the calculation.

  First, we analyze the terms without coefficients $\theta_B$ and
  $\theta_D$, as these terms only include constant coefficients, and
  thus are easier to handle and there is no need to take the
  expectation with respect to processes. They are also likely to have
  larger order than the terms with $\theta_B$ and $\theta_D$. Notice that
  \begin{align}
    \|\phi_B\|^2 &= 8k_n - 4, &\text{for all } &k \in I, \\
    \quad \langle \phi_{B_k}, \phi_{B_{k+1}} \rangle &= 8k_n - 8,
    &\text{for all } &k, k+1 \in I,
  \end{align}
  so we have
  \begin{equation}\label{eq:all-observation-noise-N4-1}
    \begin{aligned}
      &
      \sum_{k\in I} \sum_{l\in I}
      \sum_{i,j,\alpha,\beta}
      \nu_2 \nu_4 \left[
        \theta_{A}^{i} \phi_{B}^{\alpha} \theta_{C}^{j} \phi_{D}^{\beta}
      \right] \left(
        \delta_{ij} \delta_{\alpha\beta}
      \right)
      \\ &=
      \nu_2 \nu_4
      \sum_{k\in I}
      \Bigl\{
        2\|\phi_B\|^2
        - \langle \phi_{B_k}, \phi_{B_{k-1}} \rangle I\bigl((k-1)\in I\bigr)
        - \langle \phi_{B_k}, \phi_{B_{k+1}} \rangle I\bigl((k+1)\in I\bigr)
      \Bigr\}
      \\ &=
      \nu_2 \nu_4
      \sum_{k\in I}
      \Bigl\{
        2(8k_n - 4)
        - (8k_n - 8)
        - (8k_n - 8)
      \Bigr\}
      \\ &=
      \nu_2 \nu_4
      \Bigl(
        (n-2k_n-4)\times 8
        + 2\times 8k_n
      \Bigr)
      \\ &=
      \nu_2 \nu_4
      (8n - 32).
    \end{aligned}
  \end{equation}
  Recall that $\sum_\alpha \phi_B^\alpha = 0$, so we have
  \begin{equation}\label{eq:all-observation-noise-N4-2}
    \begin{aligned}
      &
      \sum_{k\in I} \sum_{l\in I}
      \sum_{i,j,\alpha,\beta}
      \nu_2^{3} \left[
        \theta_{A}^{i} \phi_{B}^{\alpha} \theta_{C}^{j} \phi_{D}^{\beta}
      \right] \left(
        \delta_{ij} \left(1 - \delta_{\alpha\beta}\right)
      \right)
      \\ &=
      \nu_2^{3}
      \sum_{k\in I} \sum_{l\in I}
      \Bigl\{
        \Bigl(\sum_\alpha \phi_B^\alpha\Bigr)
        \Bigl(\sum_\beta \phi_D^\beta\Bigr)
        \sum_{i,j} \theta_A^i \theta_C^j \delta_{ij}
        -
        \sum_{i,j,\alpha,\beta}
        \left[
          \theta_{A}^{i} \phi_{B}^{\alpha} \theta_{C}^{j} \phi_{D}^{\beta}
        \right] \left(
          \delta_{ij} \delta_{\alpha\beta}
        \right)
      \Bigr\}
      \\ &=
      -\nu_2^{3}
      \sum_{k\in I} \sum_{l\in I}
      \sum_{i,j,\alpha,\beta}
      \left[
        \theta_{A}^{i} \phi_{B}^{\alpha} \theta_{C}^{j} \phi_{D}^{\beta}
      \right] \left(
        \delta_{ij} \delta_{\alpha\beta}
      \right)
      \\ &=
      -\nu_2^{3}
      (8n - 32).
    \end{aligned}
  \end{equation}
  Recall Equation~\eqref{eq:theta_A-phi_D}. Because
  \begin{equation}
    \langle \theta_A, \phi_D\rangle
    =
    \langle \theta_C, \phi_B\rangle
    =
    I\Bigl(|k-l| \in \{1,2\}\Bigr) - I\Bigl(|k-l| \in \{k_n+1, k_n+2\}\Bigr),
  \end{equation}
  we can know that
  \begin{equation}
    \langle \theta_A, \phi_D\rangle \langle \theta_C, \phi_B\rangle
    =
    I\Bigl(|k-l| \in \{1,2,k_n+1, k_n+2\}\Bigr).
  \end{equation}
  Therefore, we have
  \begin{equation}\label{eq:all-observation-noise-N4-3}
    \begin{aligned}
      &
      \sum_{k\in I} \sum_{l\in I}
      \sum_{i,j,\alpha,\beta}
      \nu_3^{2} \left[
        \theta_{A}^{i} \phi_{B}^{\alpha} \theta_{C}^{j} \phi_{D}^{\beta}
      \right] \left(
        \delta_{i\beta} \delta_{\alpha j}
      \right)
      \\ &=
      \nu_3^{2}
      \sum_{k\in I} \sum_{l\in I}
      \langle \theta_A, \phi_D\rangle \langle \theta_C, \phi_B\rangle
      \\ &=
      \nu_3^{2}
      \sum_{k\in I} \sum_{l\in I}
      I\Bigl(|k-l| \in \{1,2,k_n+1, k_n+2\}\Bigr)
      \\ &=
      2 \nu_3^{2}
      \sum_{q\in \{1,2,k_n+1, k_n+2\}} (n-2k_n-2-q)
      \\ &=
      \nu_3^{2}
      (8n - 20k_n - 28).
    \end{aligned}
  \end{equation}
  Moreover, notice that
  \begin{align}
    \|\psi_B\|^2 &= 8k_n, &\text{for all } &k \in I, \\
    \quad \langle \psi_{B_k}, \psi_{B_{k+1}} \rangle &= 8k_n - 8,
    &\text{for all } &k, k+1 \in I,
  \end{align}
  so we have
  \begin{equation}\label{eq:all-observation-noise-N4-4}
    \begin{aligned}
      &
      \sum_{k\in I} \sum_{l\in I}
      \sum_{i,j,\alpha,\beta}
      \nu_2^{3}  \left[
        \theta_{A}^{i} \psi_{B}^{\alpha} \theta_{C}^{j} \psi_{D}^{\beta}
      \right] \left(
        \delta_{ij} \delta_{\alpha\beta}
      \right)
      \\ &=
      \nu_2^{3}
      \sum_{k\in I}
      \Bigl\{
        2\|\psi_B\|^2
        - \langle \psi_{B_k}, \psi_{B_{k-1}} \rangle I\bigl((k-1)\in I\bigr)
        - \langle \psi_{B_k}, \psi_{B_{k+1}} \rangle I\bigl((k+1)\in I\bigr)
      \Bigr\}
      \\ &=
      \nu_2^{3}
      \sum_{k\in I}
      \Bigl\{
        2\times 8k_n
        - (8k_n - 8)
        - (8k_n - 8)
      \Bigr\}
      \\ &=
      \nu_2^{3}
      \Bigl(
        (n-2k_n-4)\times 16
        + 2\times (8k_n+8)
      \Bigr)
      \\ &=
      \nu_2^{3}
      (16n - 16k_n - 48).
    \end{aligned}
  \end{equation}
  The last term without coefficients $\theta_B$ and $\theta_D$ is
  \begin{equation}\label{eq:all-observation-noise-N4-5}
    \begin{aligned}
      &
      \sum_{k\in I} \sum_{l\in I}
      \sum_{i,j,\alpha,\beta}
      \nu_2^{3}  \left[
        \theta_{A}^{i} \psi_{B}^{\alpha} \theta_{C}^{j} \psi_{D}^{\beta}
      \right] \Bigl(
        \delta_{i\beta}^{(k+1)} \delta_{\alpha, \beta+1}^{(k+2)}
        \delta_{\alpha+1, j}^{(k+3)}I(l=k+3)
        \\ &\qquad\qquad\qquad\qquad\qquad\qquad\qquad
        + \delta_{\alpha j}^{(l+1)} \delta_{\alpha+1, \beta}^{(l+2)}
        \delta_{i, \beta+1}^{(l+3)}I(l=k-3)
      \Bigr)
      \\ &=
      \nu_2^{3}
      \sum_{k\in I} \sum_{l\in I}
      \Bigl\{
        \theta_A^{k+1} \psi_B^{k+2} \theta_C^{l} \psi_D^{l-2} I(l=k+3)
        + \theta_A^{k} \psi_B^{k-2} \theta_C^{l+1} \psi_D^{l+2} I(l=k-3)
      \Bigr\}
      \\ &=
      4 \nu_2^{3}
      \sum_{k\in I} \sum_{l\in I}
      I(|k-l| = 3)
      =
      \nu_2^{3}
      (8n - 16k_n - 40).
    \end{aligned}
  \end{equation}

  Next, we consider terms with coefficients $\theta_B$ and
  $\theta_D$. Notice that
  \begin{align}
    \|\theta_B\|^2 &\approx 16\sigma_k^2 k_n\Delta_n, &\text{for all }
    &k \in I, \\
    \langle \theta_{B_k}, \theta_{B_{k+1}} \rangle &\approx
    16\sigma_k^2 (k_n-1)\Delta_n, &\text{for all } &k, k+1 \in I,
  \end{align}
  so
  \begin{equation}\label{eq:all-observation-noise-N4-6}
    \begin{aligned}
      &
      \sum_{k\in I} \sum_{l\in I}
      \sum_{i,j,\alpha,\beta}
      \nu_2^{2} \left[
        \theta_{A}^{i} \theta_{B}^{\alpha} \theta_{C}^{j} \theta_{D}^{\beta}
      \right] \left(
        \delta_{ij} \delta_{\alpha\beta}
      \right)
      \\ &=
      \nu_2^{2}
      \sum_{k\in I}
      \Bigl\{
        2 \|\theta_B\|^2
        - \langle \theta_{B_k}, \theta_{B_{k-1}} \rangle I\bigl((k-1)\in I\bigr)
        - \langle \theta_{B_k}, \theta_{B_{k+1}} \rangle I\bigl((k+1)\in I\bigr)
      \Bigr\}
      \\ &\approx
      \nu_2^{2}
      \sum_{k\in I}
      \Bigl\{
        2 \times 16\sigma_k^2 k_n\Delta_n
        - 16\sigma_k^2 (k_n-1)\Delta_n I\bigl((k-1)\in I\bigr)
        \\ &\qquad\qquad\qquad\qquad\qquad\qquad
        - 16\sigma_k^2 (k_n-1)\Delta_n I\bigl((k+1)\in I\bigr)
      \Bigr\}
      \\ &\approx
      \nu_2^{2}
      \sum_{k\in I}
      16\sigma_k^2 k_n\Delta_n
      \approx
      16 k_n \nu_2^{2} \int_0^T \sigma_t^2 \ud t.
    \end{aligned}
  \end{equation}
  Notice that any of $\langle \theta_A, \theta_D \rangle$ and $\langle
  \theta_B, \theta_C \rangle$ is a linear combination of $(\Delta
  X_r)$'s, and the $(\Delta X_r)$'s in two terms never overlap, so the
  ``standardized version'' vanishes:
  \begin{equation}\label{eq:all-observation-noise-N4-7}
    k_n^{-1/2} \Delta_n^{-1/2}
    \sum_{k\in I} \sum_{l\in I}
    \sum_{i,j,\alpha,\beta}
    \nu_2^{2} \left[
      \theta_{A}^{i} \theta_{B}^{\alpha} \theta_{C}^{j} \theta_{D}^{\beta}
    \right] \left(
      \delta_{i\beta} \delta_{\alpha j}
    \right)
    \convp
    0.
  \end{equation}
  For similar reason, because $\langle \phi_B, \theta_D \rangle$,
  $\langle \theta_B, \phi_D \rangle$, $\langle \theta_A, \phi_D
  \rangle$ and $\langle \theta_C, \phi_B \rangle$ only contain linear
  combinations $(\Delta X_r)$'s, we have
  \begin{equation}\label{eq:all-observation-noise-N4-8}
    k_n^{-1/2}
    \sum_{k\in I} \sum_{l\in I}
    \nu_2 \nu_3 \left[
      \theta_{A}^{i} \phi_{B}^{\alpha} \theta_{C}^{j} \theta_{D}^{\beta}
      +
      \theta_{A}^{i} \theta_{B}^{\alpha} \theta_{C}^{j} \phi_{D}^{\beta}
    \right] \left(
      \delta_{ij} \delta_{\alpha\beta} + \delta_{i\beta} \delta_{\alpha j}
    \right)
    \convp
    0.
  \end{equation}

  Therefore, by combing Equation~\eqref{eq:all-observation-noise-N4-1},
  \eqref{eq:all-observation-noise-N4-2},
  \eqref{eq:all-observation-noise-N4-3},
  \eqref{eq:all-observation-noise-N4-4},
  \eqref{eq:all-observation-noise-N4-5},
  \eqref{eq:all-observation-noise-N4-6},
  \eqref{eq:all-observation-noise-N4-7} and
  \eqref{eq:all-observation-noise-N4-8}, we can conclude that the
  contribution of the $N_4$ term to the variance due to noise is
  \begin{equation}\label{eq:all-observation-noise-N4}
    n^{-1}
    \sum_{k\in I} \sum_{l\in I}
    \E\Bigl(N_4(k, l) \Big| \calF\Bigr)
    \convp
    8\nu_2\nu_4 + 16\nu_2^3 + 8\nu_3^2.
  \end{equation}

  \paragraph*{Total Variance Due to Noise}
  Finally, by combining Equation~\eqref{eq:all-observation-noise-N2},
  \eqref{eq:all-observation-noise-N3} and
  \eqref{eq:all-observation-noise-N4}, we can see that $N_4$ dominates
  the variance due to noise, so
  \begin{equation}
    n^{-1}
    \sum_{m=1}^{4}
    \sum_{k \in I} \sum_{l \in I}
    \E \Bigl(
      N_{m}(k, l)
    \Big| \calF \Bigr)
    \convp
    8\nu_2\nu_4 + 16\nu_2^3 + 8\nu_3^2.
  \end{equation}
  Therefore, according to
  Equation~\eqref{all-observation-noise-variance-four-terms}, the
  variance due to noise satisfies that
  \begin{equation}
    \Delta_n^3 k_n^2
    \var\bigl( \esty^{\rm (all)}_T \big| \calF \bigr)
    \convp
    (8 \nu_2 \nu_4 + 16 \nu_2^3 + 8 \nu_3^2) T,
  \end{equation}
  as $n \to \infty$.

  \paragraph*{A Corrected Version}
  A corrected version of variance due to noise under smaller noise can
  be obtained as a by-product of the analysis above. Combining
  Equation~\eqref{eq:all-observation-noise-N2-AC},
  Equation~\eqref{eq:all-observation-noise-N2-BD} and
  Equation~\eqref{eq:all-observation-noise-N4-6}, we have
  \begin{equation}
    \Delta_n^3 k_n^2
    \var\bigl( \esty^{\rm (all)}_T \big| \calF \bigr)
    \convp
    \biggl[
      (8\nu_2\nu_4 + 16\nu_2^3 + 8\nu_3^2)
      + \frac{k_n}{n^2} \cdot (8\nu_4 + 16\nu_2^2) \int_0^T \sigma_t^2 \ud t
      + \frac{1}{n^2} \cdot 8T \nu_2 \int_0^T \sigma_t^4 \ud t
    \biggr]
    T.
  \end{equation}

  \subsubsection{A Central Limit Theorem}

  A CLT for $\bigl( \esty_T^{\rm (all)} - \estx_T^{\rm (all)} \bigr)$
  is established for subsequent analysis. Let
  $\underline{\true}_T^{\rm (all)}$ denote the pure noise version of
  the all-observation estimator (by applying the estimator to the
  noise sequence $\{\eps_i\}_{i=0}^n$). Note that the increments of
  $\underline{\true}_T$ are stationary and have finite variances.
  According to the previous analysis and the CLT for $m$-dependent
  sequence of random variables \citep[see Theorem~2.8.1
  in][]{lehmann1999ElementsLargesampleTheory}, one can establish that
  as $n\to\infty$,
  \begin{align}
    \Delta_n^{3/2} k_n \underline{\true}_T^{\rm (all)}
    \convd
    \normal(0, (8 \nu_2 \nu_4 + 16 \nu_2^3 + 8 \nu_3^2)T).
  \end{align}
  Moreover, since
  \begin{align}
    \Delta_n^{3/2} k_n
    \Bigl(
      \esty_T^{\rm (all)}
      - \estx_T^{\rm (all)}
      - \underline{\true}_T^{\rm (all)}
    \Bigr)
    =
    O_p \left(\frac{k_n^{1/2}}{n}\right),
  \end{align}
  and therefore converges to 0 in probability. By Slutsky's Lemma, it
  follows that, as $n\to\infty$,
  \begin{align}
    \Delta_n^{3/2} k_n
    \bigl(
      \esty_T^{\rm (all)} - \estx_T^{\rm (all)}
    \bigr)
    \convd
    \normal(0, (8 \nu_2 \nu_4 + 16 \nu_2^3 + 8 \nu_3^2)T).
  \end{align}

  The above convergence in law can be easily strengthened as an
  convergence $\calF$-stably in law. Simply ``realize'' the limit as a
  random variable on the same probability space of noise random
  variables, which is independent of $(\Omega, \calF, \pr)$, and the
  $\calF$-stable convergence follows.

  \subsection{Proof of Proposition~\ref{prop:SALE-noise}}\label{app:SALE-noise}

  Consider the covariance between different subsample estimators:
  \begin{equation}
    \var\left( \esty^{(H_n)}_T \middle| \calF \right)
    =
    H_n^{-2} \sum_{h=1}^{H_n} \sum_{g=1}^{H_n}
    \cov\left(\left.
      \esty^{(H_n, h)}_T,
      \esty^{(H_n, g)}_T
    \right| \calF \right).
  \end{equation}
  Under Assumption~\ref{ass:noise}\ref{ass:noise-iid}, by
  Proposition~\ref{prop:all-observation-noise}, it follows that, as
  $n\to\infty$,
  \begin{equation}
    \Delta_n^3 H_n^4 k_n^2
    \var\left( \esty^{(H_n)}_T \middle| \calF \right)
    \convp
    (8\nu_2\nu_4 + 16\nu_2^3 + 8\nu_3^2) T.
  \end{equation}
  Under Assumption~\ref{ass:noise}\ref{ass:noise-dep}, given a scale
  $H_n > 2q$, consider two subsamples $(H_n, h)$ and $(H_n, g)$ with
  $|g-h| < q$. Let $I_{H_n} = \cap_{h=1}^{H_n} I_{H_n, h} = \{k_n+1,
  \dotsc, n_{H_n}-k_n-2\}$, where $n_{H_n} = \lfloor (n+1)/H_n \rfloor
  - 1$. Rewrite the subsample estimators as (edge effects ignored)
  \begin{align}
    \esty^{(H_n, h)}_T
    &=
    \sum_{k\in I_{H_n}}
    (\Delta_{H_n} Y_{k_{H_n, h}})
    \: \widehat{\delta}(k_{H_n, h}, H_n, k_n)
    =
    \frac{1}{k_n H_n \Delta_n}
    \sum_{k\in I_{H_n}} u_{k, h}, \\
    \esty^{(H_n, g)}_T
    &=
    \sum_{l\in I_{H_n}}
    (\Delta_{H_n} Y_{l_{H_n, h}})
    \: \widehat{\delta}(l_{H_n, h}, H_n, k_n)
    =
    \frac{1}{k_n H_n \Delta_n}
    \sum_{l\in I_{H_n}} u_{l, g},
  \end{align}
  The increment are decomposed similarly as
  \begin{align}
    u_{k, h}
    &=
    \left( a_{k, h} + A_{k, h} \right) \left( b_{k, h} + B_{k, h} \right),
  \end{align}
  where
  \begin{align}
    \textcolor{blue}{a_{k, h}} + \textcolor{red}{A_{k, h}}
    &=
    (\Delta_{H_n} Y_{k_{H_n, h}})
    =
    \textcolor{blue}{
      (\Delta_{H_n} X_{k_{H_n, h}})
    }
    +
    \textcolor{red}{\langle \theta_{A_{k, h}}, \eps^{(h)} \rangle},
    \\
    \textcolor{blue}{b_{k, h}} + \textcolor{red}{B_{k, h}}
    &=
    \widehat{\delta}(k_{H_n, h}, H_n, k_n)
    \notag
    \\
    &=
    \textcolor{blue}{
      \overline{\delta}(k_{H_n, h}, H_n, k_n)
    }
    +
    \textcolor{red}{
      \langle \theta_{B_{k, h}}, \eps^{(h)} \rangle
      + \langle \phi_{B_{k, h}}, \eps^{(h)2} \rangle
      + \langle \psi_{B_{k, h}}, \eps\eps^{(h)}_{+} \rangle
    }.
  \end{align}
  Here, the noise vector is defined by
  \begin{equation}
    (\eps^{(h)})_j = \eps_{j_h}, \quad
    (\eps^{(h)2})_j = (\eps_{j_h})^2, \quad
    (\eps\eps^{(h)}_{+})_j = \eps_{j_h} \eps_{(j+1)_h}I(j\neq n_{H_n, h}),
  \end{equation}
  and the non-zero values of $\theta_{A_{i, h}}$, $\theta_{B_{i, h}}$,
  $\phi_{B_{i, h}}$ and $\psi_{B_{i, h}}$ are listed in
  Table~\ref{tab:coeff-subgrid}, where the column represents the
  observation indices in subsample $(H_n, h)$.

  \begin{table}[!ht]
    \centering
    \caption{Values of coefficient vectors in the subsample estimator}
    \label{tab:coeff-subgrid}
    \resizebox{\textwidth}{!}{
      \begin{tabular}{ccccccccccccc}
        \toprule
        ~ & $i-k_n-1$ & $i-k_n$ & $\cdots$ & $i-2$ & $i-1$ & $i$ &
        $i+1$ & $i+2$ & $i+3$ & $\cdots$ & $i+k_n+1$ & $i+k_n+2$ \\
        \midrule
        $\theta_{A_{i, h}}$ & \textcolor{blue}{$0$} &
        \textcolor{blue}{$0$} & $\cdots$ & \textcolor{blue}{$0$} &
        \textcolor{blue}{$0$} & $-1$ & $+1$ & \textcolor{blue}{$0$} &
        \textcolor{blue}{$0$} & $\cdots$ & \textcolor{blue}{$0$} &
        \textcolor{blue}{$0$} \\
        $\theta_{B_{i, h}}$ & $+2\Delta_{H_n} X_{i-k_n H_n-H_n}$ &
        $+2\Delta_{H_n}^2 X_{i-k_n H_n-H_n}$ & $\cdots$ &
        $+2\Delta_{H_n}^2 X_{i-3 H_n}$ & $-2\Delta_{H_n} X_{i-2H_{n}}$
        & \textcolor{blue}{$0$} & \textcolor{blue}{$0$} &
        $-2\Delta_{H_n} X_{i+2H_{n}}$ &
        $-2\Delta_{H_n}^2 X_{i+2 H_n}$ & $\cdots$ &
        $-2\Delta_{H_n}^2 X_{i+k_n H_n}$ & $+2\Delta_{H_n} X_{i+k_n H_n+H_n}$ \\
        $\phi_{B_{i, h}}$ & $-1$ & $-2$ & $\cdots$ & $-2$ & $-1$ &
        \textcolor{blue}{$0$} & \textcolor{blue}{$0$} & $+1$ & $+2$ &
        $\cdots$ & $+2$ & $+1$ \\
        $\psi_{B_{i, h}}$ & $+2$ & $+2$ & $\cdots$ & $+2$ &
        \textcolor{blue}{$0$} & \textcolor{blue}{$0$} &
        \textcolor{blue}{$0$} & $-2$ & $-2$ & $\cdots$ & $-2$ &
        \textcolor{blue}{$0$} \\
        \bottomrule
      \end{tabular}
    }
  \end{table}

  The proof differs from Section~\ref{sec:app-all-observation-noise} in
  two ways: (i) the form of coefficient vector $\theta_{B_{i, h}}$ is
  different (but is finally negligible); and (ii) the noise random
  variables in different subsamples satisfy that
  \begin{align}
    \eps_{i_h} \perp \eps_{j_l}, \quad \text{if } i \neq j,
  \end{align}
  and specifically, we have
  \begin{align}
    \E [ \eps_{i_h} \eps_{j_g} ]
    &=
    \nu_2 \rho_2(g-h) \delta_{ij}, \\
    \E [ \eps_{i_h} \eps_{j_g}^{2} ]
    &=
    \sqrt{\nu_2 \left(\nu_4 - \nu_2^2\right)}
    \rho_3(g-h) \delta_{ij}, \\
    \E [ \eps_{i_h}^{2} \eps_{j_g}^{2} ]
    &=
    \nu_2^2 + \left(\nu_4 - \nu_2^2\right) \rho_4(g-h) \delta_{ij}.
  \end{align}
  With these differences, the covariance can be obtained: as $n\to\infty$,
  \begin{align}
    &
    \Delta_n^3 H_n^3 k_n^2
    \cov\left(
      \esty^{(H_n, h)}_T,
      \esty^{(H_n, g)}_T
    \middle| \calF \right)
    \notag
    \\
    & \convp
    \Bigl(
      8 \nu_2 \bigl(\nu_4 - \nu_2^2\bigr) \rho_2(g-h)\rho_4(g-h)
      + 24 \nu_2^3 \rho_2^3(g-h)
      \notag
      \\
      & \qquad
      + 8 \nu_2 \bigl(\nu_4 - \nu_2^2\bigr) \rho_3(g-h) \rho_3(h-g)
    \Bigr)
    T.
  \end{align}
  Consequently, as $n\to\infty$,
  \begin{equation}\label{eq:E-Var-tilde}
    \Delta_n^3 H_n^4 k_n^2
    \var \left( \esty^{(H_n)}_T \middle| \calF \right)
    \\
    \convp
    \Phi T,
  \end{equation}
  where
  \begin{equation}
    \Phi = 8 \nu_2 \left(\nu_4 - \nu_2^2\right)
    \sum_{l=-q}^{q} \Bigl( \rho_2(l)\rho_4(l) + \rho_3(l) \rho_3(-l) \Bigr)
    + 24 \nu_2^3 \sum_{l=-q}^{q} \rho_2^3(l).
  \end{equation}
  This completes the proof.

  Similarly, a more precise version is
  \begin{equation}\label{eq:SALE-noise-variance-corrected}
    \Delta_n^3 H_n^4 k_n^2
    \var \left( \esty^{(H_n)}_T \middle| \calF \right)
    \\
    \convp
    \Phi' T,
  \end{equation}
  where
  \begin{equation}
    \begin{aligned}
      \Phi'
      &=
      8 \nu_2 \left(\nu_4 - \nu_2^2\right)
      \sum_{l=-q}^{q} \Bigl( \rho_2(l)\rho_4(l) + \rho_3(l) \rho_3(-l) \Bigr)
      + 24 \nu_2^3 \sum_{l=-q}^{q} \rho_2^3(l)
      \\ & \quad
      + \frac{k_nH_n}{n} \cdot \Biggl[
        8 (\nu_4 - \nu_2^2) \int_0^T \sigma_t^2 \ud t
        \sum_{l=-q}^{q} \left(1-\frac{|d|}{H_n}\right)
        \left(
          \rho_4(d) + \frac{\nu_2^2}{\nu_4-\nu_2^2} \Bigl(1-\rho_2^2(d)\Bigr)
        \right)
      \Biggr]
      \\ & \quad
      + \frac{k_nH_n}{n} \cdot \Biggl[
        24 \nu_2^2 \int_0^T \sigma_t^2 \ud t
        \sum_{l=-q}^{q} \left(1-\frac{|d|}{H_n}\right) \rho_2^2(d)
      \Biggr]
      \\ & \quad
      + \frac{H_n^2}{n^2} \cdot \Biggl[
        8T \nu_2 \int_0^T \sigma_t^4 \ud t
        \sum_{l=-q}^{q} \left(1-\frac{|d|}{H_n}\right)^2 \rho_2(d)
      \Biggr].
    \end{aligned}
  \end{equation}
  A corresponding $\calF$-stable convergence CLT can also be established.

  \subsection{Proof of Proposition~\ref{prop:MSLE-noise}}
  \label{app:MSLE-noise}

  \paragraph*{The Independent Noise Case}

  Under Assumption~\ref{ass:noise}\ref{ass:noise-iid}, the correlation
  due to noise between SALE at different scales are negligible except
  for a special case, where one scale is exactly double the other. This
  can be established with the same way of calculating the contribution
  of noise terms as above. The calculation is tedious, so we will only
  explain the intuition instead of expanding the details. For example,
  consider the increment from two scales $H_p \geq H_q$.
  \begin{align}
    \esty^{(H_p)}_T
    &=
    \frac{1}{k_p H_p^2 \Delta_n} \sum_{k=(k_p+1)H_p}^{n-(k_p+2)H_p}
    u_k^{(H_p)}, \\
    \esty^{(H_q)}_T
    &=
    \frac{1}{k_q H_q^2 \Delta_n} \sum_{l=(k_q+1)H_q}^{n-(k_q+2)H_q} u_l^{(H_q)},
  \end{align}
  where the increment can be decomposed similarly, for example,
  $u_k^{(H_p)}$ as $(a_k+A_k)(b_k+B_k)$, and $u_l^{(H_q)}$ as
  $(c_l+C_l)(d_l+D_l)$. The dominant terms can be established similar
  as in Lemma~\ref{lem:app-all-observation-noise-N4}, where we only
  consider terms without $\theta_B$ and $\theta_D$ as they become
  negligible as $n \to \infty$. However, for $H_p > H_q$, the only
  terms that remain after summation over $l$ is
  \begin{equation}
    \sum_{l=(k_q+1)H_q}^{n-(k_q+2)H_q}
    \sum_{i,j,\alpha,\beta}
    \nu_2(\nu_4-\nu_2^2) \left[\theta_A^i \phi_B^\alpha \theta_C^j
    \phi_B^\beta\right]
    (\delta_{ij} \delta_{\alpha\beta})
    =
    2 \nu_2(\nu_4-\nu_2^2) I(H_p = 2H_q),
  \end{equation}
  so $H_p = 2H_q$ is a very special situation. The contribution of
  elements from $\phi_B$ and $\phi_D$ at $l-2,l-1,l+2$, and $l+3$ adds
  up to 2 in total, while the contribution at $l-m$ and $l+m+1$ for any
  $m\geq 3$ cancels out. The former distinguishes $H_p = 2H_q$ from any
  other cases, and the variance due to noise of the MSLE estimator
  follows. This completes the proof.

  For the same reason, shifting the spot volatility estimation window
  outward by one time interval (shown in
  Figure~\ref{fig:base-estimators-3}) can completely eliminate this
  distribution, so the correlation between different scales become
  totally negligible.

  \paragraph*{The Dependent Noise Case}

  Although the closed-form expression of the covariance between
  different scales is difficult to derive, a multi-scale alternative of
  Lemma~\ref{lem:app-all-observation-noise-N4} (dropping terms with
  $\theta_B$ and $\theta_D$) can be used as an algorithm to explicitly
  calculate the covariance due to noise when the noise is dependent.
  The idea is straightforward:
  \begin{enumerate}
    \item For each given pair of $k$ and $l$, the contribution can be
      calculated with this multi-scale and dependent alternative of
      Lemma~\ref{lem:app-all-observation-noise-N4}.
    \item Given a fixed $k$, for the summation over $l$, we need to
      consider all $l$'s that may have potential contribution to the
      final result. Specifically, if the dependent level of noise is
      $\tilde{q}$, the following $l$'s are considered:
      \begin{equation}
        l \in \{
          k-(k_p+1)H_p-\tilde{q}-(k_q+2)H_q,
          \dotsc,
          k+(k_p+2)H_p+\tilde{q}+(k_q+1)H_p
        \}.
      \end{equation}
    \item The summation over $k$ is substituted by simply multiplying $n$.
    \item The result is normalized by $k_p H_p^2 \Delta_n$ and $k_q
      H_q^2 \Delta_n$.
  \end{enumerate}
  The code has been implemented and validated in the simulation study.

  \subsection{Proof of Theorem~\ref{thm:SALE-clt-noisefree}}
  \label{app:SALE-clt-noisefree}

  Let $\overline{\delta} (\cdot)$ and $\overline{\sigma}_\pm^2 (\cdot)$
  denote the noise-free versions of $\widehat{\delta} (\cdot)$ and
  $\widehat{\sigma}_\pm^2 (\cdot)$, respectively. Recall that the
  noise-free SALE estimator can be written as
  \begin{align}
    \estx_T^{(H_n)}
    =
    \frac{1}{H_n}
    \sum_{i=(k_n+1)H_n}^{n-(k_n+2)H_n}
    (\Delta_{H_n} X_i)
    \: \overline{\delta} (i, H_n, k_n)
    =
    \frac{1}{H_n}
    \sum_{i=(k_n+1)H_n}^{n-(k_n+2)H_n}
    V_{H_n, i},
  \end{align}
  where
  \begin{align}
    V_{H_n, i}
    =
    (\Delta_{H_n} X_i)
    \bigl(
      \overline{\sigma}_+^2 (i, H_n, k_n)
      -
      \overline{\sigma}_-^2 (i, H_n, k_n)
    \bigr)
  \end{align}

  Jacod's stable central limit theorem for partial sums of triangular
  arrays \citep{jacod1997ContinuousConditionalGaussian} \citep[also
  refer to Theorem~2.6 in][]{podolskij2010UnderstandingLimitTheorems}
  is used in the proof, and an essential step is to calculate the
  predictable quadratic variation of the error process $u_n (\estx -
  \true_t)$, where the normalizing sequence is taken as $u_n =
  \sqrt{k_n \land (k_n H_n \Delta_n)^{-1}}$. However, since the SALE
  estimator contains an increasing and overlapping window for spot
  volatility estimation, there is a lack of conditional independence
  of the summands. To overcome this challenge, we adopt the ``big
  block'' technique to establish some conditional independence
  \citep[for example, see Section~12.2
  in][]{jacod2012DiscretizationProcesses}. Specifically, we divide
  the time interval $[0, T]$ into big blocks of size $L =
  (K+2k_{n}+3)H_{n}\Delta_{n}$, where $K = K_n$ is an appropriately
  selected sequence. As illustrated in Figure~\ref{fig:block}, the
  time span of the $\nu$th block is $[L\nu, L\nu + L)$, where $\nu =
  0, 1, \dotsc, \lfloor T/L \rfloor - 1$.

  \begin{figure}[!ht]
    \begin{center}
      \begin{tikzpicture}
        \def\xStart{-5}
        \def\xEnd{11}
        \draw[-latex] (\xStart,0) -- (\xEnd,0);
        \node[above] at (\xEnd-0.1, 0) {\scriptsize{$t$}};
        \foreach \x in {-4, 7} {
          \draw (\x, -1) -- (\x, 1);
        }
        \node[below] at (-4, -1) {\footnotesize{$\calF_{L\nu}$}};
        \node[below] at (7, -1) {\footnotesize{$\calF_{L(\nu+1)}$}};
        \foreach \x in {-2, -1.5, -1, -0.5, 3.5, 4, 4.5, 5, 9, 9.5, 10} {
          \draw (\x, 0) -- (\x, 0.1);
        }
        \node at (1.5, 0.1) {\scriptsize{$\cdots\cdots$}};
        \foreach \x/\y in {-4/-2, 7/9} {
          \draw[decorate,decoration={brace,amplitude=5pt,mirror}]
          (\x+0.05,-0.1) -- (\y-0.05,-0.1);
          \node[below] at (\x+1, -0.2)
          {\scriptsize{$(k_{n}+1)H_{n}\Delta_{n}$}};
        }
        \draw[decorate,decoration={brace,amplitude=5pt,mirror}]
        (4.5+0.05,-0.1) -- (7-0.05,-0.1);
        \node[below] at (5.75, -0.2) {\scriptsize{$(k_{n}+2)H_{n}\Delta_{n}$}};
        \draw[decorate,decoration={brace,amplitude=5pt,mirror}]
        (-2+0.05,-0.1) -- (4.5-0.05,-0.1);
        \node[below] at (1.5, -0.2) {\scriptsize{$KH_{n}\Delta_{n}$}};
        \draw[-latex] (-0.5, 0.5) -- (-4, 0.5);
        \draw[-latex] (3.5, 0.5) -- (7, 0.5);
        \node at (1.5, 0.5) {\footnotesize{$L = (K+2k_{n}+3)H_{n}\Delta_{n}$}};
        \foreach \xStart/\xEnd in {-2/4.5, 9/10.5} {
          \fill [fill=blue, opacity=0.2] (\xStart,0.05) -- (\xStart,
          -0.05) -- (\xEnd, -0.05) -- (\xEnd, 0.05) -- cycle;
        }
        \foreach \xStart/\xEnd in {-4.5/-2, 4.5/9} {
          \fill [fill=red, opacity=0.2] (\xStart,0.05) -- (\xStart,
          -0.05) -- (\xEnd, -0.05) -- (\xEnd, 0.05) -- cycle;
        }
      \end{tikzpicture}
    \end{center}
    \caption{The partition within the $\nu$th big block.}
    \label{fig:block}
  \end{figure}

  In the $\nu$th big block, the blue interval in the middle corresponds
  to the time span $[L\nu+(k_n+1)H_n\Delta_n,
  L\nu+(K+k_n+1)H_n\Delta_n)$, with a length of $KH_n\Delta_n$. The red
  intervals on both sides are $(k_n+1)H_n\Delta_n$ and
  $(k_n+2)H_n\Delta_n$ long, respectively. Within this big block, we
  only consider the $KH_n$ increment terms $V_i$'s of which the time
  point $i\Delta_n$ lies within the left-closed, right-open blue
  interval, and their indices are $J(\nu) = \{L\nu/\Delta_n+(k_n+1)H_n,
  \dotsc, L\nu/\Delta_n+(K+k_n+1)H_n-1\}$. Note that all observations
  contained in these $V_i$'s are within this big block, and thus
  $\calF_{L(\nu+1)}$-measurable. An appropriate choice of $K$ satisfies
  two conditions: (i) $K/k_n \to \infty$ as $n \to \infty$, so that the
  $(2k_n+3)H_n$ terms $V_i$'s in the red small blocks are negligible;
  and (ii) $K H_n \Delta_n \to 0$ as $n \to \infty$, so that the time
  span of big blocks is short enough for a local constant approximation.

  With these notations, the SALE estimator is further rewritten as
  \begin{equation}
    \estx^{(H_n)}_T
    =
    \frac{1}{H_n}
    \sum_{\nu=0}^{\lfloor T/L \rfloor - 1}
    \sum_{i\in J(\nu)} V_{H_n, i}
    + \text{(edge terms)}.
  \end{equation}
  The edge terms include the increments in the red small blocks, and
  the end-of-the-day increments beyond the last big block (if any).
  Obviously, in both cases, they are negligible. Ignoring them, and
  rewrite the error process as
  \begin{equation}
    u_n \Bigl( \estx^{(H_n)}_T - \true_T \Bigr)
    =
    \sum_{\nu=0}^{\lfloor T/L \rfloor - 1} U_\nu,
  \end{equation}
  where
  \begin{align}
    U_\nu
    &=
    u_n
    \left(
      \frac{1}{H_n} \sum_{i\in J(\nu)} V_{H_n, i}
      - \int_{L\nu}^{L(\nu+1)} \ud \true_t
    \right)
    - \text{(edge terms)}
    \notag
    \\ &=
    \frac{u_n}{H_n}
    \sum_{i\in J(\nu)}
    \left(
      V_{H_n, i}
      - \int_{i\Delta_n}^{(i+H_n)\Delta_n} \ud \true_t
    \right)
    \notag
    \\ &=
    U_{\nu, 1} + U_{\nu, 2}.
  \end{align}
  and
  \begin{align}
    U_{\nu, 1} &= \frac{u_n}{H_n} \sum_{i\in J(\nu)} \Bigl( V_{H_n, i}
    - (\Delta_{H_n} X_i) (\Delta_{H_n} \sigma_i^2) \Bigr), \\
    U_{\nu, 2} &= \frac{u_n}{H_n} \sum_{i\in J(\nu)} \left(
      (\Delta_{H_n} X_i) (\Delta_{H_n} \sigma_i^2) -
    \int_{i\Delta_n}^{(i+H_n)\Delta_n} \ud \true_t \right).
  \end{align}
  Here, $U_{\nu, 1}$ is the error due to spot volatility estimation,
  whereas $U_{\nu, 2}$ is the discretization error of ``oracle''
  leverage estimation. Note that
  \begin{align}
    U_{\nu, 1}
    =
    \frac{u_n}{H_n} \sum_{i\in J(\nu)}(\Delta_{H_n} X_i) (e_{H_n, i+} -
    e_{H_n, i-}).
  \end{align}
  where
  \begin{align}
    e_{H_n, i+}
    &=
    \overline{\sigma}_+^2 (i, H_n, k_n) - \sigma_{i+H_n}^2,
    \\
    e_{H_n, i-}
    &=
    \overline{\sigma}_-^2 (i, H_n, k_n) - \sigma_{i}^2.
  \end{align}
  The following lemma concerns these $e_{H_n, i\pm}$ terms.

  \begin{lemma}\label{lem:err-spot-volatility}
    Suppose that $i, j \in J(\nu)$ and $|i-j| < H_n$. As $n\to\infty$,
    \begin{align}
      \E \Bigl[
        e_{H_n, i\pm} e_{H_n, j\pm}
      \Big| \calF_{L\nu} \Bigr]
      &=
      \left(\frac{1}{k_n}\right) \cdot 2 \left[ \left(
        \frac{|i-j|}{H_n} \right)^2 + \left( 1 - \frac{|i-j|}{H_n}
      \right)^2 \right] \sigma_\nu^4
      \notag
      \\ &\qquad
      + (k_n H_n \Delta_n) \cdot \frac{1}{3}
      \left.\frac{\ud\langle\sigma^2, \sigma^2\rangle_t}{\ud
      t}\right|_{t=L\nu} + o_p(k_n^{-1} + k_n H_n \Delta_n),
      \\
      \E \Bigl[
        e_{H_n, i\pm} e_{H_n, j\mp}
      \Big| \calF_{L\nu} \Bigr]
      &=
      o_p(k_n^{-1} + k_n H_n \Delta_n).
      \label{eq:SALE-spot-error-cross}
    \end{align}
    Moreover, $u_n (e_{H_n, i+}, e_{H_n, i+})$ converges stably in
    law to a pair of mean-zero, independent normal random variables
    that are independent of $\calF$.
  \end{lemma}

  \begin{proof}
    To begin with, let
    \begin{align}
      e_{H_n, i+}
      & = (\overline{\sigma}_+^2(i, H_n, k_n) - \sigma_{i+2H_n}^2) +
      (\sigma_{i+2H_n}^2 - \sigma_{i+H_n}^2)
      = \tilde{e}_{H_n, i+} + \Delta_{H_n} \sigma_{i+H_n}^2, \\
      e_{H_n, i-}
      & = (\overline{\sigma}_-^2(i, H_n, k_n) - \sigma_{i-H_n}^2) +
      (\sigma_{i-H_n}^2 - \sigma_{i}^2)
      = \tilde{e}_{H_n, i-} - \Delta_{H_n} \sigma_{i-H_n}^2,
    \end{align}
    where
    \begin{align}
      &\quad
      (k_n H_n \Delta_n) \tilde{e}_{H_n, i+}
      \notag
      \\
      &=
      \sum_{k=2}^{k_n+1}
      \int_{(i+kH_n)\Delta_n}^{(i+(k+1)H_n)\Delta_n}
      2(X_r - X_{i+kH_n}) \ud X_r
      + (\sigma_r^2 - \sigma_{i+2H_n}^2) \ud r
      \notag
      \\ &=
      \Biggl(
        \sum_{k=2}^{k_n+1}
        \int_{(i+kH_n)\Delta_n}^{(i+(k+1)H_n)\Delta_n} 2(X_r -
        X_{i+kH_n}) \ud X_r
      \Biggr)
      +
      \int_{(i+2H_n)\Delta_n}^{(i+(k_n+2)H_n)\Delta_n}
      (\sigma_r^2 - \sigma_{i+2H_n}^2) \ud r
      \notag
      \\ &=
      \Biggl(
        \sum_{k=2}^{k_n+1} \xi_{i, k}
      \Biggr)
      + \eta_{i}.
    \end{align}
    Without loss of generality, take $i \leq j < i + H_n$. By
    Lemma~\ref{lem:overlap-expectation}, it follows that
    \begin{align}
      \E \Bigl[
        \xi_{i, k} \xi_{j, k}
      \Big| \calF_{(i+kH_n)\Delta_n} \Bigr]
      &=
      2\sigma_{(i+kH_n)\Delta_n}^4 \left(\left(1 -
      \frac{j-i}{H_n}\right) H_n \Delta_n\right)^2
      +
      o_p(H_n^2 \Delta_n^2), \\
      \E \Bigl[
        \xi_{i, k+1} \xi_{j, k}
      \Big| \calF_{(j+kH_n)\Delta_n} \Bigr]
      &=
      2\sigma_{(j+kH_n)\Delta_n}^4 \left(\left(\frac{j-i}{H_n}\right)
      H_n \Delta_n\right)^2
      +
      o_p(H_n^2 \Delta_n^2), \\
      \E \Bigl[
        \eta_{i} \eta_{j}
      \Big| \calF_{(i+2H_n)\Delta_n} \Bigr]
      &=
      \frac{1}{3} \left.\frac{\ud\langle\sigma^2,
      \sigma^2\rangle_t}{\ud t}\right|_{t=(i+2H_n)\Delta_n} (k_n H_n
      \Delta_n)^3 + o_p(k_n^3 H_n^3 \Delta_n^3).
    \end{align}
    The negligibility of cross terms such as $\E[\xi_{j, k} \eta_i |
    \calF_{(i+2H_n)\Delta_n}]$ follows from
    \begin{align}
      &
      \E\Bigl[\xi_{j, k} \eta_i \Big| \calF_{(i+2H_n)\Delta_n}\Bigr]
      \notag
      \\ &=
      \E \Biggl[
        \Biggl(
          \int_{(j+kH_n)\Delta_n}^{(j+(k+1)H_n)\Delta_n} 2(X_r -
          X_{j+kH_n}) \ud X_r
        \Biggr)
        \notag
        \\ &\qquad \qquad
        \Biggl(
          \int_{(i+2H_n)\Delta_n}^{(i+(k_n+2)H_n)\Delta_n} \Bigl[
          (i+(k_n+2)H_n)\Delta_n - r \Bigr] \ud \sigma_r^2
        \Biggr)
      \Bigg| \calF_{(i+2H_n)\Delta_n} \Biggr]
      \notag
      \\ &\approx
      \Bigl( (i+(k_n+2)H_n)\Delta_n - (j+kH_n)\Delta_n \Bigr)
      \left.\frac{\ud \langle X, \sigma^2 \rangle_t}{\ud
      t}\right|_{t=(i+2H_n)\Delta_n}
      \notag
      \\ &\qquad \qquad
      \E \Biggl[
        \int_{(j+kH_n)\Delta_n}^{(j+(k+1)H_n)\Delta_n \land
        (i+(k_n+2)H_n)\Delta_n}
        2(X_r - X_{j+kH_n}) \ud r
      \Bigg| \calF_{(i+2H_n)\Delta_n} \Biggr]
      \notag
      \\ &=
      O_p(k_n H_n \Delta_n \cdot k_n H_n \Delta_n \cdot H_n \Delta_n)
      \notag
      \\ &=
      O_p(k_n^2 H_n^3 \Delta_n^3).
    \end{align}
    Consequently, we have
    \begin{align}
      \E \Bigl[
        \tilde{e}_{H_n, i+} \tilde{e}_{H_n, j+}
      \Big| \calF_{L\nu} \Bigr]
      &=
      \frac{1}{k_n^2 H_n^2 \Delta_n^2}
      \E \Biggl[
        \Biggl(\sum_{k=2}^{k_n+1} \xi_{i, k} + \eta_i\Biggr)
        \Biggl(\sum_{l=2}^{k_n+1} \xi_{j, l} + \eta_j\Biggr)
      \Bigg| \calF_{L\nu} \Biggr]
      \notag
      \\ &=
      \left(\frac{1}{k_n}\right) \cdot 2 \left[ \left(
        \frac{j-i}{H_n} \right)^2 + \left( 1 - \frac{j-i}{H_n}
      \right)^2 \right] \sigma_\nu^4
      \notag
      \\ &\qquad
      + (k_n H_n \Delta_n) \cdot \frac{1}{3}
      \left.\frac{\ud\langle\sigma^2, \sigma^2\rangle_t}{\ud
      t}\right|_{t=L\nu} + o_p(k_n^{-1} + k_n H_n \Delta_n).
    \end{align}
    On the other hand, since $\E[(\Delta_{H_n} \sigma_{i+H_n}^2)^2 |
    \calF_{(i+H_n)\Delta_n}] = O_p(H_n \Delta_n)$, we have
    \begin{align}
      \E \Bigl[
        e_{H_n, i+} e_{H_n, j+}
      \Big| \calF_{L\nu} \Bigr]
      &=
      \left(\frac{1}{k_n}\right) \cdot 2 \left[ \left(
        \frac{j-i}{H_n} \right)^2 + \left( 1 - \frac{j-i}{H_n}
      \right)^2 \right] \sigma_\nu^4
      \notag
      \\ &\qquad
      + (k_n H_n \Delta_n) \cdot \frac{1}{3}
      \left.\frac{\ud\langle\sigma^2, \sigma^2\rangle_t}{\ud
      t}\right|_{t=L\nu} + o_p(k_n^{-1} + k_n H_n \Delta_n).
    \end{align}
    Similar results can be proved for $e_{H_n, i-}$ related terms and
    Equation~\eqref{eq:SALE-spot-error-cross}. The stable convergence
    of $u_n (e_{H_n, i+}, e_{H_n, i+})$ is similar to established
    results on integrated volatility~\citep[also see Lemma~1 in
      Supplementary Material
    of][]{aitsahalia2017EstimationContinuousDiscontinuous}.
  \end{proof}

  The conditions of Jacod's stable central limit theorem are verified below.

  \paragraph*{Ucp Convergence}
  First, we need to prove that
  \begin{equation}\label{eq:SALE-clt-noisefree-ucp}
    \sum_{\nu=0}^{\lfloor t/L \rfloor - 1} \E[U_\nu | \calF_{L\nu}] \convucp 0.
  \end{equation}
  This uniform convergence on compacts in probability is defined as
  \begin{equation}
    \sup_{t\in [0,T]}
    \left| \sum_{\nu=0}^{\lfloor t/L \rfloor - 1} \E[U_\nu |
    \calF_{L\nu}] \right|
    \convp 0.
  \end{equation}
  The left-hand side is bounded by
  \begin{equation}
    \sup_{t\in [0,T]}
    \left| \sum_{\nu=0}^{\lfloor t/L \rfloor - 1} \E[U_\nu |
    \calF_{L\nu}] \right|
    \leq
    \sum_{\nu=0}^{\lfloor t/L \rfloor - 1} \Bigl| \E[ U_\nu |
    \calF_{L\nu} ] \Bigr|
    \leq
    \sum_{\nu=0}^{\lfloor t/L \rfloor - 1}
    \Bigl(
      \Bigl| \E[ U_{\nu, 1} | \calF_{L\nu} ] \Bigr|
      + \Bigl| \E[ U_{\nu, 2} | \calF_{L\nu} ] \Bigr|
    \Bigr).
  \end{equation}
  By Lemma~\ref{lem:err-spot-volatility}, we have
  \begin{align}
    \E [ U_{\nu, 1} | \calF_{L\nu} ]
    &=
    \frac{u_n}{H_n}
    \sum_{i\in J(\nu)}
    \E \Bigl[ (\Delta_{H_n} X_i) (e_{H_n, i+} - e_{H_n, i-}) \Big|
    \calF_{L\nu} \Bigr]
    \notag
    \\ &=
    \frac{1}{H_n}
    \sum_{i\in J(\nu)}
    \E\Bigl[\Delta_{H_n} X_i \Big| \calF_{L\nu}\Bigr]
    \E\Bigl[u_n (e_{H_n, i+} - e_{H_n, i-}) \Big| \calF_{L\nu}\Bigr]
    \notag
    \\ &=
    \frac{1}{H_n}
    \sum_{i\in J(\nu)}
    O_p(H_n \Delta_n) o_p(1)
    =
    o_p(K H_n \Delta_n).
  \end{align}
  On the other hand, by Lemma~\ref{lem:err-oracle-leverage}, we have
  \begin{align}
    \Bigl| \E [ U_{\nu, 2} | \calF_{L\nu} ] \Bigr|
    &\leq
    \frac{u_n}{H_n}
    \sum_{i\in J(\nu)}
    \E \Biggl[
      \Bigl|
      (\Delta_{H_n} X_i) (\Delta_{H_n} \sigma_i^2)
      - \int_{i\Delta_n}^{(i+H_n)\Delta_n} \ud \true_t
      \Bigr|
      \Bigg| \calF_{L\nu}
    \Biggr]
    \notag
    \\ &\leq
    \frac{u_n}{H_n}
    \sum_{i\in J(\nu)}
    \sqrt{
      \E \Biggl[
        \Bigl( (\Delta_{H_n} X_i) (\Delta_{H_n} \sigma_i^2)
        - \int_{i\Delta_n}^{(i+H_n)\Delta_n} \ud \true_t \Bigr)^2
        \Bigg| \calF_{L\nu}
      \Biggr]
    }
    \notag
    \\ &=
    \frac{u_n}{H_n}
    \sum_{i\in J(\nu)}
    O_p(H_n \Delta_n)
    =
    o_p(u_n K H_n \Delta_n).
  \end{align}
  It follows that
  \begin{equation}
    \sum_{\nu=0}^{\lfloor t/L \rfloor - 1}
    \Bigl(
      \Bigl| \E[ U_{\nu, 1} | \calF_{L\nu} ] \Bigr|
      + \Bigl| \E[ U_{\nu, 2} | \calF_{L\nu} ] \Bigr|
    \Bigr)
    =
    \sum_{\nu=0}^{\lfloor t/L \rfloor - 1}
    o_p(K H_n \Delta_n)
    + o_p(u_n K H_n \Delta_n)
    =
    o_p(t).
  \end{equation}

  \paragraph*{Predictable Quadratic Variation}
  We need to prove that
  \begin{equation}\label{eq:SALE-clt-noisefree-qv}
    \sum_{\nu=0}^{\lfloor t/L \rfloor - 1}
    \Bigl(
      \E[U_\nu^2 | \calF_{L\nu}]
      - \E[U_\nu | \calF_{L\nu}]^2
    \Bigr)
    \convp
    \int_0^t \zeta_s^2 \ud s,
  \end{equation}
  for some predictable process $(\zeta_s)_{s\geq 0}$. According to
  previous analysis, the only non-trivial term in the summand is
  $\E[U_{\nu, 1}^2 | \calF_{L\nu}]$. By
  Lemma~\ref{lem:err-spot-volatility}, up to the highest order, we have
  \begin{align}
    &
    \E[U_{\nu, 1}^2 | \calF_{L\nu}]
    \notag
    \\ &=
    \frac{u_n^2}{H_n^2}
    \sum_{i\in J(\nu)}
    \sum_{j\in J(\nu)}
    \E\Bigl[
      (\Delta_{H_n} X_i)
      (\Delta_{H_n} X_j)
    \Big| \calF_{L\nu} \Bigr]
    \E\Bigl[
      (e_{H_n, i+} - e_{H_n, i-})
      (e_{H_n, j+} - e_{H_n, j-})
    \Big| \calF_{L\nu} \Bigr]
    \notag
    \\ &=
    \frac{u_n^2}{H_n^2}
    \sum_{i\in J(\nu)}
    \sum_{j\in J(\nu), |i-j|<H_n}
    \sigma_\nu^2 \left( 1 - \frac{|i-j|}{H_n} \right) H_n \Delta_n
    \E\Bigl[
      e_{H_n, i+} e_{H_n, j+} + e_{H_n, i-} e_{H_n, j-}
    \Big| \calF_{L\nu} \Bigr]
    \notag
    \\ &=
    \frac{u_n^2}{H_n^2}
    \sum_{i\in J(\nu)}
    \sum_{j\in J(\nu), |i-j|<H_n}
    \sigma_\nu^2 \left( 1 - \frac{|i-j|}{H_n} \right) H_n \Delta_n
    \notag
    \\ &\qquad
    2 \left\{
      \frac{1}{k_n} \cdot 2 \left[ \left( \frac{|i-j|}{H_n} \right)^2
      + \left( 1 - \frac{|i-j|}{H_n} \right)^2 \right] \sigma_\nu^4
      + k_n H_n \Delta_n \cdot \frac{1}{3}
      \left.\frac{\ud\langle\sigma^2, \sigma^2\rangle_t}{\ud t}\right|_{t=L\nu}
    \right\}
    \notag
    \\ &=
    u_n^2
    \left(
      \frac{1}{k_n} \cdot \left(\frac{8}{3}+\frac{4}{3H_n^2}\right)
      \sigma_\nu^6 \cdot K H_n \Delta_n
      + k_n H_n \Delta_n \cdot \frac{2}{3} \sigma_\nu^2
      \left.\frac{\ud\langle\sigma^2, \sigma^2\rangle_t}{\ud
      t}\right|_{t=L\nu} \cdot K H_n \Delta_n
    \right).
    \label{eq:SALE-clt-noisefree-qv-Uv1squared}
  \end{align}
  It follows that
  \begin{align}
    &
    \sum_{\nu=0}^{\lfloor t/L \rfloor - 1}
    \Bigl(
      \E[U_\nu^2 | \calF_{L\nu}]
      - \E[U_\nu | \calF_{L\nu}]^2
    \Bigr)
    \notag
    \\ &=
    \frac{u_n^2}{k_n}
    \sum_{\nu=0}^{\lfloor t/L \rfloor - 1}
    \left(\frac{8}{3}+\frac{4}{3H_n^2}\right) \sigma_\nu^6 \cdot K H_n \Delta_n
    +
    u_n^2 k_n H_n \Delta_n
    \sum_{\nu=0}^{\lfloor t/L \rfloor - 1}
    \frac{2}{3} \sigma_\nu^2 \left.\frac{\ud\langle\sigma^2,
    \sigma^2\rangle_t}{\ud t}\right|_{t=L\nu} \cdot K H_n \Delta_n
    \notag
    \\ &\convp
    \frac{u_n^2}{k_n} \int_0^t
    \left(\frac{8}{3}+\frac{4}{3H_n^2}\right) \sigma_s^6 \ud s
    + u_n^2 k_n H_n \Delta_n \int_0^t \frac{2}{3} \sigma_s^2
    \ud\langle\sigma^2, \sigma^2\rangle_s.
  \end{align}

  \paragraph*{Predictable Quadratic Covariation}
  We need to prove that
  \begin{equation}\label{eq:SALE-clt-noisefree-qcv}
    \sum_{\nu=0}^{\lfloor t/L \rfloor - 1} \E\Bigl[U_\nu (M_{L(\nu+1)}
    - M_{L\nu}) \Big| \calF_{L\nu}\Bigr] \convp 0,
    \quad \forall M \in \{W, B, N\},
  \end{equation}
  where $(N_s)_{s\geq 0}$ is an arbitrary bounded $\calF_t$-martingale
  with $N_0 = 0$ and $\langle W, N\rangle = \langle B, N\rangle = 0$.
  As before, omit $U_{\nu, 2}$ for its minor contribution, and we have
  \begin{align}
    &
    \sum_{\nu=0}^{\lfloor t/L \rfloor - 1}
    \E\Bigl[U_{\nu, 1} (M_{L(\nu+1)} - M_{L\nu}) \Big| \calF_{L\nu}\Bigr]
    \notag
    \\ &=
    \sum_{\nu=0}^{\lfloor t/L \rfloor - 1}
    \frac{1}{H_n}
    \sum_{i \in J(\nu)}
    \E\Bigl[
      (\Delta_{H_n} X_i) (M_{L(\nu+1)} - M_{L\nu})
    \Big| \calF_{L\nu} \Bigr]
    \E\Bigl[
      u_n (e_{H_n, i+} - e_{H_n, i-})
    \Big| \calF_{L\nu} \Bigr]
    \notag
    \\ &=
    \sum_{\nu=0}^{\lfloor t/L \rfloor - 1}
    \frac{1}{H_n}
    \sum_{i \in J(\nu)}
    O_p(H_n \Delta_n) o_p(1)
    =
    o_p\left(\frac{t}{KH_n\Delta_n} \cdot \frac{1}{H_n} \cdot KH_n
    \cdot H_n\Delta_n \cdot 1\right)
    \notag
    \\ &=
    o_p(t).
  \end{align}
  This argument holds for any bounded $\calF_t$-martingale.

  \paragraph*{Lindeberg Condition}
  We need to prove that
  \begin{equation}\label{eq:SALE-clt-noisefree-lindeberg}
    \sum_{\nu=0}^{\lfloor t/L \rfloor - 1} \E\Bigl[U_\nu^2 1_{\{|U_\nu|
    > \epsilon\}} \Big| \calF_{L\nu}\Bigr] \convp 0,
    \quad \forall \epsilon > 0.
  \end{equation}
  It is well-known that a sufficient condition for
  Equation~\eqref{eq:SALE-clt-noisefree-lindeberg} is
  \begin{equation}
    \sum_{\nu=0}^{\lfloor t/L \rfloor - 1} \E\Bigl[|U_\nu|^{2+\delta}
    \Big| \calF_{L\nu}\Bigr] \convp 0,
  \end{equation}
  for any $\delta > 0$. We take $\delta=2$ for simplicity, and omit
  $U_{\nu, 2}$. The resulting expression is
  \begin{align}
    &
    \sum_{\nu=0}^{\lfloor t/L \rfloor - 1} \E[U_{\nu, 1}^{4} | \calF_{L\nu}]
    \notag
    \\ &=
    \sum_{\nu=0}^{\lfloor t/L \rfloor - 1}
    \frac{1}{H_n^4}
    \E\Biggl[
      \prod_{k=1}^4 \sum_{i_k \in J(\nu)}
      (\Delta_{H_n} X_{i_k}) u_n (e_{H_n, i_k+} - e_{H_n, i_k-})
    \Bigg| \calF_{L\nu} \Biggr]
    \notag
    \\ &=
    \sum_{\nu=0}^{\lfloor t/L \rfloor - 1}
    \frac{1}{H_n^4}
    \sum_{i_1, i_2, i_3, i_4 \in J(\nu)}
    \E\Biggl[
      \prod_{k=1}^4
      (\Delta_{H_n} X_{i_k})
    \Bigg| \calF_{L\nu} \Biggr]
    \E\Biggl[
      \prod_{k=1}^4
      u_n (e_{H_n, i_k+} - e_{H_n, i_k-})
    \Bigg| \calF_{L\nu} \Biggr]
    \notag
    \\ &=
    \sum_{\nu=0}^{\lfloor t/L \rfloor - 1}
    \frac{1}{H_n^4}
    O_p((KH_n)^2 H_n^2 \cdot H_n^2 \Delta_n^2 + (KH_n)^4 \cdot H_n^4 \Delta_n^4)
    \notag
    \\ &=
    O_p(t KH_n\Delta_n)
    = o_p(t).
  \end{align}

  \paragraph*{Result}
  By Jacod's stable CLT, there exist a standard Brownian motion
  $(W_{1,t})_{t\geq 0}$ independent of $\calF$ and a predictable
  process $(\zeta_{1,t})_{t\geq 0}$ such that, as $n\to\infty$,
  \begin{gather}
    u_n \bigl( \estx_T^{(H_n)} - \true_T \bigr)
    \convst
    \int_0^T \zeta_{1,t} \ud W_{1,t},
    \\
    \int_0^T \zeta_{1,t}^2 \ud t =
    \frac{u_n^2}{k_n} \left(\frac{8}{3} + \frac{4}{3H_n^2}\right)
    \int_0^T \sigma_s^6 \ud t +
    u_n^2 k_n H_n \Delta_n \frac{2}{3} \int_0^T \sigma_t^2 \ud
    \langle \sigma^2, \sigma^2 \rangle_t.
  \end{gather}
  Taking $H_n$ and $k_n$ as specified in
  Assumption~\ref{ass:SALE-para}\ref{ass:SALE-para-finite} and
  \ref{ass:SALE-para}\ref{ass:SALE-para-asym} respectively completes the proof.

  \subsection{Proof of Theorem~\ref{thm:SALE-clt-noisy}}
  \label{app:SALE-clt-noisy}

  By Theorem~\ref{thm:SALE-clt-noisefree} and
  Section~\ref{app:SALE-noise}, as $n\to\infty$, we have
  \begin{align}
    \sqrt{k_n \land (k_n H_n \Delta_n)^{-1}}
    \bigl( \estx_T^{(H_n)} - \true_T \bigr)
    & \convst
    \int_0^T \zeta_{1,t} \ud W_{1,t},
    \\
    \Delta_n^{3/2} H_n^2 k_n
    \bigl( \esty_T^{(H_n)} - \estx_T^{(H_n)} \bigr)
    & \convst
    \sqrt{\Phi T} Z,
  \end{align}
  where $Z$ is an $\normal(0, 1)$ random variable independent of
  $\calF$. Since the two sequences are asymptotically independent and
  each converges $\calF$-stably in law, they jointly converge
  $\calF$-stably in law to independent limits. Thus, applying the
  continuous mapping theorem completes the proof.

  \subsection{Proof of Proposition~\ref{prop:SALE-acov-noisefree}}
  \label{app:SALE-acov-noisefree}

  To start with, we define the adjustment factors $v_{p,q}^{(1)}$ and
  $v_{p,q}^{(2)}$.

  \begin{definition}\label{def:adjustment-factors}
    Let $p,q,H_p,H_q,k_p,k_q$ be positive integers such that $p \geq q$
    and $H_p \geq H_q$. Define the kernel function $\mathcal{K}_{p,q}:
    \Z \to \R$ that satisfies $\sum_{d\in\Z}\mathcal{K}_{p,q}(d) = 1$ by
    \begin{align}\label{eq:kernel-Kpq}
      \mathcal{K}_{p,q}(d) =
      \begin{cases}
        \frac{d+H_q}{H_pH_q}, & \text{if } -H_q < d < 0, \\
        \frac{1}{H_p}, & \text{if } 0 \leq d \leq H_p-H_q, \\
        \frac{-d+H_p}{H_pH_q}, & \text{if } H_p-H_q < d < H_p, \\
        0, & \text{otherwise},
      \end{cases}
    \end{align}
    For any $i,j\in \Z$, define the following sets and quantities:
    \begin{align}
      G_{p}(i) &= \{i+kH_p: k\in \Z\},
      \\
      G_{q}(j) &= \{j+kH_q: k\in \Z\},
      \\
      L_{p,q}(i,j) &= (i+2H_p) \lor (j+2H_q),
      \\
      R_{p,q}(i,j) &= (i+(k_p+2)H_p) \land (j+(k_q+2)H_q),
      \\
      P_{p,q}(i,j) &= (G_{p}(i) \cup G_{q}(j)) \cap [L_{p,q}(i,j),
      R_{p,q}(i,j)].
    \end{align}
    Let $x_1 < x_2 < \dotsb < x_m$ be the elements of the set
    $P_{p,q}(i,j)$ sorted in ascending order. Define the \emph{sum of
    squares of overlapping segments} (SSOS), the \emph{cubed
    overlapping length} (COL), and their normalized versions by
    \begin{align}
      \mathrm{SSOS}_{p,q}(i,j) &= \sum_{k=1}^{m-1} (x_{k+1} - x_k)^2, &
      \widetilde{\mathrm{SSOS}}_{p,q}(i,j) &=
      \frac{\mathrm{SSOS}_{p,q}(i,j)}{k_qH_q^2}.
      \label{eq:SSOS}
      \\
      \mathrm{COL}_{p,q}(i,j) &= \bigl[(R_{p,q}(i,j) -
      L_{p,q}(i,j))_+\bigr]^3, &
      \widetilde{\mathrm{COL}}_{p,q}(i,j) &=
      \frac{\mathrm{COL}_{p,q}(i,j)}{k_q^3H_q^3},
      \label{eq:COL}
    \end{align}
    Note that $\mathrm{SSOS}_{p,q}(i,j)$ and $\mathrm{COL}_{p,q}(i,j)$
    depend on $i,j$ only through the difference $j-i$. Finally, the
    \emph{adjustment factors} $v_{p,q}^{(1)}$ and $v_{p,q}^{(2)}$ are defined as
    \begin{align}
      v_{p,q}^{(1)} &= \sum_{d\in\Z} \mathcal{K}_{p,q}(d)
      \widetilde{\mathrm{SSOS}}_{p,q}(0,d), \\
      v_{p,q}^{(2)} &= \sum_{d\in\Z} \mathcal{K}_{p,q}(d)
      \widetilde{\mathrm{COL}}_{p,q}(0,d).
    \end{align}
  \end{definition}

  Proposition~\ref{prop:SALE-acov-noisefree} is a two-scale extension
  of Theorem~\ref{thm:SALE-clt-noisefree}. Similar as
  Section~\ref{app:SALE-clt-noisefree}, the big block technique is used to
  establish the conditional independence in the proof. Let $H_n^* \in
  \Z$ satisfies that $H_n^* \geq H_p \geq H_q$. Consier big blocks of
  size $L$, where $L / (KH_n^*\Delta_n) \to 1$ as $n\to\infty$, where
  $K=K_n$ is a sequence satisfying $K\to\infty$ and $KH_n^*\Delta_n \to
  0$ as $n\to\infty$. Similarly, the time span of the $\nu$th big block
  is denoted as $[L\nu, L\nu+L)$ with $\nu=0, 1, \dots, \lfloor T/L\rfloor - 1$.

  Recall that, with these notations, rewrite the error of the
  $H_p$-scaled SALE estimator as
  \begin{align}
    u_n \Bigl( \estx^{(H_p)}_T - \true_T \Bigr)
    =
    \sum_{\nu=0}^{\lfloor T/L \rfloor - 1} U_{H_p, \nu}
    + \text{(edge terms)},
  \end{align}
  where
  \begin{align}
    \label{eq:SALE-err-U}
    U_{H_p, \nu}
    &=
    U_{H_p, \nu, 1} + U_{H_p, \nu, 2},
    \\
    \label{eq:SALE-err-U1}
    U_{H_p, \nu, 1}
    &=
    \frac{u_n}{H_p} \sum_{i\in J_{H_p}(\nu)}
    \Bigl(
      V_{H_p, i}
      -
      (\Delta_{H_p} X_i) (\Delta_{H_p} \sigma_i^2)
    \Bigr), \\
    \label{eq:SALE-err-U2}
    U_{H_p, \nu, 2}
    &=
    \frac{u_n}{H_p} \sum_{i\in J_{H_p}(\nu)}
    \left(
      (\Delta_{H_p} X_i) (\Delta_{H_p} \sigma_i^2)
      -
      \int_{i\Delta_n}^{(i+H_p)\Delta_n} \ud \true_t
    \right).
  \end{align}
  Here, $J_{H_p}(\nu) = \{L\nu/\Delta_n+(k_p+1)H_p, \dotsc,
  L\nu/\Delta_n+(K+k_p+1)H_p-1\}$ is the set of indices of the main
  increments within the big block. Similar notations for $H_q$ can be
  established. Set $u_n = \sqrt{k_p \land (k_p H_p \Delta_n)^{-1}}$. We
  have the following lemma.

  \begin{lemma}\label{lem:SALE-increment-block-cov}
    As $n\to\infty$, we have
    \begin{align}
      &\E \Bigl(U_{H_p, \nu, 1} U_{H_q, \nu, 1} \Big| \calF_{L\nu} \Bigr)
      \notag
      \\ &=
      \frac{u_n^2}{k_p} \cdot 4 v_{p,q}^{(1)} \frac{H_q}{H_p} \cdot
      \sigma_\nu^6 L
      + u_n^2 k_pH_p\Delta_n \cdot \frac{2}{3} v_{p,q}^{(2)}
      \left(\frac{k_qH_q}{k_pH_p}\right)^2 \cdot \sigma_\nu^2
      \left.\frac{\ud\langle\sigma^2, \sigma^2\rangle_t}{\ud
      t}\right|_{t=L\nu} L
      \notag
      \\ &\qquad
      + o_p(u_n^2 k_p^{-1} L + u_n^2 k_pH_p\Delta_n L).
    \end{align}
  \end{lemma}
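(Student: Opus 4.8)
The plan is to start from the bilinear expansion of the covariance. Substituting the definitions of $U_{H_p,\nu,1}$ and $U_{H_q,\nu,1}$ and recalling from Equation~\eqref{eq:error-spot} the spot-volatility error terms $e_{H,i\pm}$, I would first write
\begin{align}
\E\bigl(U_{H_p,\nu,1} U_{H_q,\nu,1} \mid \filt_{L\nu}\bigr)
&= \frac{u_n^2}{H_p H_q}\sum_{i\in J_{H_p}(\nu)}\sum_{j\in J_{H_q}(\nu)}
\E\bigl[(\Delta_{H_p}X_i)(\Delta_{H_q}X_j)\mid\filt_{L\nu}\bigr] \\
&\qquad\times
\E\bigl[(e_{H_p,i+}-e_{H_p,i-})(e_{H_q,j+}-e_{H_q,j-})\mid\filt_{L\nu}\bigr].
\end{align}
The factorization into a product of two conditional expectations is the cross-scale analogue of the decoupling used in the proof of Theorem~\ref{thm:SALE-clt-noisefree}: the shifted-window construction places each price increment $\Delta_{H}X$ on a time interval disjoint from both spot-volatility estimation windows, so under the local-constant approximation the leading contributions of the increment factor and the error factor separate. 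As in Section~\ref{app:SALE-clt-noisefree}, the mixed $(+,-)$ combinations and the term $U_{\cdot,\nu,2}$ contribute only $o_p$, so only $e_{H_p,i+}e_{H_q,j+}+e_{H_p,i-}e_{H_q,j-}$ matters.

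Next I would evaluate the two factors separately. For the increment factor, local-constant approximation together with the It\^o isometry gives $\E[(\Delta_{H_p}X_i)(\Delta_{H_q}X_j)\mid\filt_{L\nu}]\approx \sigma_\nu^2\,\Delta_n\cdot\bigl|[i,i+H_p)\cap[j,j+H_q)\bigr|$, and the overlap length of the two increments as a function of the offset $d=j-i$ is exactly $H_pH_q\,\mathcal{K}_{p,q}(d)$, the triangular/trapezoidal kernel of Definition~\ref{def:adjustment-factors}. For the error factor I would establish a cross-scale generalization of Lemma~\ref{lem:err-spot-volatility}: decompose each $e_{H,i\pm}$ into a \emph{price-variation} part built from the martingale integrals $\xi$ over the sub-windows of the relevant grid and a \emph{volatility-variation} part built from the $\eta$-type integrals of $\sigma_r^2-\sigma^2$, discarding the negligible $\Delta_H\sigma^2$ remainders exactly as before. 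The price-variation cross term survives only where the two collections of stochastic integrals share support; by the It\^o isometry each overlapping pair contributes $\tfrac12\sigma_\nu^4\times(\text{segment length})^2$, and summing the squared lengths of the segments cut out by the \emph{union} of the two grids over the common support produces $\mathrm{SSOS}_{p,q}$, hence, after the normalization of Equation~\eqref{eq:SSOS}, the factor $v_{p,q}^{(1)}$. The volatility-variation cross term, by Lemma~\ref{lem:overlap-expectation}, contributes $\tfrac13\bigl(\ud\langle\sigma^2,\sigma^2\rangle/\ud t\bigr)\times(\text{overlap length})^3$, i.e. $\mathrm{COL}_{p,q}$, yielding $v_{p,q}^{(2)}$ after the normalization in Equation~\eqref{eq:COL}; cross terms between the two parts are $o_p$.

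Finally I would assemble the pieces. Since both conditional expectations depend on $(i,j)$ only through $d=j-i$, the double sum collapses to a sum over $d$ weighted by $\mathcal{K}_{p,q}(d)$, and the remaining count of terms multiplied by $H_p\Delta_n$ reconstitutes the block length $L$; this converts $\sum_d\mathcal{K}_{p,q}(d)\widetilde{\mathrm{SSOS}}_{p,q}(0,d)$ and $\sum_d\mathcal{K}_{p,q}(d)\widetilde{\mathrm{COL}}_{p,q}(0,d)$ into $v_{p,q}^{(1)}$ and $v_{p,q}^{(2)}$ precisely as in Definition~\ref{def:adjustment-factors}. Collecting the powers of $k_p,H_p,k_q,H_q$ from the prefactors and the normalizations then yields the constants $4\,v_{p,q}^{(1)}H_q/H_p$ and $\tfrac23\,v_{p,q}^{(2)}(k_qH_q/k_pH_p)^2$. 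The main obstacle is the cross-scale bookkeeping in the error factor: for each offset $d$ one must correctly identify which sub-windows of the two grids overlap and how the refined partition assembles into the SSOS and COL quantities, and then verify that the kernel-weighted averages are exactly $v_{p,q}^{(1)}$ and $v_{p,q}^{(2)}$. A useful sanity check is the degenerate case $H_p=H_q$, where $v_{p,p}^{(1)}\to\tfrac23$ and $v_{p,p}^{(2)}\to1$ (Proposition~\ref{prop:adj-factor-asym} with $c=1$) recover the coefficients $\tfrac83$ and $\tfrac23$ of the single-scale computation in Equation~\eqref{eq:SALE-clt-noisefree-qv-Uv1squared}.
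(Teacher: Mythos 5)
Your proposal follows essentially the same route as the paper's proof: the same bilinear expansion with factorization of the price-increment and spot-volatility-error factors, the same identification of the overlap kernel $\mathcal{K}_{p,q}$, the same $\xi$/$\eta$ decomposition feeding into Lemma~\ref{lem:overlap-expectation} to produce $\mathrm{SSOS}_{p,q}$ and $\mathrm{COL}_{p,q}$, and the same reassembly into $v_{p,q}^{(1)}$ and $v_{p,q}^{(2)}$. The sanity check against the single-scale coefficients $\tfrac{8}{3}$ and $\tfrac{2}{3}$ is a nice addition but the argument is otherwise the paper's own.
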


  \begin{proof}
    Note that
    \begin{align}
      U_{H_p, \nu, 1}
      &=
      \frac{u_n}{H_p} \sum_{i\in J_{H_p}(\nu)}
      (\Delta_{H_p} X_i) (e_{H_p, i+} - e_{H_p, i-}),
      \\
      e_{H_p, i+}
      &=
      \overline{\sigma}_+^2 (i, H_p, k_p) - \sigma_{i+H_p}^2,
      \\
      e_{H_p, i-}
      &=
      \overline{\sigma}_-^2 (i, H_p, k_p) - \sigma_{i}^2.
    \end{align}
    Similar expression follows for $U_{H_q, \nu, 2}$. We have
    \begin{align}
      &
      \E \Bigl(U_{H_p, \nu, 1} U_{H_q, \nu, 1} \Big| \calF_{L\nu} \Bigr)
      \notag
      \\
      &=
      \frac{u_n^2}{H_p H_q} \sum_{i \in J_{H_p}(\nu)}
      \sum_{j \in J_{H_q}(\nu)}
      \E \Bigl[
        (\Delta_{H_p} X_i) (\Delta_{H_q} X_j)
      \Big| \calF_{L\nu} \Bigr]
      \E \Bigl[
        (e_{H_p, i+} - e_{H_p, i-}) (e_{H_q, j+} - e_{H_q, j-})
      \Big| \calF_{L\nu} \Bigr].
    \end{align}
    For price-related terms, note that
    \begin{equation}
      \E \Bigl[
        (\Delta_{H_p} X_i) (\Delta_{H_q} X_j)
      \Big| \calF_{L\nu} \Bigr]
      =
      \sigma_\nu^2 s(i, j) + o_p(H_q \Delta_n),
    \end{equation}
    where $s(i,j) = H_pH_q\Delta_n \mathcal{K}_{p,q}(j-i)$. For
    spot-volatility-related terms, note that
    \begin{align}
      e_{H_p, i+}
      &=
      \tilde{e}_{H_p, i+} + \Delta_{H_p} \sigma_{i+H_p}^2,
      \\
      \tilde{e}_{H_p, i+}
      &=
      \frac{1}{k_p H_p \Delta_n}
      \left[
        \left(\sum_{k \in S_{H_p}^+} \xi_{H_p, i, k} \right)
        + \eta_{H_p, i}^+
      \right],
      \\
      \xi_{H_p, i, k}
      &=
      \int_{(i+kH_p)\Delta_n}^{(i+(k+1)H_p)\Delta_n} 2(X_r -
      X_{i+kH_p}) \ud X_r
      \quad \text{for} \quad
      k = 2, \dots, k_p+1,
      \\
      \eta_{H_p, i}^+
      &=
      \int_{(i+2H_p)\Delta_n}^{(i+(k_p+2)H_p)\Delta_n} (\sigma_r^2 -
      \sigma_{i+2H_p}^2) \ud r,
    \end{align}
    and by Lemma~\ref{lem:overlap-expectation}, we have
    \begin{gather}
      \E \Biggl[
        \Biggl(\sum_{k \in S_{H_p}^+} \xi_{H_p, i, k} \Biggr)
        \Biggl(\sum_{l \in S_{H_q}^+} \xi_{H_q, j, l} \Biggr)
      \Bigg| \calF_{L\nu} \Biggr]
      =
      2 \sigma_\nu^4 \mathrm{SSOS}_{p,q}(i,j) \Delta_n^2
      + o_p(k_q H_q^2 \Delta_n^2),
      \\
      \E \Bigl[
        \eta_{H_p, i}^+ \eta_{H_q, j}^+
      \Big| \calF_{L\nu} \Bigr]
      =
      \frac{1}{3}
      \left.\frac{\ud \langle \sigma^2, \sigma^2 \rangle_t}{\ud
      t}\right|_{t=L\nu}
      \mathrm{COL}_{p,q}(i,j) \Delta_n^3
      + o_p(k_q^3 H_q^3 \Delta_n^3).
    \end{gather}
    Other terms are negligible, similar as in
    Section~\ref{app:SALE-clt-noisefree}. It follows that
    \begin{align}
      &\E \Bigl(U_{H_p, \nu, 1} U_{H_q, \nu, 1} \Big| \calF_{L\nu} \Bigr)
      \notag
      \\ &\approx
      \frac{u_n^2}{H_p H_q} \frac{1}{k_pH_p\Delta_n}
      \frac{1}{k_qH_q\Delta_n} \sum_{i \in J_{H_p}(\nu)}
      \notag
      \\ & \quad
      \sum_{j \in J_{H_q}(\nu)}
      \E \Bigl[
        (\Delta_{H_p} X_i) (\Delta_{H_q} X_j)
      \Big| \calF_{L\nu} \Bigr]
      \E \Biggl[
        \Biggl(\sum_{k \in S_{H_p}^+} \xi_{H_p, i, k} \Biggr)
        \Biggl(\sum_{l \in S_{H_q}^+} \xi_{H_q, j, l} \Biggr)
        +
        \eta_{H_p, i}^+ \eta_{H_q, j}^+
      \Bigg| \calF_{L\nu} \Biggr]
      \times 2
      \notag
      \\ &\approx
      \frac{u_n^2}{k_pk_qH_p^2H_q^2\Delta_n^2}
      \sum_{i \in J_{H_p}(\nu)}
      k_q H_p H_q^3 \Delta_n^3 \cdot 4\sigma_\nu^6 v_{p,q}^{(1)} +
      k_q^3 H_p H_q^4 \Delta_n^4
      \cdot
      \frac{2}{3} \sigma_\nu^2 \left.\frac{\ud \langle \sigma^2,
      \sigma^2 \rangle_t}{\ud t}\right|_{t=L\nu} v_{p,q}^{(2)}
      \notag
      \\ &\approx
      \frac{u_n^2}{k_p} \cdot 4 v_{p,q}^{(1)} \frac{H_q}{H_p} \cdot
      \sigma_\nu^6 L
      + u_n^2 k_pH_p\Delta_n \cdot \frac{2}{3} v_{p,q}^{(2)}
      \left(\frac{k_qH_q}{k_pH_p}\right)^2 \cdot \sigma_\nu^2
      \left.\frac{\ud\langle\sigma^2, \sigma^2\rangle_t}{\ud
      t}\right|_{t=L\nu} L.
    \end{align}
    This completes the proof of Lemma~\ref{lem:SALE-increment-block-cov}.
  \end{proof}

  The rest of the proof is similar with
  Section~\ref{app:SALE-clt-noisefree}, except that a
  multi-dimensional stable CLT is used \citep[for example,
  Theorem~7.19 in][]{jacod2003LimitTheoremsStochastic}.

  \subsection{Proof of Proposition~\ref{prop:adj-factor-asym}}

  For $v_{p,q}^{(2)}$, note that a necessary condition of
  $\mathcal{K}_{p,q}(d) > 0$ is that $-H_q < d < H_p$, which implies
  that $\widetilde{\mathrm{COL}}_{p,q}(0,d) \to 1$ as $n\to\infty$, and
  thus
  \begin{align}
    v_{p,q}^{(2)} \to 1.
  \end{align}

  For $v_{p,q}^{(1)}$, note that, for a given $i$ and all $j$ that
  satisfies $i-H_q < j < i+H_p$, as $n\to\infty$, the SSOS is
  asymptotically equivalent to the sum of $N_{p,q}$ SSOS's within
  consecutive $H_p$-intervals:
  \begin{gather}
    \mathrm{SSOS}_{p,q}(i,j) \asymp \sum_{l=1}^{N_{p,q}} g_{p,q}(l;i,j),
    \\
    N_{p,q} \asymp \frac{k_qH_q}{H_p} \asymp \beta c^{1-b} (n/H_p)^b \to \infty.
  \end{gather}
  Here, $g_{p,q}(l;i,j)$ is the SSOS within the group $l$. Thus, we have
  \begin{align}
    v_{p,q}^{(1)}
    \asymp
    \frac{1}{k_q H_q^2}
    \sum_{d\in\Z} \mathcal{K}_{p,q}(d) \sum_{l=1}^{N_{p,q}} g_{p,q}(l;0,d)
    =
    \frac{1}{k_q H_q^2} \sum_{l=1}^{N_{p,q}}
    \sum_{d\in\Z} \mathcal{K}_{p,q}(d) g_{p,q}(l;0,d).
  \end{align}
  Note that $g_{p,q}(l;0,d) = g_{p,q}(l;0,d+H_q)$ for any valid $d$, as
  the pattern of grid points within the $l$th $H_p$-interval repeats
  itself when $j$ shifts $H_q$. Another important observation is that,
  through direct calculation, we have
  \begin{align}
    \sum_{d\in\Z}
    \mathcal{K}_{p,q}(d)
    1_{\{d\equiv d_0\text{ mod }H_q\}}
    =
    \frac{1}{H_q},
    \quad \text{for all }d_0 = 0, 1, \dotsc, H_q-1.
  \end{align}
  Therefore, we have
  \begin{align}
    v_{p,q}^{(1)}
    &\asymp
    \frac{1}{k_q H_q^2} \sum_{l=1}^{N_{p,q}}
    \sum_{d\in\Z} \mathcal{K}_{p,q}(d) g_{p,q}(l;0,d)
    \notag
    \\ &=
    \frac{1}{k_q H_q^2} \sum_{l=1}^{N_{p,q}}
    \sum_{d\in\Z} \mathcal{K}_{p,q}(d) g_{p,q}(l;0,d)
    \sum_{d_0=0}^{H_q-1} 1_{\{d\equiv d_0\text{ mod }H_q\}}
    \notag
    \\ &=
    \frac{1}{k_q H_q^2} \sum_{l=1}^{N_{p,q}}
    \sum_{d_0=0}^{H_q-1} g_{p,q}(l;0,d_0)
    \sum_{d\in\Z} \mathcal{K}_{p,q}(d) 1_{\{d\equiv d_0\text{ mod }H_q\}}
    \notag
    \\ &=
    \frac{1}{k_q H_q^2} \sum_{l=1}^{N_{p,q}}
    \frac{1}{H_q} \sum_{d_0=0}^{H_q-1} g_{p,q}(l;0,d_0).
  \end{align}
  On the other hand, by Lemma~\ref{lem:grid-integral}, as $H_q \to
  \infty$, we have
  \begin{align}
    \frac{1}{H_q} \sum_{d_0=0}^{H_q-1} g_{p,q}(l;0,d_0)
    &=
    H_p^2 \E_{\alpha_1\sim \unif\{1/H_q,\dotsc, H_q/H_q\}}
    \Bigl[S_2(\alpha_1, H_q/H_p)\Bigr]
    \notag
    \\ &\to
    H_p^2 \E_{\alpha_1\sim \unif(0,1]} \Bigl[S_2(\alpha_1, H_q/H_p)\Bigr]
    \\ &=
    H_pH_q \left(1 - \frac{H_q}{3H_p}\right),
  \end{align}
  It follows that, as $n\to\infty$, we have
  \begin{align}
    v_{p,q}^{(1)}
    \to
    \frac{1}{k_q H_q^2} \sum_{l=1}^{N_{p,q}} H_pH_q \left(1 -
    \frac{H_q}{3H_p}\right)
    \asymp
    \frac{1}{k_q H_q^2} \frac{k_qH_q}{H_p} H_p H_q \left(1 -
    \frac{H_q}{3H_p}\right)
    = 1 - \frac{\rho}{3}.
  \end{align}

  This completes the proof.

  \subsection{
    Proofs of Theorem~\ref{thm:MSLE-clt-noisefree} and \ref{thm:MSLE-clt-noisy}
  }

  We use the same block partition as
  Section~\ref{app:SALE-acov-noisefree}, where $H_n^*$ is specificed by
  Assumption~\ref{ass:MSLE-para-detail}\ref{ass:MSLE-para-detail-scale}.
  Rewrite the error of the MSLE estimator as
  \begin{align}
    u_n \Bigl( \estx^{\rm (MS)}_T - \true_T \Bigr)
    =
    \sum_{\nu=0}^{\lfloor T/L \rfloor - 1}
    \sum_{p=1}^{M_n}
    w_p U_{H_p, \nu}
    + \text{(edge terms)},
  \end{align}
  where $U_{H_p, \nu}$ and its decomposition are given by
  Equation~\eqref{eq:SALE-err-U}, \eqref{eq:SALE-err-U1} and
  \eqref{eq:SALE-err-U2}. Let $x_p = H_p / H_n^*$ for all $p = 1,
  \dots, M_n$. By Lemma~\ref{lem:SALE-increment-block-cov} and
  Proposition~\ref{prop:adj-factor-asym}, for any $1 \leq q \leq p \leq
  M_n$, as $n\to\infty$, we have
  \begin{align}
    &
    \E \Bigl(U_{H_p, \nu, 1} U_{H_q, \nu, 1} \Big| \calF_{L\nu} \Bigr)
    \notag
    \\
    &=
    u_n^2
    n^{-(1-a)b} \cdot
    \frac{4\alpha^b}{\beta} \cdot
    x_p^b \left(
      \frac{x_q}{x_p} - \frac{x_q^2}{3x_p^2}
    \right) \cdot
    \sigma_\nu^6 L
    \notag
    \\
    &\qquad
    +
    u_n^2
    n^{-(1-a)(1-b)} \cdot
    \frac{2\alpha^{1-b} \beta T}{3} \cdot
    \frac{x_q^{2(1-b)}}{x_p^{1-b}} \cdot
    \sigma_\nu^2 \left.
    \frac{\ud\langle\sigma^2, \sigma^2\rangle_t}{\ud t}
    \right|_{t=L\nu} L
    \notag
    \\
    &\qquad
    +
    o_p(u_n^2 n^{-(1-a)(b \land (1-b))}).
  \end{align}
  Note that
  Assumption~\ref{ass:MSLE-para-detail}\ref{ass:MSLE-para-detail-weight}
  implies that $\sum_{p=1}^{M_n} |w_p|$ is bounded as $n\to\infty$,
  which is enough for the rest of the proof to be completed similarly
  as in Section~\ref{app:SALE-clt-noisefree} and \ref{app:SALE-clt-noisy}.

  \section{Feasible Central Limit Theorems}

  To establish feasible central limit theorems for SALE and MSLE
  estimators, consistent estimators for the quantities in asymptotic
  variances are needed, including $\int_0^T \sigma_t^6 \ud t$,
  $\int_0^T \sigma_t^2 \ud \langle \sigma^2, \sigma^2 \rangle_t$,
  $\Phi$, and $\bm{F}$.

  In the noise-free case, the following estimators have been
  established in the literature:
  \begin{align}
    G_n^{(1)}
    & =
    \frac{1}{15\Delta_n^2} \sum_{i=0}^{n-1} ( \Delta X_i )^6
    \convp
    \int_0^T \sigma_t^6 \ud t,
    \label{eq:G1}
    \\
    G_n^{(2)}
    & =
    \frac{1}{k_n\Delta_n} \sum_{i=k_n+1}^{n-k_n-2}
    ( \Delta X_i )^{2}
    \Biggl(
      \frac{3}{2}
      \bigl(
        \widehat \delta(i, 1, k_n)
      \bigr)^2
      -
      \frac{1}{k_n^2 \Delta_n^2}
      \Biggl(
        \sum_{j=-k_n-1}^{-2}
        + \sum_{j=2}^{k_n+1}
      \Biggr)
      ( \Delta X_j )^4
    \Biggr)
    \notag
    \\
    & \convp
    \int_0^T \sigma_t^2 \ud \langle \sigma^2, \sigma^2 \rangle_t. \label{eq:G2}
  \end{align}
  For $G_n^{(1)}$, refer to Theorem~3.4.1 of
  \cite{jacod2012DiscretizationProcesses}, or Example~3.2 of
  \cite{podolskij2010UnderstandingLimitTheorems}. For $G_n^{(2)}$,
  refer to Theorem~8.11 of
  \cite{aitsahalia2014HighFrequencyFinancialEconometrics}, or
  Theorem~2.6 of \cite{vetter2015EstimationIntegratedVolatility}
  \citep[also see Equation~(5.4)
  in][]{aitsahalia2017EstimationContinuousDiscontinuous}.

  In the noisy case, we employ the pre-averaging technique to construct
  corresponding estimators. Consider a sequence of non-overlapping
  windows, each of length $A_n$, satisfying that $A_n \to \infty$ and
  $A_n \Delta_n \to \infty$ as $n\to \infty$ (for example, take $A_n =
  \lfloor n^{1/2} \rfloor$). The pre-averaged pseudo-observations are defined by
  $\overline X_i = A_n^{-1} \sum_{j=iA_n}^{(i+1)A_n}$ for all $i=0, 1,
  \dots, \overline n$, where $\overline n = \lfloor (n+1) / A_n \rfloor
  - 1$. Let $\overline \Delta_n = A_n \Delta_n$, and define $\overline
  k_n$, $\widehat{\overline \delta}$ similarly. Thus, we have
  \begin{align}
    \widehat G_n^{(1)}
    & =
    \frac{9}{40 \overline\Delta_n}
    \sum_{i=0}^{\overline n - 1} ( \overline\Delta X_i )^6
    \convp
    \int_0^T \sigma_t^6 \ud t,
    \\
    \widehat G_n^{(2)}
    & =
    \frac{27}{8\overline k_n \overline \Delta_n} \sum_{i \in \overline{I}}
    ( \Delta \overline X_i )^2
    \Biggl(
      \frac{3}{2}
      \bigl(
        \widehat{\overline \delta}(i, 1, \overline k_n)
      \bigr)^2
      -
      \frac{1}{\overline k_n^2 \overline \Delta_n^2}
      \Biggl(
        \sum_{j=-\overline k_n-1}^{-2}
        + \sum_{j=2}^{\overline k_n+1}
      \Biggr)
      ( \Delta \overline X_j )^4
    \Biggr)
    \notag
    \\
    & \convp
    \int_0^T \sigma_t^2 \ud \langle \sigma^2, \sigma^2 \rangle_t.
  \end{align}

  The quantities $\Phi$ and $\bm F$ depend on the autocovariances of
  the noise variables, which can be estimated with existing methods,
  including \citet{jacod2017StatisticalPropertiesMicrostructure} and
  \citet{li2022ReMeDIMicrostructureNoise}. We employ the
  \emph{Realized moMents of Disjoint Increments} (ReMeDI)
  method~\citep{li2022ReMeDIMicrostructureNoise} to estimate the
  required moments. For any fixed integer $l$, suppose there is a
  sequence of positive numbers $k_n'$ such that $k_n' \to \infty$ and
  $k_n' \Delta_n \to 0$ as $n \to \infty$, we can establish that
  \begin{align}
    &\frac{1}{n} \sum_{i=k_n'}^{n-m-k_n'} (Y_{i+l} - Y_{i+l+k_n'})
    (Y_{i} - Y_{i-k_n'})
    \convp
    \E[\eps_{i} \eps_{i+l}], \\
    &\frac{1}{n} \sum_{i=2k_n'}^{n-m-k_n'} (Y_{i+l} - Y_{i+l+k_n'})
    (Y_{i+l} - Y_{i+l-k_n'}) (Y_{i} - Y_{i-2k_n'})
    \convp
    \E[\eps_{i} \eps_{i+l}^2], \\
    &\frac{1}{n} \sum_{i=3k_n'}^{n-m-k_n'} (Y_{i+l} - Y_{i+l+k_n'})
    (Y_{i+l} - Y_{i+l-k_n'}) (Y_{i} - Y_{i-2k_n'}) (Y_{i} - Y_{i-3k_n'})
    \convp
    \E[\eps_{i}^2 \eps_{i+l}^2].
  \end{align}
  The estimators for the noise moments $\widehat{\nu}_2$,
  $\widehat{\nu}_4$ and the generalized acfs $\widehat{\rho}_2(l)$,
  $\widehat{\rho}_3(l)$, $\widehat{\rho}_4(l)$ can be constructed
  accordingly. Thus, by plugging these estimators into
  Equation~\eqref{eq:Phi}, the following estimator for $\Phi$ can be derived:
  \begin{equation}\label{eq:Phi-hat}
    \widehat{\Phi}_n = 8 \widehat{\nu}_2 (\widehat{\nu}_4 -
    \widehat{\nu}_2^2) \sum_{l=-q}^{q} \widehat{\rho}_2(l)\widehat{\rho}_4(l)
    + 24 \widehat{\nu}_2^3 \sum_{l=-q}^{q} \widehat{\rho}_2^3(l)
    + 8 \widehat{\nu}_2 (\widehat{\nu}_4 - \widehat{\nu}_2^2)
    \sum_{l=-q}^{q} \widehat{\rho}_3(l)\widehat{\rho}_3(-l)
    \convp \Phi.
  \end{equation}
  The construction for $\bm F$ follows similarly.

  Given these consistent estimators of asymptotic variances, the
  feasible CLTs follow from the property of stable convergence in law:
  given $Y_n \convst VU$ and $V_n \convp V$, we have $Y_n / V_n \convst U$.
  For example, the feasible CLT for SALE estimators is as follows.

  \begin{theorem}\label{thm:SALE-clt-feasible}
    \begin{enumerate}[label=(\arabic*)]
      \item [] %
      \item Under Assumption~\ref{ass:process} and
        \ref{ass:SALE-para}\ref{ass:SALE-para-finite}, let $u_n =
        \sqrt{k_n \land (k_n H \Delta_n)^{-1}}$. As $n \to \infty$, we have
        \begin{gather}
          \frac{u_n}{\sqrt{V_{1,n}}} \bigl(
            \estx_T^{(H)} - \true_T
          \bigr)
          \convst
          \normal(0, 1),
          \\
          \text{where} \quad
          V_{1,n} =
          \frac{u_n^2}{k_n} \left(\frac{8}{3} + \frac{4}{3H^2}\right)
          G_n^{(1)} +
          u_n^2 k_n H \Delta_n \frac{2}{3} G_n^{(2)},
        \end{gather}
        and the limiting distribution is independent of $\calF$.
      \item Under Assumption~\ref{ass:process} and
        \ref{ass:SALE-para}\ref{ass:SALE-para-asym}, as $n\to\infty$, we have
        \begin{gather}
          \frac{n^{\frac{1}{2}(1-a)(b\land (1-b))}}{\sqrt{V_{1,n}}}
          \bigl(
            \estx^{(H_n)}_T - \true_T
          \bigr)
          \convst
          \normal(0, 1),
          \\
          \text{where} \quad
          V_{1,n} =
          \frac{8 \alpha^b}{3 \beta} G_n^{(1)} \cdot 1_{(0, 1/2]}(b)
          +
          \frac{2 \alpha^{1-b} \beta T}{3} G_n^{(2)} \cdot 1_{[1/2, 1)}(b),
        \end{gather}
        and the limiting distribution is independent of $\calF$.
      \item Under Assumption~\ref{ass:process},
        \ref{ass:noise}\ref{ass:noise-dep} and
        \ref{ass:SALE-para}\ref{ass:SALE-para-asym}, suppose that $H_n >
        2q$ and $4a + 2b - 2ab > 3$. Let $r = [(1-a)(b\land (1-b))] \land
        [4a+2b-2ab-3]$. As $n\to\infty$, we have
        \begin{align}
          & \qquad
          \frac{n^{\frac{1}{2}r}}{\sqrt{V_{2,n}}}
          \bigl(
            \esty^{(H_n)}_T - \true_T
          \bigr)
          \convst
          \normal(0, 1),
          \\
          \text{where }
          V_{2,n} &=
          \frac{8 \alpha^{b}}{3 \beta} \widehat{G}_n^{(1)} \cdot
          1_{\{(1-a)b\}}(r)
          +
          \frac{2 \alpha^{1-b} \beta T}{3} \widehat{G}_n^{(2)} \cdot
          1_{\{(1-a)(1-b)\}}(r)
          \notag
          \\ & \qquad
          + \frac{1}{\alpha^{4-2b} \beta^2 T^3} \int_0^T
          \widehat{\Phi}_n \ud t \cdot 1_{\{4a+2b-2ab-3\}}(r),
        \end{align}
        and the limiting distribution is independent of $\calF$.
    \end{enumerate}
  \end{theorem}

  A feasible CLT for MSLE estimators follows similarly.

  \section{Practical Aspects}

  \subsection{A Correction for SALE Estimators}

  A small noise correction of SALE estimators has been given in
  Equation~\eqref{eq:SALE-noise-variance-corrected}.
  Figure~\ref{fig:avar-sources-scale} compares this corrected version
  with the original (dominant) version and the empirical result.

  \begin{figure}[!htp]
    \centering
    \includegraphics[width=0.6\textwidth]{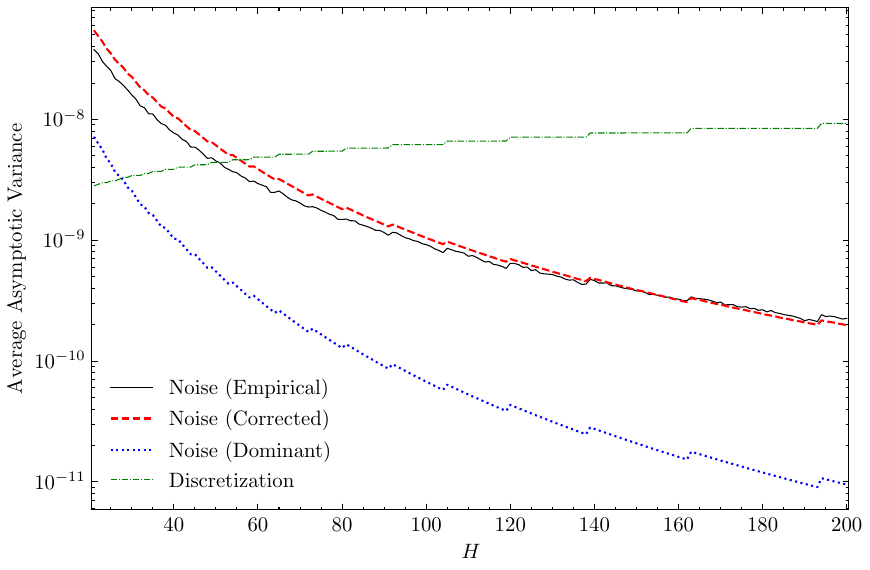}
    \caption{
      Asymptotic variance of SALE estimators. The same simulated paths
      are used as Figure~\ref{fig:avar-noise}, and the noise level is
      $\varsigma=6\times 10^{-4}$.
    }
    \label{fig:avar-sources-scale}
  \end{figure}

  Unlike pure-noise terms, these correction terms are generally not
  independent across scales. Thus, for MSLE estimators, we can use the
  following upper bound for variances due to noise:
  \begin{align}\label{eq:MSLE-noise-variance-upper-bound}
    \var\bigl(
      \sum_{p=1}^{M_n} w_p \esty_T^{(H_p)}
      \big| \calF
    \bigr)
    \leq
    \Biggl(
      \sum_{p=1}^{M_n}
      |w_p|
      \sqrt{\var\bigl( \esty_T^{(H_p)} \big| \calF \bigr)}
    \Biggr)^2.
  \end{align}
  Equation~\eqref{eq:MSLE-noise-variance-upper-bound} is used to
  construct confidence intervals in Figure~\ref{fig:signature} and the
  empirical study.

  \subsection{Derivation of Approximate Weights}

  The following lemma is useful in the derivation of approximate weights.

  \begin{lemma}\label{lem:tridiagonal-inverse}
    Let $n$ be a positive integer, and let $m \in \R$. Define the matrix
    $\bm{B} \in \R^{n\times n}$ by
    \begin{equation}
      B_{p,q}
      =
      f\bigl((m+p) \lor (m+q)\bigr)
      \frac{g\bigl((m+p) \land (m+q)\bigr)}{g\bigl((m+p) \lor
      (m+q)\bigr)}, \quad
      \text{for all } p, q \in \{1, \dotsc, n\},
    \end{equation}
    where $f, g$ are well-defined functions such that all expressions
    in this lemma are valid. Let
    \begin{equation}
      h(p, q) = f(m+p) - \frac{g^2(m+p)}{g^2(m+q)} f(m+q).
    \end{equation}
    The inverse $\bm{A} = \bm{B}^{-1}$ is tridiagonal, with
    superdiagonal and subdiagonal elements given by
    \begin{equation}
      A_{p,p+1} = A_{p+1,p}
      =
      -\frac{g(m+p)}{g(m+p+1)} \frac{1}{h(p,p+1)},
      \quad p = 1, \dotsc, n-1,
    \end{equation}
    and main diagonal elements given by
    \begin{align}
      A_{1,1} &= \frac{1}{h(1,2)}, \\
      A_{p,p} &= \frac{h(p-1, p+1)}{h(p-1, p)h(p, p+1)}, \quad p = 2,
      \dotsc, n-1, \\
      A_{n,n} &= \frac{f(m+n-1)}{f(m+n)}\frac{1}{h(n-1,n)}.
    \end{align}
  \end{lemma}

  For simplicity, denote $x_p = H_p / H_n^*$ for any $p = 1, \dots,
  M_n$. Under the conditions of Definition~\ref{def:approx-weights},
  the asymptotical covariance between scales is given by
  \begin{align}\label{eq:Sigma-disc}
    \Sigma_{p,q} = (s_1 + s_2)
    (n/H_n^*)^{-1/2} (x_p \lor x_q)^{1/2}
    \left[\frac{x_p \land x_q}{x_p \lor x_q} - \frac{s_1}{3(s_1 + s_2)}
    \left(\frac{x_p \land x_q}{x_p \lor x_q}\right)^2\right].
  \end{align}
  Note that the second term within the bracket has a coefficient of
  $s_1 / (3(s_1+s_2)) \in (0, 1/3)$, which depends on the ratio of
  $s_1$ to $s_2$, and the scale ratio term satisfies that $(x_p \land
  x_q) / (x_p \lor x_q) \leq 1$. To formulate the approximate weights
  without relying on estimates of $s_1$ and $s_2$, we only retain the
  first term in the bracket, which contributes more to the covariance.
  Thus, the approximate covariance matrix is given by
  \begin{align}
    \widetilde{\Sigma}_{p,q}
    =
    (s_1 + s_2) (n / H_n^*)^{-1/2}
    (x_p \lor x_q)^{1/2}
    \frac{x_p \land x_q}{x_p \lor x_q}
    =
    (s_1 + s_2) n^{-1/2}
    B_{p,q}.
  \end{align}
  Let $\bm\omega = \bm B^{-1} \bm 1_{M_n}$. Applying
  Lemma~\ref{lem:tridiagonal-inverse} with $f(x) \leftarrow \sqrt{x}$,
  $g(x) \leftarrow x$, $m \leftarrow m_n$, and $n \leftarrow M_n$, for
  $p = 2, \dots, M_n - 1$, we have
  \begin{align}
    \omega_1 &= \left(\frac{m_n+2}{m_n+1}\right)^{1/2}
    \frac{1}{(m_n+2)^{3/2} - (m_n+1)^{3/2}} \\
    \omega_p &= (m_n+p)^{1/2}
    \left(\frac{(m_n+p)^{1/2}-(m_n+p-1)^{1/2}}{(m_n+p)^{3/2}-(m_n+p-1)^{3/2}}
      -
    \frac{(m_n+p+1)^{1/2}-(m_n+p)^{1/2}}{(m_n+p+1)^{3/2}-(m_n+p)^{3/2}}\right),
    \\
    \omega_{M_n} &= (m_n+M_n)^{1/2}
    \frac{(m_n+M_n)^{1/2}-(m_n+M_n-1)^{1/2}}{(m_n+M_n)^{3/2}-(m_n+M_n-1)^{3/2}}.
  \end{align}
  For simplicity, taking the limit $m_n\to \infty$, for $p = 2, \dots,
  M_n - 1$, we have
  \begin{align}
    \omega_1 \to \frac{2}{3(m_n+1)^{1/2}}, \quad
    \omega_p \to \frac{1}{3(m_n+p)^{3/2}}, \quad
    \omega_{M_n} \to \frac{1}{3(m_n+M_n)^{1/2}}.
  \end{align}
  Consequently, by Equation~\eqref{eq:weight-optimization-solution},
  the approximate weights $\bm{\widetilde w}$ in
  Definition~\ref{def:approx-weights} are obtained.

  \subsection{Derivation of Equation~\eqref{eq:fredholm}}

  Recall that $\gamma = s_1 / (3(s_1 + s_2))$. With the simplifications
  in Section~\ref{sec:practical-weight-noisy}, the total asymptotic
  covariances of SALE estimators are given by
  \begin{align}
    \Sigma_{p,q}
    & =
    \Sigma_{p,q}^{(1)} + \Sigma_{p,q}^{(2)},
    \\
    \text{where} \quad
    \Sigma_{p,q}^{(1)}
    & =
    (s_1 + s_2)
    (n/H_n^*)^{-1/2}
    (x_p \lor x_q)^{1/2}
    \left[
      \frac{x_p \land x_q}{x_p \lor x_q} -
      \gamma \left(\frac{x_p \land x_q}{x_p \lor x_q}\right)^2
    \right],
    \\
    \text{and} \quad
    \Sigma_{p,q}^{(2)}
    & =
    \alpha^{9/2} s_3 n^2 (H_n^*)^{-3} x_p^{-3} \delta_{p,q}.
  \end{align}
  Thus, the asymptotic variance of the MSLE estimator is
  \begin{align}
    \sum_{p=1}^{M_n} \sum_{q=1}^{M_n}
    w_p w_q \Sigma_{p,q}
    =
    \sum_{p=1}^{M_n} \sum_{q=1}^{M_n}
    w_p w_q \Sigma_{p,q}^{(1)}
    +
    \sum_{p=1}^{M_n} \sum_{q=1}^{M_n}
    w_p w_q \Sigma_{p,q}^{(2)},
  \end{align}
  where
  \begin{align}
    &
    \sum_{p=1}^{M_n} \sum_{q=1}^{M_n}
    w_p w_q \Sigma_{p,q}^{(1)}
    \notag
    \\
    & =
    (s_1 + s_2) \left( \frac{n}{H_n^*} \right)^{-1/2}
    \sum_{p=1}^{M_n} \sum_{q=1}^{M_n}
    \left( \frac1{H_n^*} \right)^2 \phi(x_p) \phi(x_q)
    (x_p \lor x_q)^{1/2}
    \left[
      \frac{x_p \land x_q}{x_p \lor x_q} -
      \gamma \left(\frac{x_p \land x_q}{x_p \lor x_q}\right)^2
    \right]
    \notag
    \\
    & \to
    (s_1 + s_2) \left( \frac{n}{H_n^*} \right)^{-1/2}
    \int_c^1 \int_c^1
    \ud x \ud y
    \phi(x) \phi(y)
    (x \lor y)^{1/2}
    \left[
      \frac{x \land y}{x \lor y} -
      \gamma \left(\frac{x \land y}{x \lor y}\right)^2
    \right]
    \notag
    \\
    & =
    (s_1 + s_2) \alpha^{1/2} n^{-2/9}
    \int_c^1 \int_c^1
    \phi(x) \phi(y)
    (x \lor y)^{1/2}
    \left[
      \frac{x \land y}{x \lor y} -
      \gamma \left(\frac{x \land y}{x \lor y}\right)^2
    \right]
    \ud x \ud y,
  \end{align}
  and
  \begin{align}
    &
    \sum_{p=1}^{M_n} \sum_{q=1}^{M_n}
    w_p w_q \Sigma_{p,q}^{(2)}
    \notag
    \\
    & =
    \alpha^{9/2} s_3 n^2 (H_n^*)^{-3}
    \sum_{p=1}^{M_n} \sum_{q=1}^{M_n}
    \left( \frac1{H_n^*} \right)^2 \phi(x_p) \phi(x_q)
    x_p^{-3} \delta_{p,q}
    \notag
    \\
    & =
    \alpha^{9/2} s_3 n^2 (H_n^*)^{-4}
    \sum_{p=1}^{M_n}
    \left( \frac1{H_n^*} \right) \phi^2(x_p)
    x_p^{-3}
    \notag
    \\
    & \to
    \alpha^{9/2} s_3 n^2 (H_n^*)^{-4}
    \int_c^1
    \ud x
    \phi^2(x) x^{-3}
    \notag
    \\
    & =
    s_3 \alpha^{1/2} n^{-2/9}
    \int_c^1
    \phi^2(x) x^{-3}
    \ud x.
  \end{align}
  It follows that
  \begin{align}
    \label{eq:MSLE-total-variance-functional}
    \sum_{p=1}^{M_n} \sum_{q=1}^{M_n}
    w_p w_q \Sigma_{p,q}
    =
    \alpha^{1/2} n^{-2/9}
    S_0 [\phi],
  \end{align}
  where $S_0 [\phi]$ is a functional of $\phi(x)$:
  \begin{align}
    S_0 [\phi]
    & =
    (s_1 + s_1)
    \int_c^1 \int_c^1
    \phi(x) \phi(y)
    (x \lor y)^{1/2}
    \left[
      \frac{x \land y}{x \lor y} -
      \gamma \left(\frac{x \land y}{x \lor y}\right)^2
    \right]
    \ud x \ud y
    \notag
    \\
    & \qquad
    + s_3
    \int_c^1
    \phi^2(x) x^{-3}
    \ud x
    - 2k
    \left(
      \int_c^1 \phi(x) \ud x - 1
    \right).
  \end{align}
  Here, $k\in \R$ is a Lagrange multiplier that enforces the constraint
  on $\phi(x)$ in
  Assumption~\ref{ass:MSLE-para-detail}\ref{ass:MSLE-para-detail-weight}.
  Note that Equation~\eqref{eq:MSLE-total-variance-functional} again
  confirms the rate of convergence $n^{-1/9}$ of the MSLE estimator,
  as the functional $S_0 [\phi]$ has a finite value. Recall that
  $\varphi(x) = x^{-3/2} \phi(x)$ and $\lambda = -(s_1 + s_2) / s_3$.
  Scale $S_0 [\phi]$ with $s_3$ to obtain that
  \begin{align}
    S [\varphi]
    &=
    s_3^{-1} S_0 [\phi]
    \notag
    \\
    &=
    - \int_c^1 \int_c^1
    \varphi(x) \varphi(y)
    K(x, y)
    \ud x \ud y
    + \int_0^1
    \varphi(x) \bigl(
      \varphi(x) - 2kx^{3/2}
    \bigr)
    \ud x
    + \mathrm{const.},
  \end{align}
  where $K(x, y)$ is defined in Equation~\eqref{eq:fredholm-kernel}. A
  necessary condition for the extremum of $S[\varphi]$ is that,
  $\varphi(x)$ is a solution to a Fredholm integral equation of the
  second kind \citep[for example,
  see][]{lao2021FundamentalTheoriesTheir}, which is given by
  Equation~\eqref{eq:fredholm}.

  \section{Simulation Study}

  \subsection{Moments of Dependent Noise}

  Since higher order autocovariances of noise variables are used in our
  method, we further describe these quantities. Recall that, both MA(2)
  and AR(1) processes driven by Gaussian white noise are used to
  generate dependent noise. Specifically, AR(1) process does not
  satisfy Assumption~\ref{ass:noise}, but given its fast decaying speed
  and wide appearance in literature, we include the case to demonstrate
  the robustness of our methods.
  \begin{enumerate}
    \item For the MA(2) process, we set
      \begin{align}
        \eps_{i} = e_{i} + \theta_{1}e_{i-1} + \theta_{2}e_{i-2}, \quad
        e_i \overset{\text{iid}}{\sim} \normal(0,
        \varsigma^2(1+\theta_1^2+\theta_2^2)^{-1}).
      \end{align}
      The second and fourth moments of the process are
      \begin{align}
        \nu_2 = \varsigma^2, \quad
        \nu_4 =
        2\varsigma^4(1+2\theta_1^2+2\theta_2^2+\theta_1^4+\theta_2^4+2\theta_1^2\theta_2^2),
      \end{align}
      and the generalized ACFs defined in Equation~\eqref{eq:general-acf} are
      \begin{align}
        \rho_2(l) &= 1_{\{l=0\}} +
        \frac{\theta_{1}+\theta_{1}\theta_{2}}{1+\theta_{1}^{2}+\theta_{2}^{2}}1_{\{|l|=1\}}
        + \frac{\theta_{2}}{1+\theta_{1}^{2}+\theta_{2}^{2}}1_{\{|l|=2\}},
        \\
        \rho_3(l) &= 0,
        \\
        \rho_4(l) &= 1_{\{l=0\}} +
        \frac{\theta_{1}^{2}\theta_{2}^{2}+\theta_{1}^{2}+2\theta_{1}^{2}\theta_{2}}
        {1+2\theta_1^2+2\theta_2^2+\theta_1^4+\theta_2^4+2\theta_1^2\theta_2^2}
        1_{\{|l|=1\}}
        \notag
        \\
        &\qquad \qquad
        +
        \frac{\theta_2^2}{1+2\theta_1^2+2\theta_2^2+\theta_1^4+\theta_2^4+2\theta_1^2\theta_2^2}
        1_{\{|l|=2\}}.
      \end{align}
    \item For the AR(1) process, we set
      \begin{align}
        \eps_i = \phi \eps_{i-1} + e_i, \quad e_i
        \overset{\text{iid}}{\sim} \normal(0, \varsigma^2\sqrt{1-\phi^2}),
      \end{align}
      where $\phi \in (-1, 1)$ is the autoregressive coefficient. This
      is a common setting for modeling the stationary component of
      dependent noise
      \citep[see][]{jacod2017StatisticalPropertiesMicrostructure,
      li2022ReMeDIMicrostructureNoise}. The moments of the process are
      \begin{align}
        \nu_2 = \varsigma^2, \quad
        \nu_4 = 3\varsigma^4,
      \end{align}
      and the generalized ACFs are
      \begin{align}
        \rho_2(l) = \phi^{|l|}, \quad
        \rho_3(l) = 0, \quad
        \rho_4(l) = \phi^{2|l|}.
      \end{align}
  \end{enumerate}

  \subsection{Asymptotic Normality}

  The Q-Q plots of the infeasible and feasible standardized estimation
  errors are presented in
  Figures~\ref{fig:infeasible-standardized-errors} and
  \ref{fig:feasible-standardized-errors}, respectively.

  \begin{figure}[!p]
    \centering
    \subfloat[Noise-free
    LE]{\includegraphics[width=0.33\textwidth]{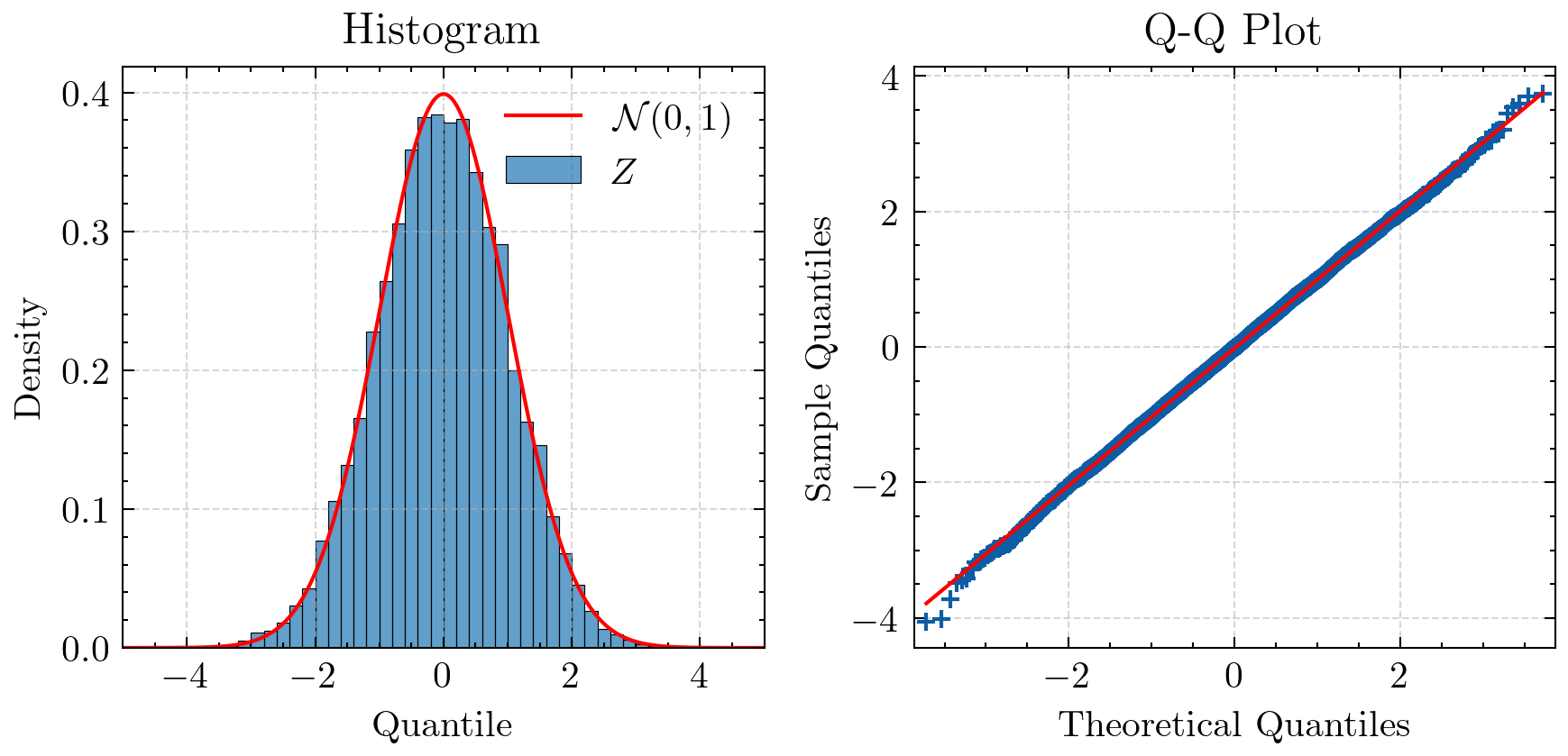}}
    \hfill
    \subfloat[Noise-free
    SALE]{\includegraphics[width=0.33\textwidth]{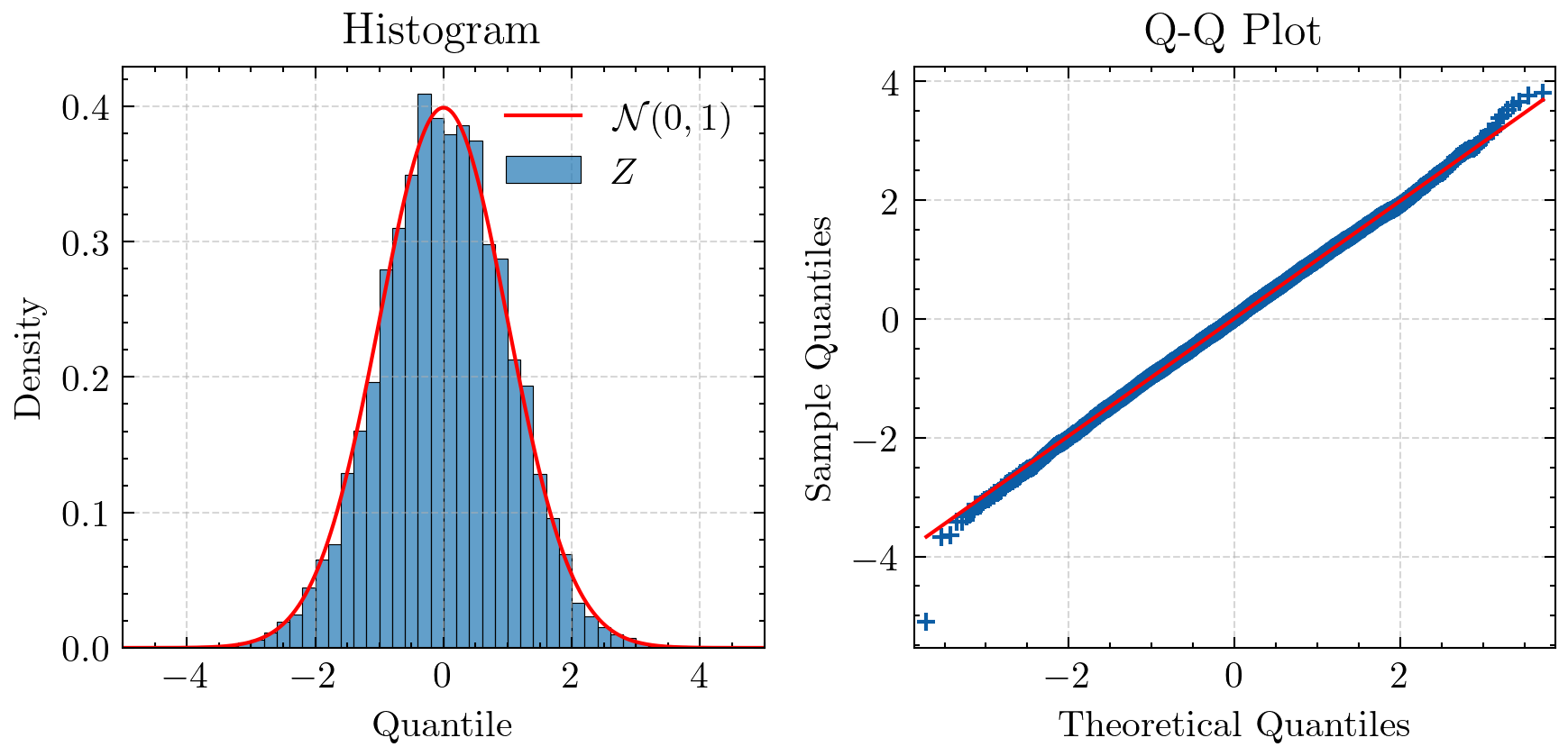}}
    \hfill
    \subfloat[Noise-free
    MSLE]{\includegraphics[width=0.33\textwidth]{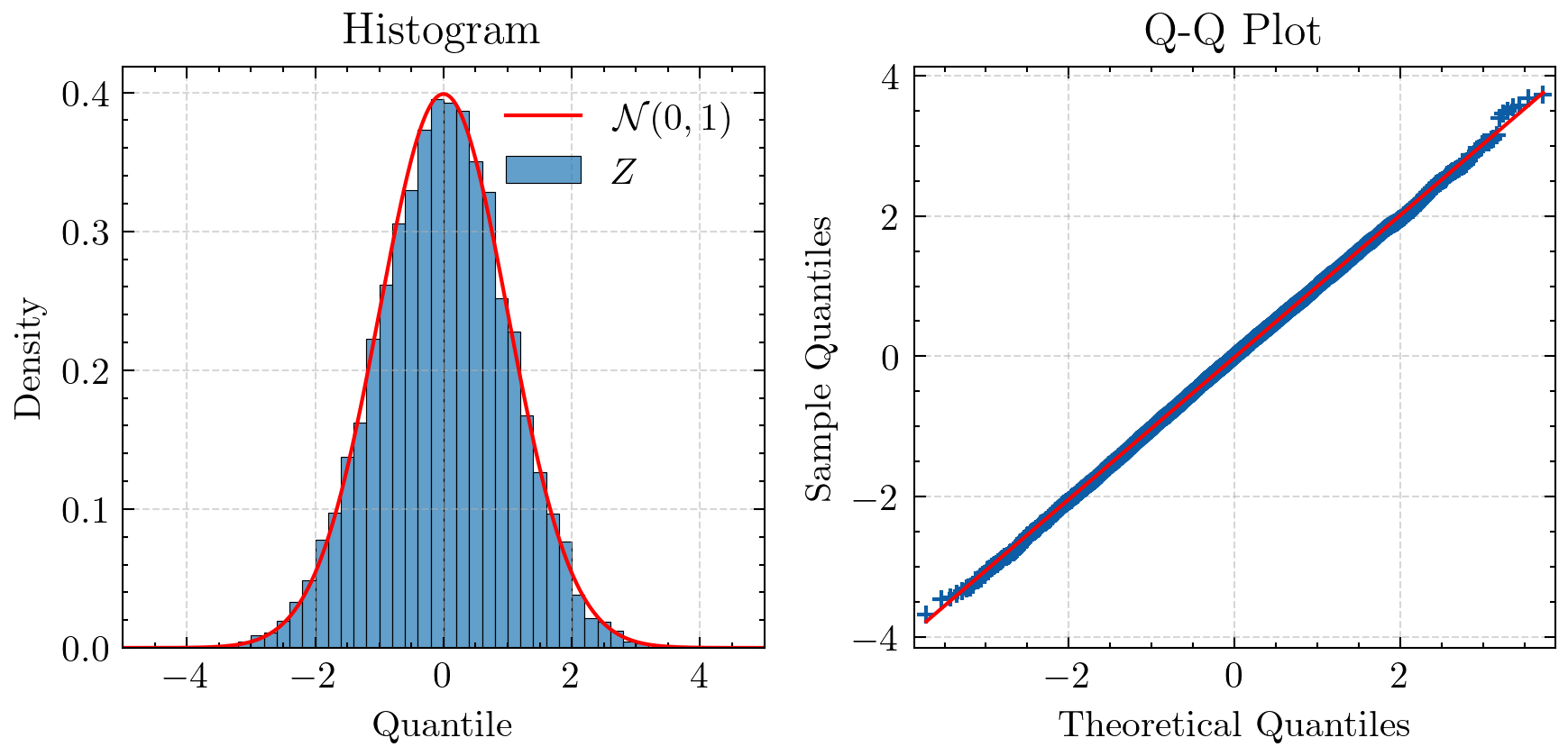}}
    \\
    \subfloat[Normal
    LE]{\includegraphics[width=0.33\textwidth]{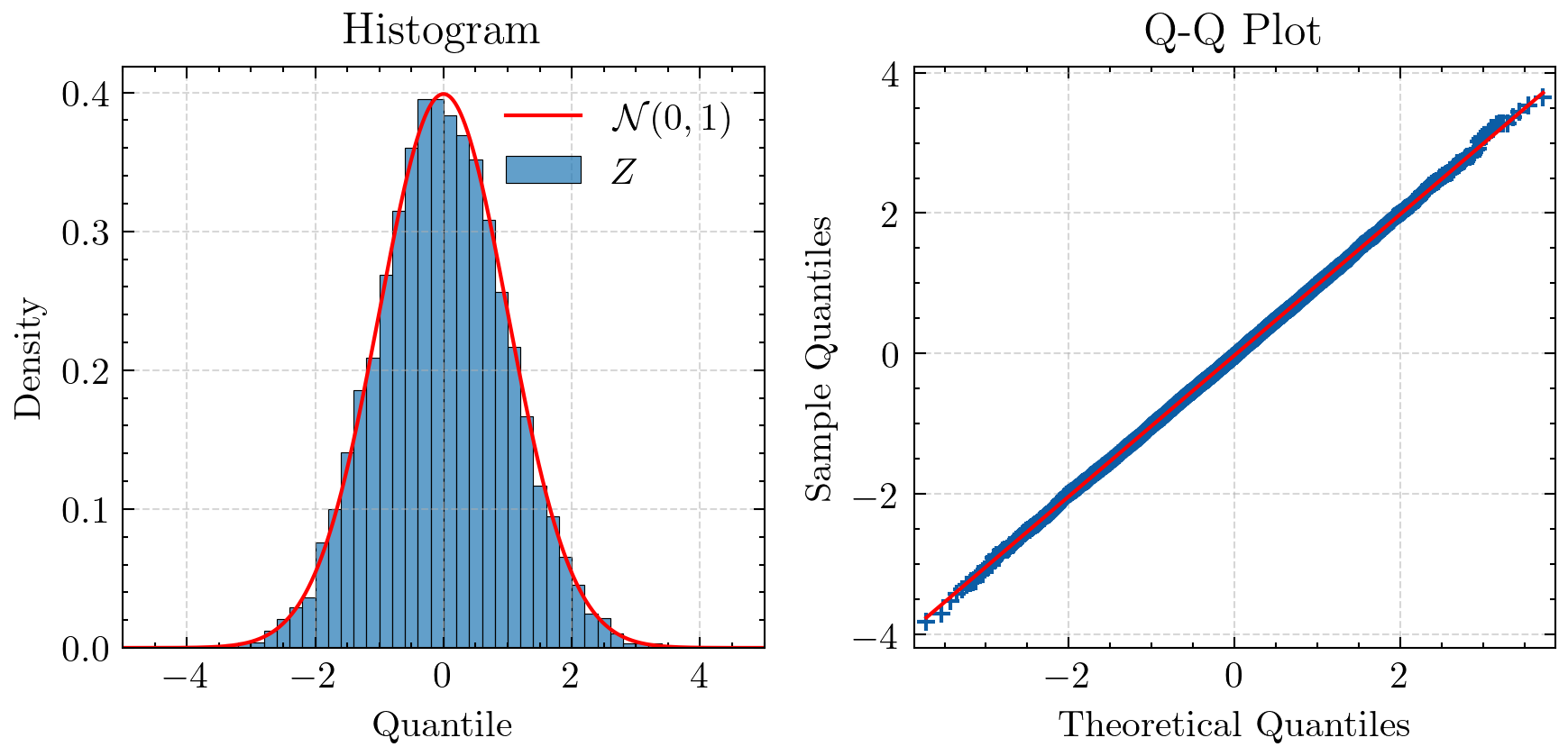}}
    \hfill
    \subfloat[Normal
    SALE]{\includegraphics[width=0.33\textwidth]{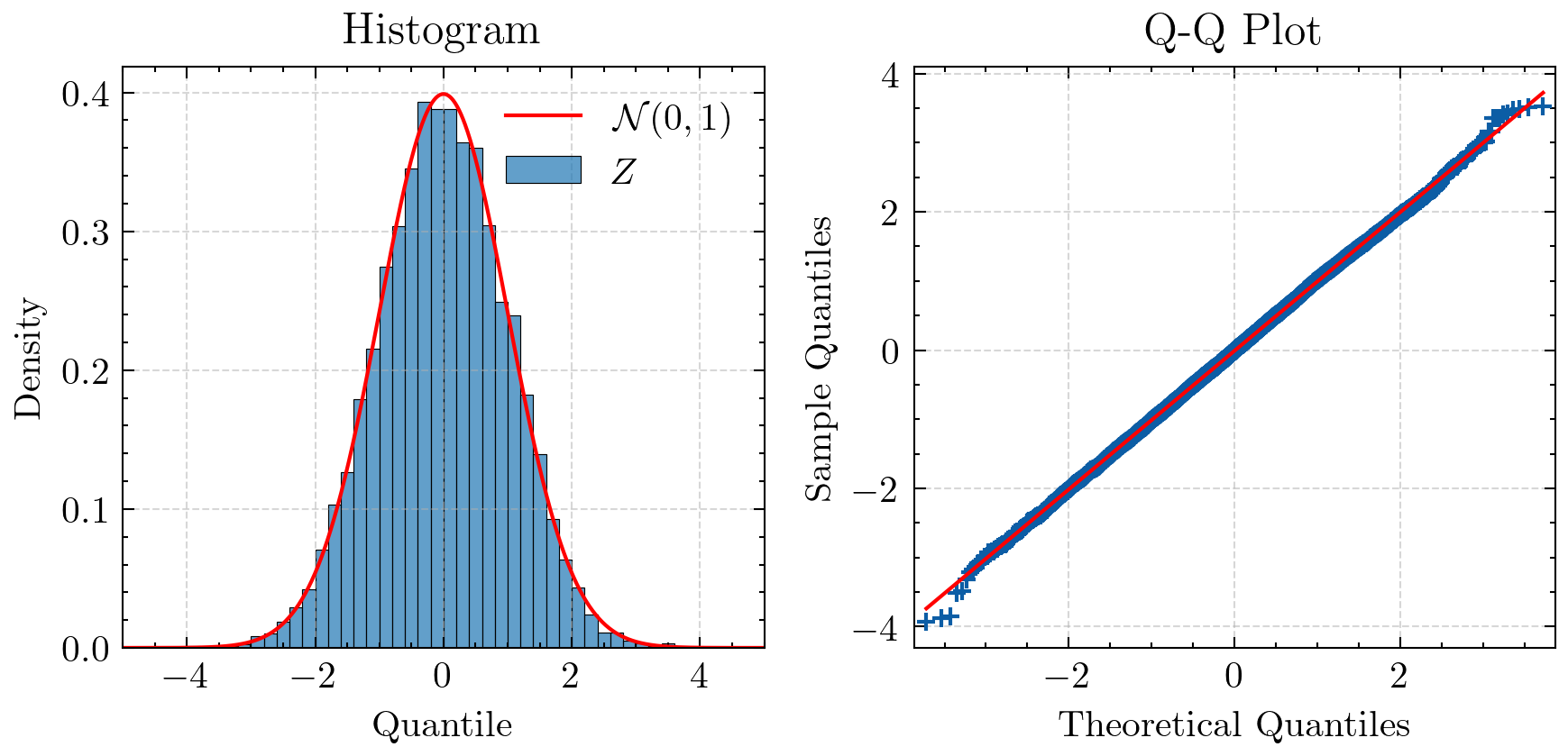}}
    \hfill
    \subfloat[Normal
    MSLE]{\includegraphics[width=0.33\textwidth]{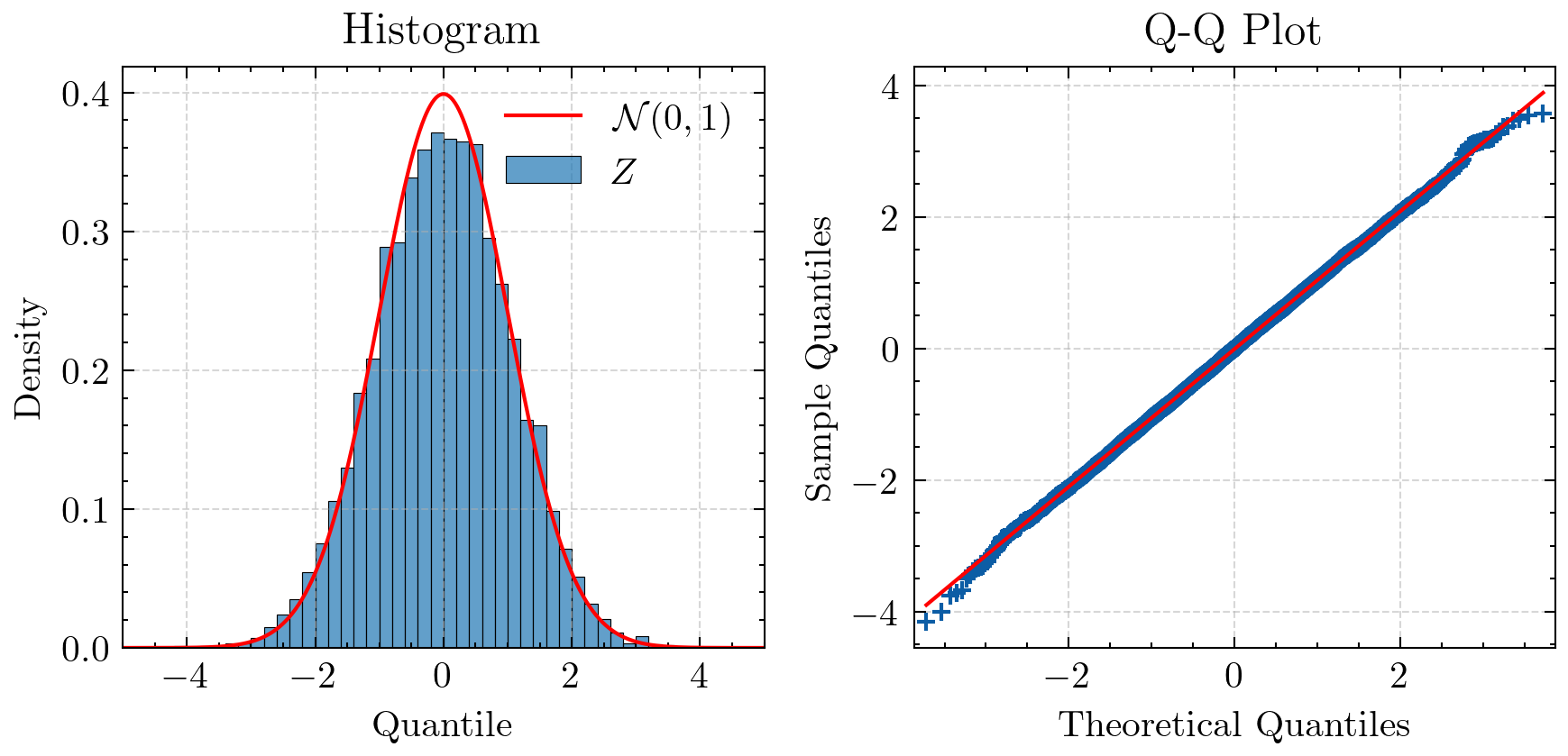}}
    \\
    \subfloat[Uniform
    LE]{\includegraphics[width=0.33\textwidth]{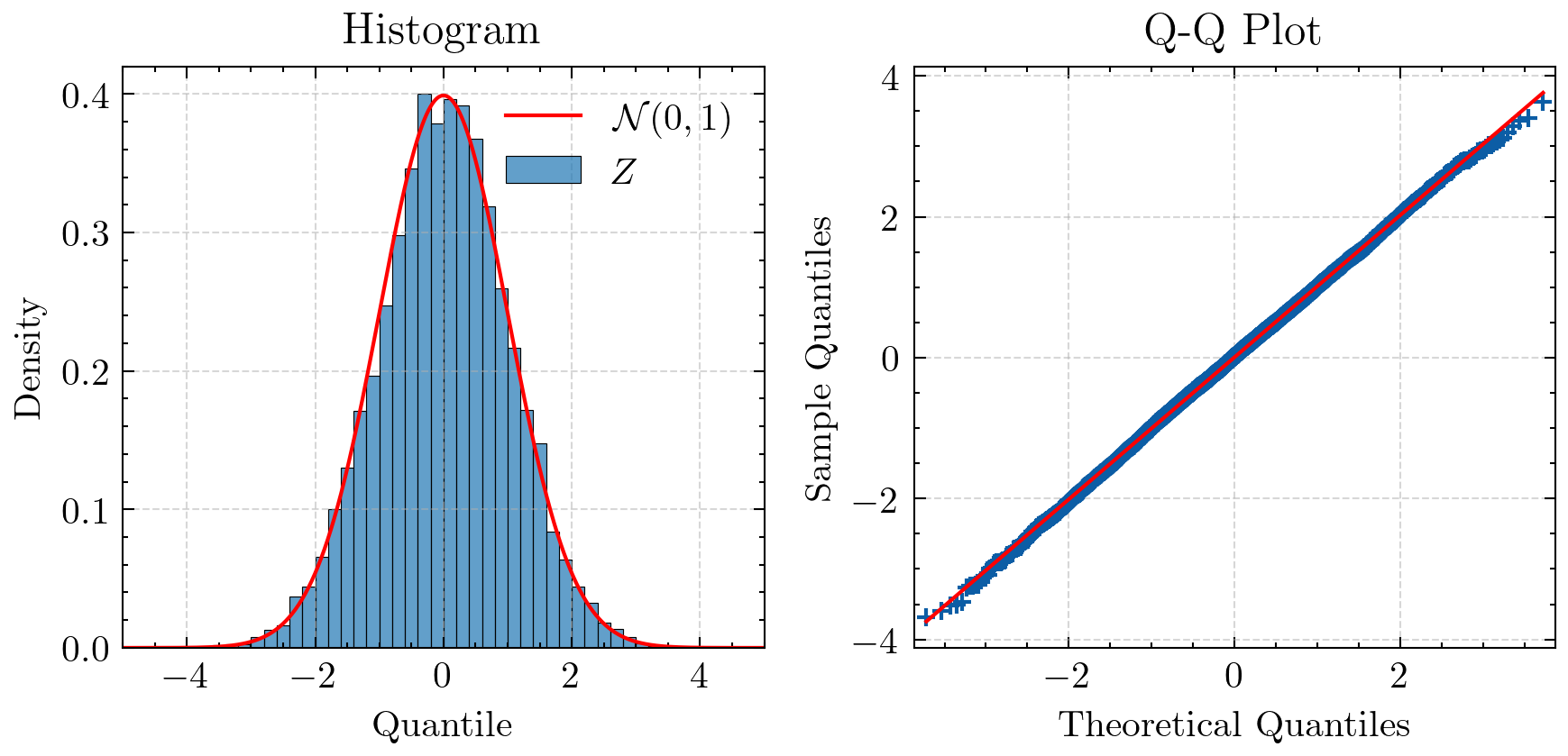}}
    \hfill
    \subfloat[Uniform
    SALE]{\includegraphics[width=0.33\textwidth]{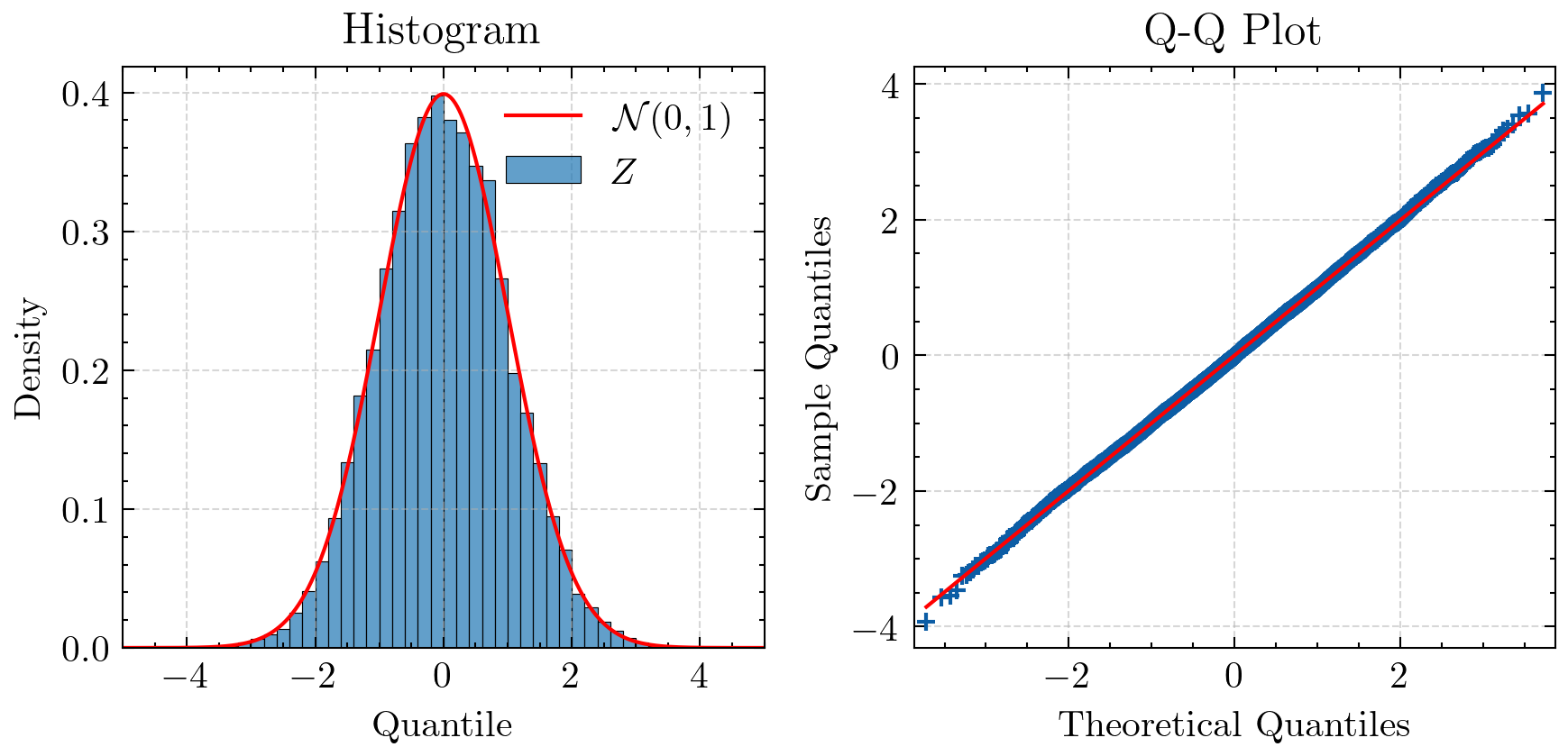}}
    \hfill
    \subfloat[Uniform
    MSLE]{\includegraphics[width=0.33\textwidth]{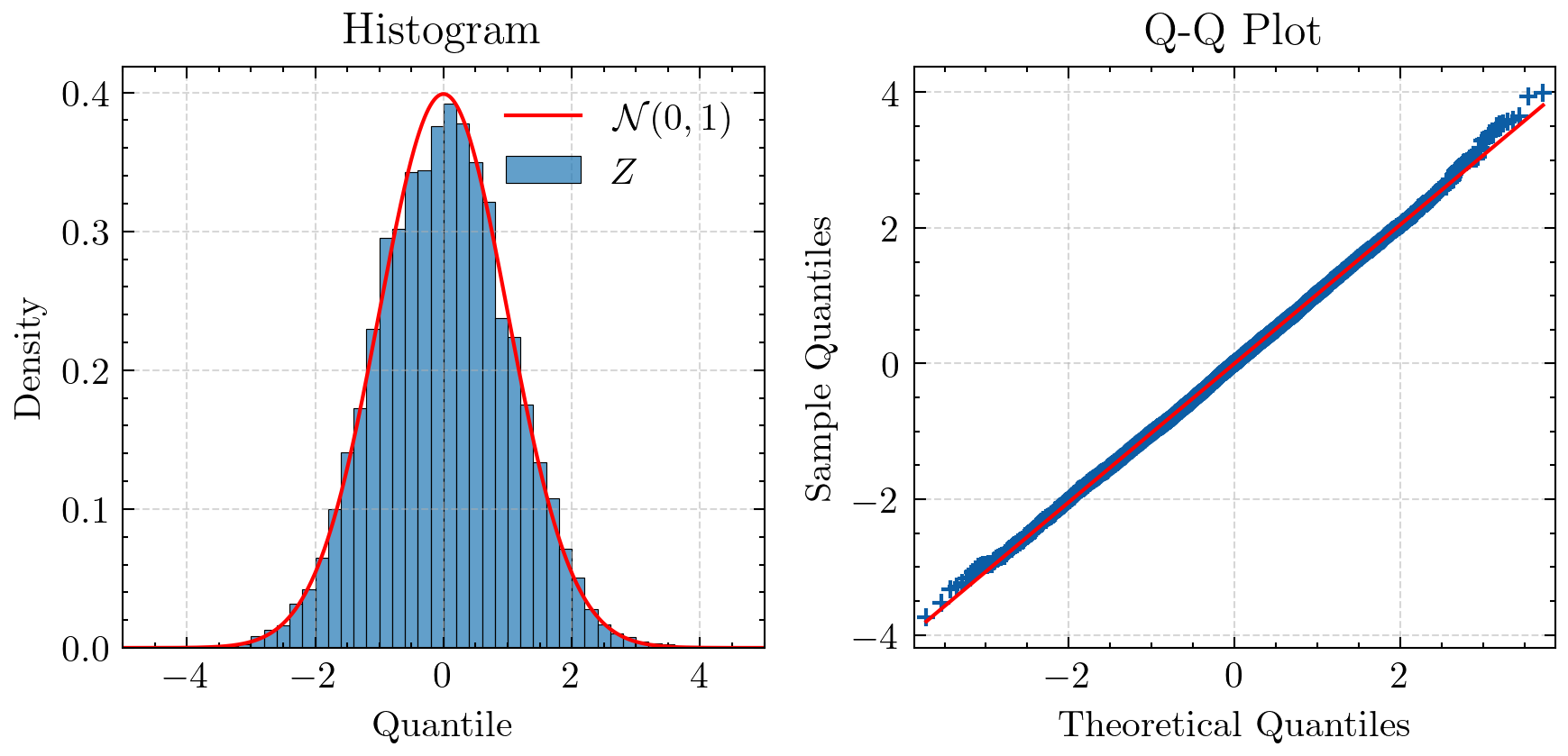}}
    \\
    \subfloat[Skew-normal
    LE]{\includegraphics[width=0.33\textwidth]{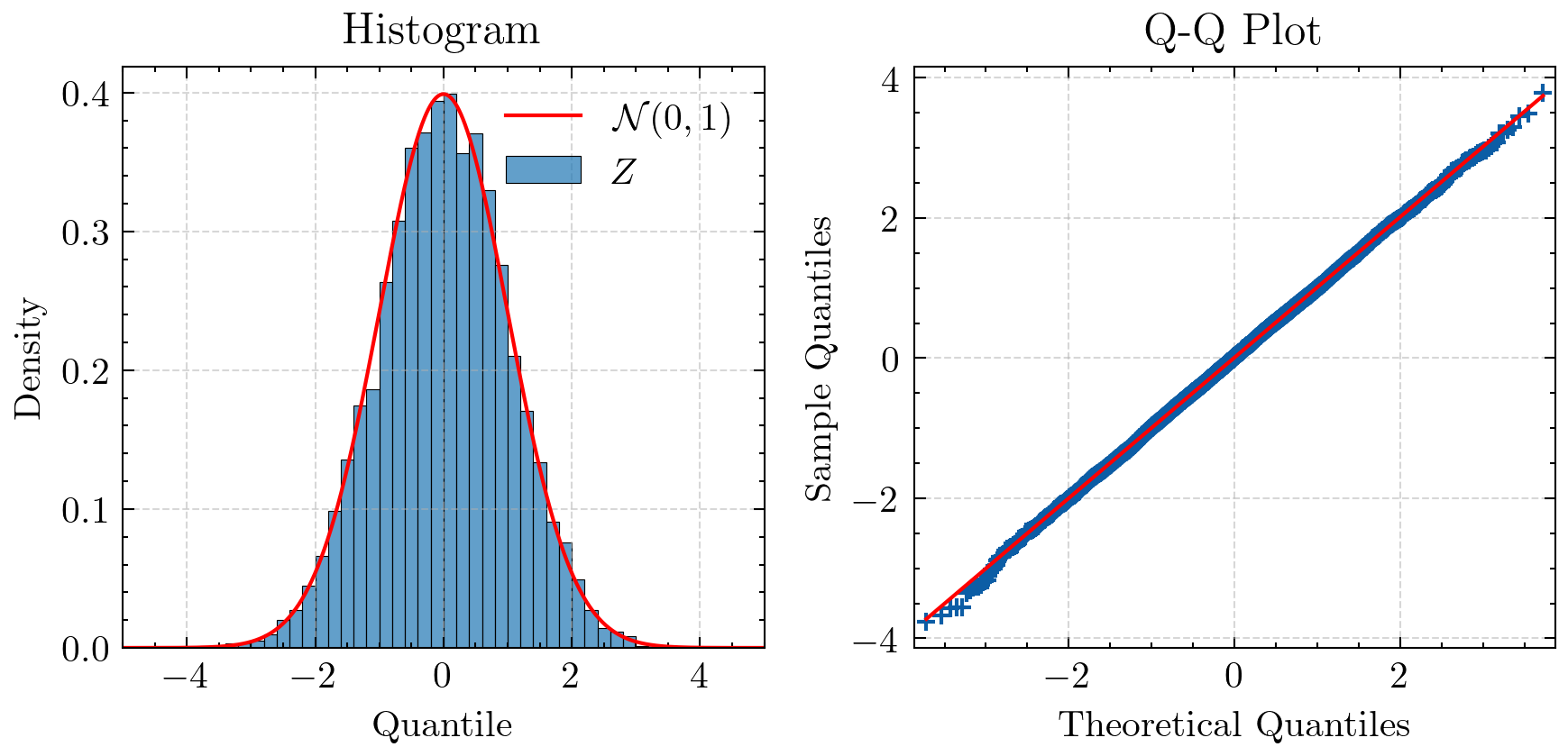}}
    \hfill
    \subfloat[Skew-normal
    SALE]{\includegraphics[width=0.33\textwidth]{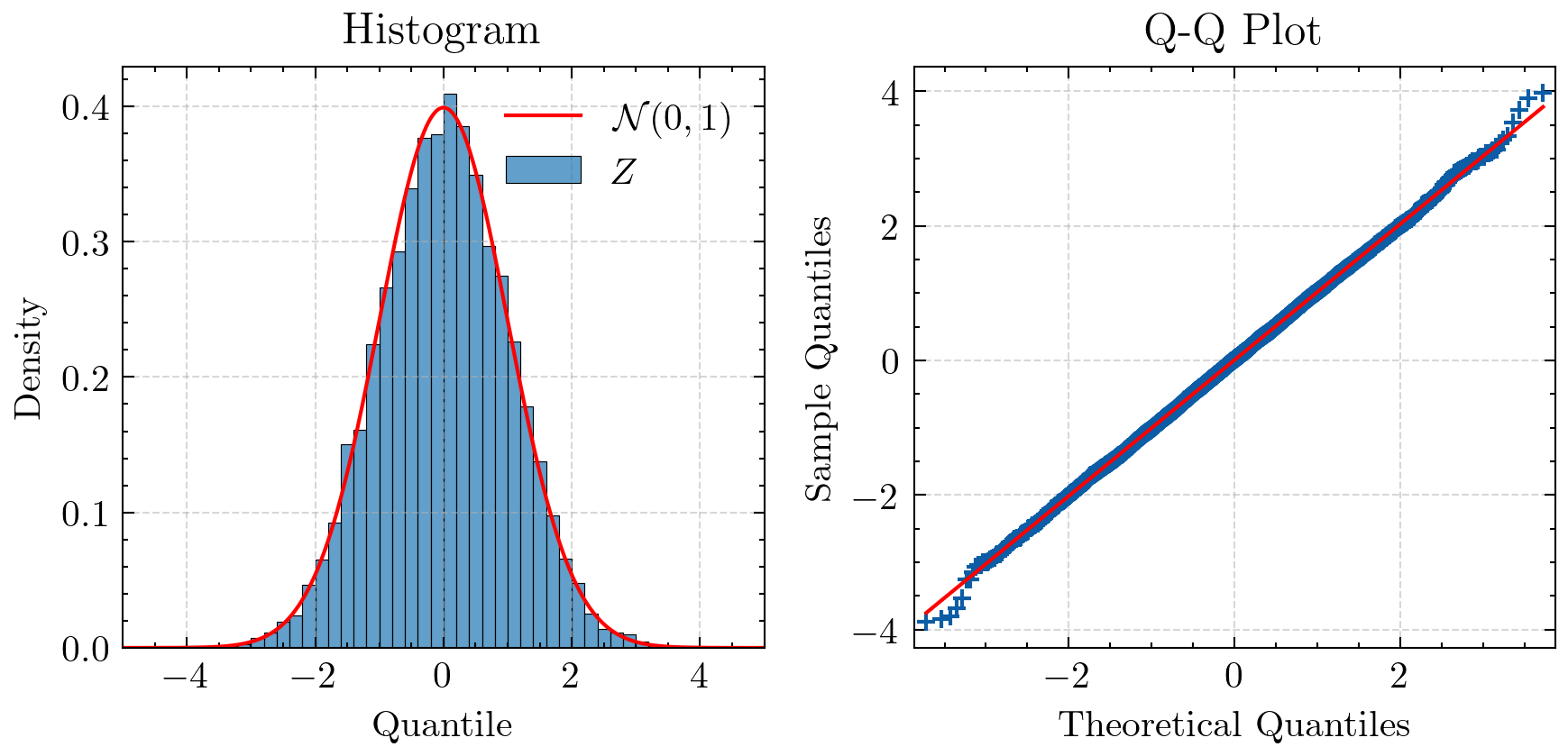}}
    \hfill
    \subfloat[Skew-normal
    MSLE]{\includegraphics[width=0.33\textwidth]{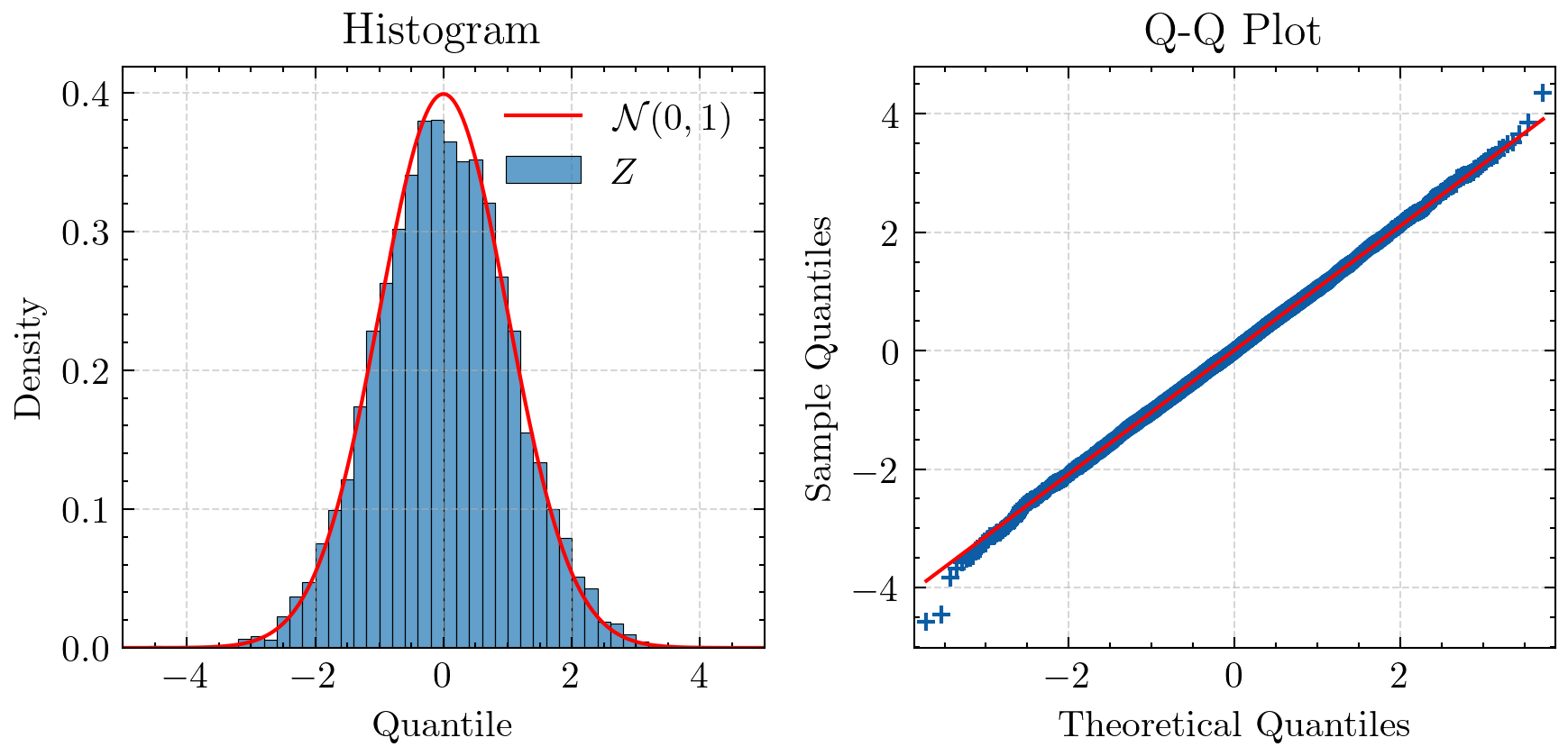}}
    \\
    \subfloat[MA(2) ($\theta_1=0.7$)
    SALE]{\includegraphics[width=0.33\textwidth]{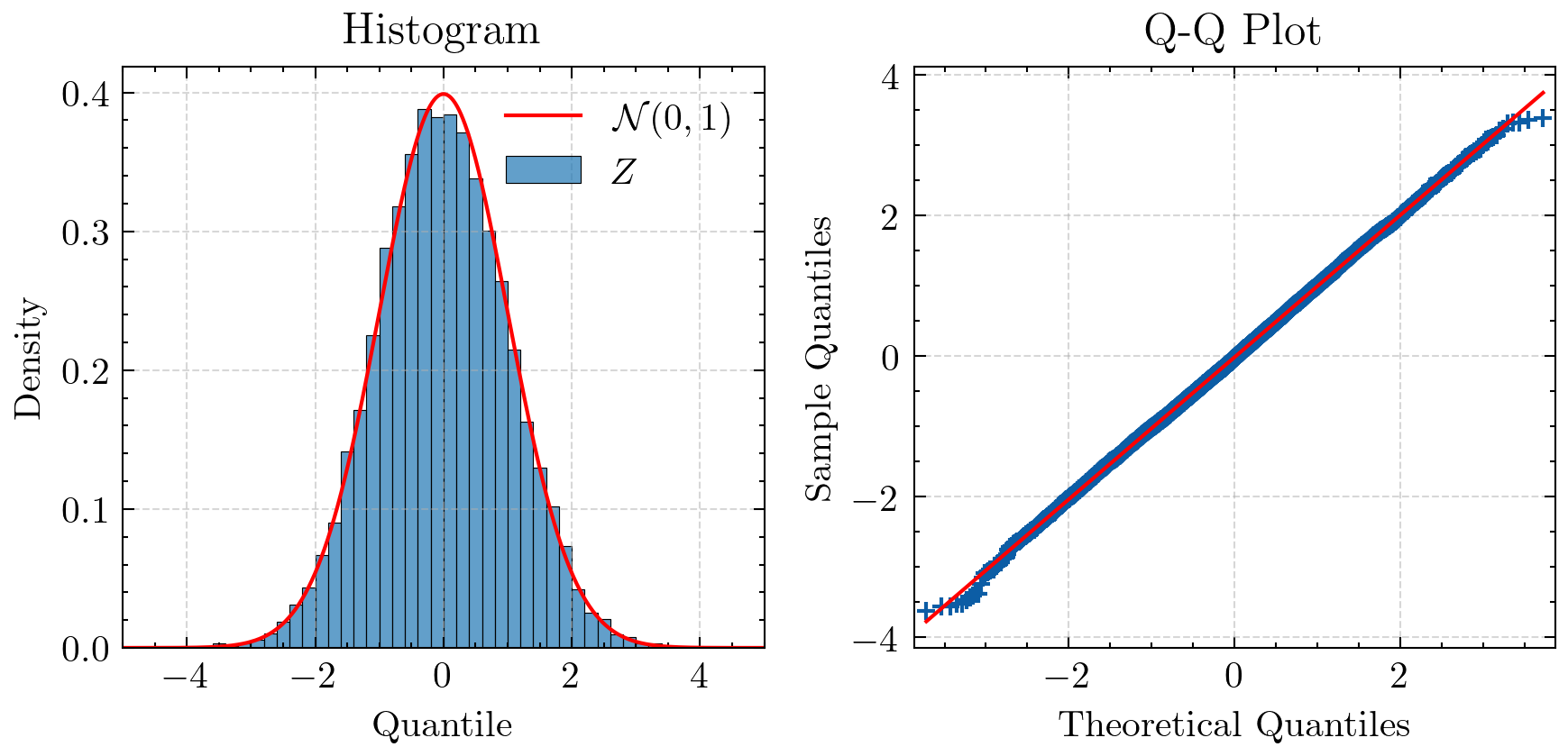}}
    \hfill
    \subfloat[MA(2) ($\theta_1=0.7$)
    MSLE]{\includegraphics[width=0.33\textwidth]{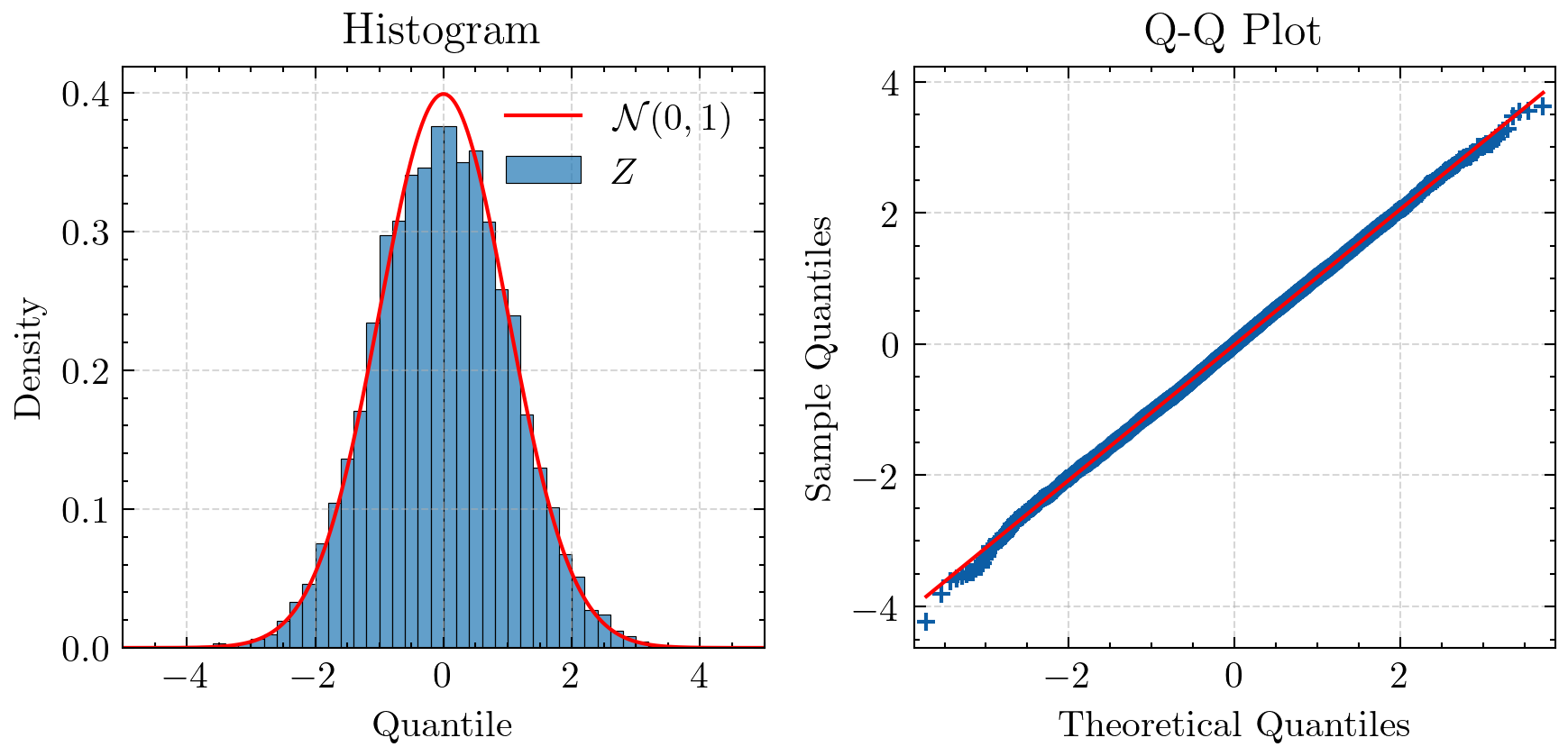}}
    \hfill
    \subfloat[MA(2) ($\theta_1=-0.7$)
    SALE]{\includegraphics[width=0.33\textwidth]{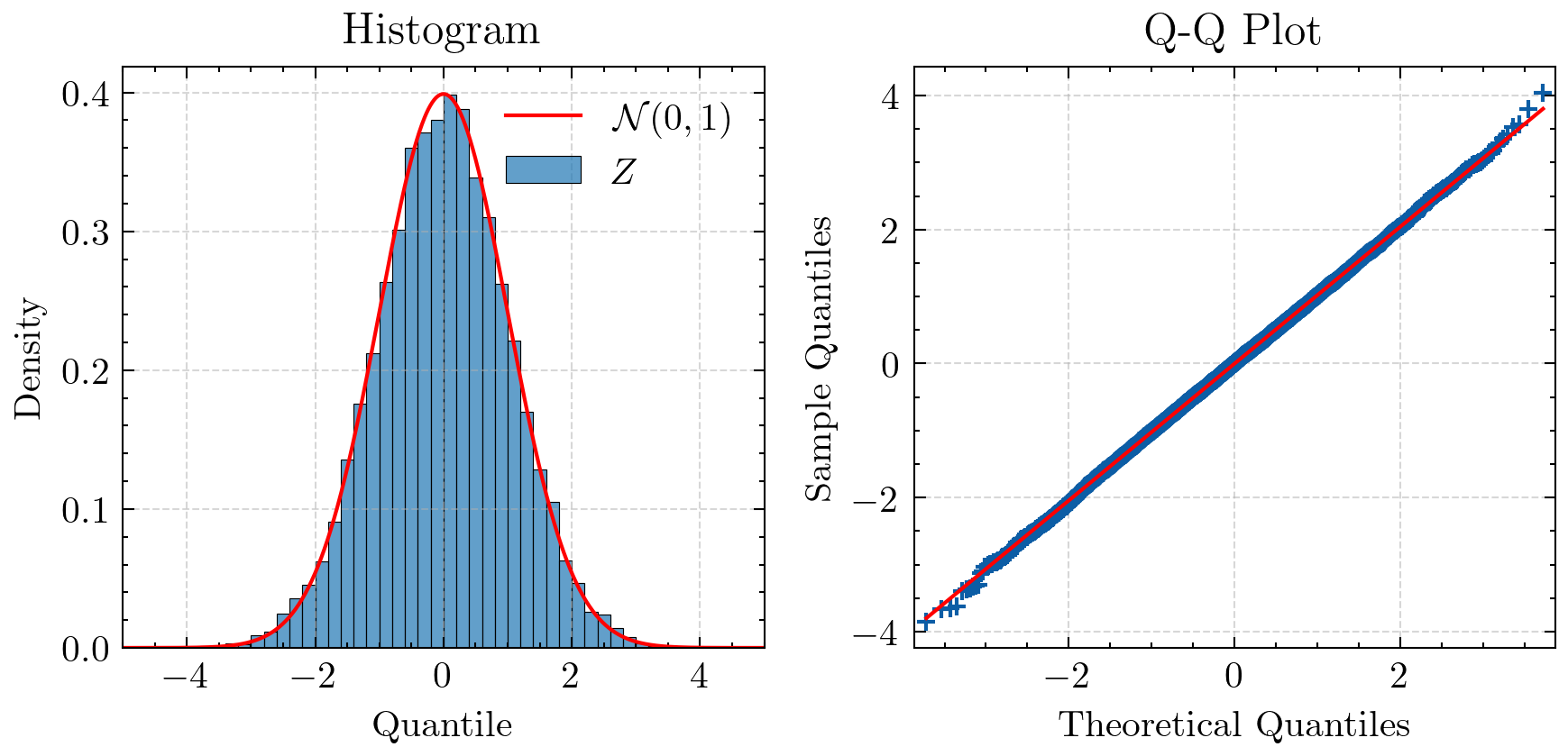}}
    \\
    \subfloat[MA(2) ($\theta_1=-0.7$)
    MSLE]{\includegraphics[width=0.33\textwidth]{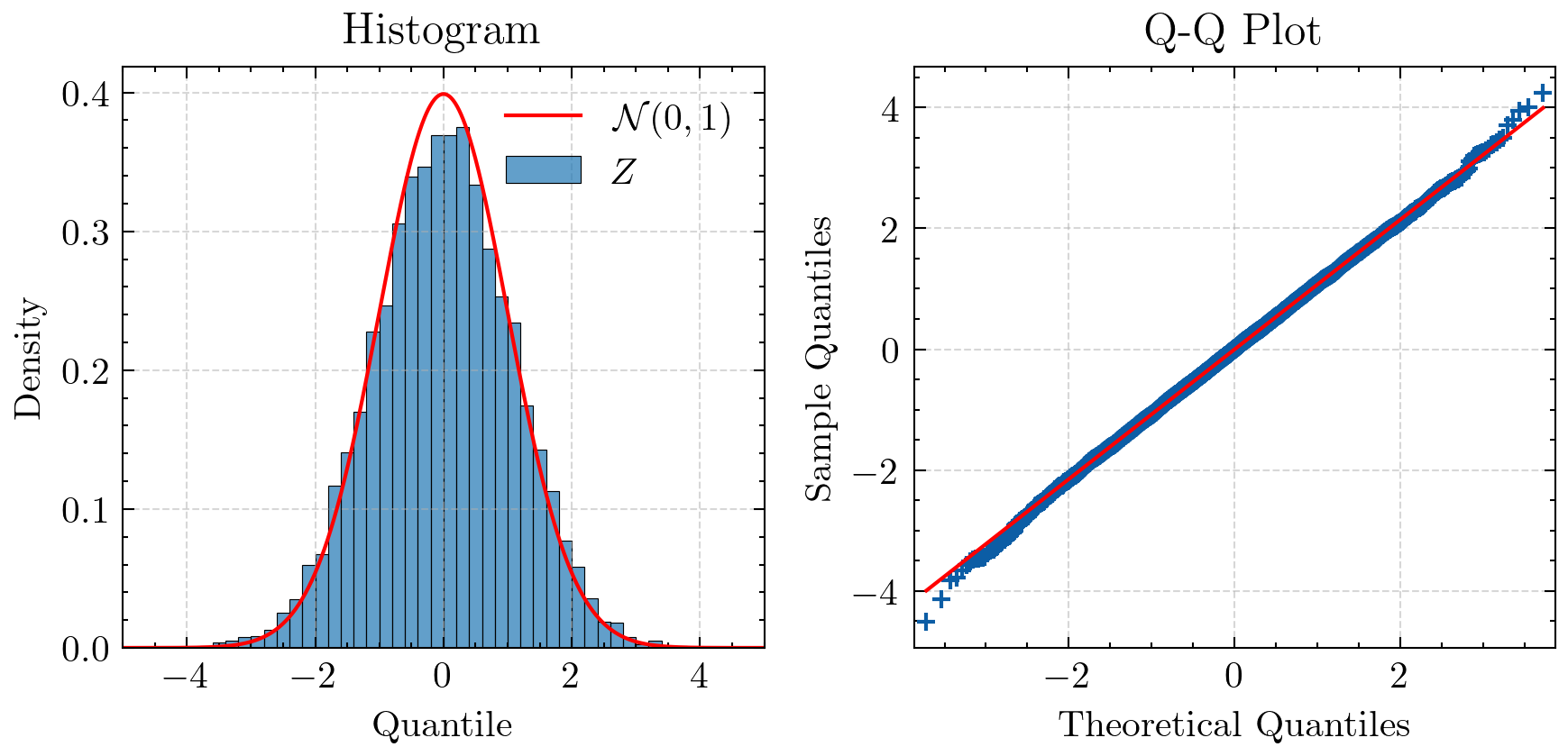}}
    \hfill
    \subfloat[AR(1)
    SALE]{\includegraphics[width=0.33\textwidth]{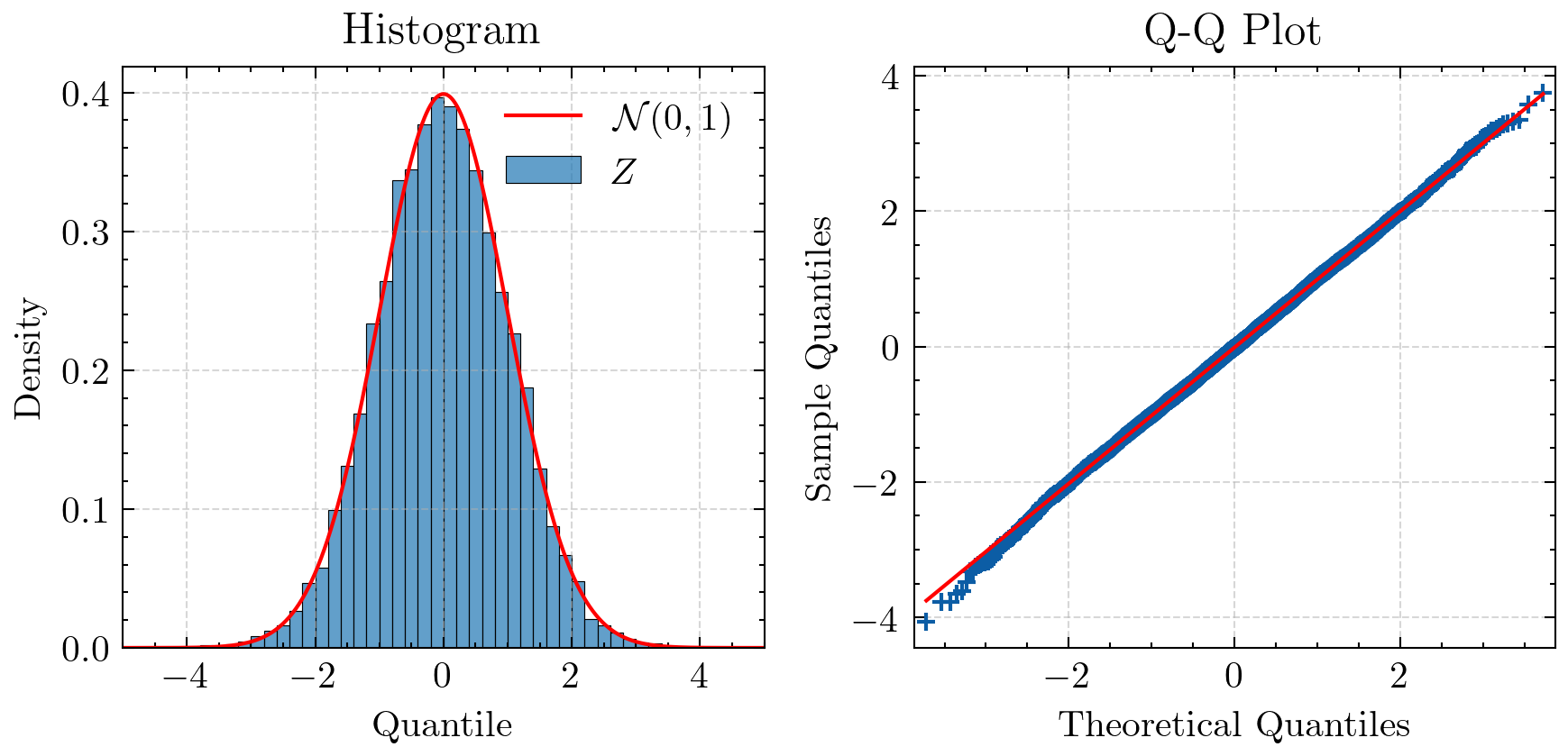}}
    \hfill
    \subfloat[AR(1)
    MSLE]{\includegraphics[width=0.33\textwidth]{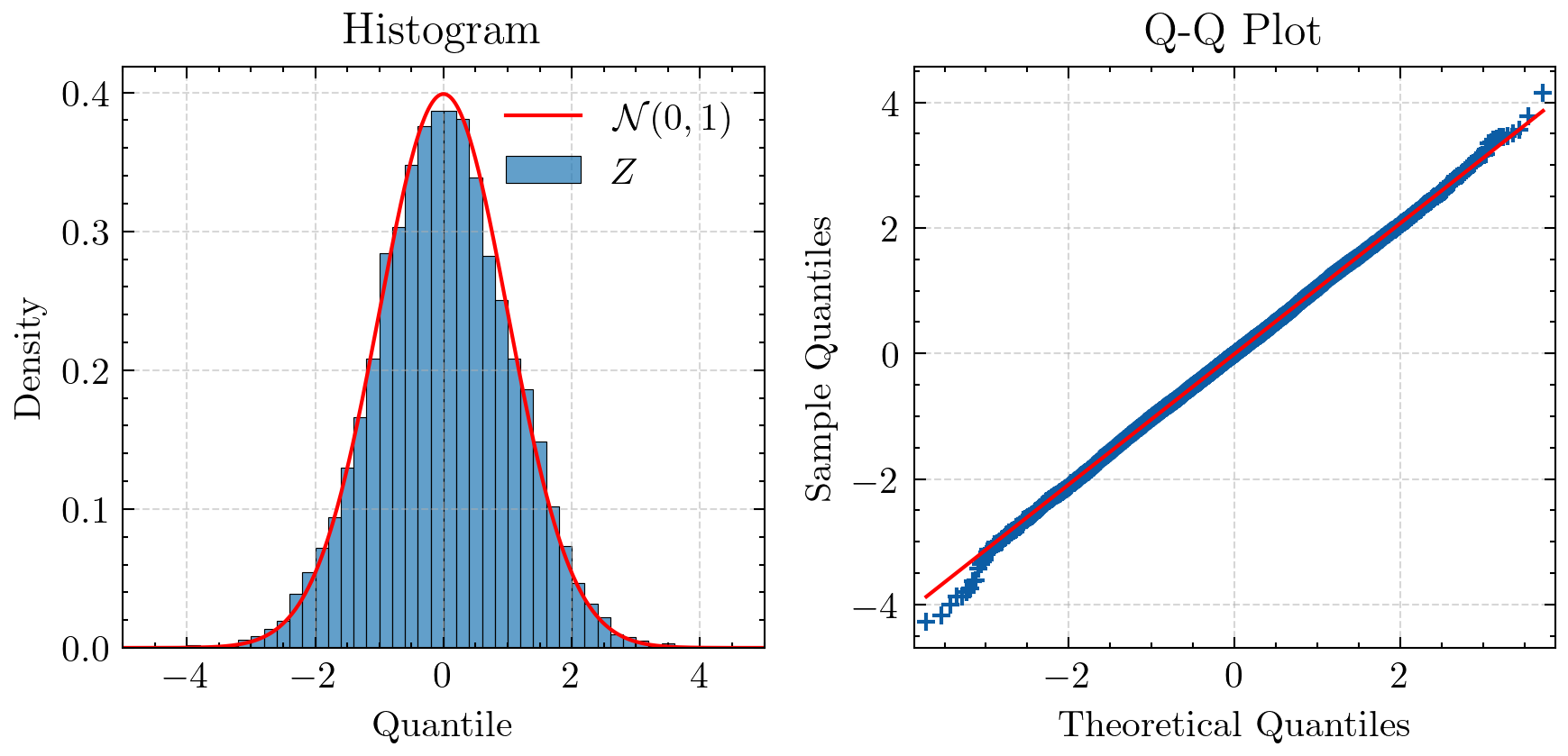}}
    \caption{Histograms and Q-Q plots of standardized estimation errors
      calculated with infeasible asymptotic variances across noise
      settings and estimators. Blue bars and points represent empirical
      standardized errors; red curves and lines represent the standard
    normal distribution.}
    \label{fig:infeasible-standardized-errors}
  \end{figure}

  \begin{figure}[!p]
    \centering
    \subfloat[Noise-free
    LE]{\includegraphics[width=0.33\textwidth]{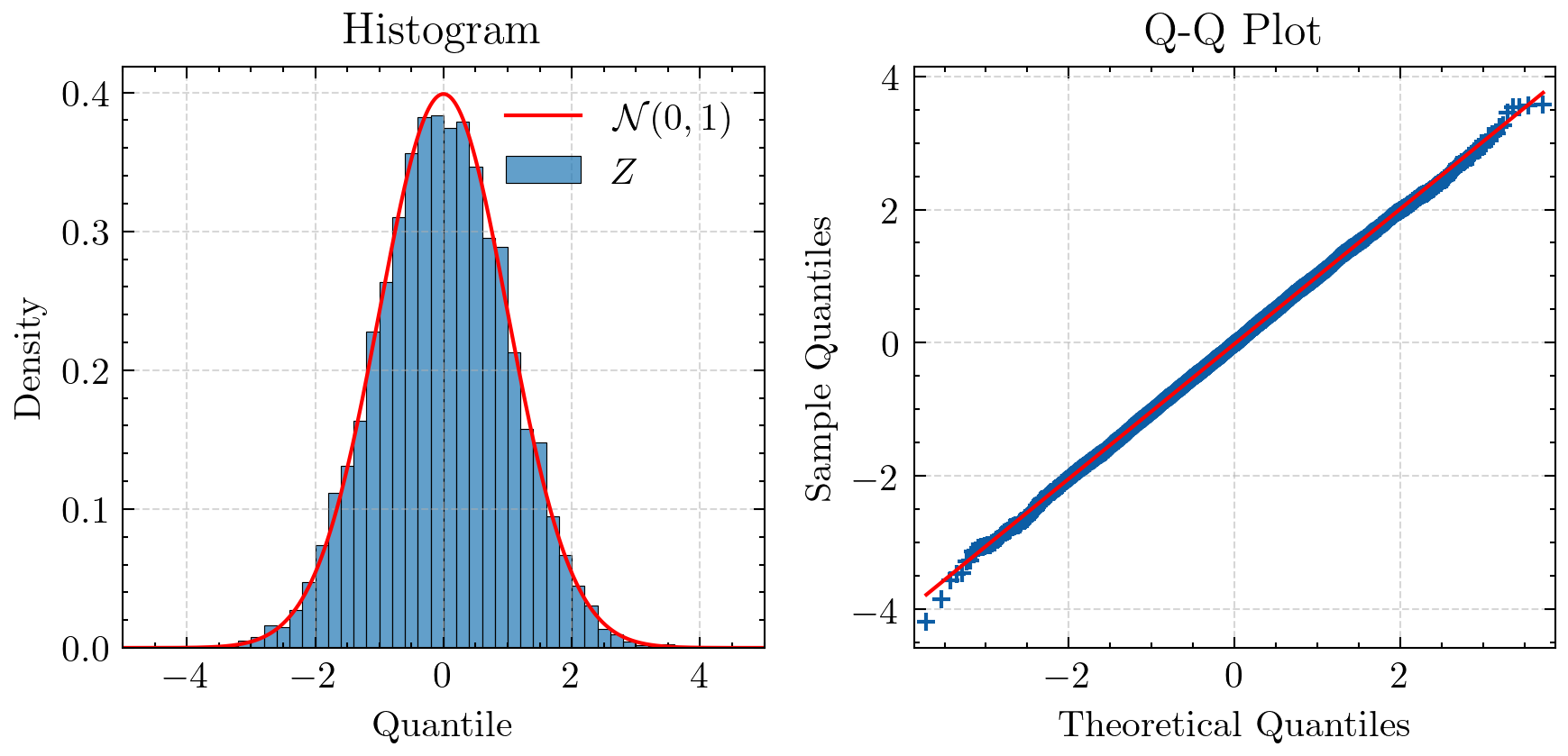}}
    \hfill
    \subfloat[Noise-free
    SALE]{\includegraphics[width=0.33\textwidth]{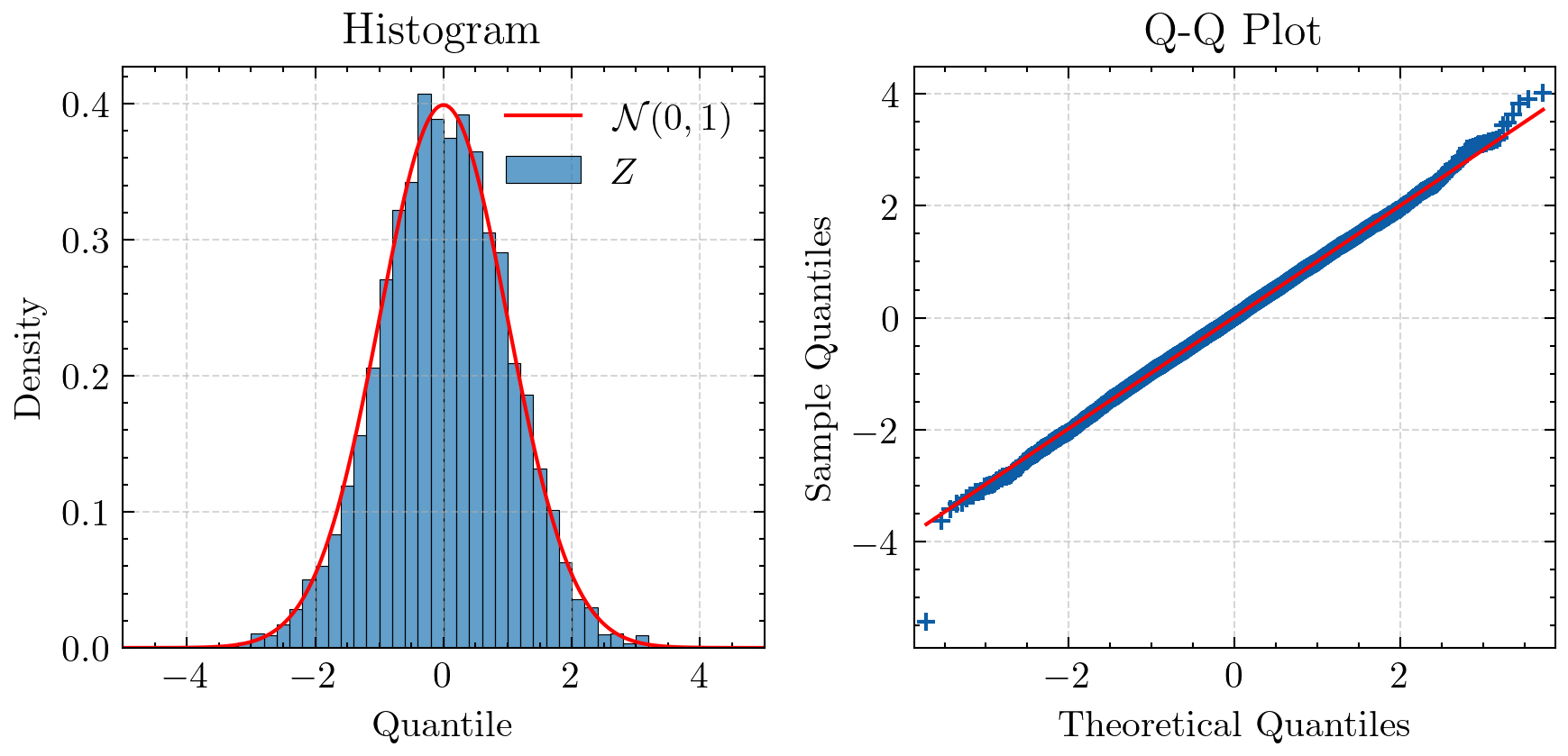}}
    \hfill
    \subfloat[Noise-free
    MSLE]{\includegraphics[width=0.33\textwidth]{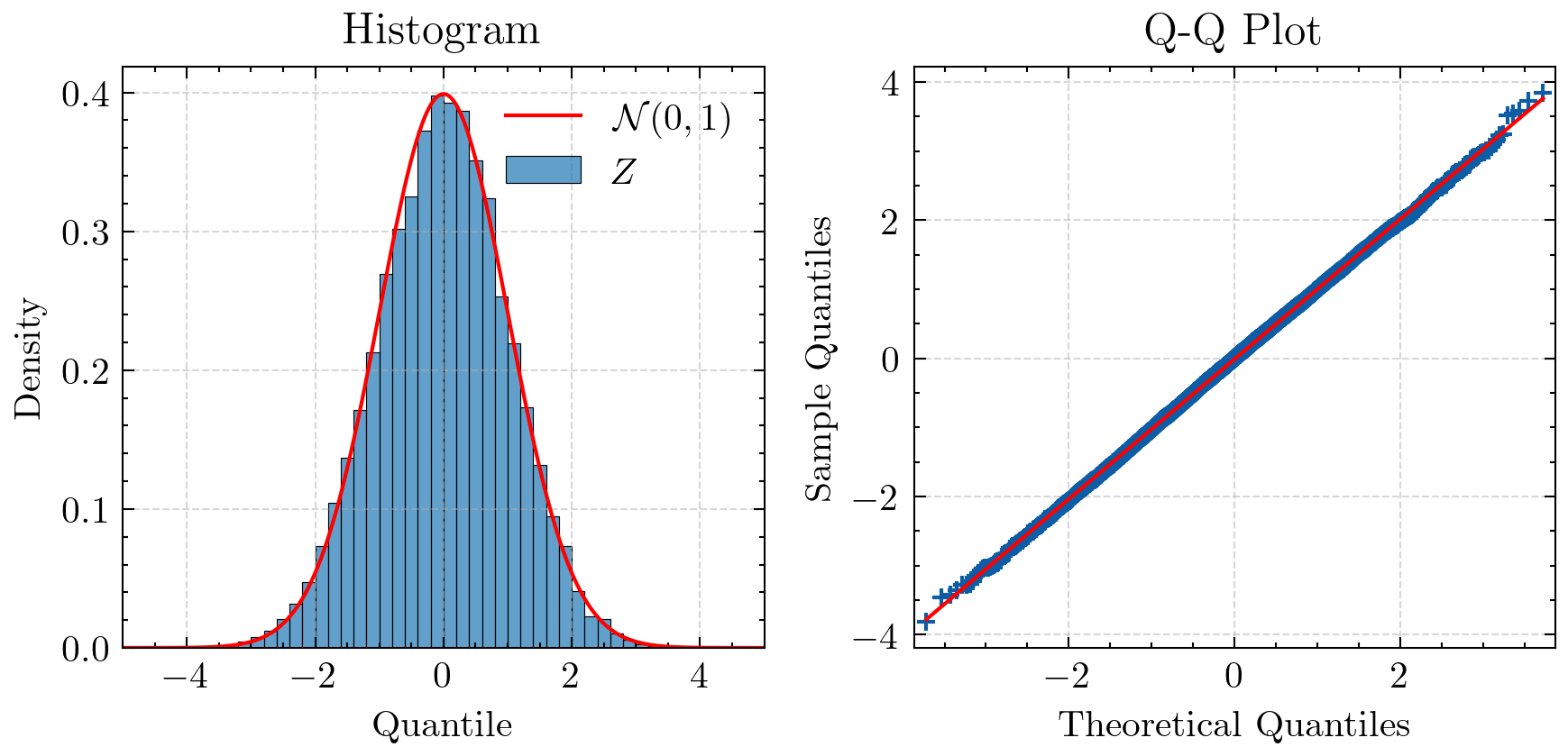}}
    \\
    \subfloat[Normal
    LE]{\includegraphics[width=0.33\textwidth]{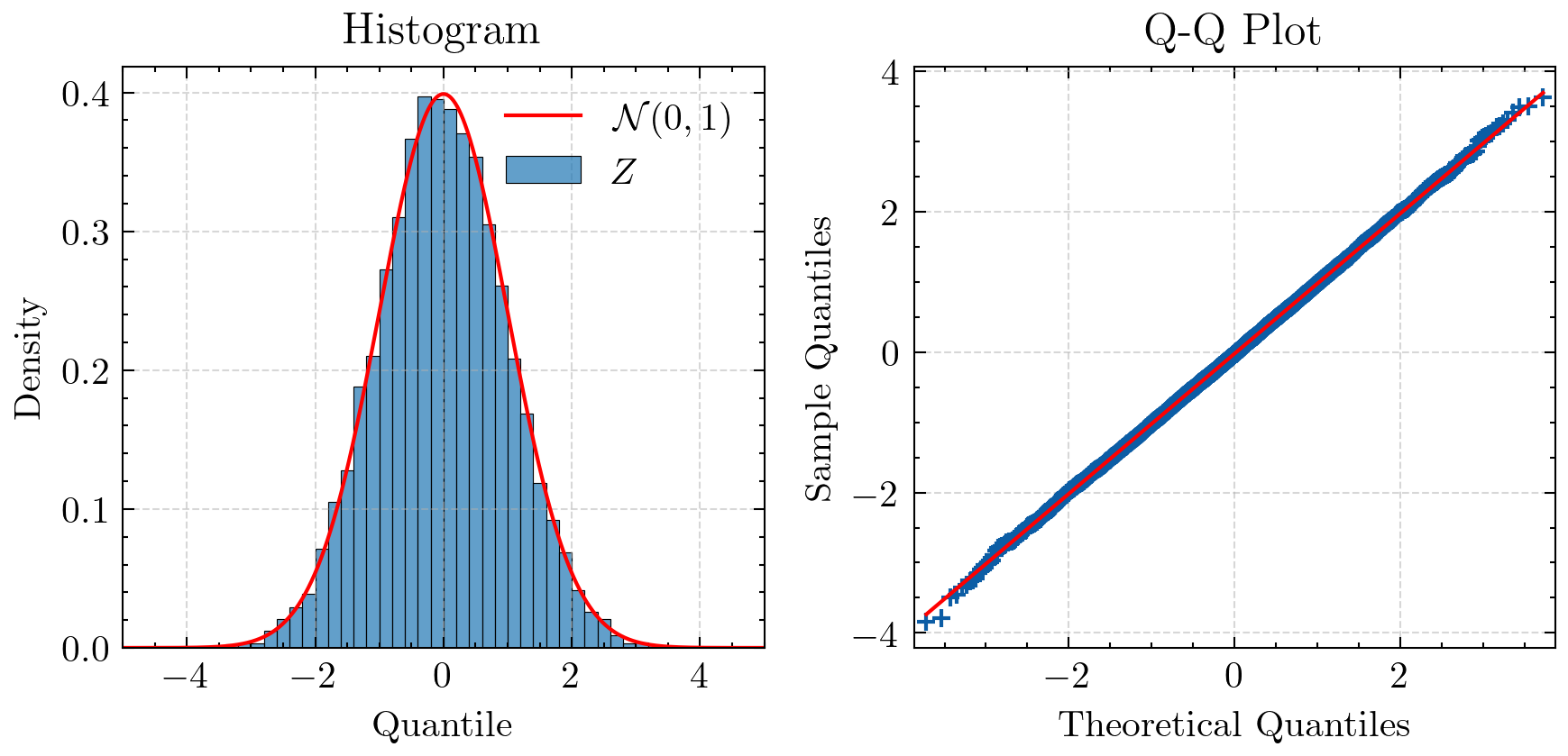}}
    \hfill
    \subfloat[Normal
    SALE]{\includegraphics[width=0.33\textwidth]{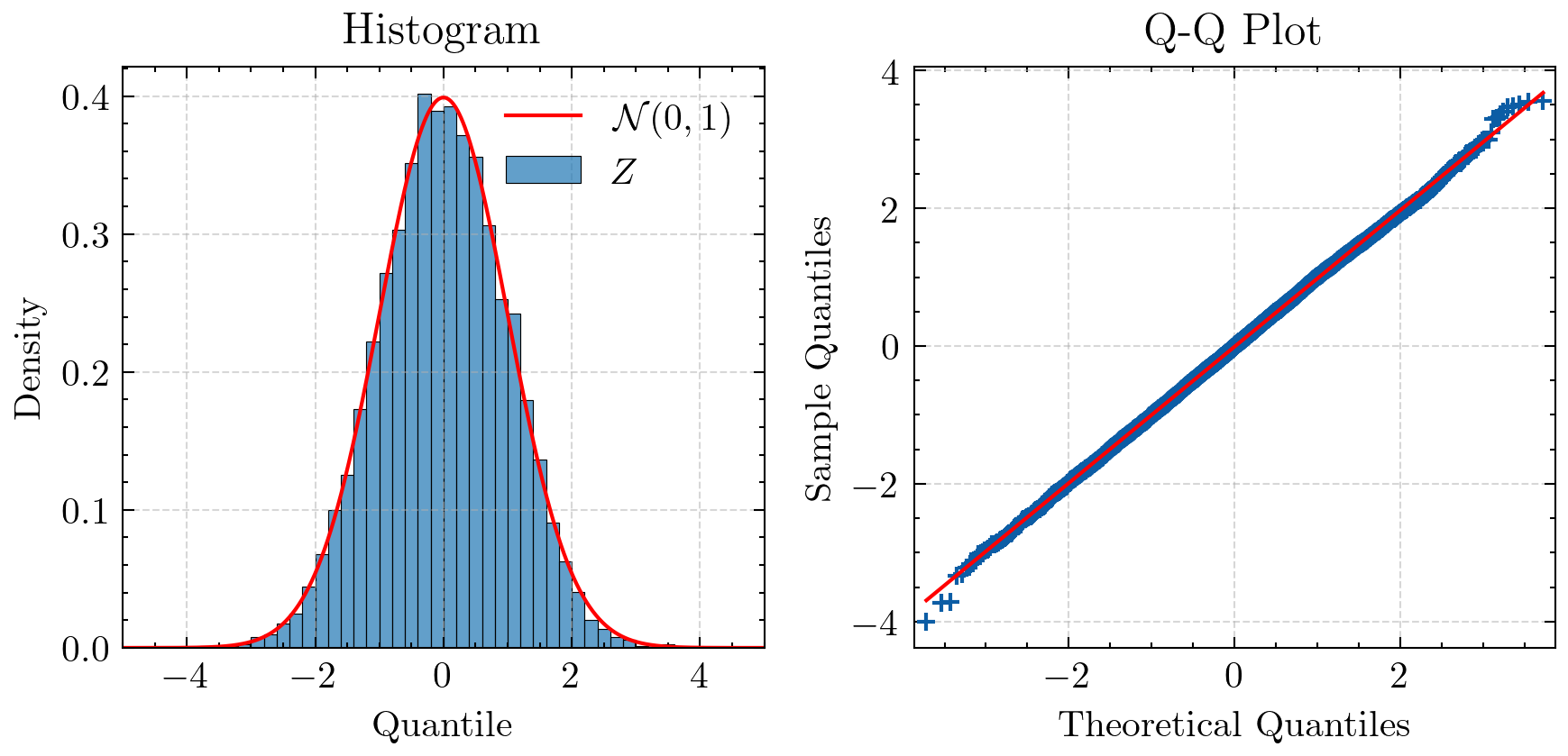}}
    \hfill
    \subfloat[Normal
    MSLE]{\includegraphics[width=0.33\textwidth]{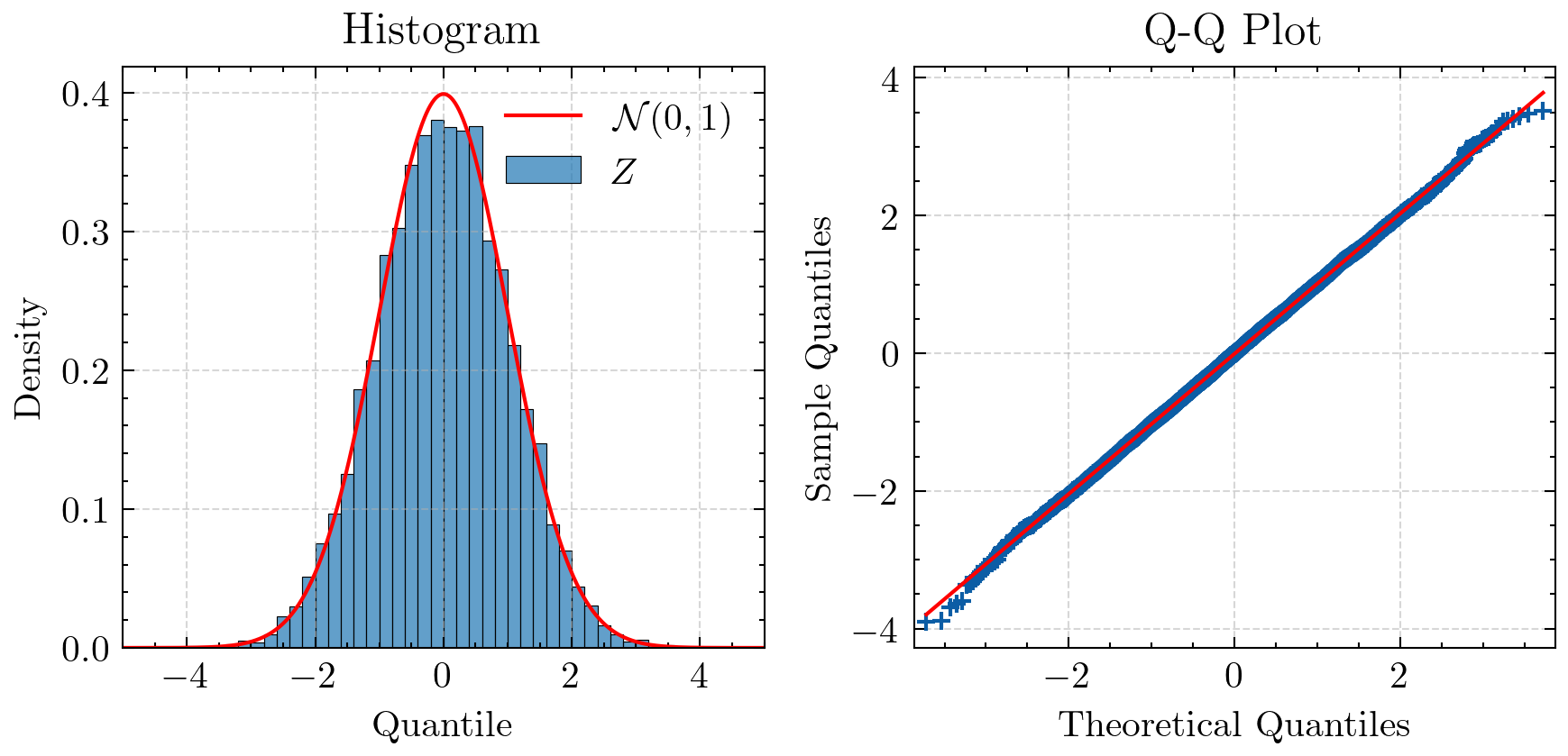}}
    \\
    \subfloat[Uniform
    LE]{\includegraphics[width=0.33\textwidth]{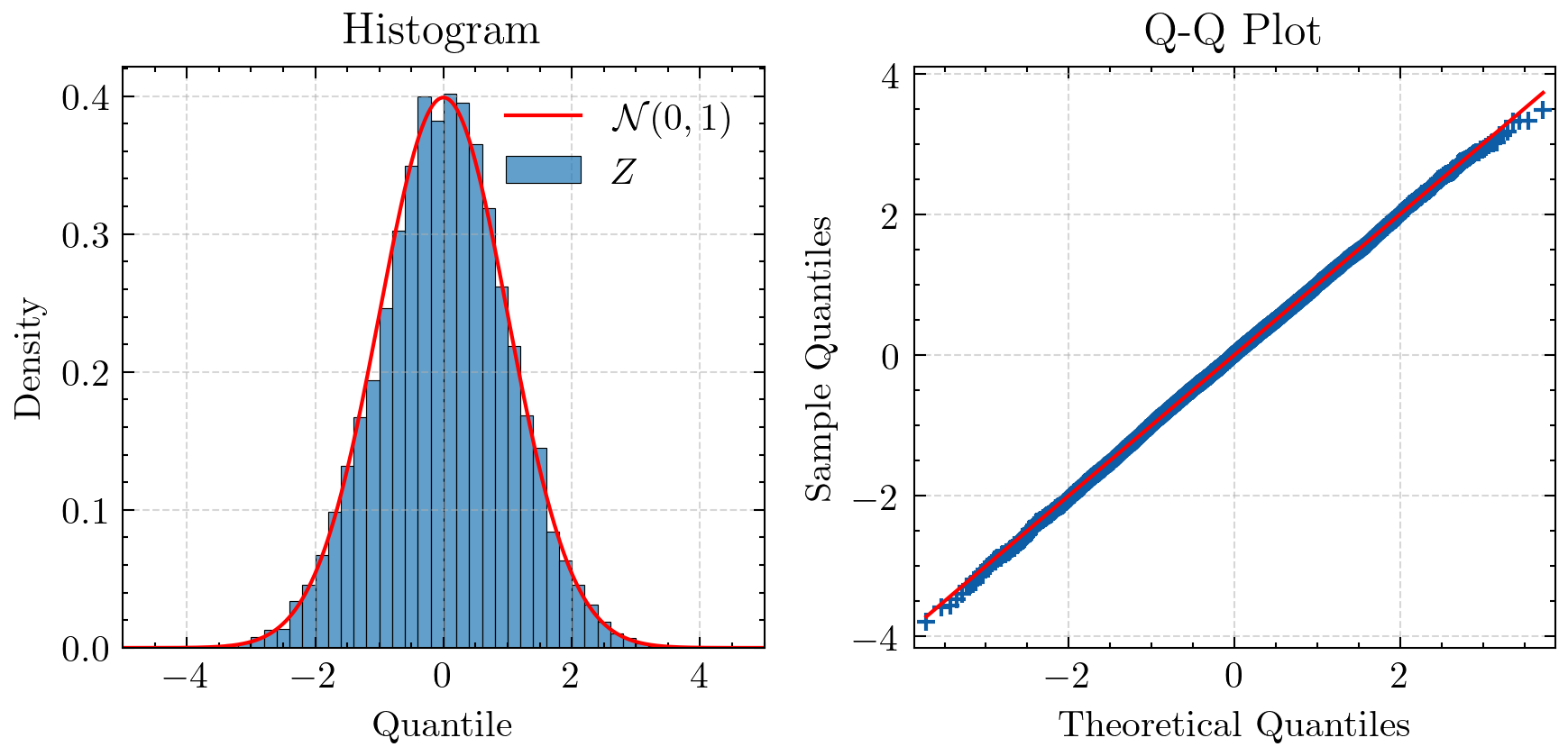}}
    \hfill
    \subfloat[Uniform
    SALE]{\includegraphics[width=0.33\textwidth]{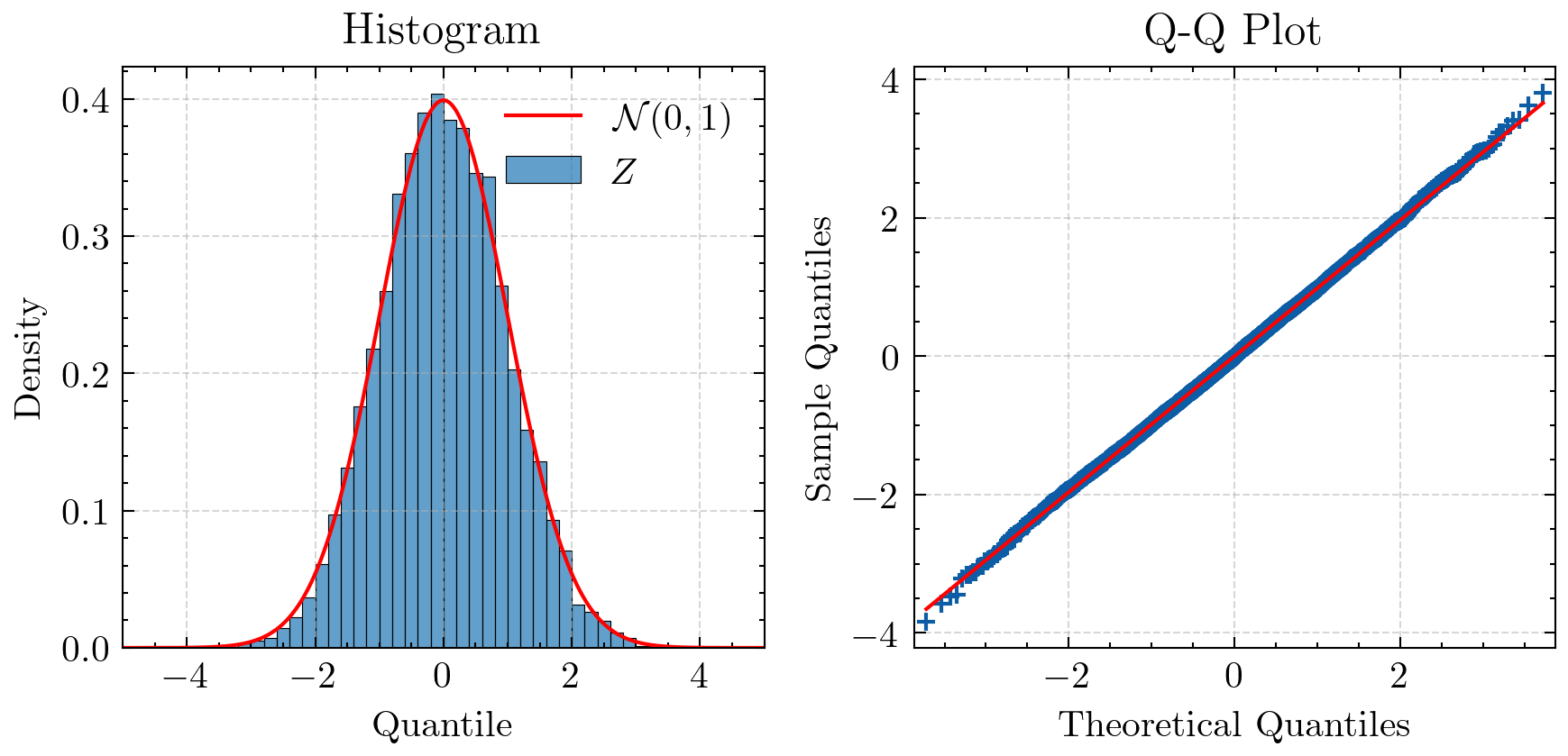}}
    \hfill
    \subfloat[Uniform
    MSLE]{\includegraphics[width=0.33\textwidth]{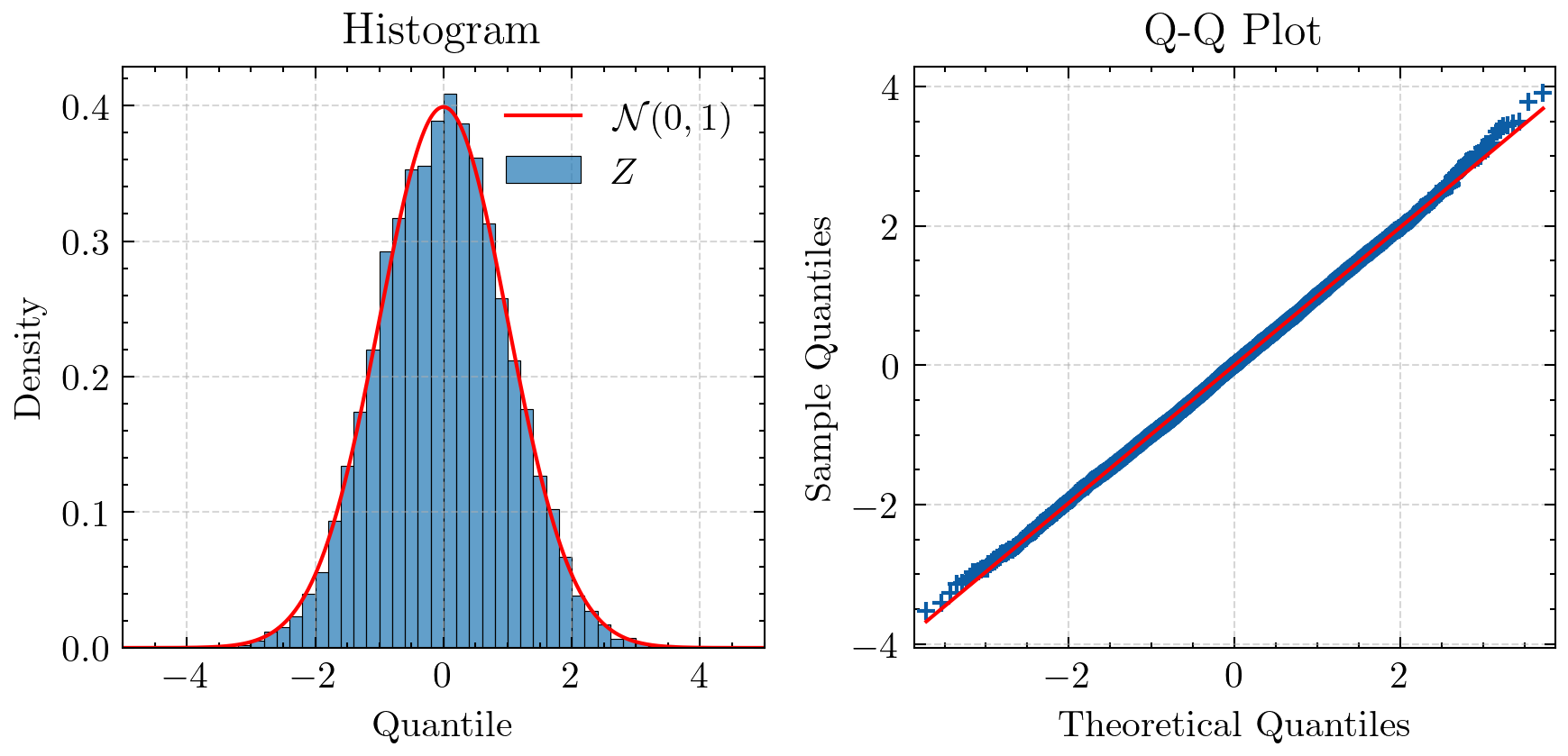}}
    \\
    \subfloat[Skew-normal
    LE]{\includegraphics[width=0.33\textwidth]{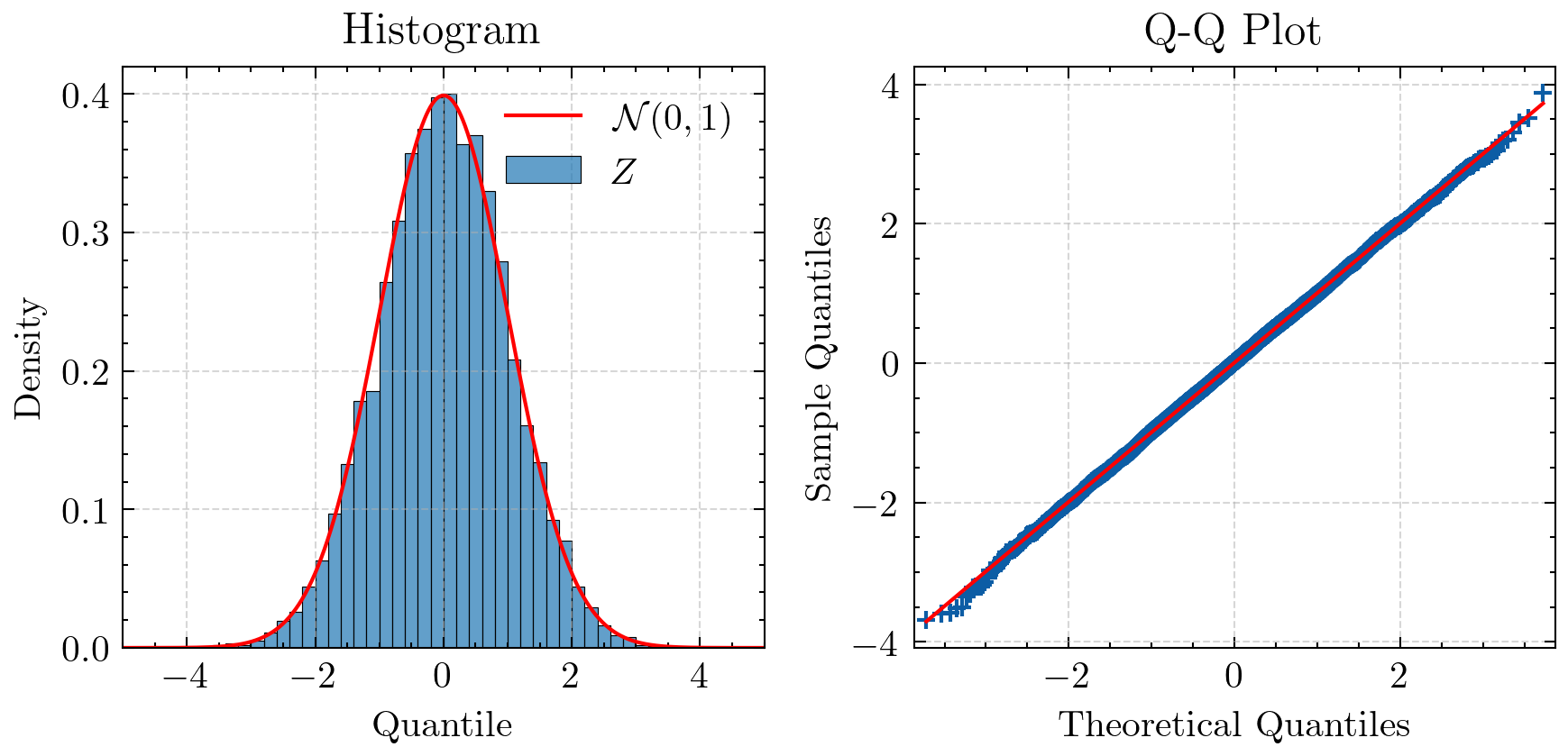}}
    \hfill
    \subfloat[Skew-normal
    SALE]{\includegraphics[width=0.33\textwidth]{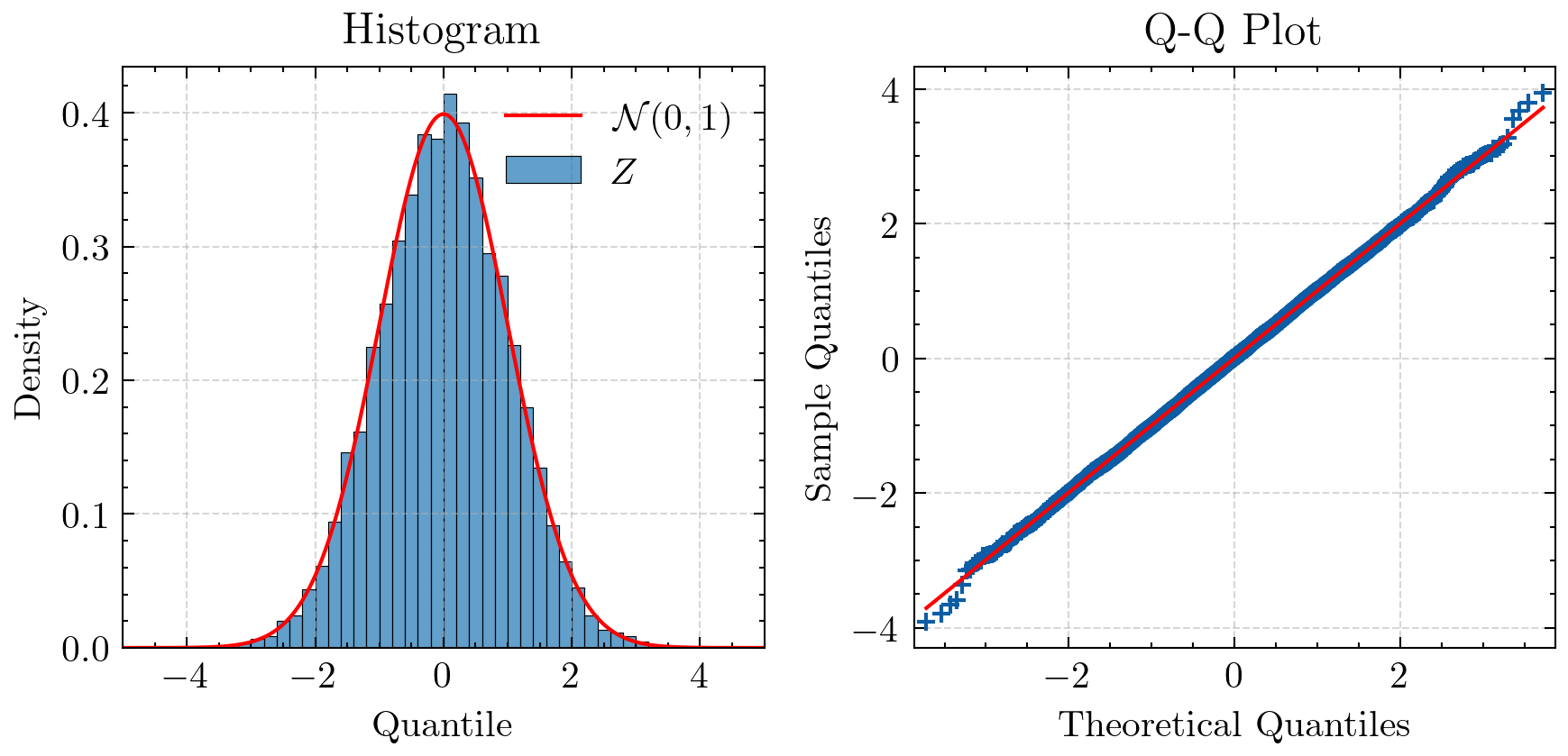}}
    \hfill
    \subfloat[Skew-normal
    MSLE]{\includegraphics[width=0.33\textwidth]{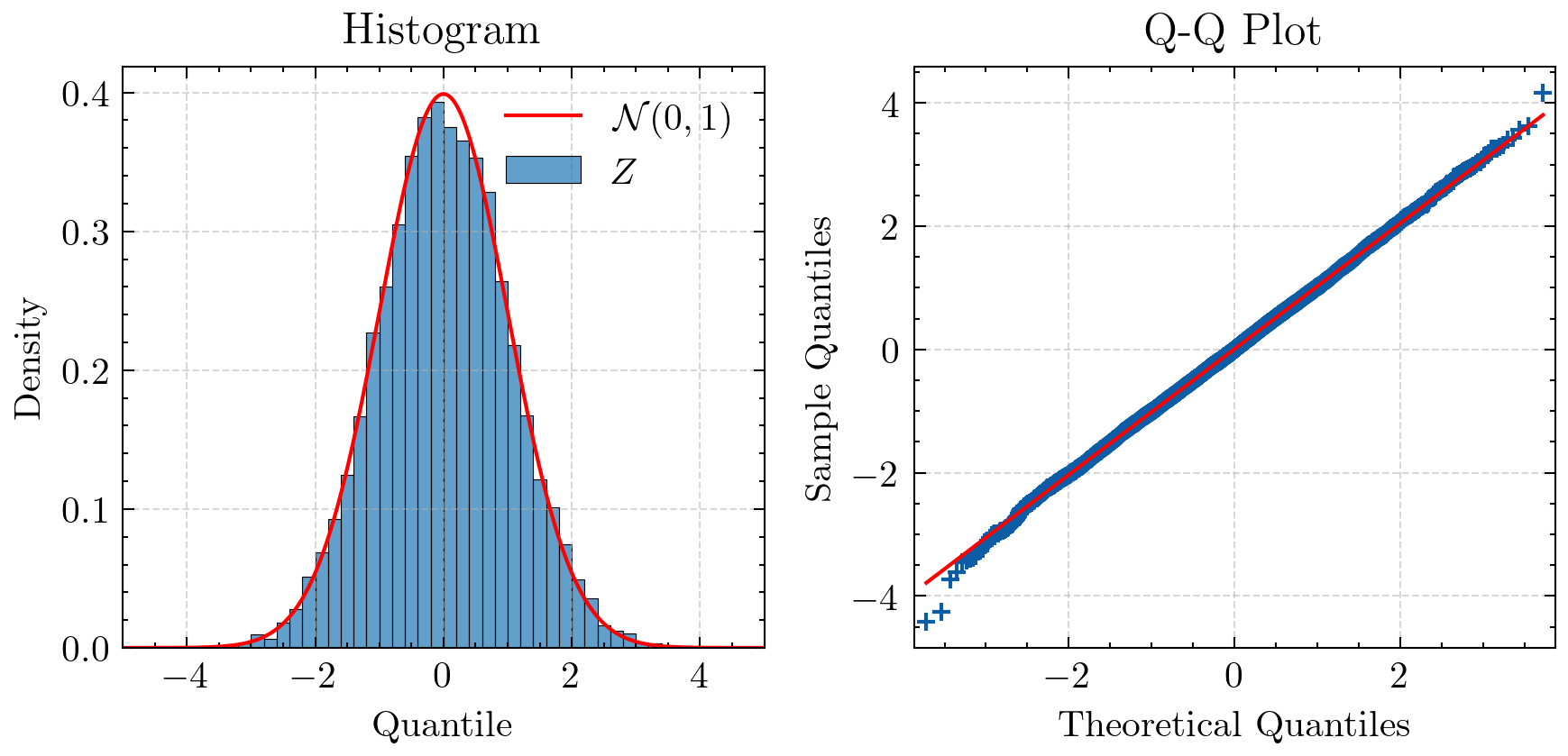}}
    \\
    \subfloat[MA(2) ($\theta_1=0.7$)
    SALE]{\includegraphics[width=0.33\textwidth]{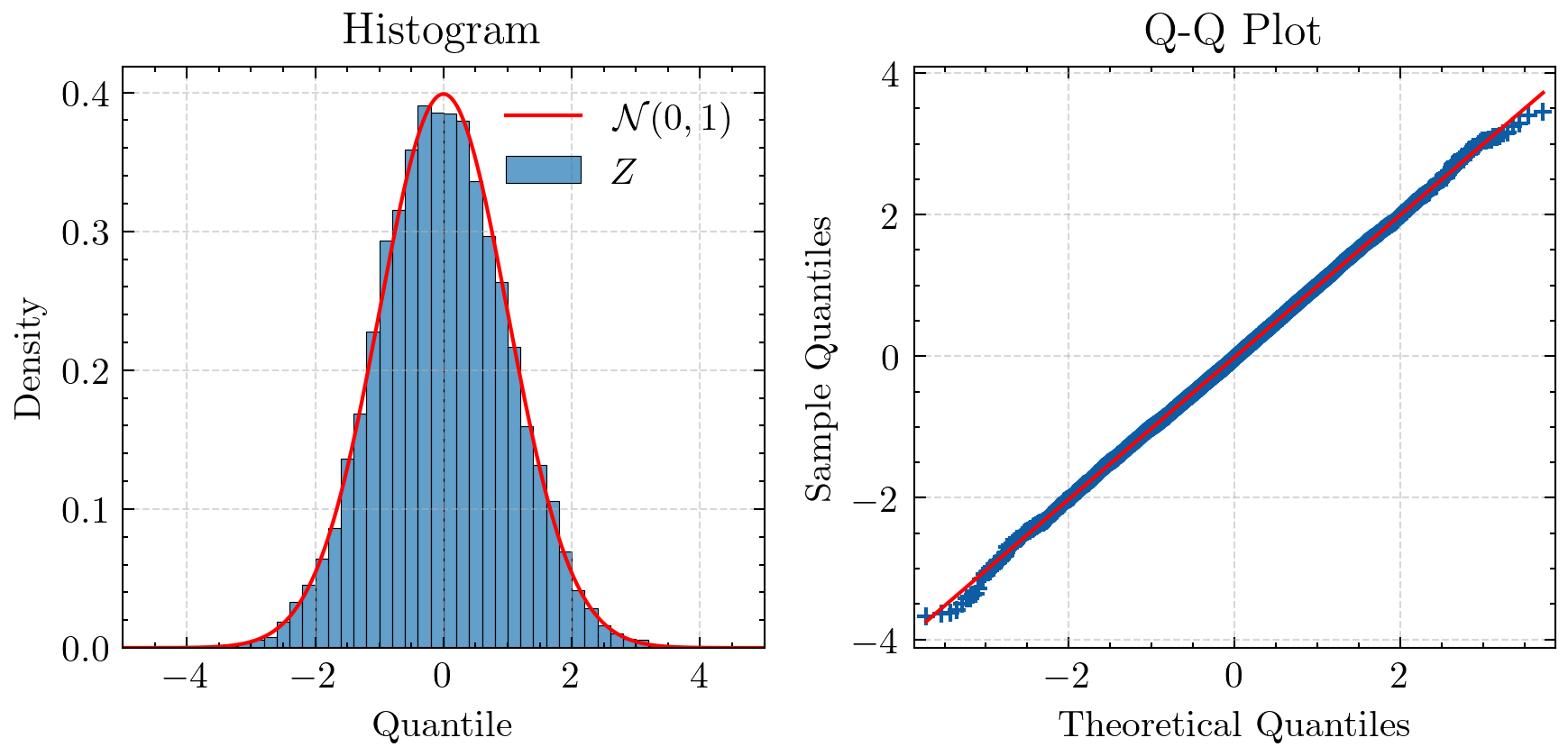}}
    \hfill
    \subfloat[MA(2) ($\theta_1=0.7$)
    MSLE]{\includegraphics[width=0.33\textwidth]{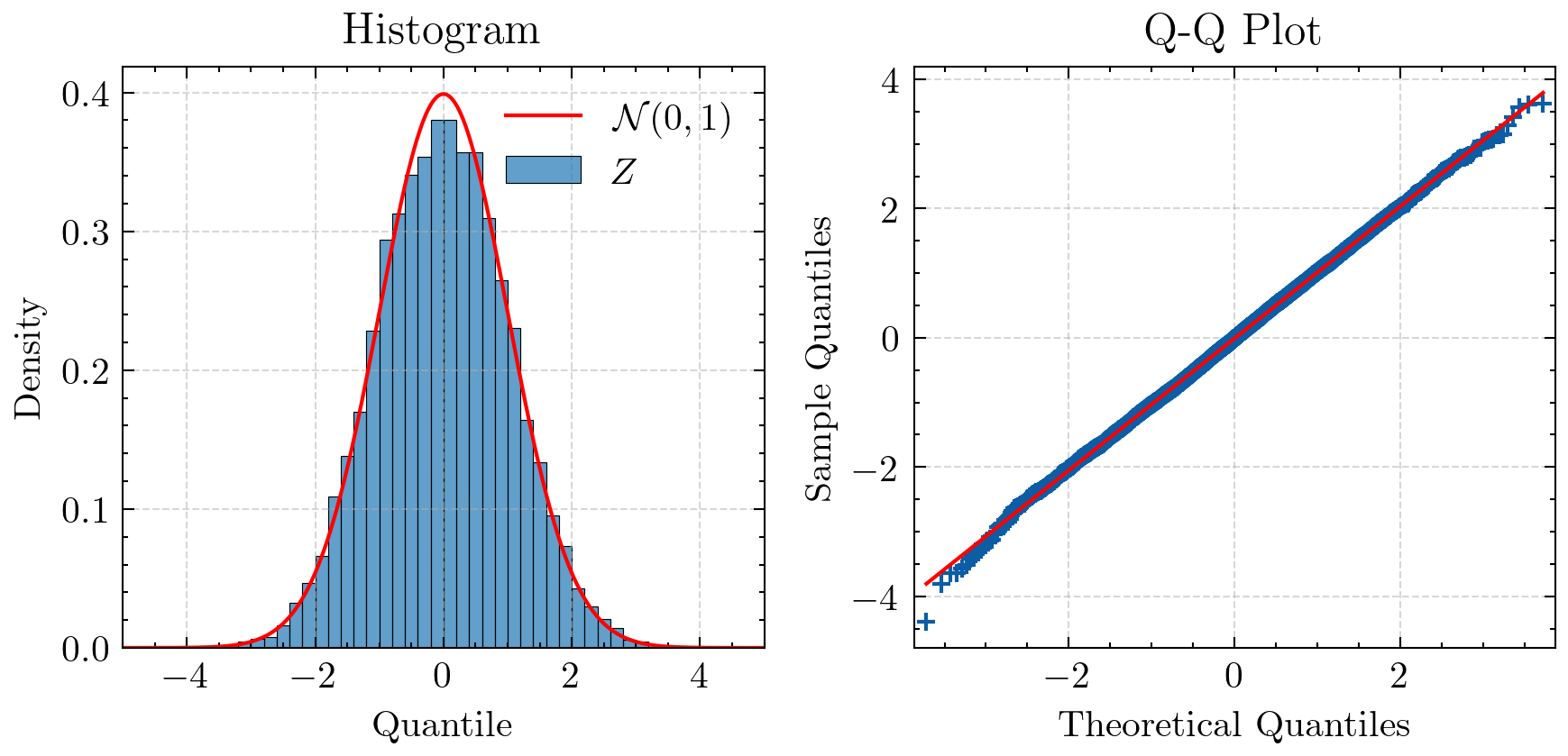}}
    \hfill
    \subfloat[MA(2) ($\theta_1=-0.7$)
    SALE]{\includegraphics[width=0.33\textwidth]{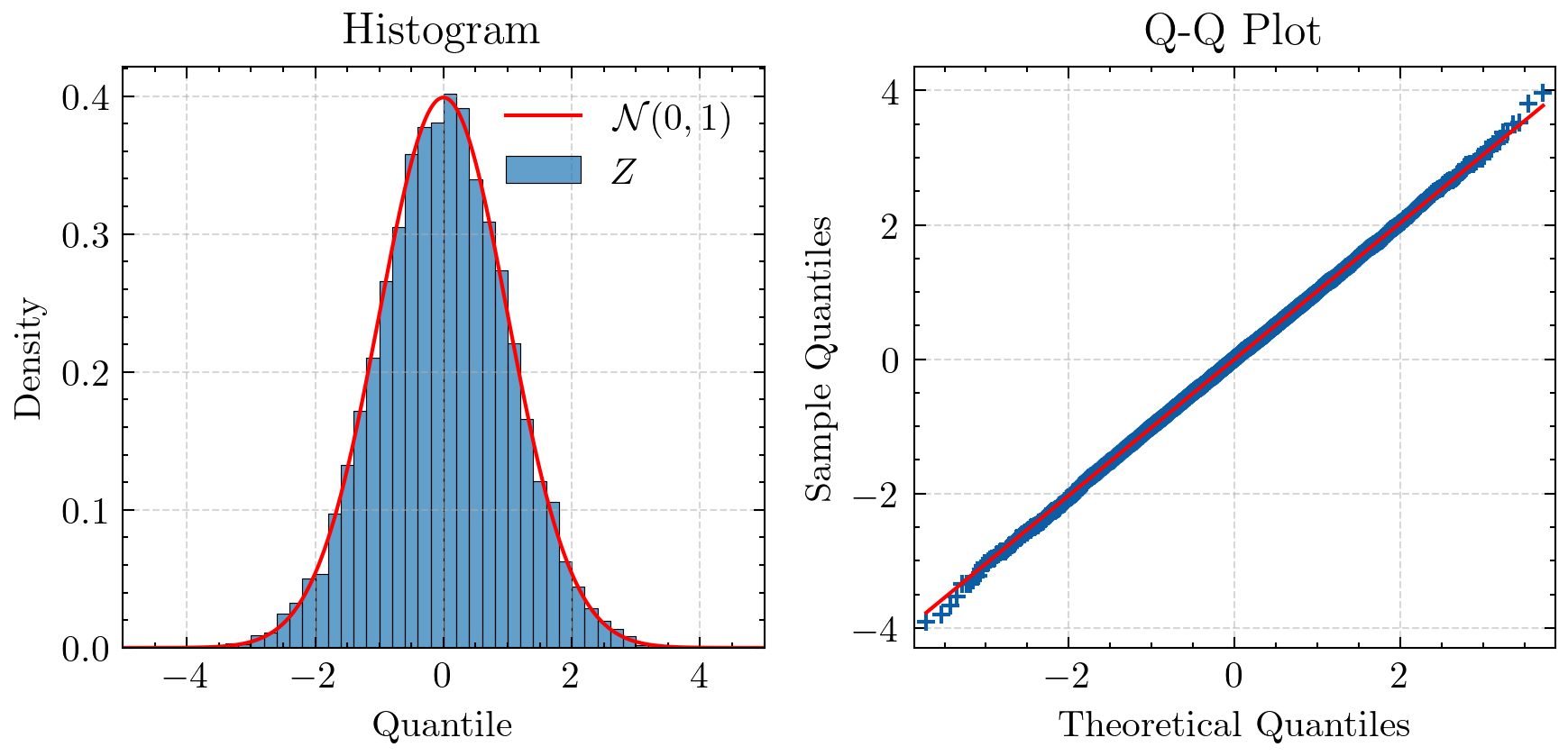}}
    \\
    \subfloat[MA(2) ($\theta_1=-0.7$)
    MSLE]{\includegraphics[width=0.33\textwidth]{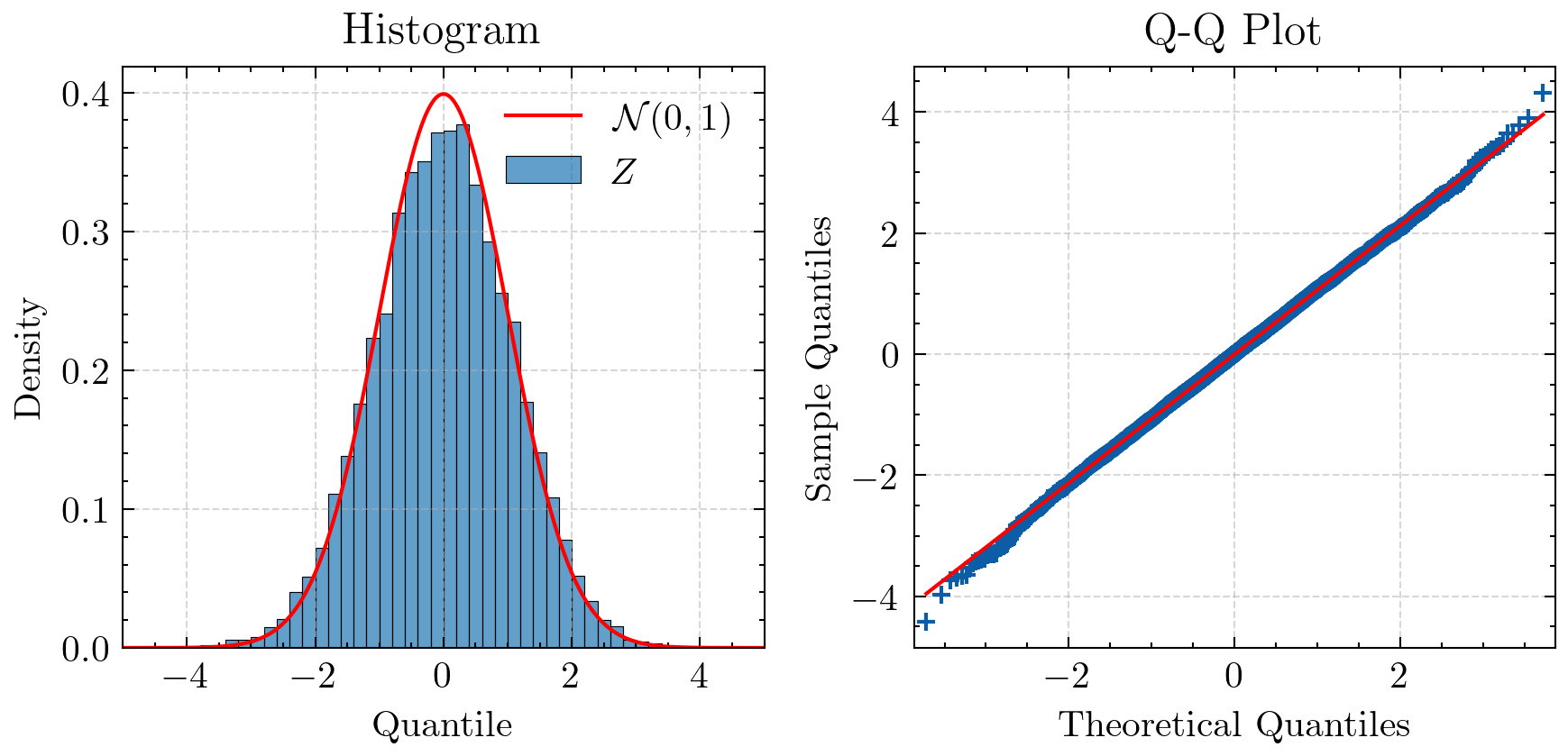}}
    \hfill
    \subfloat[AR(1)
    SALE]{\includegraphics[width=0.33\textwidth]{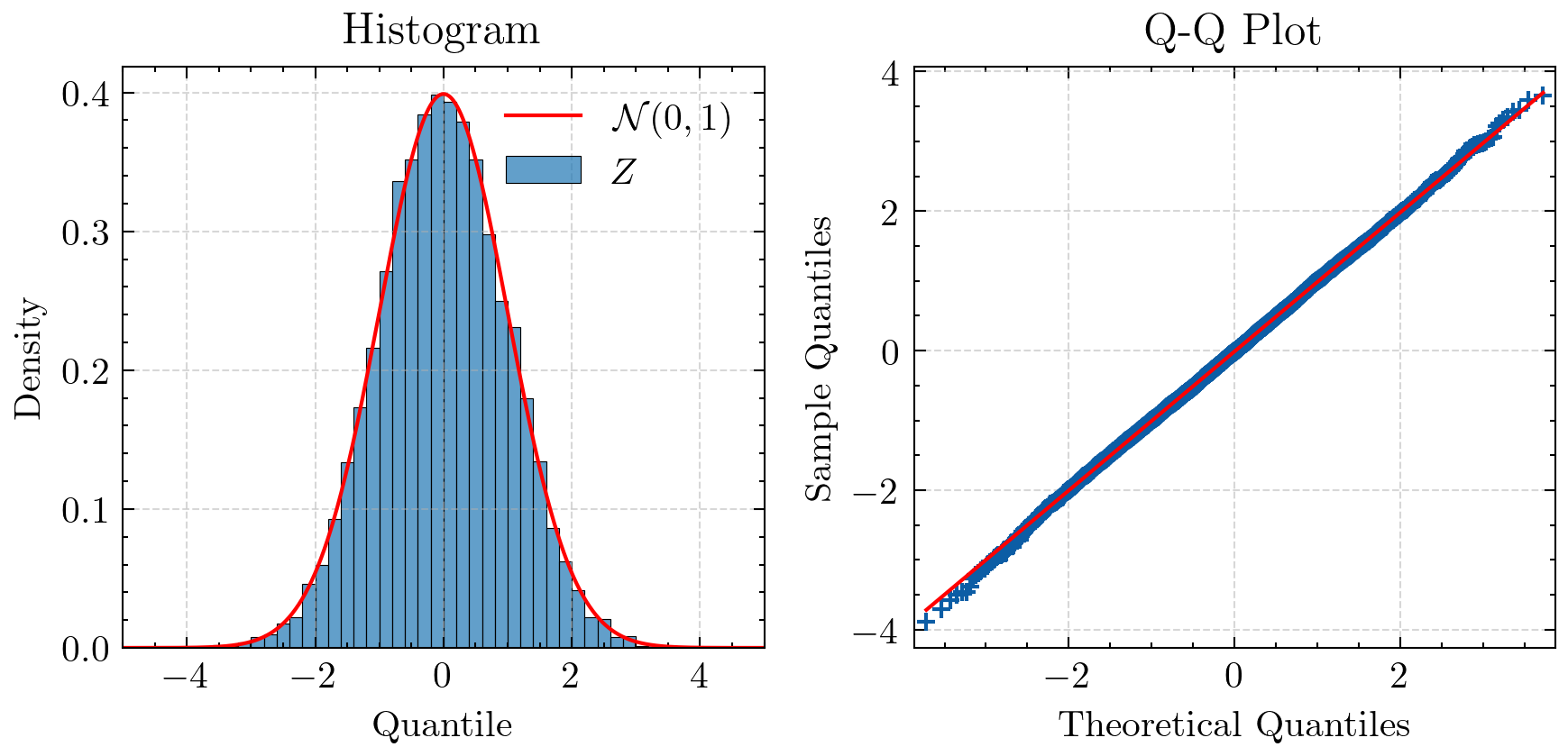}}
    \hfill
    \subfloat[AR(1)
    MSLE]{\includegraphics[width=0.33\textwidth]{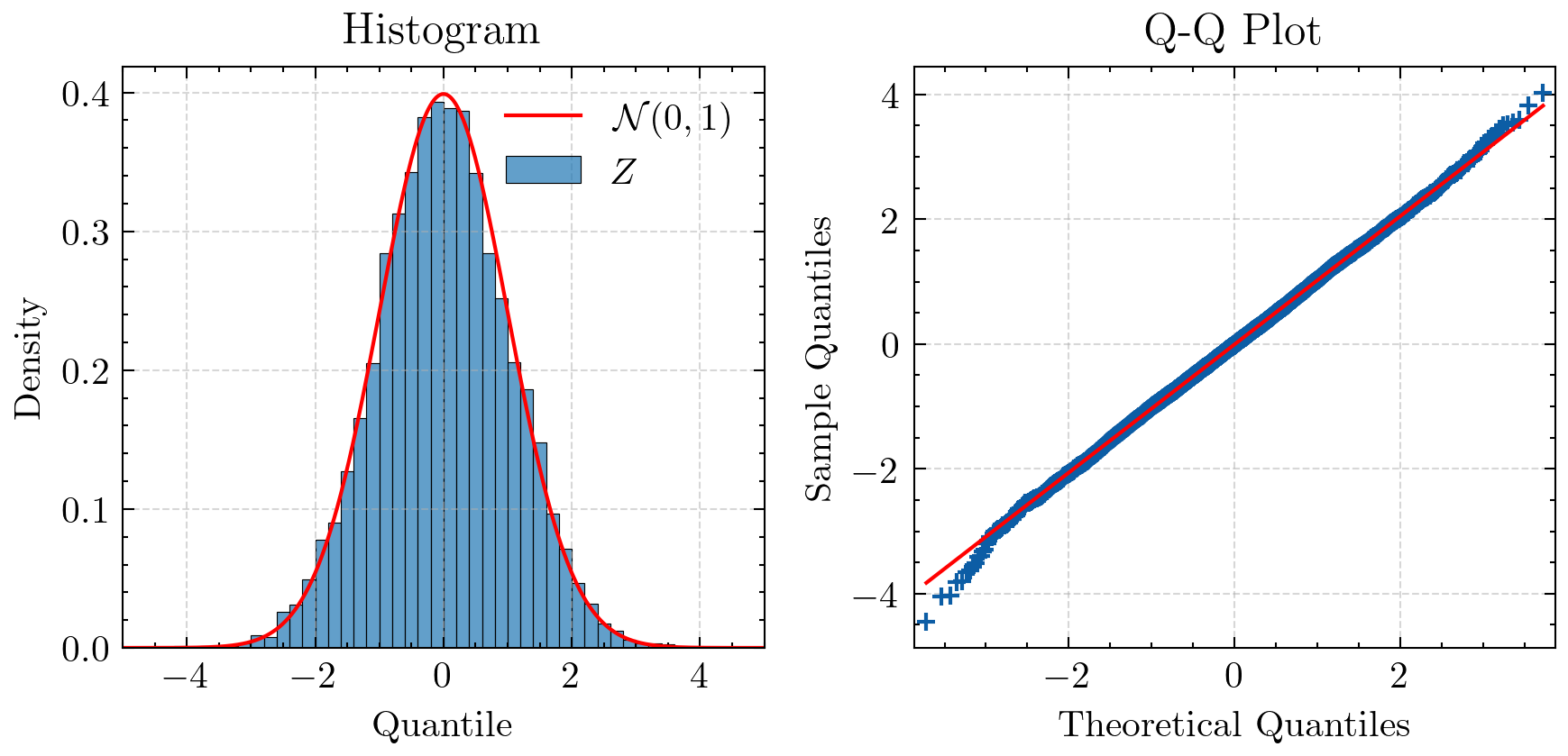}}
    \caption{Histograms and Q-Q plots of standardized estimation errors
      calculated with feasible asymptotic variances across noise settings
      and estimators. Blue bars and points represent empirical
      standardized errors; red curves and lines represent the standard
    normal distribution.}
    \label{fig:feasible-standardized-errors}
  \end{figure}

  \subsection{Finite-Sample Performance}

  \paragraph*{Parameter Tuning in the Noisy Case}

  In both the i.i.d. and dependent noise cases, the weights in MSLE
  estimators and the window sizes in pre-averaging estimators need to
  be tuned. The tuning methods are as follows:
  \begin{enumerate}
    \item The MSLE Estimator.
      \begin{enumerate}
        \item Optimal weights (for the i.i.d. case only). Calculated
          using Equation~\eqref{eq:weight-optimization-solution}, where
          the covariance due to discretization is given by
          Equation~\eqref{eq:SALE-acov-disc}, and the diagonal elements
          of covariance due to noise are given by
          Equation~\eqref{eq:SALE-noise-variance-corrected}.
        \item Approximate weights (for both cases). Derived from the
          method in Section~\ref{sec:practical-weight-noisy}, with the
          asymptotic variances of SALE estimators calculated using
          Equations~\eqref{eq:SALE-acov-disc} and
          \eqref{eq:SALE-noise-variance-corrected}. The optimal sclae
          is denoted as $\overline H_n$.
      \end{enumerate}
    \item The Pre-Averaging Estimator.
      \begin{enumerate}
        \item Optimal weights (for the i.i.d. case only). Determined
          by minimizing the asymptotic variance in Equation~6.6 of
          \citet{aitsahalia2017EstimationContinuousDiscontinuous}.
          The optimal window size is denoted as $A_n$.
        \item \emph{Ex-post} selected weights (for both cases).
      \end{enumerate}
  \end{enumerate}
  For computational efficiency, we use the same set of $\overline H_n$
  and $A_n$ for all paths within a given setting, rather than
  specifying these parameters in a path-wise manner. Specifically, the
  average quantities $\int_0^T \sigma_t^6 \ud t$ and $\int_0^T
  \sigma_t^2 \ud \langle \sigma^2, \sigma^2 \rangle_t$ over all paths
  are used to determine these paramters. Tables~\ref{tab:iid-settings}
  and \ref{tab:dependent-settings} list the selected parameters for
  each setting.

  \begin{table}[!ht]
    \centering
    \caption{Settings of MSLE estimators with approximate weights and
      pre-averaging estimators with optimal pre-averaging window in the
    i.i.d. noise case.}
    \label{tab:iid-settings}
    \begin{tabular}{l *{8}{r}}
      \toprule
      \multirow{2.5}{*}{$\bm{\varsigma}$} &
      \multicolumn{2}{c}{\textbf{1 day}} & \multicolumn{2}{c}{\textbf{5
      days}} & \multicolumn{2}{c}{\textbf{10 days}} &
      \multicolumn{2}{c}{\textbf{22 days}} \\
      \cmidrule(lr){2-3} \cmidrule(lr){4-5} \cmidrule(lr){6-7}
      \cmidrule(lr){8-9}
      & $\overline{H}_n$ & $A_n$ & $\overline{H}_n$ & $A_n$ &
      $\overline{H}_n$ & $A_n$ & $\overline{H}_n$ & $A_n$ \\
      \midrule
      $10^{-4}$ & 7 & 8 & 7 & 7 & 6 & 7 & 5 & 6 \\
      $10^{-3.5}$ & 43 & 25 & 35 & 23 & 32 & 21 & 26 & 19 \\
      $10^{-3}$ & 190 & 79 & 171 & 73 & 152 & 67 & 128 & 59 \\
      \bottomrule
    \end{tabular}
  \end{table}

  \begin{table}[!ht]
    \centering
    \caption{Settings of MSLE estimators with approximate weights in
    the dependent noise case.}
    \label{tab:dependent-settings}
    \begin{tabular}{l *{4}{r}}
      \toprule
      \multirow{2.5}{*}{$\bm{\varsigma}$} &
      \multicolumn{4}{c}{$\bm{\overline{H}_n}$} \\
      \cmidrule(lr){2-5}
      & 1 day & 5 days & 10 days & 22 days \\
      \midrule
      $10^{-4}$ & 10 & 9 & 8 & 6 \\
      $10^{-3.5}$ & 57 & 47 & 44 & 37 \\
      $10^{-3}$ & 230 & 219 & 201 & 169 \\
      \bottomrule
    \end{tabular}
  \end{table}

  \paragraph*{Results of the I.I.D. Noise Case}

  The settings of $T$, $n$ and the number of paths are the same as the
  noise-free case. We generate i.i.d. Gaussian noise with the same
  noise levels as the dependent noise case. For MSLE estimators, we set
  $H_p = 1, 2, \dots, \lfloor n^{5/9} \rfloor$. For pre-averaging
  estimators, the \emph{ex-post} selected version is the estimator with
  the minimum RMSE from windows of 5, 10, 30, 60, and 120. The results
  are presented in Figure~\ref{fig:iid-estimate} and Table~\ref{tab:iid-rmse}.

  \begin{figure}[!ht]
    \centering
    \begin{subfigure}{0.24\textwidth}
      \includegraphics[width=\textwidth]{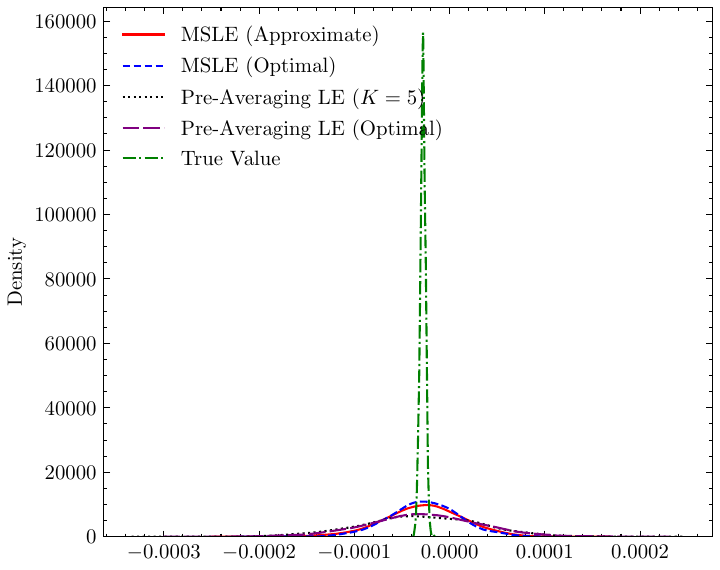}
      \caption{1 day}
    \end{subfigure}
    \hfill
    \begin{subfigure}{0.24\textwidth}
      \includegraphics[width=\textwidth]{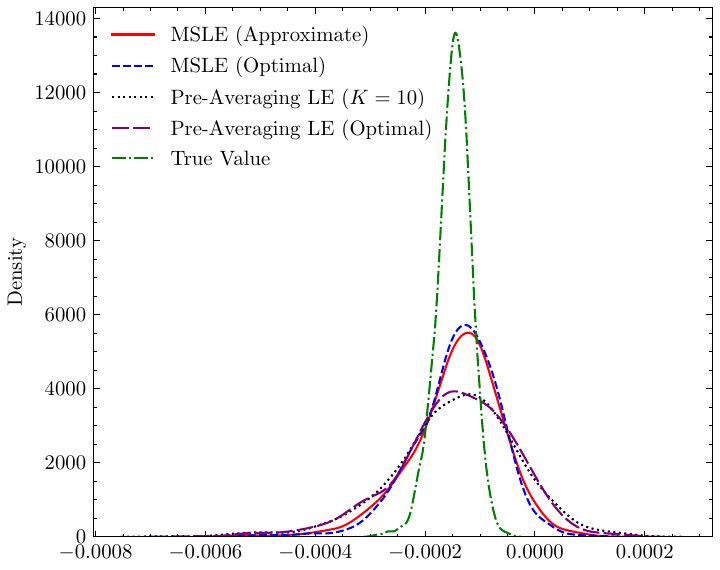}
      \caption{5 days}
    \end{subfigure}
    \hfill
    \begin{subfigure}{0.24\textwidth}
      \includegraphics[width=\textwidth]{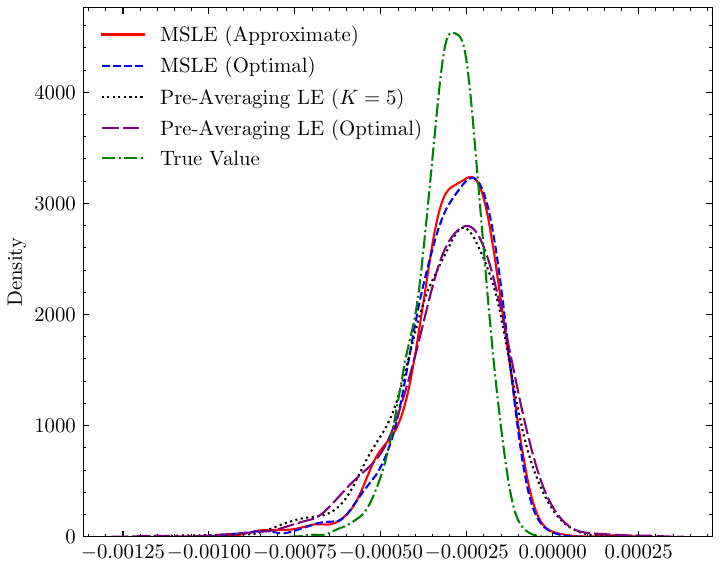}
      \caption{10 days}
    \end{subfigure}
    \hfill
    \begin{subfigure}{0.24\textwidth}
      \includegraphics[width=\textwidth]{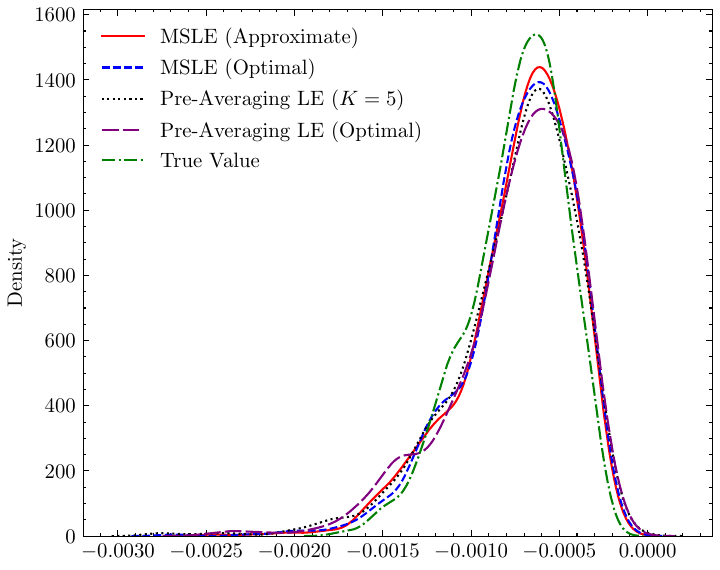}
      \caption{22 days}
    \end{subfigure}
    \begin{subfigure}{0.24\textwidth}
      \includegraphics[width=\textwidth]{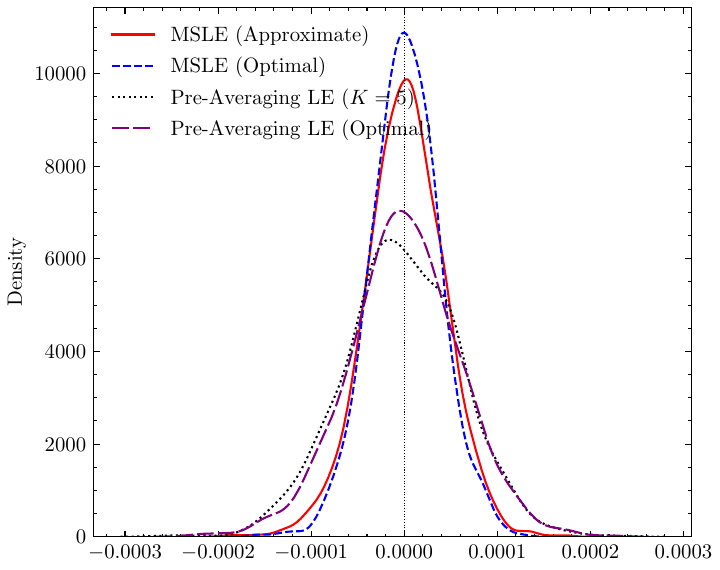}
      \caption{1 day}
    \end{subfigure}
    \hfill
    \begin{subfigure}{0.24\textwidth}
      \includegraphics[width=\textwidth]{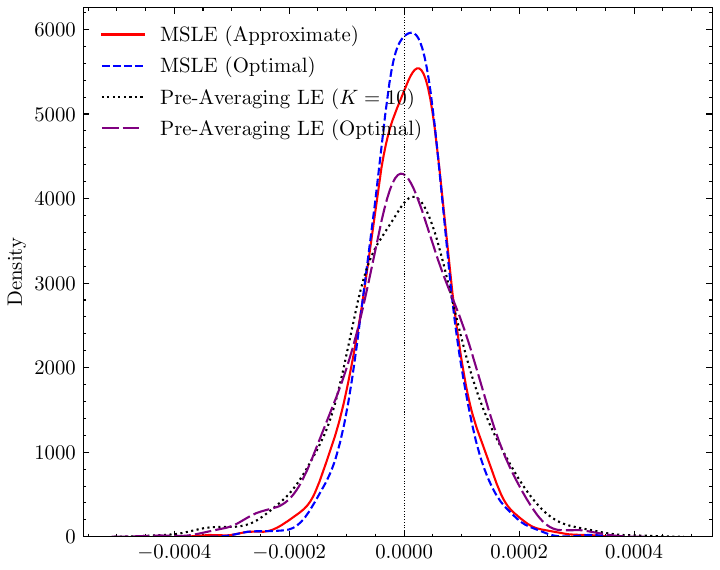}
      \caption{5 days}
    \end{subfigure}
    \hfill
    \begin{subfigure}{0.24\textwidth}
      \includegraphics[width=\textwidth]{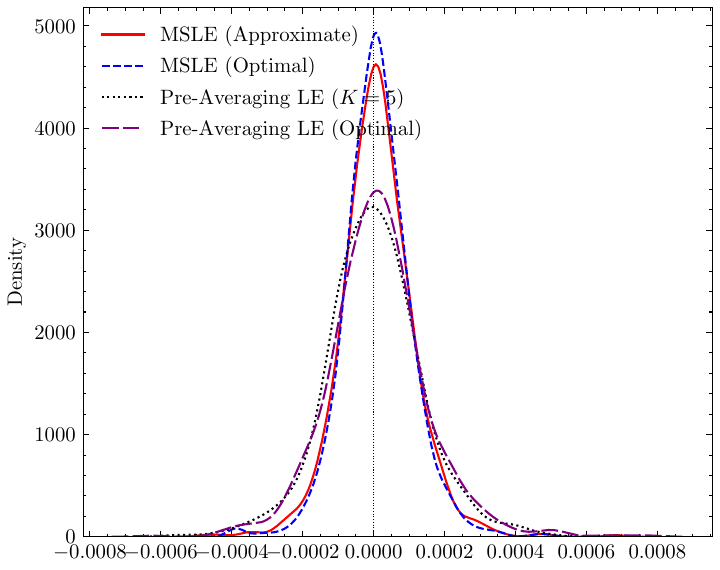}
      \caption{10 days}
    \end{subfigure}
    \hfill
    \begin{subfigure}{0.24\textwidth}
      \includegraphics[width=\textwidth]{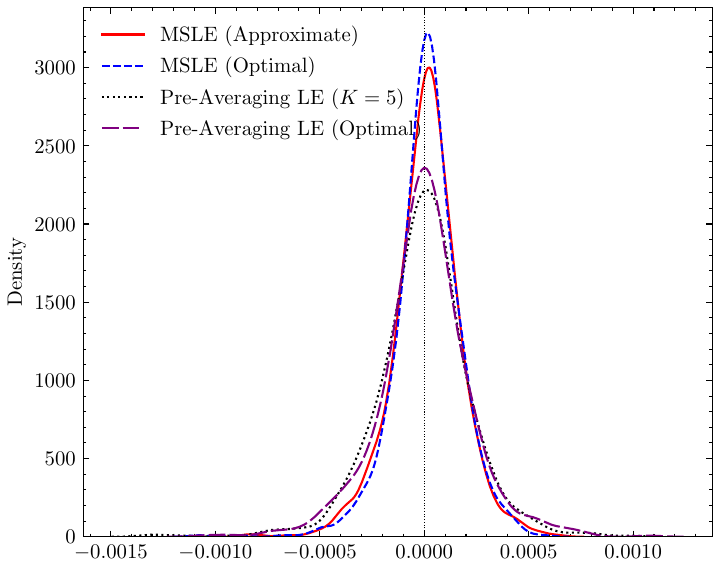}
      \caption{22 days}
    \end{subfigure}
    \caption{The performances of the MSLE and pre-averaging LE
      estimators for each setting of $T$ in the i.i.d. noise setting
      ($\varsigma=10^{-4}$). The first row shows the true and estimated
    values of leverage effect, and the second row shows the estimation error.}
    \label{fig:iid-estimate}
  \end{figure}

  \begin{table}[!ht]
    \centering
    \caption{
      Finite-sample performances of the MSLE and pre-averaging LE
      estimators in the i.i.d. noise setting. The finite-sample
      relative efficiency is compared with the \emph{ex-post} selected
      pre-averaging estimator.
    }
    \label{tab:iid-rmse}
    \resizebox{0.95\textwidth}{!}{
      \begin{tabular}{lrrrrrrrrrrr}
        \toprule
        \multirow{3.5}{*}{$\bm{\varsigma}$} &
        \multirow{3.5}{*}{\textbf{Days}} &
        \multicolumn{2}{c}{\multirow{2.5}{*}{\textbf{True Value}}} &
        \multicolumn{4}{c}{\textbf{RMSE}} &
        \multicolumn{4}{c}{\textbf{Relative Efficiency}} \\
        \cmidrule(lr){5-8} \cmidrule(lr){9-12}
        & & & & \multicolumn{2}{c}{MSLE} &
        \multicolumn{2}{c}{Pre-Averaging LE} & \multicolumn{2}{c}{MSLE}
        & \multicolumn{2}{c}{Pre-Averaging LE} \\
        \cmidrule(lr){3-4} \cmidrule(lr){5-6} \cmidrule(lr){7-8}
        \cmidrule(lr){9-10} \cmidrule(lr){11-12}
        & & Mean & Std & Approximate & Optimal & Optimal & Selected &
        Approximate & Optimal & Optimal & Selected \\
        \midrule
        \multirow{4}{*}{$10^{-4}$} & 1 & \num{-2.80e-05} &
        \num{2.77e-06} & \num{4.20e-05} & \num{3.71e-05} &
        \num{5.93e-05} & \num{6.16e-05} & 2.16 & 2.77 & 1.08 & 1.00 \\
        & 5 & \num{-1.48e-04} & \num{3.10e-05} & \num{7.64e-05} &
        \num{6.98e-05} & \num{1.02e-04} & \num{1.07e-04} & 1.95 & 2.34
        & 1.10 & 1.00 \\
        & 10 & \num{-3.06e-04} & \num{8.98e-05} & \num{1.04e-04} &
        \num{9.61e-05} & \num{1.42e-04} & \num{1.36e-04} & 1.73 & 2.01
        & 0.93 & 1.00 \\
        & 22 & \num{-7.44e-04} & \num{2.72e-04} & \num{1.64e-04} &
        \num{1.58e-04} & \num{2.22e-04} & \num{2.25e-04} & 1.89 & 2.02
        & 1.03 & 1.00 \\
        \cmidrule(lr){1-12}
        \multirow{4}{*}{$10^{-3.5}$} & 1 & \num{-2.80e-05} &
        \num{2.77e-06} & \num{5.86e-05} & \num{6.04e-05} &
        \num{7.79e-05} & \num{7.94e-05} & 1.83 & 1.73 & 1.04 & 1.00 \\
        & 5 & \num{-1.48e-04} & \num{3.10e-05} & \num{1.05e-04} &
        \num{1.07e-04} & \num{1.32e-04} & \num{1.35e-04} & 1.68 & 1.61
        & 1.05 & 1.00 \\
        & 10 & \num{-3.06e-04} & \num{8.98e-05} & \num{1.43e-04} &
        \num{1.40e-04} & \num{1.77e-04} & \num{1.81e-04} & 1.60 & 1.67
        & 1.05 & 1.00 \\
        & 22 & \num{-7.44e-04} & \num{2.72e-04} & \num{2.30e-04} &
        \num{2.19e-04} & \num{2.92e-04} & \num{3.13e-04} & 1.84 & 2.04
        & 1.15 & 1.00 \\
        \cmidrule(lr){1-12}
        \multirow{4}{*}{$10^{-3}$} & 1 & \num{-2.80e-05} &
        \num{2.77e-06} & \num{8.60e-05} & \num{1.12e-04} &
        \num{9.85e-05} & \num{1.01e-04} & 1.36 & 0.81 & 1.04 & 1.00 \\
        & 5 & \num{-1.48e-04} & \num{3.10e-05} & \num{1.62e-04} &
        \num{1.86e-04} & \num{1.80e-04} & \num{1.72e-04} & 1.12 & 0.85
        & 0.92 & 1.00 \\
        & 10 & \num{-3.06e-04} & \num{8.98e-05} & \num{2.26e-04} &
        \num{2.53e-04} & \num{2.43e-04} & \num{2.39e-04} & 1.11 & 0.89
        & 0.96 & 1.00 \\
        & 22 & \num{-7.44e-04} & \num{2.72e-04} & \num{3.45e-04} &
        \num{3.90e-04} & \num{4.08e-04} & \num{4.04e-04} & 1.37 & 1.07
        & 0.98 & 1.00 \\
        \bottomrule
      \end{tabular}
    }
  \end{table}

  \section{Empirical Study}

  \paragraph*{The Estimation of MMS Noise}

  The second-order to fourth-order autocovariances with lags $0, \pm 1,
  \dots, \pm 5$ are estimated with the ReMeDI estimator. While the
  estimator is consistent, its application in this empirical study
  faces a few challenges: (i) ``unreasonable'' moment estimates (for
  example, a minus variance estimate) may appear in the finite-sample
  scenario; and (ii) an operational definition of the ``existence'' of
  the noise is needed. To mitigate these challenges, we adopt a
  conservative procedure for this part, avoiding underestimating the
  noise level:
  \begin{enumerate}
    \item These ``unreasonable'' estimates are handled as follows.
      \begin{enumerate}
        \item The case with $\widehat \nu_2 \leq 0$: it is treated as
          noise-free.
        \item The case with $\widehat \nu_2 > 0$ but $\widehat \nu_4
          \leq 0$: the fourth-order autocovariances are ``corrected''
          by $\widehat \E [\eps_i^2 \eps_{i+l}^2] \leftarrow \widehat
          \nu_2^2 \bigl( \widehat \E [\eps_i \eps_{i+l}] \bigr)^2$ for
          all $l = 0, \pm 1, \dots, \pm 5$.
        \item The case with $\widehat \E [\eps_i^2 \eps_{i+l}^2] <
          \bigl( \widehat \E [\eps_i \eps_{i+l}] \bigr)^2$ for any $l
          \in \{0, \pm 1 \dots, \pm 5\}$: the corresponding
          fourth-order autocovariance is ``corrected'' by $\widehat \E
          [\eps_i^2 \eps_{i+l}^2] \leftarrow \bigl( \widehat \E [\eps_i
          \eps_{i+l}] \bigr)^2 \lor \widehat \E [\eps_i^2 \eps_{i+l}^2]$.
      \end{enumerate}
    \item A conservative $90\%$ confidence interval is used to test the
      significance of each lag of second-order autocovariance. The
      noise is treated as existing if there exists an $l \in \{0, \pm
      1, \dots, \pm 5\}$ such that $\widehat \E [\eps_i \eps_{i+l}]$ is
      significant. In this case, the dependence level is then defined
      as the largest $l$ such that any of $\widehat \E [\eps_i
      \eps_{i+l}]$ and $\widehat \E [\eps_i \eps_{i-l}]$ is significant.
    \item The case with $\sqrt{\widehat \nu_2} > 10^{-4}$ is always
      treated as noisy.
  \end{enumerate}

\end{appendices}

\end{document}